\newif\iffull\fulltrue
\newif\ifdraft\drafttrue
\newif\ifspace\spacefalse
\draftfalse %% comment this to generate a draft

\documentclass[acmsmall,10pt,authorversion,nonacm]{acmart}
\settopmatter{printfolios=true,printccs=false,printacmref=false}
%\documentclass[sigplan,review,authorversion,nonacm]{acmart}
%\settopmatter{printfolios=false,printccs=false,printacmref=false}
\usepackage{mathtools}
\usepackage{enumitem}
\usepackage{tcolorbox}

\usepackage[inference]{semantic}
\setnamespace{0mm}
\usepackage{xspace}
\usepackage{commands}
\usepackage{adjustbox}
\usepackage{amsthm}
\usepackage{pdflscape}

\newtheorem{property}{Property}

\setcopyright{acmcopyright}
\copyrightyear{2020}
\acmYear{2020}
\acmDOI{10.1145/1122445.1122456}
\acmBooktitle{}

\citestyle{acmauthoryear}

%auto-ignore
\usepackage{listings}

\usepackage{amssymb}

% uncomment next line to restore colors
\def\withcolor{}

\ifdefined\withcolor
  \definecolor{fstarblue}{rgb}{0.0, 0.0, 1.0}
  \definecolor{haskellstr}{rgb}{0.2, 0.2, 0.6}
  \definecolor{haskellred}{rgb}{1.0, 0.0, 0.0}
  \definecolor{gray_ulisses}{gray}{0.55}
  \definecolor{castanho_ulisses}{rgb}{0.59,0.42,0.15}
  \definecolor{preto_ulisses}{rgb}{0.55,0.28,0.59}
  \definecolor{green_ulises}{rgb}{0.59,0.42,0.15}
\else
	\definecolor{fstarblue}{gray}{0.1}
	\definecolor{haskellstr}{gray}{0.1}
	\definecolor{haskellred}{gray}{0.1}
	\definecolor{gray_ulisses}{gray}{0.1}
	\definecolor{castanho_ulisses}{gray}{0.1}
	\definecolor{preto_ulisses}{gray}{0.1}
	\definecolor{green_ulisses}{gray}{0.1}
\fi

\def\codesize{\small}

\lstdefinelanguage{HaskellUlisses} {
	basicstyle=\ttfamily\codesize,
	sensitive=true,
	%% morecomment=[s][\color{gray_ulisses}\ttfamily\itshape\codesize]{-}{-},
	%% morecomment=[l][\color{gray_ulisses}\ttfamily\itshape\codesize]{--},
	%% morecomment=[s][\color{gray_ulisses}\ttfamily\itshape\codesize]{\{-}{-\}},
	%% morecomment=[s][]{\{-@}{@-\}},
	morestring=[b]",
	stringstyle=\color{haskellstr},
	basewidth={0.53em},
	showstringspaces=false,
	numberstyle=\codesize,
	numberblanklines=true,
	showspaces=false,
	breaklines=true,
	showtabs=false,
	tabsize=4,
    literate={ {/\\}{{$\land$}}2
             {->}{{$\rightarrow$}}2
			 {<=>}{{$\Leftrightarrow$}}1
%			 {<=}{{$\leq$}}1
%			 {>=}{{$\geq$}}1
             {forall}{{$\forall$}}1
			 {'a}{{$\alpha$}}1
			 {labelty}{{$l$}}1
             {True}{{$\top$}}1
             {False}{{$\bot$}}1
             {~int}{{$\mathbb{Z}$}}1
             {~nat}{{$\mathbb{N}$}}1
			 {==>}{{$\Longrightarrow$}}1
			 {=>}{{$\Rightarrow$}}1
			 {`feq`}{{$\eqinfix$}}1
			 {ka}{{k${}_a$}}1
			 {kb}{{k${}_b$}}1
			 {dollar}{{$\$$}}1
			 {dsl}{{d$_{sl}$}}2
			 {dfs}{{d$_{fs}$}}2
			 {rsl}{{r$_{sl}$}}2
			 {rfs}{{r$_{fs}$}}2
			 {dlm}{{d$_{lm}$}}2
           },
	emph=
	{[1] Set, Level, Axiom, Propositional, Extensionality, Tot, Type, bool, Lemma, ensures, requires, Ifc, IFC, IfcClearance, GlobalInt, GTot
	},
	emphstyle={[1]\color{fstarblue}},
	emph=
	{[2] class, match, with, if, then, else, let, rec, type, val, in, instance, data, measure, where, effect,noeq, private
	},
	emphstyle={[2]\color{castanho_ulisses}},
	emph=
	{[3]
        lattice, value, equals, canFlow, meet, join, bottom, top, 
        lawBot, lawFlowReflexivity, lawFlowAntisymetry, lawFlowTransitivity, 
        lawMeet, lawJoin, labels, 
        lt, lmeet, ljoin, lcanFlow, eq,
        labeled, labeledTCB
	},
	emphstyle={[3]\color{preto_ulisses}\textbf},
	emph=
	{[4]
        Low, Medium, High
	},
	emphstyle={[4]\color{green_ulises}\textbf},
	emph=
	{[5] assume, admit, admitP
	},
	emphstyle=[5]\color{red}\textbf,%\underline, % underline not working
	% this emp 6 is a ban rule => highlight bad code
	emph={[6] leq, equals, join', c_0, c_1
	},
	emphstyle=[6]\color{green}\textbf,
}

\lstnewenvironment{code}
{\lstset{language=HaskellUlisses}}
{}

\lstnewenvironment{scode}
{\lstset{language=HaskellUlisses,basicstyle=\ttfamily\footnotesize,keepspaces,mathescape}}
{}

\lstnewenvironment{mcode}
{\lstset{language=HaskellUlisses,columns=fullflexible,keepspaces,mathescape}}
{}

\lstnewenvironment{ccode}
{\lstset{language=C,columns=fullflexible,keepspaces,mathescape}}
{}

\lstMakeShortInline[language=HaskellUlisses,mathescape,keepspaces,mathescape,basicstyle=\ttfamily\codesize,breakatwhitespace]|

%auto-ignore
\newcommand\subst[2]{\ensuremath{[#2/#1]}}
\newcommand\envLookUp[3]{\ensuremath{#1(#2)\ = \ #3}}
\newcommand\fEq[2]{\ensuremath{ #1 \backsimeq #2}}
\newcommand\bEq[2]{\ensuremath{ #1\ \text{==}\ #2}}
\newcommand\propExpr[1]{\ensuremath{\{#1\}}}
\newcommand\extName{\ensuremath{\texttt{funext}}}
\newcommand\hName{\ensuremath{\texttt{h}}\xspace}
\newcommand\kName{\ensuremath{\texttt{k}}\xspace}
\newcommand\lemmaName{\ensuremath{\texttt{lemma}}\xspace}
\newcommand\sublemma{\ensuremath{\textsc{Sub-L}}\xspace}
\newcommand\subsub{\ensuremath{\textsc{Sub-Sub}}\xspace}
\newcommand\subd{\ensuremath{\textsc{Sub-D}}\xspace}
\newcommand\dName{\ensuremath{\texttt{d}}}
\newcommand\dTy{\ensuremath{\texttt{Eq } \alpha}}
\newcommand\subh{\ensuremath{\textsc{Sub-H}}\xspace}

\newcommand\sfun[3]{\ensuremath{{#1}\text{:}{#2}\!\rightarrow\!{#3}}}
\newcommand\sfuna[3]{\ensuremath{{#1}\text{:}({#2}\!\rightarrow\!{#3})}}

\newcommand\as{\ensuremath{\alpha}}
\newcommand\bs{\ensuremath{\beta}}

\newcommand\edom{\ensuremath{d_e}\xspace}

\newcommand\hdom{\ensuremath{d_\hName}\xspace}
\newcommand\hrng{\ensuremath{r_\hName}\xspace}
\newcommand\hTy{\ensuremath{x:\sref{v}{\alpha}{\hdom}\rightarrow\sref{v}{\beta}{\hrng}}}

\newcommand\subk{\ensuremath{\textsc{Sub-K}}\xspace}
\newcommand\kdom{\ensuremath{d_\kName}\xspace}
\newcommand\krng{\ensuremath{r_\kName}\xspace}
\newcommand\kTy{\ensuremath{x:\sref{v}{\alpha}{\kdom}\rightarrow\sref{v}{\beta}{\krng}}}

\newcommand\ldom{\ensuremath{d_p}}
\newcommand\lemmaTy{\ensuremath{x:\sref{v}{\alpha}{\ldom}\rightarrow\propExpr{p}}}

\newcommand\sref[3]{\ensuremath{\{\!\mid\!\!{#3}\!\!\mid\!\}}\xspace}
\newcommand\fref[3]{\ensuremath{\{{#1}:{#2} \mid {#3} \}}\xspace}

\newcommand\tyApp[1]{\ensuremath{@ #1}}
\newcommand\apred{\ensuremath{\kappa_{\alpha}}\xspace}
\newcommand\bpred{\ensuremath{\kappa_{\beta}}\xspace}
\newcommand\tya{\sref{v}{\alpha}{\apred}}
\newcommand\tyde{\ensuremath{\sref{v}{\alpha}{\edom}}}
\newcommand\tyre{\ensuremath{\sref{v}{\beta}{r_e}}}

\newcommand\tyb{\sref{v}{\beta}{\bpred}}
\newcommand\typExt{\ensuremath{\forall a\ b.\typExtDict{a}{b} \Rightarrow f:(\typExtF{a}{b})\rightarrow g:(\typExtG{a}{b}) \rightarrow (\typeExtProp{a}{b}{f}{g}) \rightarrow \typeExtRes{f}{g}}}
\newcommand\typExtA[1]{\ensuremath{\forall b.\typExtDict{#1}{b} \Rightarrow f:(\typExtF{#1}{b})\rightarrow g:(\typExtG{#1}{b}) \rightarrow (\typeExtProp{#1}{b}{f}{g}) \rightarrow \typeExtRes{f}{g}}}
\newcommand\typExtB[2]{\ensuremath{\typExtDict{#1}{#2} \Rightarrow f:(\typExtF{#1}{#2})\rightarrow g:(\typExtG{#1}{#2}) \rightarrow (\typeExtProp{#1}{#2}{f}{g}) \rightarrow \typeExtRes{f}{g}}}
\newcommand\typExtC[2]{\ensuremath{ f:(\typExtF{#1}{#2})\rightarrow g:(\typExtG{#1}{#2}) \rightarrow (\typeExtProp{#1}{#2}{f}{g}) \rightarrow \typeExtRes{f}{g}}}
\newcommand\typExtD[3]{\ensuremath{g:(\typExtG{#1}{#2}) \rightarrow (\typeExtProp{#1}{#2}{#3}{g}) \rightarrow \typeExtRes{#3}{g}}}
\newcommand\typExtE[4]{\ensuremath{ (\typeExtProp{#1}{#2}{#3}{#4}) \rightarrow \typeExtRes{#3}{#4}}}
\newcommand\typExtDict[2]{\ensuremath{\texttt{Eq } #2}}
\newcommand\typExtF[2]{\ensuremath{#1 \rightarrow #2}}
\newcommand\typExtG[2]{\ensuremath{#1 \rightarrow #2}}
\newcommand\typeExtProp[4]{\ensuremath{x:#1 \rightarrow\propExpr{\bEq{#3\ x}{#4\ x}}}}
\newcommand\typeExtRes[2]{\ensuremath{\propExpr{\fEq{#1}{#2}}}}

\newcommand\envBind[2]{\ensuremath{#1 : #2}}

\setcopyright{none}
\begin{document}

\title{How to Safely Use Extensionality in Liquid Haskell \\ (extended version)}

\author{Niki Vazou}
\email{niki.vazou@imdea.org}
\affiliation{%
  \institution{IMDEA Software Institute}
  \city{Madrid}
  \country{Spain}
}

\author{Michael Greenberg}
\email{michael.greenberg@pomona.edu}
\affiliation{%
  \institution{Pomona College}
  \city{Claremont}
  \state{CA}
  \country{USA}
}

%\keywords{refinement types, function equality, functional extensionality}

%auto-ignore
\begin{abstract}
  Refinement type checkers are a powerful way to reason about functional programs.
  For example, one can prove properties of a slow, specification
  implementation, porting the proofs to an optimized implementation that behaves the same.
  Without functional extensionality, proofs must relate functions that
  are fully applied.
  When data itself has a higher-order
  representation, fully applied proofs face serious impediments!
  When working with first-order data, fully applied proofs lead to noisome duplication when using
  higher-order functions.

  While dependent type theories are typically consistent with
  functional extensionality axioms, refinement type systems with semantic subtyping
  treat naive phrasings of functional
  extensionality inconsistently, leading to \emph{unsoundness}.
  We demonstrate this unsoundness and develop a new approach to
  equality in Liquid Haskell: we define a propositional equality in a
  library we call \PEq.
  Using \PEq avoids the unsoundness while still proving useful
  equalities at higher types; we demonstrate its use in several case studies.
  We validate \PEq by building a small model and developing its
  metatheory.
  Additionally, we prove metaproperties of \PEq inside Liquid Haskell
  itself using an unnamed folklore technique, which we dub `classy
  induction'.
\end{abstract}

\maketitle
%auto-ignore
\section{Introduction}\label{sec:intro}

Refinement types have been extensively used to reason about functional programs~\cite{rushby1998subtypes,
xi1998eliminating, constable1987partial, LT2008, mumon}. 
Higher-order functions are a key ingredient of functional programming, 
so reasoning about function equality within refinement type systems is unavoidable. 
For example, \citet{TPE2018} prove function optimizations correct by specifying equalities between fully applied functions.
Do these equalities hold in the context of higher order functions (\eg maps and folds)
or do the proofs need to be redone for each fully applied context?  
Without functional extensionality (a/k/a |funext|), one must duplicate proofs for each higher-order function.
Worse still, all reasoning about higher-order representations of data
requires first-order observations.

Most verification systems allow for function equality by way of functional extensionality, either built-in (\eg Lean) or as an axiom (\eg Agda, Coq).
Liquid Haskell and \Fstar, two major, SMT-based verification systems built on refinement types, 
are no exception: function equalities come up regularly.
But, in both these systems, the first attempt to give an axiom for functional extensionality 
was wrong.\footnote{
See \url{https://github.com/FStarLang/FStar/issues/1542} for \Fstar's initial, 
wrong encoding and \S\ref{sec:related} for \Fstar's different solution.
We explain the situation in Liquid Haskell in \S\ref{sec:inconsistency}.
}
A naive |funext| axiom proves equalities between unequal functions.

Our first contribution is to expose why a naive encoding of |unfext| is \inadequate (\S\ref{sec:inconsistency}). 
At first sight, function equality can be encoded as 
a refinement type stating that for functions |f| and |g|, 
if we can prove that |f x| equals |g x| for all |x|, then the functions 
|f| and |g| are equal:
\begin{mcode}
  $\extName$ :: forall a b. f:(a -> b) -> g:(a -> b) -> (x:a -> {f x $\req$ g x}) -> {f $\req$ g}
\end{mcode}
(The `refinement proposition' %\mmg{does it have a name?}\nv{no, so we can name it refinement proposition.remove?} 
|{e}| is equivalent to \texttt{\{\_:() $\mid$ e\}}.)
On closer inspection, |$\extName$| does not encode function equality, 
since it is not reasoning about equality on the domains of the functions. 
What if we instantiate the domain type parameter |a|'s refinement to an intersection of the domains of 
the input functions or, worse, to an uninhabited type? Would such an instantiation of |$\extName$| still prove 
equality of the two input functions? 
It turns out that this naive extensionality axiom 
is inconsistent with refinement types: in \S\ref{sec:inconsistency}
we assume this naive $\extName$ and prove |false|---disaster!
We work in Liquid Haskell, but the problem generalizes to any refinement 
type system that allows for semantic subtyping along with refinement polymorphism, i.e., refinements inferred from constraints~\cite{LT2008}.
To be sound, proofs of function equality must carry information about the domain type 
on which the compared functions are equal.

Our second contribution is to define a type-indexed propositional equality as a Liquid Haskell library (\S\ref{sec:eqrt-gadt}),
where the type indexing uses Haskell's GADTs 
and Liquid Haskell's refinement types.
We call the propositional equality |$\PEq$| and find that it adequately reasons about function equality: we can prove the theorems we want, and we can't prove the (non-)theorems we \emph{don't} want.
Further, we prove in Liquid Haskell \emph{itself} that the implementation of |$\PEq$| is an equivalence relation, 
\ie it is reflexive, symmetric, and transitive. 
To conduct these proofs---which go by induction on the structure of
the type index---we applied a heretofore-unnamed folklore proof methodology, which we dub
\textit{classy induction} (\S\ref{sec:gadt-metatheory}). 
%Classy induction encodes theorems as typeclass
%definitions, where proofs by induction on types give an instance definition for
%each case of the inductive proof (\S\ref{sec:gadt-metatheory}; \S\ref{sec:related}).

Our third contribution is to use |$\PEq$| to prove equalities between functions (\S\ref{sec:eg}). 
As simple examples, we prove optimizations correct as equalities between functions (\ie |reverse|),
work carefully with functions that only agree on certain domains and dependent codomains,
lift equalities to higher-order contexts (\ie |map|), 
prove equivalences with multi-argument higher-order functions (\ie |fold|), 
and showcase how higher-order, propositional equalities can co-exist with and speedup executable code. 
We also provide a more substantial case study, proving the monad laws
for reader monads.

Our fourth and final contribution is to formalize \corelaneq, a core calculus modeling
|$\PEq$|'s two important features: 
type-indexed, functionally extensional propositional equality and refinement types with semantic subtyping (\S\ref{sec:rules}).
We prove that \corelaneq is sound and that propositional equality implies equality in a term model of equivalence (\S\ref{sec:eqrt}).

%auto-ignore
\section{Functional Extensionality is Inconsistent in Refinement Types}
\label{sec:inconsistency}

Functional extensionality states that 
two functions are equal, if their values are equal 
at every argument: $\forall f, g : A \rightarrow B,  \forall x \in A, f(x) = g(x) \Rightarrow f = g$.
Most theorem provers
consistently admit functional extensionality as an axiom, which we call |funext| throughout.
Admitting |funext| is a convenient way 
to generate equalities on functions and reuse higher order 
proofs. For example, Agda defines 
functional extensionality as below in the standard library:
\begin{mcode}
  Extensionality : (a b : Level) -> Set _ -- Axiom.Extensionality.Propositional
  Extensionality a b =
    {A : Set a} {B : A -> Set b} {f g : (x : A) -> B x} -> ($\forall$ x -> f x $\equiv$ g x) -> f $\equiv$ g
\end{mcode}
Having seen |funext|'s success in other dependently typed languages,
we naively admitted the |funext| axiom below in Liquid Haskell:  
\begin{mcode}
{-@ assume funext :: $\forall$ a b. f:(a->b) -> g:(a->b) -> (x:a -> {f x = g x}) -> {f = g} @-}
funext :: (a -> b) -> (a -> b) -> (a -> ()) -> ()
funext _f _g _pf = () 
\end{mcode}
The |assume| keyword introduces an axiom: Liquid Haskell will accept the refinement signature of |funext|
wholesale and ignore its definition.
Also, note that the |=| symbol in the refinements refers to SMT equality (see \S\ref{smt:equality}).
Our encoding certainly \textit{looks}
like Agda's |Extensionality| axiom.
But looks can be deceiving: in Liquid Haskell,
we can use |funext| to prove |false|. Why?

\medskip

\noindent
Consider two functions on |Integer|s:
the |incrInt| function increases all integers by one;
the |incrPos| function increases positive numbers by one, returning |0| otherwise:
\begin{mcode}
  incrInt, incrPos :: Integer -> Integer
  incrInt n = n + 1
  incrPos n = if 0 < n then n + 1 else 0
\end{mcode}
Liquid Haskell easily proves that 
these two functions behave the same on positive 
numbers: 
\begin{mcode}
  {-@ type Pos = {n:Integer | 0 < n } @-}
  {-@ incrSamePos :: n:Pos -> {incrPos n = incrInt n} @-}
  incrSamePos :: Integer -> ()
  incrSamePos _n = ()
\end{mcode}    
We can use |funext|
to show that |incrPos| and |incrInt| are equal, 
using our proof |incrSamePos| on the domain of positive numbers. 
\begin{mcode}
  {-@ incrExt :: {incrPos = incrInt} @-}
  incrExt :: ()
  incrExt = funext incrPos incrInt incrSamePos  
\end{mcode}
Having |incrExt| to hand, it's easy
to prove that every higher-order use of |incrPos| can be
replaced with |incrInt|, which is much more efficient---it saves us a conditional branch!
For example, |incrMap| shows that mapping over a list with |incrPos| is just the same as mapping over it with |incrInt|.
\begin{mcode}
  {-@ incrMap :: xs:[Pos] -> {map incrPos xs = map incrInt xs} @-}
  incrMap :: [Integer] -> ()
  incrMap xs = incrExt
\end{mcode}
We could prove |incrMap| without function equality, \ie if we only knew |incrSamePos|. To do so, we would write
an inductive proof---and we'd have to redo the proof for every context in which we would rewrite |incrPos| to |incrInt|.
So |funext| is in part about \emph{modularity} and \emph{reuse} in theorem proving.
We don't give a full example here, but |funext| is particularly critical when trying to equate structures that are themselves higher order, like difference lists or streams.

\medskip

\noindent
Unfortunately, |incrExt| makes it \emph{too} easy to prove
equivalences... our system is inconsistent! Here's a proof that 0 is
equal to -4:
\begin{mcode}
  {-@ inconsistencyI :: {incrPos (-5) = incrInt (-5)} @-} -- 0 = -4
  inconsistencyI :: ()
  inconsistencyI = incrExt 
\end{mcode}
What happened here? How can we have that equality... that |0 = -4|?
Liquid Haskell looked at |incrExt| and saw the two functions were equal... without any regard to the domain on which |incrExt| proved |incrPos| and |incrInt| equal!
We \emph{forgot} the domain, and so |incrExt| generates a proof in SMT that those two functions are equal 
on \textit{any} domain.

\medskip

\noindent
So \textbf{\texttt{funext} is inconsistent in Liquid Haskell}! The problem is that \textbf{Liquid Haskell
forgets the domain on which the two functions are proved equal}, remembering only the equality itself.

We can exploit |funext| to find equalities between \emph{any} two
functions that share the same Haskell type on the \emph{empty} domain,
and Liquid Haskell will treat these functions as \emph{universally}
equal. Ouch!

For example, |plus2| below defines a function that increases its 
input by |2| and is obviously not equal to |incrInt| on any nontrivial domain.
\begin{mcode}
  plus2 :: Integer -> Integer 
  plus2 x = x + 2
\end{mcode}
Even so, we can use |funext| to prove that |plus2| behaves the same as
|incrInt| on the empty domain, \ie for all inputs |n| that satisfy
|false|.
\begin{mcode}
  {-@ type Empty = {v:Integer | false } @-}
  {-@ incrSameEmpty :: n:Empty -> {incrInt n = plus2 n} @-}
  incrSameEmpty :: Integer -> ()
  incrSameEmpty _n = ()
\end{mcode}
Now |incrSameEmpty| provides enough evidence for |funext| to show that 
|incrInt| equals |plus2|, which we use to prove another 
egregious inconsistency.
\begin{mcode}
  {-@ incrPlus2Ext :: {incrInt = plus2} @-}
  incrPlus2Ext ::  ()
  incrPlus2Ext = funext incrInt plus2 incrSameEmpty

  {-@ inconsistencyII :: {incrInt 0 = plus2 0} @-} -- 1 = 2
  inconsistencyII :: ()
  inconsistencyII = incrPlus2Ext 
\end{mcode}
Liquid Haskell isn't like most other dependent type theories: we can't
just admit |funext| as phrased. But we still want to prove equalities between
higher-order values! What can we do?

\subsection{Refined, Type-Indexed, Extensional, Propositional Equality}

\mmg{abrupt transition...}

\newcommand\dom[1]{\ensuremath{\mathcal{D}_{#1}}}

If we're going to reason using functional extensionality in Liquid
Haskell, we'll need to be careful to remember the type at which we
show the functions produce equal results.
What domains are involved when we use functional extensionality?

To prove two functions $f$ and $g$ extensionally equal, we must
reason about \emph{four} domains.
Let \dom{f} and \dom{g} be
the domains on which the functions $f$ and $g$ are respectively defined.
Let \dom{p} be the domain on which the two functions are proved equal
and \dom{e} the domain on which the resulting equality between the two functions is found.
In our |incrExt| example above, 
the function domains are |Integer| (|$\dom{f}$ = $\dom{g}$ = Integer|), 
as specified by the function definitions, 
the domain of the proof is positive numbers (|$\dom{p}$ = Pos|),
as specified by |incrSamePos|, and, disastrously, 
the domain of the equality itself is unspecified in |funext|.
Liquid Haskell will implicitly set
the domain on which the functions are equal to the most general one where both functions can be called (|$\dom{e}$ = Integer|). \mmg{is this right?}

Our |funext| encoding naively imposes no real constraints between these domains.
In fact, |funext| only requires that 
\dom{f}, \dom{g}, and \dom{p} are supertypes of the empty 
domain (\S\ref{sec:rules}), which trivially holds for all types, leaving \dom{e} underconstrained.

To be consistent,
we need a functional extensionality axiom that 
(1) captures the domain of function equality \dom{e} explicitly,  
(2) requires that the domain of the equality, \dom{e}, is a subtype of the domain of the proof, \dom{p}, which should be a subtype of the functions domains, \dom{f} and \dom{g}, and 
(3) ensures that the resulting equality between functions is only used on subdomains of \dom{e}.

\medskip \noindent
Our solution is to define a refined, type-indexed, extensional
propositional equality. We do so in the Liquid Haskell library
|$\libname$|, which defines a \underline{p}ropositional \underline{eq}uality also called |PEq|.
We write 
|PEq a {e$_l$} {e$_r$}|
to mean that the expressions |e$_l$| and |e$_r$|
are propositionally equal and of type |a|.
We carefully crafted |PEq|'s definition as a refined GADT
(\S\ref{sec:eqrt-gadt}) to meet our three criteria.

\paragraph{1. \texttt{PEq} is Type-Indexed}
The type index |a| in |PEq a {e$_l$} {e$_r$}| makes it easy to track
types explicitly.
|PEq|'s constructor axiomatizing functional extensionality keeps careful track of types:
\begin{mcode}
  $\XEq$ :: f:(a -> b) -> g:(a -> b) -> (x:a -> $\PEq$ b {f x} {g x}) -> $\PEq$ (a -> b) {f} {g}
\end{mcode}
The result type of $\XEq$ explicitly captures the equality domain as
the domain of the return type (\ie |a|).  The standard variance and
type checking rules of Liquid Haskell ensure that the domains
\dom{f}, \dom{g}, and \dom{p} are supertypes of \dom{e}.
(See \S\ref{sec:rules} for more detail on type checking.)

\paragraph{2. Generating Function Equalities}

The |XEq| case of |PEq| generates equalities at function types
using functional extensionality.
Liquid Haskell will check the domains appropriately: it won't prove
equality between functions at an inappropriate domain.

Returning to our concrete example of |incrPos| and |incrInt|,
we can use |XEq| to find these functions equal
on the domain |Pos|:
\begin{mcode}
{-@ incrExtGood :: $\PEq$ (Pos -> Integer) {incrPos} {incrInt} @-} 
incrExtGood :: $\PEq$ (Integer -> Integer)
incrExtGood = $\XEq$ incrPos incrInt incrEq 
\end{mcode}
|XEq| checks that the domains of the functions |incrPos| and |incrInt|
are supertypes of |Pos|, \ie |Pos <: Integer|. 
Further it checks that the domain of the proof |incrEq| is supertype of |Pos|.

What might we define for |incrEq|? Here are three alternatives.
Each alternative is either accepted or rejected by |XEq| as appropriate for
the |Pos -> Integer| type index;
each alternative is also possible or impossible to prove. (See~\S\ref{sec:eqrt-gadt} for more on  how |incrEq| can be defined.)
\begin{mcode}
incrEq :: n:Pos     -> PEq Integer {incrPos n} {incrInt n} -- ACCEPTED and POSSIBLE
incrEq :: n:Integer -> PEq Integer {incrPos n} {incrInt n} -- ACCEPTED and IMPOSSIBLE  
incrEq :: n:Empty   -> PEq Integer {incrPos n} {incrInt n} -- REJECTED and POSSIBLE
\end{mcode}
The first two alternatives, |n:Pos| and |n:Integer|, will be accepted by |XEq|, 
since both |Pos| and |Integer| are supertypes of |Pos|... 
though it is impossible to actually construct a proof for the second alternative,
\ie a proof that |incrPos n| equals |incrInt n| for all integers |n|.
On the other hand, the last proof on |n:Empty| is trivial, but |XEq| 
rejects it, because |Empty| is not a supertype of |Pos|.
Liquid Haskell's checks on |XEq|'s type indices prevents inconsistencies like |inconsistencyII|.

\paragraph{3. Using Function Equalities}
Just as |PEq|'s |XEq| constructor ensures that the right domains are checked and tracked for functional extensionality, we have a constructor for ensuring these equalities are used appropriately.
The constructor $\CEq$ characterizes equality as valid in all contexts, \ie if |x| and |y| are equal, they can be substituted 
in any context |ctx| and the results
|ctx x| and |ctx y| will be equal:
\begin{mcode}
  $\CEq$ :: x:a -> y:a -> $\PEq$ a {x} {y} -> ctx:(a -> b) -> $\PEq$ b {ctx x} {ctx y}
\end{mcode}
It is easy to use |$\CEq$| to apply functional equalities in higher order
contexts. For example, we can prove that |map incrPos| equals |map incrInt|:
\begin{mcode}
{-@ incrMapProp :: PEq ([Pos] -> [Integer]) {map incrPos} {map incrInt} @-}
incrMapProp :: PEq ([Integer] -> [Integer])
incrMapProp = $\CEq$ incrPos incrInt incrExtGood (map)
\end{mcode}

We can more generally show that propositionally equal functions produce equal results on equal inputs.
The trick is to \emph{flip} the context, 
defining a function |app|
that takes as input two functions |f| and |g|, 
a proof these functions are equal, and an argument |x|, 
returning a proof that |f x = g x|: 
\begin{mcode}
  {-@ app :: f:(a -> b) -> g:(a -> b) -> PEq (a -> b) {f} {g} 
          -> x:a -> PEq b {f x} {g x} @-}
  app :: (a -> b) -> (a -> b) -> PEq (a -> b) -> a -> PEq b 
  app f g eq x = $\CEq$ f g eq (flip x)

  flip x f = f x 
\end{mcode}
The |app| lemma makes it easy to use function equalities while still checking the domain on which the function is applied.
These checks prevent inconsistencies 
like |inconsistencyI|. 
For instance, we can try to apply the functional equality |incrExtGood|
to a bad and a good input.
\begin{mcode}
  {-@ badFO ::PEq Integer {incrPos 0} {incrInt 0} @-}
  badFO = app incrPos incrInt incrExtGood 0  -- REJECTED

  {-@ goodFO :: x:{Integer | 42 < x } -> PEq Integer {incrPos x} {incrNat x} @-}
  goodFO x = app incrPos incrInt incrExtGood x -- ACCEPTED
\end{mcode}

Liquid Haskell rejects the bad input in |badFO|: the number $0$ isn't in the |Pos| domain on which |incrExtGood| was proved.
Liquid Haskell accepts the good input in |goodFO|, since any |x| greater than 42 is certainly positive.
The |goodFO| proof yields a first-order equality on any such |x|, here on |Integer|.
Such first order equalities correspond neatly with the notion of
equality used in the SMT solvers that buttress all of Liquid Haskell's
reasoning.
(For more information on how SMT equality relates to notions of equality in Liquid Haskell, see~\S\ref{sec:eqrt-gadt}. For an example of how these first-order equalities can lead to runtime optimizations, see~\S\ref{sec:eg:spec}.)
\nv{what is SMT comparison...} 
\mmg{this is vague but maybe okay? we can't explain everything at once... and we \emph{don't} want people constantly wondering what each equality is, etc.}

\subsection{Why Isn't \texttt{funext} Inconsistent in Agda?}
At the beginning of \S\ref{sec:inconsistency}, we present 
Agda's |Extensionality| axiom, whose return type is |f $\equiv$ g|.
Agda's equality appears to lack a type index.
Why doesn't Agda also suffer from inconsistency?

Agda's equality only seems to be unindexed. In fact, Agda's built-in equality is the standard, type-indexed Leibniz equality used in most dependent type theories (omitting |Level| polymorphism):
\begin{mcode}
data _$\equiv$_ {A : Set} (x : A) : A $\rightarrow$ Set a where
  refl : x $\equiv$ x
\end{mcode}
The curly braces around the type index |A| marks it as \emph{implicit}, i.e., to be
inferred.
If we were to explicitly give implicit arguments by wrapping them in
curly braces, Agda's extensionality axiom returns
|(_$\equiv$_) {a->b} f g|.

Our $\XEq$ axiom recovers the type indexing in Agda's equivalence that's missing in our original |funext| encoding. 
Of course, (Liquid) Haskell's lack of implicit type 
indices makes reasoning about function equalities verbose. 
On the other hand, Liquid Haskell's subtyping can reinterpret functions at many domains (see~\S\ref{sec:eg:refdom}).
In Agda, however, it is much more complex 
to reinterpret functions and to generate heterogeneous equality relating |incrInt| and |incrPos| 
only on positive inputs.

\begin{comment}
A closer comparison between Agda's |Extensionality| axiom 
and Liquid Haskell's |funext| might furter obfuscate
a non type-theory expert. 
The former returns |f $\equiv$ g| and the latter |f = g|. 
This similarity hints that one can replicate the same 
inconsistency in Agda. 
Consistenly, |incrExt| is not accepted by Agda, with an 
type error that |Integer| is not a subtype of |Pos|\footnote{
  On the attempt to replicate this inconsistency in Agda 
  we defined two functions on natural numbers that behave the same 
  on positive numbers, where positive numbers were defined using 
  \href{https://agda.github.io/agda-stdlib/Data.Refinement.html}{Data.Refinement}.
}.
\end{comment}

%auto-ignore
\section{\texttt{\libname}: a Library and GADT for Extensional Equality}
\label{sec:eqrt-gadt}

We define the |$\libname$|
library in Liquid Haskell,
implementing the type-indexed propositional equality, also called \PEq. 
First, we axiomatize equality for base types in the \AEq typeclass 
(\S\ref{sec:axiom-eq}).
Next, 
we define propositional equality for base and function types with the \PEq GADT~\cite{10.1145/604131.604150, cheneyhinze03gadt} (\S\ref{sec:gadt-eqt}).
Refinements on the GADT enforce the typing rules of our formal model
(\S\ref{sec:eqrt}), but we prove some of the metatheory in 
Liquid Haskell itself (\S\ref{sec:gadt-metatheory}).
Finally, 
we discuss how \AEq and \PEq interact 
with Haskell's and SMT's equalities (\S\ref{subsec:interaction}). 

\subsection{The \AEq typeclass, for axiomatized equality}
\label{sec:axiom-eq}

We begin with by axiomatizing equality that can be ported to SMT: such an equality should be an equivalence relation that implies SMT equality.
We use refinements on typeclasses~\citep{liu20typeclasses} 
to define a typeclass \AEq, which
contains the (operational) equality method $\aEqSym$,
three methods that encode the equality laws, 
and one method that encodes correspondence with SMT equality. 
\begin{mcode}
{-@ class $\AEq$ a where 
     ($\aEqSym$)    :: x:a -> y:a -> Bool
     reflP  :: x:a -> {$\bbEq{x}{x}$}
     symmP  :: x:a -> y:a -> { $\bbEq{x}{y}$ => $\bbEq{y}{x}$ }
     transP :: x:a -> y:a -> z:a -> { ($\bbEq{x}{y}$ && $\bbEq{y}{z}$) => $\bbEq{x}{z}$ } 
     smtP   :: x:a -> y:a -> { $\bbEq{x}{y}$ } -> { x = y } @-}
\end{mcode}
To define an instance of $\AEq$ one has to define the method 
($\aEqSym$) and provide explicit proofs that it is reflexive, 
symmetric, and transitive (|reflP|, |symmP|, and |transP| \resp); 
thus \aEqSym is, by construction, an equality. 
Finally, we require the proof |smtP| that captures that 
($\aEqSym$) implies equality provable by SMT (\eg structural equality).\footnote{
 The three axioms of equality alone are not enough to ensure SMT's structural 
 equality, \eg one can define an instance  \texttt{x} $\equiv$ \texttt{y = True} 
 which satisfies the equality laws, but does not correspond to SMT equality. 
}

\subsection{The \PBEq GADT and its \PEq Refinement}
\label{sec:gadt-eqt}

\begin{figure}[t]
\begin{mcode}
-- (1) Plain GADT
data $\PBEq$ :: * -> *  where 
     $\BEq$  :: $\AEq$ a => a -> a -> () -> $\PBEq$ a 
     $\XEq$  :: (a -> b) -> (a -> b) -> (a -> $\PEq$ b) -> $\PBEq$ (a -> b)
     $\CEq$  :: a -> a -> $\PBEq$ a -> (a -> b) -> $\PBEq$ b 

-- (2) Uninterpreted equality between terms e1 and e2 
{-@ type $\PEq$ a e1 e2 = {v:$\PBEq$ a | e1 `feq` e2} @-}
{-@ measure (`feq`) :: a -> a -> Bool @-}

-- (3) Type refinement of the GADT
{-@ data $\PBEq$  :: * -> *  where 
     $\BEq$ :: $\AEq$ a => x:a -> y:a -> {v:() | $\bbEq{x}{y}$} 
         -> $\PEq$ a {x} {y}
     $\XEq$ :: f:(a -> b) -> g:(a -> b) -> (x:a -> $\PEq$ b {f x} {g x}) 
         -> $\PEq$ (a -> b) {f} {g}
     $\CEq$ :: x:a -> y:a -> $\PEq$ a {x} {y} -> ctx:(a -> b) 
         -> $\PEq$ b {ctx x} {ctx y} @-}
\end{mcode}
  
\caption{Implementation of the propositional equality $\PEq$ as a refinement of Haskell's GADT $\PBEq$.}
\label{fig:eqt}
\end{figure}

We use \AEq to define our type-indexed propositional equality |$\PEq$ a {e1} {e2}| in three steps (Figure~\ref{fig:eqt}):
(1) structure as a GADT, 
(2) definition of the refined type |$\PEq$|, and
(3) axiomatization of equality by refining of the GADT.

First, we define the structure of our proofs of equality as |$\PBEq$|, 
an unrefined, \ie Haskell, GADT (Figure~\ref{fig:eqt}, (1)).
The plain GADT defines the structure of derivations in our
propositional equality (\ie which proofs are well formed), but none
of the constraints on derivations (\ie which proofs are valid).
There are three ways to prove our propositional equality, each
corresponding to a constructor of |$\PBEq$|:
using an |$\AEq$| instance (constructor |$\BEq$|);
using \extName\ (constructor |$\XEq$|);
and by congruence closure (constructor |$\CEq$|).

Next, we define the refinement type |$\PEq$| to be our propositional
equality (Figure~\ref{fig:eqt}, (2)).
Two terms |e1| and |e2| of type |a| are propositionally
equal when (a) there is a well formed and valid |$\PBEq$| proof and (b) we
have |e1 `feq` e2|, where |(`feq`)| is an \textit{uninterpreted} SMT function.
%
%|$\PEq$| is defined as a Liquid Haskell type alias that uses capital letters to indicate which
%formal type parameters in type definitions are expressions, \eg in
%\texttt{type \PEq a E1 E2 = ...}, both \texttt{E1} and \texttt{E2}
%  are expressions, but \texttt{a} is a type.
%
  Liquid Haskell uses curly braces for expression arguments
  in type applications, \eg in \texttt{\PEq a \{x\}
    \{y\}}, \texttt{x} and \texttt{y} are expressions, but
  \texttt{a} is a type.
%
%Since |(`feq`)| is uninterpreted, we can only get |e1 `feq` e2| from axioms or assumptions.

Finally, we refine the type constructors of |$\PBEq$| to axiomatize the uninterpreted |(`feq`)|
and generate proofs of |$\PEq$| (Figure~\ref{fig:eqt}, (3)).
Each constructor of |$\PBEq$| is refined to return something of type
|$\PEq$|, where |$\PEq$ a {e1} {e2}| means that terms |e1| and |e2| are
considered equal at type |a|. 
|$\BEq$| constructs proofs that two terms, |x| and |y| of type |a|, are
equal when |$\bbEq{x}{y}$| according to the |$\AEq$| instance for |a|. 
%The metatheory of Liquid Haskell has always {assumed}
% that |Eq| instances
%correspond to SMT equality.\footnote{
%  This assumption is encoded as the refinement type for \texttt{(==)} of \S\ref{sec:eq-adequacy}
%  and is not actually checked at instance definitions, 
%  thus unsoundness might occur when Haskell's \texttt{Eq} instances 
%  do not respect the equality axioms.
%} 
%
|$\XEq$| is the (type-indexed) \extName\ axiom. Given functions |f| and
|g| of type |a -> b|, a proof of equality via extensionality also
needs a |$\PEq$|-proof that |f x| and |g x| are equal for all |x| of
type |a|. Such a proof has refined
type |x:a -> $\PEq$ b {f x} {g x}|. Critically, we don't lose any type information about |f| or |g|!
|$\CEq$| implements congruence closure\ifspace (\S\ref{subsec:congruence-closure})\fi: 
for |x| and |y| of type |a|
that are equal---\ie |$\PEq$ a {x} {y}|---and an arbitrary context
with an |a|-shaped hole (|ctx :: a -> b|), filling the context with
|x| and |y| yields equal results, \ie
|$\PEq$ b {ctx x} {ctx y}|.
\ifspace Note that in \corelaneq, we externally prove congruence closure (Theorem~\ref{thm:theory:contextual-equivalence}), 
but in 
The implementation explicitly encodes congruence closure as an extra 
type constructor (as explained in~\S\ref{subsec:congruence-closure}.)
\fi

\paragraph{Design Alternatives}
The first design choice we made was to define  
|$\PEq$| as a GADT and not an axiomatized opaque type. 
While there's no reason to pattern match on |$\PEq$| terms, there's also no harm in it.
A GADT provides a clean interface on how |$\PEq$| 
can be generated: it collects all the axioms as data contructors 
and prevents the user from arbitrarily adding new constructors. 
The second choice we made was to define the type $\PEq$
using a fresh uninterpreted equality symbol (Figure~\ref{fig:eqt}, (2)) 
instead of SMT equality. 
Again, we made this decision to ensure that all $\PEq$ terms are constructed 
via the constructors 
and not implicit SMT automation. 
The final choice we made was to define the base case using the |$\AEq$| constraints. 
We considered two alternatives:
\begin{mcode}
  $\BEq$ ::         x:a -> y:a -> {v:() | x = y } -> $\PEq$ a {x} {y} -- alternative I 
  $\BEq$ :: Eq a => x:a -> y:a -> {v:() | x = y } -> $\PEq$ a {x} {y} -- alternative II
\end{mcode}
We rejected the first to ensure that the base case does not include
functions (which don't generally have |Eq| instances) and to support our metatheory (\S\ref{sec:gadt-metatheory}). 
We rejected the second to exclude user-defined |Eq| instances that do not correspond to SMT 
equality (since in~\S\ref{subsec:interaction} we define a machanism to turn $\PEq$ to SMT equalities).

\paragraph{Example:}
Having seen |AEq| and the |BEq| case of $\PEq$, 
we can define the |incrEq| function from~\S\ref{sec:inconsistency}:
\begin{mcode}
  {-@ incrEq :: x:Pos -> PEq Integer {incrPos x} {incrInt x} @-}
  incrEq x = $\BEq$ (incrPos x) (incrInt x) (reflP (incrPos x))
\end{mcode}
We start from |reflP (incrPos x) :: {incrPos x$\bbEq{}{}$incrPos x}|, 
since |x| is positive, the SMT derives |incrPos x = incrInt x|,
generating the $\BEq$ proof term |{incrPos x$\bbEq{}{}$incrInt x}|.

\subsection{Equivalence Properties and Classy Induction}
\label{sec:gadt-metatheory}

We can  prove metaproperties of the actual implementation of 
$\PEq$---reflexivity, symmetry, and transitivity---within Liquid Haskell itself.
% (as in Theorem~\ref{thm:theory:properties})

Our proofs in Liquid Haskell go by induction on types. 
But ``induction'' in Liquid Haskell means writing a recursive function,
which necessarily has a single, fixed type.
To express that $\PEq$ is reflexive, 
we want a Liquid Haskell theorem |refl :: x:a -> $\PEq$ a {x} {x}|,
% corresponds to Theorem~\ref{thm:theory:properties} (a), 
but its proof goes by induction
on the type |a|, which is not possible in ordinary Haskell 
functions.\footnote{A variety of GHC extensions allow case
  analysis on types (\eg type families and generics), but, 
  unfortunately, Liquid Haskell doesn't support such fancy type-level programming.}

The essence of our proofs is a folklore method we call \emph{classy
  induction} (see \S\ref{sec:related} for the history).
To prove a theorem using classy induction on the |$\PEq$| GADT, one must:
(1) define a typeclass with a method whose refined type corresponds to
the theorem;
(2) prove the base case for types with |$\AEq$| instances;
and (3) prove the inductive case for function types, where typeclass constraints on smaller types generate inductive
hypotheses.
All three of our proofs follow this pattern.
Here's the proof 
for reflexivity.
%
%\begin{figure}[t]
%\begin{minipage}[t]{.48\linewidth}
\begin{mcode}
-- (1) Refined typeclass
{-@ class Reflexivity a where 
  refl :: x:a -> $\PEq$ a {x} {x} @-}

-- (2) Base case (AEq types)
instance $\AEq$ a => Reflexivity a where
  refl a = $\BEq$ a a (reflP a)
-- (3) Inductive case (function types)
instance Reflexivity b => Reflexivity (a -> b) where
  refl f = $\XEq$ f f (\a -> refl (f a))
\end{mcode}
%  \caption{A proof of reflexivity using classy induction.}
%  \label{fig:reflexivity}
%\end{figure}
%
%
For (1), the typeclass |Reflexivity| simply states the desired
theorem type,
|refl :: x:a -> $\PEq$ a {x} {x}|.
For (2), given an |$\AEq$ a| instance, |$\BEq$| and the |reflP| method are combined 
to define the |refl| method.
To define such a general instance, we
enabled the GHC extensions \texttt{FlexibleInstances} and
\texttt{UndecidableInstances}.
%\footnote{To define such a general instance, we
%  enabled two GHC extensions: \texttt{FlexibleInstances} and
%  \texttt{UndecidableInstances}.}
%
For (3), |$\XEq$| can show that |f| is equal to itself by using the
|refl| instance from the codomain constraint: the |Reflexivity b|
constraint generates a method |refl :: x:b -> $\PEq$ b {x} {x}|. The
codomain constraint |Reflexivity b| corresponds exactly to the inductive hypothesis 
on the codomain: we are doing induction!

At compile time, any use of |refl x| when |x| has type |a| asks the
compiler to find a |Reflexivity| instance for |a|.
If |a| has an |$\AEq$| instance, the proof of |refl x| will simply be
|$\BEq$ x x (reflP a)|.
If |a| is a function of type |b -> c|, then the compiler will try to
find a |Reflexivity| instance for the codomain |c|---and if it finds
one, generate a proof using |$\XEq$| and |c|'s proof.
The compiler's constraint resolver does the constructive proof for us,
assembling the `inductive tower' to give us a |refl| for our chosen
type.
That is, even though
Liquid Haskell can't mechanically check that our inductive proofs are in
general complete (\ie the base and inductive cases cover all types),
our |refl|
proofs will work for types where the codomain bottoms out with an |$\AEq$|
instance, i.e., any type consisting of functions and |AEq|-equable types.

Our proofs of symmetry and transitivity follow the same pattern, but both also make use congruence closure.
The full proofs can be found in~\citet{implementation}.
%
%\begin{comment}
Here is the inductive case from symmetry:
%
%% HARDCODED PEq
\begin{mcode}
instance Symmetry b => Symmetry (a -> b) where
-- sym :: l:(a->b) -> r:(a->b) -> PEq (a->b) {l} {r} -> PEq (a->b) {r} {l}
   sym l r pf = $\XEq$ r l dollar \a -> sym (l a) (r a) ($\CEq$ l r pf (dollar a) ? (dollar a l) ? (dollar a r)))
\end{mcode}
Here |l| and |r| are functions of type |a -> b| and we know that
|l `feq` r|; we must prove that |r `feq` l|.
We do so using:
(a) |$\XEq$| for extensionality, letting |a| of type |a| be given;
(b) |sym (l a) (r a)| as the IH on the codomain |b| on 
(c) |$\CEq$| for congruence closure on |l `feq` r| in the context
|(dollar a)|.
The last step is the most interesting: if |l| is equal
to |r|, then plugging them into the same context yields
equal results; as our context,
we pick |(dollar a)|, \ie |\f -> f a|, showing that |l a `feq` r a|;
the IH on the codomain |b| yields |r a `feq` l a|,
and extensionality shows that |r `feq` l|, as desired.
The operator |?|, defined as |x ? p = x|, asks Liquid Haskell to encode `p` into the SMT solver to help prove `x`.
Our use of |?| unfolds the definitions |dollar a l| and |dollar a r| to help |CEq|.
%\end{comment}

\begin{comment}
  % move this to metatheory?
\paragraph{Congruence Closure}
\label{subsec:congruence-closure}
Congruence closure is proved typically by induction on
expressions, \ie following the cases of the definition of
the logical relation\ifspace (as in Theorem~\ref{thm:theory:contextual-equivalence})\fi.
%
While classy induction allows us to perform induction on types to
prove meta-properties within the language, we have no way to perform
induction on terms in Liquid Haskell (Coq can; see discussion of Sozeau's work in \S\ref{sec:related}).
%
Instead, we axiomatize congruence closure with the |$\CEq$| 
data constructor and use it in our proofs about function equalities
(\eg the |map| function in \S\ref{sec:eg:map})
\end{comment}

\subsection{Interaction of the different equalities.}
\label{subsec:interaction}
\label{smt:equality}

\nv{we need to bring back the SMT equality discussion from POPL submission}

We have four equalities in our system (Figure~\ref{fig:equalities}):
SMT equality ($=$), the ($\aEqSym$) method of the \AEq typeclass(\S\ref{sec:axiom-eq}), 
the refined GADT \PEq (\S\ref{sec:gadt-eqt}), 
and the (|==|) method of Haskell's |Eq| typeclass.

\paragraph{SMT Equality}
The single equal sign ($=$) represents SMT equality, which
satisfies the three equality axioms and is syntactically 
defined for data types. 
%Function equality in the SMT world is flexible. 
The SMT-LIB standard~\cite{BarST-RR-10}
permits the equality symbol on functions
but does not specify its behavior.
Implementations vary.
CVC4 allows for functional extensionality and higher-order
reasoning~\cite{Barbosa19}.  When Z3 compares functions for equality,
it treats them as arrays, using the extensional array theory to
incompletely perform the comparison. When asked if two functions are
equal, Z3 typically answers |unknown|.
To avoid this unpredictability, our system avoids SMT equality 
on functions. 

\paragraph{Interactions of Equalities}
SMT equalities are internally generated by Liquid Haskell
using the reflection and PLE tactic of~\citet{VazouTCSNWJ18} 
(see also \S\ref{sec:eg:reverse}). 
An $\bbEq{e_1}{e_2}$ equality can be generated one of three ways: 
(1) If SMT can prove an SMT equality $e_1 = e_2$, then the reflexivity 
|reflP| method can generate that equality, \ie |reflP $e_1$| proves $\bbEq{e_1}{e_1}$, which is enough to show $\bbEq{e_1}{e_2}$.  
(2) Our system provides $\AEq$ instances for the primitive Haskell types 
using the Haskell equality that we \textit{assume} satisfies the four laws, 
\eg the |instance $\AEq$ Int| is provided. 
(3) Using refinements in typeclasses~\cite{liu20typeclasses}
one can explicitly define instances of $\AEq$, which may or may not coincide with Haskell |Eq| instances. 

Constructors generate $\PEq$ proofs, bottoming out at |AEq|:
$\BEq$ combined with an $\AEq$ term and 
$\XEq$ or $\CEq$ combined with other $\PEq$ terms. 

Finally, we define a mechanism to convert $\PEq$ into an 
SMT equality. 
This conversion is useful when
we want to derive an SMT equality $f\ e = g\ e$
from a function equality |PEq (a -> b) {f} {g}| (see~\S\ref{sec:eg:spec}). 
The derivation requires that the domain |b| admits 
the axiomatized equality, |AEq|.
To capture this requirement we define
|toSMT| that converts |PEq| to SMT equality 
as a method of a class that requires an |AEq|
constraint: 
\begin{mcode}
  class AEq a => SMTEq a where 
    toSMT :: x:a -> y:a -> PEq a {x} {y} -> {x = y} 
\end{mcode}

\begin{figure}
  \includegraphics[width=0.8\columnwidth]{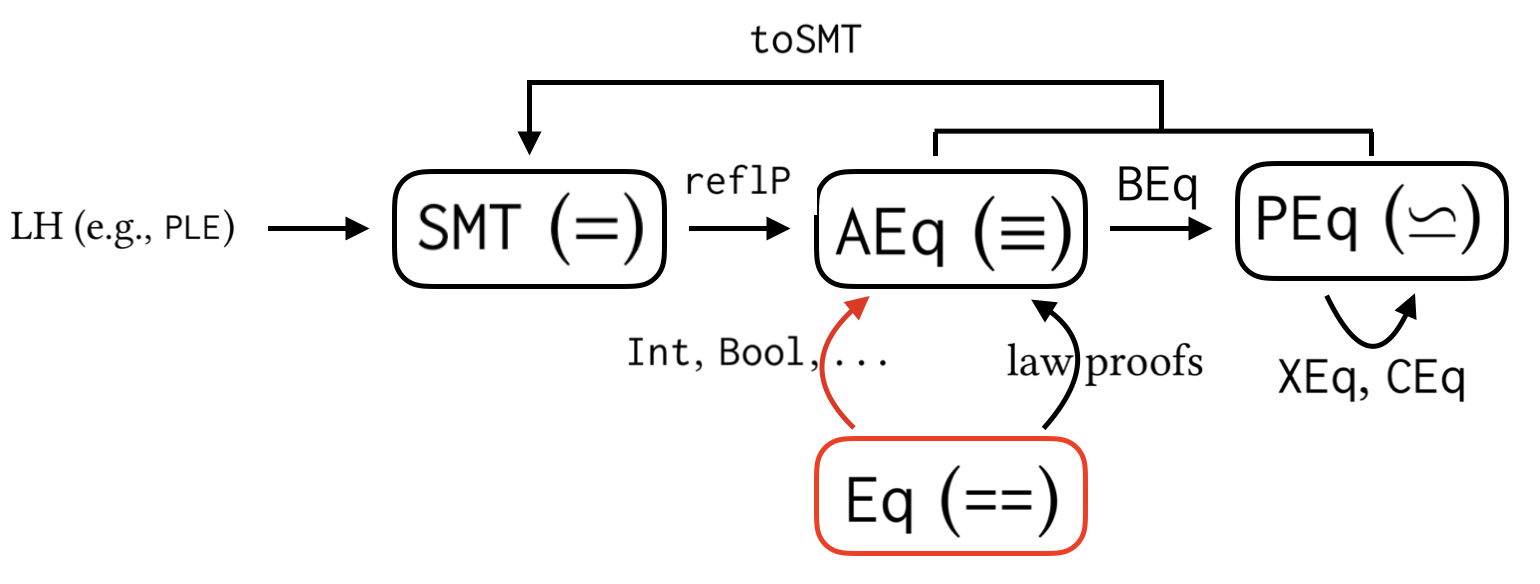}
\caption{The four different equalities and their interactions. Haskell equality is in \textcolor{red}{red} to highlight its potential for unsoundness.}
\label{fig:equalities}
\end{figure}

\paragraph{Non-interaction}
Liquid Haskell
maps Haskell's |(==)| to SMT equality by default.
It is surely unsound to do so, as users can define their own |Eq| instances with |(==)| methods that do arbitrarily strange things.
To avoid this built-in unsoundness, our implementation 
and case studies don't directly  
use Haskell's equality. 

\paragraph{Equivalence Relation Axioms}
Each of the four equalities has a different relationship to th equivalence relation axioms (reflexivity, symmetry, transitivity).
|AEq| comes with explicit proof methods that capture the axioms.
For |PEq|, we prove the equality axioms using classy induction (\S\ref{sec:gadt-metatheory}).
For SMT equality, we simply trust implementation of the 
underlying solver.
For Haskell's equality, there's no general way to enforce the equality axioms, though users can choose to prove them.

\paragraph{Computability}
Finally, the |Eq| and |AEq| classes 
define the computable equalities used in programs, |(==)| and |($\aEqSym$)| respectively.
The |PEq| equality only contains 
proof terms, while the SMT equality lives entirely
inside the refinements; neither can be meaningfully used in programs.

\section{Case Studies}
\label{sec:eg}

We demonstrate our propositional equality in seven case studies.
We start by moving from first-order equalities to
equalities between functions (|reverse|,
\S\ref{sec:eg:reverse}).
Next, we show how |$\PEq$|'s type indices reason about 
refined domains and dependent codomains of functions (|succ|, \S\ref{sec:eg:refdom}).
Proofs about higher-order functions demonstrate the contextual
equivalence axiom (|map|, \S\ref{sec:eg:map}).
Then, we see that |$\PEq$| plays well with multi-argument
functions (|foldl|, \S\ref{sec:eg:fold}).
Next, we present how a |$\PEq$| proof can speedup code (|spec|, \S\ref{sec:eg:spec}).
Finally, we present two bigger case studies that prove
the monoid laws for endofunctions (\S\ref{sec:endofunction}) and
the monad laws for reader monads (\S\ref{sec:reader}).
Complete code is available in the~\cite{implementation}.
%In~\cite{implementation} we also prove
%the . % , that are not presented here for space. 

%To save space, we omit the |reflect| annotations from the following code.

\subsection{Reverse: from First- to Higher-Order Equality}
\label{sec:eg:reverse}

\begin{figure}
\flushleft
\noindent\textit{Two correct and one wrong implementations of reverse}\hfill
\vspace{0.1mm}

\begin{minipage}{.50\textwidth}
\begin{mcode}
slow, bad, fast :: [a] -> [a]
slow []     = [] 
slow (x:xs) = slow xs ++ [x]
bad xs = xs 
\end{mcode}
\end{minipage}%
\begin{minipage}{.50\textwidth}
\begin{mcode}
fast xs = fastGo [] xs 
fastGo :: [a] -> [a] -> [a]
fastGo acc []     = acc 
fastGo acc (x:xs) = fastGo (x:acc) xs 
\end{mcode}
\end{minipage}%

\vspace{0.2mm}
\noindent\textit{First-Order Theorems relating \texttt{fast} and \texttt{slow}}\hfill
\vspace{0.1mm}
\begin{mcode}
reverseEq :: xs:[a] -> { fast xs $\req$ slow xs }
lemma     :: xs:[a] -> ys:[a] -> {fastGo ys xs $\req$ slow xs ++ ys}
assoc     :: xs:[a] -> ys:[a] -> zs:[a] -> { (xs ++ ys) ++ zs $\req$ xs ++ (ys ++ zs) }
rightId   :: xs:[a] -> { xs ++ [] $\req$ xs }
\end{mcode}
\vspace{0.2mm}
\noindent\textit{Proofs of the First-Order Theorems}\hfill
\vspace{0.1mm}
\vspace*{-1em}
\begin{minipage}[t]{.63\textwidth}
\begin{mcode}
reverseEq x    = lemma x [] ? rightId (slow x)
lemma []    _  = ()
lemma (a:x) y  = lemma x (a:y) ? assoc (slow x) [a] y
x ? _pf        = x
\end{mcode}
\end{minipage}%
\hspace*{1.25em}%
\begin{minipage}[t]{.33\textwidth}
\begin{mcode}                
rightId []      = ()
rightId (_:x)   = rightId x
assoc []    _ _ = ()
assoc (_:x) y z = assoc x y z 
\end{mcode}
\end{minipage}%
\caption{Reasoning about list reversal.}
\label{fig:reverse}
\end{figure}

Consider three candidate definitions of the list-reverse function
(Figure~\ref{fig:reverse}, top): a `fast' one in accumulator-passing
style, a `slow' one in direct style,
and a `bad' one that is the identity. % returns the original list.

\paragraph{First-Order Proofs}
The |reverseEq| theorem neatly relates the two list reversals (Figure~\ref{fig:reverse}).
The final theorem |reverseEq| is a corollary of a |lemma| and
|rightId|, which shows that |[]| is a right identity for list append, |(++)|. The |lemma| is
the core induction, relating the accumulating |fastGo| and the
direct |slow|. The |lemma| itself uses the inductive lemma
|assoc| to show associativity of |(++)|.
All the equalities in the first order statements use the SMT equality, 
since they are automatically proved by Liquid Haskell's reflection 
and PLE tactic~\cite{VazouTCSNWJ18}. 
%There is no |Eq| constraint on |lemma|, |assoc|, or |rightId|: their
%refinements all use |(=)|, the SMT equality. The |Eq| constraint on
%|reverseEq| is to support the |(===)| proof combinator, which checks
%that the logical steps you make are valid; if we replaced those
%|(===)| with |(?)|, we could eliminate |Eq| constraints entirely.

\paragraph{Higher-Order Proofs}
Plain SMT equality isn't enough to prove that
|fast| and |slow| are themselves equal.
We need functional extensionality: 
the |$\XEq$| constructor of the |$\PEq$| GADT.
\begin{mcode}
reverseHO :: PEq ([a] -> [a]) {fast} {slow}
reverseHO = $\XEq$ fast slow reversePf
\end{mcode}
The job of the |reversePf| lemma is to prove |fast xs|
propositionally equal to |slow xs| for all |xs|:
\begin{mcode}
reversePf :: xs:[a] -> PEq [a] {fast xs} {slow xs} 
\end{mcode}
There are several different ways to construct such a proof. 

\paragraph{Style 1: Lifting First-Order Proofs}
The first order equality proof |reverseEq|
lifts directly into propositional equality, 
using the |$\BEq$| constructor and the reflexivity property of $\AEq$. 
\begin{mcode}
reversePf1 :: AEq [a] => xs:[a] -> PEq [a] {fast xs} {slow xs}
reversePf1 xs = BEq (fast xs) (slow xs) (reverseEq xs ? reflP (fast xs))
\end{mcode}
Such proofs rely on SMT equality, which the |reflP| call turns into axiomatized
equality (|AEq|).
%are unsatisfying, since |$\BEq$|

\paragraph{Style 2: Inductive Proofs}
Alternatively, inductive proofs can be directly performed 
in the propositional setting, eliminating the |AEq| constraint. 
To give a sense of what these proofs are like, 
we translate |lemma| into |lemmaP|:
\begin{mcode}
lemmaP :: (Reflexivity [a], Transitivity [a]) 
       => rest:[a] -> xs:[a] -> PEq [a] {fastGo rest xs} {slow xs ++ rest} 
lemmaP rest [] = refl rest
lemmaP rest (x:xs) = trans (fastGo rest (x:xs)) (slow xs ++ (x:rest)) 
                           (slow (x:xs) ++ rest)
                           (lemmaP (x:rest) xs) (assocP (slow xs) [x] rest)  
\end{mcode}
The proof goes by induction and uses the |Reflexivity| and
|Transitivity| properties of |$\PEq$| encoded as typeclasses (\S\ref{sec:gadt-metatheory})
along with |assocP| and |rightIdP|, the propositional versions of |assoc| and |rightId| (not shown).
These typeclass constraints  propagate to the |reverseHO| proof, 
via |reversePf2|. 
\begin{mcode}
reversePf2 :: (Reflexivity [a], Transitivity [a]) 
           => xs:[a] -> PEq [a] {fast xs} {slow xs}
reversePf2 xs = trans (fast xs) (slow xs ++ []) 
                  (slow xs) 
                  (lemmaP [] xs) (rightIdP (slow xs)) 
\end{mcode}

\paragraph{Style 3: Combinations}
One can combine the easy first order inductive proofs 
with the typeclass-encoded properties.
Here |refl| sets up the propositional context; 
|lemma| and |rightId| complete the proof.

\begin{mcode}
reversePf3 :: (Reflexivity [a]) => xs:[a] -> PEq [a] {fast xs} {slow xs} 
reversePf3 xs = refl (fast xs) ? lemma xs [] ? rightId (slow xs)
\end{mcode}

\paragraph{Bad Proofs}
We could not use any of these styles to generate a bad
(non-)proof: 
neither |$\PEq$ ([a] -> [a]) {fast} {bad}|
nor |$\PEq$ ([a] -> [a]) {slow} {bad}| are provable.

\subsection{Succ: Refined Domains and Dependent Codomains}
\label{sec:eg:refdom}

Our propositional equality |$\PEq$| % , with standard refinement type checking,
naturally reasons about functions with refined domains and dependent codomains. 
For example, recall the functions |incrInt| and |incrPos| from~\S\ref{sec:inconsistency}:
\begin{mcode}
  incrInt, incrPos :: Integer -> Integer
  incrInt n = n + 1
  incrPos n = if 0 < n then n + 1 else 0
\end{mcode}
In~\S\ref{sec:inconsistency} we proved that the two functions are equal 
on the domain of positive numbers: 
\begin{mcode}
type Pos = {x:Integer | 0 < x }
posDom :: $\PEq$ (Pos -> Integer) {incrInt} {incrPos} 
posDom = $\XEq$ incrInt incrPos $\$$ \x -> $\BEq$ (incrInt x) (incrPos x) (reflP (incrInt x))
\end{mcode}
We can also reason about how each function's domain affects its codomain. 
For example, we can prove that these functions are equal \emph{and} they take |Pos| inputs 
to natural numbers.
\begin{mcode}
posRng :: $\PEq$ (Pos -> {v:Integer | 0 <= v}) {incrInt} {incrPos}
posRng = $\XEq$ incrInt incrPos $\$$ \x -> $\BEq$ (incrInt x) (incrPos x) (reflP (incrInt x)) 
\end{mcode}

Finally, we can prove properties of the function's codomain that depend 
on the inputs. Below we show that on positive arguments,
the result is always increased by one. 
\begin{mcode}
type SPos x = {v:Pos | v $\req$ x + 1}
depRng :: $\PEq$ (x:Pos -> SPos {x}) {incrInt} {incrPos}
depRng = $\XEq$ incrInt incrPos $\$$ \x -> $\BEq$ (incrInt x) (incrPos x) (reflP (incrInt x)) 
\end{mcode}

\begin{comment}
%
The proof uses |dXEq|, a dependent version of |$\XEq$| that
explicitly captures the codomain of functions in an indexed, abstract refinement |p|
and curries it in the result |$\PEq$| type~\cite{ART}. 
%
\begin{mcode}
assume dXEq :: forall<p :: a -> b -> Bool>. 
               f:(a -> b) -> g:(a -> b) 
            -> (x:a -> $\PEq$ b<p x> {f x} {g x}) 
            -> $\PEq$ (x:a -> b<p x>) {f} {g} 
dXEq = $\XEq$
\end{mcode}
%
Note that the above specification is assumed by our library, since 
Liquid Haskell can't yet parameterize the definition of the GADT 
|$\PEq$| with abstract refinements. 
\end{comment}

\paragraph{Equalities Rejected by Our System}
Liquid Haskell correctly rejects various wrong, (non-)proofs of equality between the functions 
|incrInt| and |incrPos|. We highlight three:
\mmg{We don't ever talk about \emph{negating} equality. Interesting.}
\begin{mcode}
badDom  :: $\PEq$ (  Integer -> Integer)               {incrInt} {incrPos}
badCod  :: $\PEq$ (  Pos     -> {v:Integer | v < 0})   {incrInt} {incrPos}
badDCod :: $\PEq$ (x:Pos     -> {v:Integer | v $\req$ x+2}) {incrInt} {incrPos}
\end{mcode}
|badDom| expresses that |incrInt| and |incrPos| are equal for any 
|Integer| input, which is wrong, \eg |incrInt (-2)| yields
|-1|, but |incrPos (-2)| yields 0.
Correctly constrained to positive domains, 
|badCod| specifies a negative codomain (wrong) while |badDCod| specifies 
that the result is increased by 2 (also wrong). 
Our system rejects all three with a refinement type error.  

\subsection{Map: Putting Equality in Context}
\label{sec:eg:map}
Our propositional equality can be used 
in higher order settings: we prove that if |f| and |g| are propositionally equal, then |map f| and |map g| are also equal.
Our proofs use the congruence closure equality constructor/axiom |$\CEq$|. 

\paragraph{Equivalence on the Last Argument}
Direct application of |$\CEq$| ports 
a proof of equality to the last argument of the context (a function). 
For example, |mapEqP| below states that if two functions |f|
and |g| are equal, then so are the partially applied functions 
|map f| and |map g|. 
\begin{mcode}
mapEqP :: f:(a -> b) -> g:(a -> b) -> $\PEq$ (a -> b) {f} {g} 
       -> $\PEq$ ([a] -> [b]) {map f} {map g}
mapEqP f g pf = $\CEq$ f g pf map 
\end{mcode}

\paragraph{Equivalence on an Arbitrary Argument}
To show that |map f xs| and |map g xs| are equal for all |xs|, 
we use |$\CEq$| with |flipMap|, \ie a context that puts |f| and |g| in a `flipped' context. 
%   flipMap :: [a] -> (a -> b) -> [b]
\begin{mcode}
mapEq :: f:(a -> b) -> g:(a -> b) -> $\PEq$ (a -> b) {f} {g} 
      -> xs:[a] -> $\PEq$ [b] {map f xs} {map g xs}
mapEq f g pf xs = $\CEq$ f g pf (flipMap xs) ? fMapEq f xs ? fMapEq g xs

fMapEq :: f:_ -> xs:[a] -> {map f xs $\req$ flipMap xs f}
fMapEq f xs  = ()
flipMap xs f = map f xs  
\end{mcode}
The |mapEq| proof relies on |$\CEq$| with the flipped context 
and needs to know that |map f xs $\req$ flipMap xs f|.
Liquid Haskell won't infer this fact on its own
in the higher order setting of this proof; we explicitly
provide this evidence with the |fMapEq| calls.  
%In~\citet{implementation} we provide an alternative proof of |mapEq|
%using the typeclass-encoded properties of equivalence. 

\paragraph{Proof Reuse in Context}
Finally, we use the |posDom| proof (\S\ref{sec:eg:refdom})
to show how existing proofs can be reused with |map|. 
\begin{mcode}
client :: xs:[Pos] -> $\PEq$ [Integer] {map incrInt xs} {map incrPos xs} 
client = mapEq incrInt incrPos posDom 

clientP :: $\PEq$ ([Pos] -> [Integer]) {map incrInt} {map incrPos}
clientP = mapEqP incrInt incrPos posDom 
\end{mcode}
|client| proves that 
|map incrInt xs| is equivalent to |map incrPos xs| for each list |xs| of positive numbers, while 
|clientP| proves that the partially applied functions |map incrInt| and |map incrPos| are 
equivalent on the domain of lists of positive numbers. 

\subsection{Fold: Equality of Multi-Argument Functions}
\label{sec:eg:fold}
As an example of equality proofs on multi-argument functions, we show that
the directly tail-recursive |foldl| is equal to |foldl'|, a 
|foldr| encoding of a left-fold via CPS. 
The first-order equivalence theorem is expressed as follows: 
\begin{mcode}
thm :: f:(b -> a -> b) -> b:b -> xs:[a] -> { foldl f b xs $\req$ foldl' f b xs } 
\end{mcode}
%The proof relies on some outer reasoning and an inductive lemma. The
%outer reasoning turns |foldl'| into |foldr|; the inductive lemma
%characterizes the actual invariant in play.
%
We lifted the first-order property into a multi-argument function equality
by using |$\XEq$| for all but the last arguments and |$\BEq$| for the last, as below: 
\begin{mcode}
foldEq :: AEq b => $\PEq$ ((b -> a -> b) -> b -> [a] -> b) {foldl} {foldl'}
foldEq = $\XEq$ foldl foldl' dollar \f ->
           $\XEq$ (foldl f) (foldl' f) dollar \b -> 
             $\XEq$ (foldl f b) (foldl' f b) dollar \xs ->
               $\BEq$ (foldl f b xs) (foldl' f b xs) 
                 (thm f b xs ? reflP (foldl f b xs))  
\end{mcode}
One can avoid the first-order proof and the |AEq| constraint, by using
the second, typeclass-oriented style of~\S\ref{sec:eg:reverse},
(see~\citet{implementation} for details).

\begin{comment}
Interestingly, one can avoid the first-order proof, the |AEq| constraint, and the subsequent conversion 
via |$\BEq$|. We used the typeclass-encoded properties 
to directly prove |foldl| equivalence in the propositional setting 
(\`a la \textit{Style 2} of~\S\ref{sec:eg:reverse}), as expressed 
by |theoremP| below. 
\begin{mcode}
theoremP :: (Reflexivity b, Transitivity b) 
         => f:(b -> a -> b) -> b:b -> xs:[a] 
         -> PEq b {foldl f b xs} 
                  {foldr (construct f) id xs b}  
\end{mcode}
%
The proof goes by induction and can be found in~\citet{implementation}. 
Here, we use |theoremP| to directly prove the equivalence of |foldl| 
and |foldl'| in the propositional setting. 
%
\begin{mcode}
foldEqP :: (Reflexivity b, Transitivity b) 
        => $\PEq$ _ {foldl} {foldl'} 
foldEqP = $\XEq$ foldl foldl' dollar \f -> 
           $\XEq$ (foldl f) (foldl' f) dollar \b -> 
             $\XEq$ (foldl f b) (foldl' f b) dollar \xs ->
              trans (foldl f b xs) 
                (foldr (construct f) id xs b) 
                (foldl' f b xs) (theoremP f b xs) 
                (refl (foldl' f b xs))
\end{mcode}
Just like |foldEq|, the proof calls |$\XEq$| for each but the last argument, 
replacing |$\BEq$| with transitivity, reflexivity, and the inner theorem in its propositional-equality form. 
\end{comment}

\subsection{Spec: Function Equality for Program Efficiency}
\label{sec:eg:spec}

Function equality 
can be used to prove optimizations sound.
For example, consider a |critical| function that, for safety, can only run on inputs 
that satisfy a specification |spec|,
and |fastSpec|, a fast implementation to check |spec|.
\begin{mcode}
  spec, fastSpec :: a -> Bool 
  critical :: x:{ a | spec x } -> a 
\end{mcode}
A |client| function can soundly call |critical| for any input |x|
by performing the runtime |fastSpec x| check, 
given a |PEq| proof that the functions |fastSpec| and |spec| are equal.
\begin{mcode}
  client :: PEq (a -> Bool) {fastSpec} {spec} -> a -> Maybe a
  client pf x = 
    if fastSpec x ? toSMT (fastSpec x) (spec x) (CEq fastSpec spec pf (\x f -> f x))
      then Just (critical x)
      else Nothing
\end{mcode}
The |toSMT| call generates the SMT equality that |fastSpec x $\req$ spec x|, 
which, combined with the branch condition check |fastSpec x|, 
lets the path-sensitive refinement type checker decide that 
the call to |critical x| is safe in the |then| branch.

%If the |toSMT| call above was omitted, then 
%the call in the |then| branch would generate a type error:
%there is not enough information 
%that |critical|'s precondition holds. 
%

Our propositional equality 
(1) co-exists with practical features of refinement types, \eg path sensitivity, 
and (2) can help optimize executable code. 

\subsection{Monoid Laws for Endofunctions}
\label{sec:endofunction}

Endofunctions form a law-abiding monoid.
A function $f$ is an \emph{endofunction} when
its domain and codomain types are the same\iffull.
% or that $f$ is a function `on' a set:
\noindent
\else, \ie $f : \tau\rightarrow\tau$ for some $\tau$.
\fi
A \emph{monoid} is an algebraic structure comprising an identity
element (|mempty|) and an associative operation (|mappend|).
For the monoid of endofunctions, |mempty| is
the identity function and |mappend| is
function composition (Figure~\ref{fig:endo}; top).

\begin{figure*}
\noindent\textit{Monoid Instance for Endofunctions}\hfill
\vspace{0.01mm}

\begin{minipage}{.5\linewidth}
\begin{mcode}
type Endo a = a -> a
  
mempty :: Endo a
mempty a = a
\end{mcode}
\end{minipage}%
%\hfill
\begin{minipage}{.5\linewidth}
\begin{mcode}
_ =~= y = y 

mappend :: Endo a -> Endo a -> Endo a
mappend f g a = f (g a) -- a/k/a (<>)   
\end{mcode}
\end{minipage}

\noindent\textit{Endofunction Monoid Laws}\hfill
\vspace{0.01mm}
\begin{mcode}
mLeftIdentity  :: (Reflexivity a, Transitivity a) 
               => x:Endo a -> $\PEq$ (Endo a) {mappend mempty x} {x} 
mRightIdentity :: (Reflexivity a, Transitivity a) 
               => x:Endo a -> $\PEq$ (Endo a) {x} {mappend x mempty} 
mAssociativity :: (Reflexivity a, Transitivity a) 
               => x:(Endo a) -> y:(Endo a) -> z:(Endo a)
               -> $\PEq$ (Endo a) {mappend (mappend x y) z} {mappend x (mappend y z)}
\end{mcode}

\vspace{0.1mm}
\noindent\textit{Proofs By Reflexivity and Transitivity}\hfill
\vspace{0.01mm}
\begin{mcode}
mLeftIdentity x = $\XEq$ (mappend mempty x) x dollar \a ->
      refl (mappend mempty x a) ? (mappend mempty x a =~= mempty (x a) =~= x a *** QED)  

mRightIdentity x = $\XEq$ x (mappend x mempty) dollar \a ->
      refl (x a) ? (x a =~= x (mempty a) =~= mappend x mempty a *** QED)

mAssociativity x y z = 
    $\XEq$ (mappend (mappend x y) z) (mappend x (mappend y z)) dollar \a ->
      refl (mappend (mappend x y) z a) ?
           (mappend (mappend x y) z a =~= (mappend x y) (z a) =~= x (y (z a)) 
                                      =~= x (mappend y z a) 
                                      =~= mappend x (mappend y z) a *** QED)
\end{mcode}

\caption{Case study: Endofunction Monoid Proofs.}
\label{fig:endo}
\end{figure*}

To be a monoid,
|mempty| must really be an identity with respect to |mappend|
(|mLeftIdentity| and |mRightIdentity|\!) and  
|mappend| must really be associative (|mAssociativity|\!) (Figure~\ref{fig:endo}; middle).

Proving the monoid laws for endofunctions demands functional extensionality 
(Figure~\ref{fig:endo}; bottom).
For example, consider the proof that |mempty| is a left identity for
|mappend|, \ie |mappend mempty x $\req$ x|. 
To prove this equation between \emph{functions}, 
we can't use  SMT equality.
With functional extensionality, each proof reduces to three parts:
|$\XEq$| to take an input of type |a|;
|refl|\iffull on the left-hand side of the equation\fi, to generate an equality proof;
and |(=~=)| to give unfolding hints to the SMT solver. The |(=~=)| operator is defined 
as |_ =~= y = y|, and it is unrefined, \ie it is not checking equality of its arguments.
%
%  infixl 3 =~=
%\begin{mcode}
%  (=~=) :: a -> a -> a
%  _ =~= y  = y
%\end{mcode}
%
\begin{comment}
Liquid Haskell's refinement reflection~\cite{VazouTCSNWJ18}
unfolds the function definition each time a function is called. 
%
For example, in the |mLeftIdentity| proof, 
the term |mappend mempty x a =~= mempty (x a) =~= x a| 
unfolds the definitions of |mappend| and |mempty| for the given
arguments, which is enough for the SMT solver.
The postfix just |*** QED| casts the proof into a Haskell unit. 
%
The Liquid  Haskell standard library gives |(=~=)| a refined type:
\begin{mcode}
  {-@ (===) :: Eq a => x:a -> y:{a | y $\req$ x} -> {v:a | v $\req$ x && v $\req$ y} @-}
\end{mcode}
% _ === y  = y
%
Refining |===| checks the intermediate   
equational steps using SMT equality. 
In our higher order setting, we cannot use SMT equality on functions,
%as imposed by the |Eq a| constraint.
so we use the unrefined |=~=| in our proofs.
%
We lose the intermediate checks, but the unfolding is sound at all types.
%
Liquid Haskell still conflates |(==)| and |(=)|; in the future, we will further disentangle assumptions about equality (\S\ref{sec:eq-adequacy}).
\nv{it would be fair to mention here that this restriction is not currently imposed by LH, but let as future work.}
\end{comment}

The |Reflexivity| constraints on the theorems make our proofs general
in the underlying type |a|: endofunctions on the type |a| form a monoid whether
|a| admits SMT equality or if it's a complex
higher-order type (whose ultimate result admits equality).
Haskell's typeclass resolution ensures that an appropriate |refl|
method will be constructed whatever type |a| happens to be.

\subsection{Monad Laws for Reader Monads}
\label{sec:reader}
A \emph{reader} is a function with a fixed domain |r|, \ie the
partially applied type |Reader r| (Figure~\ref{fig:readers}, top left).
Readers form a monad and their composition is a useful way of
defining and composing functions that take some fixed information, like command-line
arguments or configuration files\mmg{is there a cite for this?}.
Our propositional equality can prove 
the monad laws for readers. 

\begin{figure*}
\noindent\textit{Monad Instance for Readers}\hfill
\vspace{0.01mm}
\begin{minipage}{.45\textwidth}
\begin{mcode}
  type Reader r a = r -> a

  kleisli :: (a -> Reader r b) 
          -> (b -> Reader r c)  
          -> a -> Reader r c
  kleisli f g x = bind (f x) g
\end{mcode}
\end{minipage}%
\begin{minipage}{.55\textwidth}
\begin{mcode}
pure :: a -> Reader r a
pure a _r = a

bind :: Reader r a -> (a -> Reader r b) 
     -> Reader r b
bind fra farb = \r -> farb (fra r) r
\end{mcode}   
\end{minipage}%

\noindent\textit{Reader Monad Laws}\hfill
\vspace{0.01mm}

\begin{mcode}
monadLeftIdentity  :: Reflexivity b => a:a 
                   -> f:(a -> Reader r b) -> $\PEq$ (Reader r b) {bind (pure a) f} {f a}
monadRightIdentity :: Reflexivity a 
                   => m:(Reader r a) -> $\PEq$ (Reader r a) {bind m pure} {m}
monadAssociativity :: (Reflexivity c, Transitivity c) 
                   => m:(Reader r a) -> f:(a -> Reader r b) -> g:(b -> Reader r c)
                   -> $\PEq$ (Reader r c) {bind (bind m f) g} {bind m (kleisli f g)}
\end{mcode}    

\vspace{0.1mm}
\noindent\textit{Identity Proofs By Reflexivity}\hfill
\vspace{0.01mm}
\begin{minipage}{.50\textwidth}
\begin{mcode}
  monadLeftIdentity a f =
    $\XEq$ (bind (pure a) f) (f a) dollar \r ->
     refl (bind (pure a) f r) ?
      (bind (pure a) f r =~= f (pure a r) r 
               =~= f a r *** QED)  
\end{mcode}
\end{minipage}%
\begin{minipage}{.50\textwidth}
\begin{mcode}
monadRightIdentity m =
  $\XEq$ (bind m pure) m dollar \r -> 
    refl (bind m pure r) ?
     (bind m pure r =~= pure (m r) r 
                    =~= m r *** QED)
\end{mcode}
\end{minipage}
\vspace{1mm}
\noindent\textit{Associativity Proof By Transitivity and Reflexivity}\hfill
\vspace{0.01mm}
\begin{mcode}
  monadAssociativity m f g = $\XEq$ (bind (bind m f) g) (bind m (kleisli f g)) dollar \r ->
  let { el  = bind (bind m f) g r ; eml = g (bind m f r) r ; em  = (bind (f (m r)) g) r
      ; emr = kleisli f g (m r) r ; er  = bind m (kleisli f g) r   }
  in trans el em er (trans el eml em (refl el) (refl eml)) 
                    (trans em emr er (refl em) (refl emr))
\end{mcode}
  \caption{Case study: Reader Monad Proofs.}
  \label{fig:readers}
\end{figure*}
The monad instance for the reader type is defined using function composition (Figure~\ref{fig:readers}, top).
We also define
Kleisli composition of monads as a convenience for specifying the monad.
We prove that readers are in fact monads,
\ie their operations satisfy the monad laws (Figure~\ref{fig:readers}, bottom). We also prove that they satisfy the functor and applicative
laws in~\citet{implementation}.
The reader monad laws are expressed 
as refinement type specifications using |$\PEq$|. 
We prove the left and right identities 
following the pattern of~\S\ref{sec:endofunction}, \ie 
|$\XEq$|, followed by reflexivity with |(=~=)| for function unfolding (Figure~\ref{fig:readers}, middle). 
We use transitivity to 
conduct the more complicated proof of associativity (Figure~\ref{fig:readers}, bottom). 

\paragraph{Proof by Associativity and Error Locality}
As noted earlier, the use of |(=~=)| in proofs by reflexivity is not 
checking intermediate equational steps. 
So, the proof either succeeds or fails without explanation. 
To address this problem, during proof construction, we employed transitivity. 
For instance, in the |monadAssociativity| proof, our goal is to construct 
the proof |$\PEq$ _ {el} {er}|. 
To do so, we pick an intermediate term |em|; we might attempt an equivalence proof 
as follows:
\begin{mcode}
trans el em er
  (refl el)        -- proof of el  = em; local error 
  (trans em emr er -- proof of em  = er 
    (refl em)      -- proof of em  = emr 
    (refl emr))    -- proof of emr = er 
\end{mcode}
The |refl el| proof will produce a type error; replacing that proof
with an appropriate |trans| to connect |el| and |em| via |eml| completes the |monadAssociativity| proof
(Figure~\ref{fig:readers}, bottom).
Writing proofs in this |trans|/|refl| style works well: start
with |refl| and where the SMT solver can't figure things out, a local
refinement type error tells you to expand with |trans| (or look
for a counterexample).

Our reader proofs use the |Reflexivity| and
|Transitivity| typeclasses to ensure that readers are monads whatever
the return type |a| may be (with the type of `read' values fixed to |r|).
Having generic monad laws is critical: readers are typically used to
compose functions that take configuration information, but such
functions usually have other arguments, too!
For example, an interpreter might run |readFile >>= parse >>= eval|,
where |readFile :: Config -> String| and 
|parse :: String -> Config -> Expr| and
|eval :: Expr -> Config -> Value|.
With our generic proof of associativity, we can rewrite the above to
|readFile >>= (kleisli parse eval)| even though |parse| and |eval| are
higher-order terms without |Eq| instances. Doing so could, in theory,
trigger inlining/fusion rules that would combine the parser and the
interpreter.

%auto-ignore
\section{Type Checking \texttt{XEq}: Did We Get It Right?}
\label{sec:rules}

We've seen that |XEq| is effective at proving equalities between
functions (\S\ref{sec:eg}) and we've argued that we avoid the inconsistency with
|funext|.
Things \emph{seem} to work in Liquid Haskell. But: Why do things go so
wrong with |funext|? Does |XEq| really avoid |funext|'s issues?
We give a schematic example showing why Liquid Haskell works with
|XEq| consistently but works with |funext| inconsistently.
(We give a detailed, formal model of our propositional equality in \S\ref{sec:eqrt}.)

%In this section we explan how the existing 
%typing rules of Liquid Haskell exactly capture the domain 
%relations required by function extensionality. 

Suppose we have
two functions \hName and \kName,
defined on
domains $\hdom$ and $\kdom$ 
and codomains $\hrng$ and $\krng$, respectively.
Let's also assume we have some $\lemmaName$
that proves, for all $x$ in some domain 
$\ldom$, we have an equality \fEq{e_l}{e_r}, 
where $e_l$ and $e_r$ are arbitrary expressions of 
type |{v:$\beta$ $\mid$ $r_p$}|.
\begin{mcode}
  $\hName$ :: x:{$\alpha$ | $\hdom$} -> {v:$\beta$ | $\hrng$}
  $\kName$ :: x:{$\alpha$ | $\kdom$} -> {v:$\beta$ | $\krng$}
  $\lemmaName$ :: x:{$\alpha$ | $\ldom$} -> $\PEq$ {v:$\beta$ | $r_p$} {$e_l$} {$e_r$}
\end{mcode}
We can pass these along
to our $\XEq$ constructor (of~\S\ref{sec:eqrt-gadt})
to form a proof that $\hName$ equals $\kName$ on some domain $\edom$:
\begin{mcode}
  $\XEq$ $\hName$ $\kName$ $\lemmaName$ :: $\PEq$ ({v:$\alpha$ | $\edom$} -> {v:$\beta$ | $r_e$}) {$\hName$} {$\kName$}
\end{mcode}
When type checking this use of |XEq|, we need to
check that the lemma equates the right expressions (\ie forall $x$. \fEq{e_l}{e_r} implies \fEq{\hName\ x}{\kName\ x}).
Critically, type checking must also ensure that the final equality domain 
($\edom$) is stronger than the domains for the functions ($\hdom$, $\kdom$) and for the lemma ($\ldom$). 

%auto-ignore
\begin{figure}
\small

\noindent \textit{Typing Environmennt}\hfill
\vspace*{0.1em}
$$
\begin{array}{rcll}
    \env & \doteq &  \{ & 
                  \envBind{\XEq}{\forall \alpha \beta. f:(\typExtG{\as}{\bs}) \rightarrow g:(\typExtG{\as}{\bs}) 
                  \rightarrow(x:\as \rightarrow \PEq\ \bs \{f \ x \} \{g \ x\})
                  \rightarrow \PEq\ (\as \rightarrow \bs) \{f \} \{g \}
                  }\\
      &&          , & \envBind{\hName}{\hTy}, \envBind{\kName}{\kTy}
                  , \envBind{\lemmaName}{x:\sref{v}{\alpha}{\ldom} \rightarrow \PEq\ \sref{v}{\beta}{r_p} \{e_l \} \{e_r\}}  \quad \}\\
\end{array}
$$
\vspace*{0.5em}
\noindent \textit{Type Checking}\hfill
\vspace*{0.1em}

\begin{adjustbox}{max size={\textwidth}{\textheight}}
$
\begin{array}{l}
    1. \hastype{\env}{\XEq}{\forall \alpha \beta. 
                         \sfuna{f}{\as}{\bs} \rightarrow 
                         \sfuna{g}{\as}{\bs} 
                         \rightarrow(\sfun{x}{\as}{\PEq\ \bs\ \{f \ x \} \{g \ x\})}
                         \rightarrow \PEq\ (\as \rightarrow \bs) \{f \} \{g \}
                         } 
     \\ \hline
    2. \hastype{\env}{\XEq\ \tyApp{\tya} }{\forall \beta. 
                         \sfuna{f}{\tya}{\bs}
                         \rightarrow \sfuna{g}{\tya}{\bs} 
                         \rightarrow(\sfun{x}{\tya}{\PEq\ \bs\ \{f \ x \} \{g \ x\})}
                         \rightarrow \PEq\ (\tya \rightarrow \bs) \{f\} \{g\}
                         } 
     \\ \hline
    3. \hastype{\env}{\XEq\ \tyApp{\tya}\ \tyApp{\tyb} }{
                         \sfuna{f}{\tya}{\tyb}
                         \rightarrow \sfuna{g}{\tya}{\tyb} 
                         \rightarrow(\sfun{x}{\tya}{\PEq\ \tyb\ \{f \ x \} \{g \ x\})}
                         \rightarrow \PEq\ (\tya \rightarrow \tyb) \{f\} \{g\}
                         } 
     \\ \hline
    4. \hastype{\env}{\XEq\ \tyApp{\tya}\ \tyApp{\tyb} \ \hName}{
                          g:(\typExtG{\tya}{\tyb}) 
                         \rightarrow(x\colon\tya \rightarrow \PEq\ \tyb \{\hName\ x \} \{g\ x\})
                         \rightarrow \PEq\ (\tya \rightarrow \tyb) \{\hName\} \{g\}
                         } 
    \quad {\fbox{\subh}} \\ \hline
    5. \hastype{\env}{\XEq\ \tyApp{\tya}\ \tyApp{\tyb} \ \hName \ \kName}{
                         (x:\tya \rightarrow \PEq\ \tyb \{\hName\ x \} \{\kName\ x\})
                         \rightarrow \PEq\ (\tya \rightarrow \tyb) \{\hName\} \{\kName\}
                         } 
    \quad {\fbox{\subk}} \\ \hline
    6. \hastype{\env}{\XEq\ \tyApp{\tya}\ \tyApp{\tyb} \ \hName \ \kName\ \lemmaName{}}{
                         \PEq\ (\tya \rightarrow \tyb) \{\hName\} \{\kName\}
                         } 
    \quad {\fbox{\sublemma}} \\ \hline
    7. \hastype{\env}{\XEq\ \tyApp{\tya}\ \tyApp{\tyb} \ \hName \ \kName\ \lemmaName{}}{
                         \PEq\ (\tya \rightarrow \tyb) \{\hName\} \{\kName\}
                         } 
    \quad {\fbox{\subsub}} \\ \hline
    8. \hastype{\env}{\XEq\ \tyApp{\tya}\ \tyApp{\tyb} \ \hName \ \kName\ \lemmaName}{
                         \PEq\ (\tyde \rightarrow \tyre) \{\hName\} \{\kName\}
                         } 
                           \\ 
\end{array}
$
\end{adjustbox}

\vspace*{0.5em}
\noindent\textit{Subtyping Derivation Leaves}\hfill
\vspace*{0.1em}

\begin{adjustbox}{max size={\textwidth}{\textheight}}
$$
\inference{
    \inference{
        \textcolor{purple}{\romannumeral 1.\ \apred \Rightarrow \hdom}
    }{
    \issubtype{\env}{\tya}{\sref{v}{\alpha}{\hdom}}
     } && 
     \inference{
        \apred \Rightarrow \hrng \Rightarrow \bpred
     }{
     \issubtype{\env,\envBind{x}{\tya}}{\sref{v}{\beta}{\hrng}}{\tyb}
     }
}{
  \issubtype{\env}{\hTy}{\typExtF{\tya}{\tyb}}
}[\subh]
\quad
\inference{
    \inference{
        \textcolor{purple}{\romannumeral 2.\ \apred \Rightarrow \kdom}
    }{
    \issubtype{\env}{\tya}{\sref{v}{\alpha}{\kdom}}
     } && 
     \inference{
        \apred \Rightarrow \krng \Rightarrow \bpred
     }{
     \issubtype{\env,\envBind{x}{\tya}}{\sref{v}{\beta}{\krng}}{\tyb}
     }
}{
  \issubtype{\env}{\kTy}{\typExtF{\tya}{\tyb}}
}[\subk]
$$
\end{adjustbox}

\vspace*{0.5em}

\begin{adjustbox}{max size={\textwidth}{\textheight}}
$$
\inference{
    \inference{
        \textcolor{purple}{\romannumeral 3.\ \apred \Rightarrow \ldom}    
    }{
    \issubtype{\env}{\tya}{\sref{v}{\alpha}{\ldom}}
     } && 
     \inference{
         \inference{
            \apred \Rightarrow r_p \Rightarrow \bpred
         }{
            \textcolor{red}{\issubtype{\env,\envBind{x}{\tya}}
        {\sref{v}{\beta}{r_p}}
        {\tyb}}}
        && 
        \inference{
            \apred \Rightarrow \bpred \Rightarrow r_p 
        }{
            \textcolor{blue}{\issubtype{\env,\envBind{x}{\tya}}
        {\tyb}
        {\sref{v}{\beta}{r_p}}
        }}
        && 
        \textcolor{purple}{\romannumeral 4.\ \apred \Rightarrow \fEq{e_l}{e_r} \Rightarrow \fEq{\hName\ x}{\kName\ x}}
     }{
     \issubtype{\env,\envBind{x}{\tya}}
        {\PEq\ \sref{v}{\beta}{r_p} \{e_l \} \{e_r\}}
        {\PEq\ \tyb \{\hName\ x \} \{\kName\ x\}}
     }
}{
  \issubtype{\env}
     {x:\sref{v}{\alpha}{\ldom} \rightarrow \PEq\ \sref{v}{\beta}{r_p} \{e_l \} \{e_r\}}
     {x:\tya \rightarrow \PEq\ \tyb \{\hName\ x \} \{\kName\ x\}}
}[\sublemma]
$$
\end{adjustbox}
\begin{adjustbox}{max size={\textwidth}{\textheight}}
$$
\inference{
    \inference{
        \inference{
            \textcolor{purple}{\romannumeral 6.\ \edom \Rightarrow \apred}
        }{
            \issubtype{\env}{\tyde}{\tya}
        } &&
        \inference{
            \apred \Rightarrow \bpred \Rightarrow r_e 
        }{ 
            \issubtype{\env,\envBind{x}{\tyde}}{\tyb}{\tyre}
        }     
    }{
        \textcolor{red}{\issubtype{\env}{\tya \rightarrow \tyb}{\tyde \rightarrow \tyre}}
     } && 
    \inference{
        \inference{
            \textcolor{purple}{\romannumeral 5.\ \apred \Rightarrow \edom}
        }{
            \issubtype{\env}{\tya}{\tyde}
        } &&
        \inference{
            \apred \Rightarrow r_e \Rightarrow \bpred
        }{ 
            \issubtype{\env,\envBind{x}{\tya}}{\tyre}{\tyb}
        }
    }{
        \textcolor{blue}{\issubtype{\env}{\tyde \rightarrow \tyre}{\tya \rightarrow \tyb}}
    } && 
      \fEq{\hName}{\kName} \Rightarrow \fEq{\hName}{\kName}
}{
  \issubtype{\env}{\PEq\ (\tya \rightarrow \tyb) \{\hName\} \{\kName\}}
           {\PEq\ (\tyde \rightarrow \tyre) \{\hName\} \{\kName\}}
}[\subsub]
$$
\end{adjustbox}

\vspace*{0.5em}

\caption{Type checking $\XEq\ \hName \ \kName\ \lemmaName$.
For space, we write \sref{v}{t}{d} to mean the refined type \fref{v}{t}{d}. 
}
\label{fig:extensionality-checking}
\end{figure}

Liquid Haskell goes through a complex series of steps to enforce 
both required checks (Figure~\ref{fig:extensionality-checking}).
We haven't modified Liquid Haskell's typing rules or implementation \emph{at all};
we merely defined $\PEq$ in such a way 
that the existing type checking rules in Liquid Haskell implement the right checks to soundly show extensional equality between functions.

It's easiest to understand how type checking works from top to bottom (``Type Checking'', Figure~\ref{fig:extensionality-checking}).
First, we look up $\XEq$'s type in the environment (1). 
Since the $\XEq$ is polymorphic, 
we instantiate the type arguments with the types, \fref{v}{\alpha}{\apred} (2) and \fref{v}{\beta}{\bpred} (3). 
(We write \sref{v}{\alpha}{\apred} as a short for \fref{v}{\alpha}{\apred}, 
since we focus on the refinements assuming the Haskell types match.)
Here \apred and \bpred are refinement type variables; type checking will generate constraints on them that liquid type inference will try to resolve~\cite{LT2008}.
Next we apply each of the arguments:
$\hName$ (4), $\kName$ (5), and $\lemmaName$ (6).
Each application applies standard
dependent function application, with consideration for subtyping.
That is, each application
(a) substitutes the applied argument in the codomain type 
and (b) checks
that the type of the argument is a subtype of the function's domain type.
Application leads to the subtyping constraints \subh, \subk, and \sublemma set off in boxes, resolved below.
Now Liquid Haskell has \textit{inferred} 
a type for the checked expression (7). To conclude the check, 
it introduces the final subtype constraint \subsub:
the inferred type should be a subtype of the
type the user specified (8). 

The four instances of subtyping during type checking
generate 13 logical implications to resolve
for the original expression to type check (``Subtyping Derivation Leaves'', Figure~\ref{fig:extensionality-checking}).
The six \textcolor{purple}{purple implications with Roman numerals} place requirements on the domain; we'll ignore the others, which
impose less interesting constraints on the functions' codomains.
The \subh and \subk derivations
require (via contravariance) that the refinement variable \apred implies 
the refinements on the functions' domains, \hdom and \kdom. 
Similarly, the derivation \sublemma requires that \apred implies 
the proof domain \ldom.
Since \PEq is defined as refined type alias (\S\ref{sec:eqrt-gadt}),
\sublemma also checks that the refinements given imply the top level refinements of
\PEq,
\ie that the result of the lemma is sufficient to show
$\XEq$'s precondition.
The \subsub derivation checks subtyping of two \PEq types, by treating the type arguments 
\emph{invariantly}. (We mark \textcolor{red}{covariant implications in red} and \textcolor{blue}{contravariant implications in blue}.)
Liquid Haskell treats checks invariantly because 
\PEq's definition uses its type parameter 
in both positive and negative positions.
\subsub will ultimately require that the refinement variable \apred is \emph{equivalent} to 
the equality domain \edom. 

To sum up, type checking imposes the following 
six implications as constraints: 
\[
\begin{array}{r@{\quad}lr@{\quad}lr@{\quad}l}
  \romannumeral 1.& \apred \Rightarrow \hdom &
  \romannumeral 2.& \apred \Rightarrow \kdom &
  \romannumeral 3.& \apred \Rightarrow \ldom \\
  \romannumeral 4.& \apred \Rightarrow \fEq{e_l}{e_r} \Rightarrow \fEq{\hName\ x}{\kName\ x} &
  \romannumeral 5.& \apred \Rightarrow \edom &
  \romannumeral 6.& \edom \Rightarrow \apred \\
\end{array}
\]

Implications \emph{\romannumeral 5}\ and \emph{\romannumeral 6}\
require the refinement variable  \apred to be equivalent 
to the equality domain \edom.
Given that equality,
implications \emph{\romannumeral 1}--\emph{\romannumeral 3}\
state that the equality domain \edom should imply 
the domains of the functions (\emph{\romannumeral 1}\ and \emph{\romannumeral 2}) and lemma (\emph{\romannumeral 3}). Implication
\emph{\romannumeral 4}\ requires that the lemma's domain
implies equality of the two functions 
for each argument |x| that satisfies the domain \edom. 
All together, these constraints exactly capture the requirements 
of functional extensionality.

\paragraph{Naive Functional Extensionality with \texttt{funext}}

When, in~\S\ref{sec:inconsistency}, we use the non-type-indexed |funext| in Liquid Haskell, the typing derivation
looks almost exactly the same, 
but one critical thing changes: 
the type-indexed |$\PEq$ t {$e_l$} {$e_r$}|
is replaced by a refined unit |{v:() $\mid$ $e_l$ = $e_r$}|.
This only affects the \sublemma and \subsub derivations,
which lose the red and blue parts and become:
$$
\inference{
    \inference{
        \textcolor{purple}{{\romannumeral 3}'.\ \apred \Rightarrow \ldom}
    }{
    \issubtype{\env}{\sref{v}{\alpha}{\apred}}{\sref{v}{\alpha}{\ldom}}
     } && \hspace*{-1em}
     \inference{
        \textcolor{purple}{{\romannumeral 4}'.\ \apred \Rightarrow e_l = e_r \Rightarrow h\ x = k\ x}
     }{
     \issubtype{\env,\envBind{x}{\tya}}{\fref{v}{\texttt{()}}{e_l = e_r}}{
      \fref{v}{\texttt{()}}{h\ x = k\ x}
       }
     }
}{
  \issubtype{\env}{
    \sfun{x}{\sref{v}{\alpha}{\ldom}}{\fref{v}{\texttt{()}}{e_l = e_r}}
    }{
    \sfun{x}{\sref{v}{\alpha}{\apred}}{\fref{v}{\texttt{()}}{h\ x = k\ x}}
    }
}[\sublemma-\textsc{Naive}]
$$
$$
\inference{
      h\ x = k\ x \Rightarrow h\ x = k\ x
}{
  \issubtype{\env}{
    \fref{v}{\texttt{()}}{h\ x = k\ x}
    }{
    \fref{v}{\texttt{()}}{h\ x = k\ x}
    }
}[\subsub-\textsc{Naive}]
$$
\sublemma-\textsc{Naive}
generates the implications ${\romannumeral 3}'$ and 
${\romannumeral 4}'$ that 
are essentially the same as before.
But, \subsub-\textsc{Naive}
won't generate any meaningful checks,
because equality is just a unit type. 
We lost implications ${\romannumeral 5}$\ and ${\romannumeral 6}$! 
We are now left with an implication system in which the 
refinement variable \apred only appears in the assumptions. Since
Liquid Haskell always tries to infer the most specific refinement possible,
it will find a very specific refinement for \apred: |false|!
Having inferred |false| for \apred, the entire use of |funext|
trivially holds and can be used on other, nontrivial domains---with
inconsistent results. 
%disastrous results.

%auto-ignore
\section{A Refinement Calculus with Built-in Type-Indexed Equality}
\label{sec:eqrt}

Because |funext| is inconsistent in Liquid Haskell (\S\ref{sec:inconsistency}),
we developed |$\libname$| to reason consistently about extensional
equality, using the GADT |PBEq| and the uninterpreted equality |PEq|
(\S\ref{sec:eqrt-gadt}).
We're able to prove some interesting equalities (\S\ref{sec:eg}) and Liquid Haskell's type checking seems to be doing the right thing (\S\ref{sec:rules}).
But how do we know that our definitions suffice? Formalizing
\emph{all} of Liquid Haskell is a challenge, but we can build a model
to check the features we use.
We formalize a core calculus \corelaneq with $R$efinement
types, semantic subtyping, and type-indexed propositional $E$quality.

\corelaneq contains just enough to check the core interactions between
refinement types and a type-indexed propositional equality resembling
our |PBEq| definition (\S\ref{subsec:theory:core}).
%
%We omit plenty of important features from Haskell (e.g., algebraic
%data types) and Liquid Haskell (e.g., refinement
%reflection~\cite{VazouTCSNWJ18}): our purpose here is not to develop a
% NV this the previous was tooo defensive
We omit plenty of important features from Liquid Haskell (\eg algebraic data types): 
our purpose here is not to develop a
complete formal model, but to check that our implementation holds together.

Using \corelaneq's static semantics (\S\ref{subsec:static-semantics}),
we prove several metatheorems (\S\ref{subsec:metaproperties}).
Most importantly, we define a logical relation that characterizes
\corelaneq equivalence and reflects \corelaneq's propositional
equality.
Propositional equivalence in \corelaneq implies equivalence in the
logical relation (Theorem~\ref{thm:eq-relation}); both are reflexive, symmetric, and transitive
(Theorems~\ref{thm:theory:equality-logical-relation}
and~\ref{thm:theory:properties}).

\subsection{Syntax and Semantics of \corelaneq}
\label{subsec:theory:core}

\begin{figure}
\small
$$
\begin{array}{rrcl}
%  \textit{User Defined Constants} & \ucon &  & \\[1mm]
  \textit{Constants} & 
  \con & ::= &  \etrue \mid \efalse \mid \eunit \mid (\bbbeqsym{\tbase}) \mid \bbbeq{(}{)}{(\con,\tbase)} \\[1mm]
  \textit{Expressions} & 
  \expr & ::= & \con \mid x \mid \expr\ \expr \mid \elam{x}{\typ}{\expr} \eqcolor{\mid} \eqcolor{\ebeq{\tbase}\ \expr\ \expr\ \expr} 
                                 \ \eqcolor{\mid} \ \eqcolor{\exeq{x}{\typ}{\typ}\ \expr\ \expr\ \expr}
                                 \\[1mm]
  \textit{Values} & 
  \val & ::= & \con \mid \elam{x}{\typ}{\expr} \eqcolor{\mid} \eqcolor{\ebeq{\tbase}\ \expr\ \expr\ \val}
                           \ \eqcolor{\mid} \ \eqcolor{\exeq{x}{\typ}{\typ}\ \expr\ \expr\ \val}
                           \\[1mm] 
    \textit{Refinements} & 
    \refa & ::= & \expr \\[1mm]
    \textit{Basic Types} & 
    \tbase & ::= &  \tbool \mid \tunit 
                         \\[1mm]
    \textit{Types} & 
    \typ & ::= & \tref{\rbind}{\tbase}{\refa} 
                            \mid \tfun{x}{\typ}{\typ}  
                            \mid \eqrt{\typ}{\expr}{\expr} \\[1mm]
    \textit{Typing Environment} & 
    \env & ::= & \emptyset \mid \env, \envBind{x}{\typ}  \\[1mm]
    \textit{Closing Substitutions} & 
    \model & ::= & \emptyset \mid \model, \modelBind{x}{\val}  \\[1mm]
    \textit{Equivalence Environments} & 
    \rmodel & ::= & \emptyset \mid \rmodel, \rmodelBind{\val}{\val}{x}  \\[1mm]
    \textit{Evaluation Contexts} & 
    \ectx & ::= & \bullet \mid \ectx \ \expr \mid \val\ \ectx \eqcolor{\mid} \eqcolor{\ebeq{\tbase}\ \expr\ \expr\ \ectx} \eqcolor{\mid} \eqcolor{\exeq{x}{\typ}{\typ}\ \expr\ \expr\ \ectx}
                                         \\[1mm] 
\end{array}
$$
\textit{Reduction}\hfill\fbox{\evals{\expr}{\expr}}
$$
\begin{array}{rcll}
  \evals{\ctxapp{\ectx}{\expr}&}{&\ctxapp{\ectx}{\expr'}}, & \text{if}\ \evals{\expr}{\expr'}\\ % & \text{[ctx]}\\
  \evals{(\elam{x}{\typ}{\expr})\ \val&}{&\expr\subst{x}{\val}} & \\ % & \text{[$\beta$]} \\
  % \evals{\ucon\ \val&}{&\delta(\ucon,\val)} & \\
  \evals{\bbbeq{(}{)\ \con_1}{\tbase}&}{&\bbbeq{(}{)}{(\con_1,\tbase)}} & \\ % & \text{[eq1]} \\
  \evals{\bbbeq{(}{)\ \con_2}{(\con_1,\tbase)}&}{&\con_1 = \con_2}, & \textit{syntactic equality on constants} \\ % & \text{[eq2]} \\
\end{array}
$$

\caption{Syntax and Dynamic Semantics of \corelaneq.}
\label{fig:coresyntax}
\end{figure}

We present \corelaneq, 
a core calculus with $R$efinement types and type-indexed $E$quality (Figure~\ref{fig:coresyntax}).

\paragraph{Expressions}
\corelaneq expressions include constants (booleans, unit, and equality operations 
on base types), variables, lambda abstraction, and application. 
There are also two primitives to prove propositional equality: 
\ebeq{\tbase} and \exeq{x}{\typ_x}{\typ}  construct proofs of equality at 
base and function types, respectively.
Equality proofs take three arguments: the two expressions equated and a proof of their equality; proofs at base type are trivial, of type \tunit, but higher types use functional extensionality.
These two primitives correspond to $\BEq$ and $\XEq$ constructors of \S\ref{sec:eqrt-gadt}; we did not encode congruence closure since it can be proved 
by induction on expressions, which is impossible in Haskell. 

\paragraph{Values} 
The values of \corelaneq are constants, functions, 
and equality proofs with converged proofs. 

\paragraph{Types}
\corelaneq's \emph{basic types} are booleans and unit. 
Basic types are refined with boolean expressions \refa in
\emph{refinement types} \tref{\rbind}{\tbase}{\refa}, 
which denote all expressions of base type \tbase that satisfy the 
refinement \refa. 
In addition to refinements, \corelaneq's types also include \emph{dependent function types}
\tfun{x}{\typ_x}{\typ} with arguments of type $\typ_x$ 
and result type \typ, where $\typ$ can refer back to the argument $x$. 
Finally, types include our \emph{propositional equality}
\eqrt{\typ}{\expr_1}{\expr_2}, which denotes a proof of equality
between the two expressions $\expr_1$ and $\expr_2$ of type $\typ$.
We write \tbase to mean the trivial refinement type \tref{\rbind}{\tbase}{\etrue}.
We omit polymorphic types to avoid known and resolved metatheoretical problems~\cite{toplas17}. Yet, $\exeqName$ equality primitive is defined as a family of 
operators, one for each refinement function type, capturing the essence of polymorphic 
function equality.

\paragraph{Environments}
The typing environment \env binds variables to types, 
the (semantic typing) closing substitution \model binds variables to values, and 
the (logical relation) pending substitution \rmodel binds variables to pairs of equivalent values.

\paragraph{Runtime Semantics}
The relation \evals{\cdot}{\cdot} evaluates \corelan expressions using contextual, 
small step, call-by-value semantics (Figure~\ref{fig:coresyntax}, bottom). 
The semantics are standard with \ebeq{\tbase} and \exeq{x}{\typ_x}{\typ}
evaluating proofs but not the equated terms.
Let \goesto{\cdot}{\cdot} be the reflexive, transitive closure of 
\evals{\cdot}{\cdot}.

\paragraph{Type Interpretations}
\begin{figure*}
\small
\[
% \begin{array}{rcl}
  \begin{array}{rcr@{~}c@{~}l}
    \interp{\tref{\rbind}{\tbase}{\refa}} & \doteq & \{\expr & \mid & \ \goesto{\expr}{\val} \wedge \hasbtype{}{\expr}{\tbase} \wedge \goesto{\refa\subst{\rbind}{\expr}}{\etrue} \}\\
  \interp{\tfun{x}{\typ_x}{\typ}} & \doteq & \{\expr & \mid &  \forall \expr_x \in \interp{\typ_x}\!.\ \expr\ \expr_x \in \interp{\typ\subst{x}{\expr_x}} \}\\
  \interp{\eqrt{\tbase}{\expr_l}{\expr_r}} 
    & \doteq 
    & \{\expr & \mid &  \ \hasbtype{}{\expr}{\eqt{\tbase}} 
                \wedge \goesto{\expr}{\ebeq{\tbase}\ {\expr_l}\ {\expr_r}\ {\expr_{pf}}}  \wedge
           \goesto{\bbbeq{\expr_l}{\expr_r}{\tbase}}{\etrue}
            \}\\
  \interp{\eqrt{\tfun{x}{\typ_x}{\typ}}{\expr_l}{\expr_r}}
    & \doteq & 
    \{\expr & \mid &  \ \hasbtype{}{\expr}{\eqt{\unrefine{\tfun{x}{\typ_x}{\typ}}}} 
                    \wedge \goesto{\expr}{\exeqName_{\_}\ {\expr_l}\ {\expr_r}\ {\expr_{pf}}} \\
    &&&  \wedge & 
            \expr_l, \expr_r \in \interp{\tfun{x}{\typ_x}{\typ}} \wedge 
            \forall \expr_x\in\interp{\typ_x}. \expr_{pf}\ \expr_x \in \interp{\eqrt{\typ\subst{x}{\expr_x}}{\expr_l\ \expr_x}{\expr_r\ \expr_x}}
            \}\\
\end{array}
\]
\nv{Changed eq definitions to align with operational semantics change}
\caption{Semantic typing: a unary syntactic logical relation interprets types.}
\label{fig:semantic-typing}
\end{figure*}
Semantic typing uses a unary logical relation to
interpret types in a syntactic term model
(closed terms, Figure~\ref{fig:semantic-typing}; open terms, Figure~\ref{fig:coretyping}).

The interpretation of the base type \tref{\rbind}{\tbase}{\refa}
includes all expressions which yield \tbase-value \val
that satisfy the refinement, \ie \refa evaluates to true on \val.
To decide the unrefined type of an expression we use
\hasbtype{}{\expr}{\tbase} (defined in \iffull\S\ref{subsec:appendix:base-type-checking}\else{} supplementary material\fi).
The interpretation of function types \tfun{x}{\typ_x}{\typ} is \emph{logical}: it includes 
all expressions that yield \typ-results when applied to $\typ_x$ arguments. %  (carefully tracking dependency).
The interpretation of base-type equalities \eqrt{\tbase}{\expr_l}{\expr_r}
includes all expressions that satisfy the basic typing 
(\eqt{\typ} is the unrefined version of \eqrt{\typ}{\expr_l}{\expr_r}) 
and reduce to a basic equality proof whose first arguments reduce to equal \tbase-constants.
Finally, the interpretation of the function equality type 
\eqrt{\tfun{x}{\typ_x}{\typ}}{\expr_l}{\expr_r} includes all expressions 
that satisfy the basic typing (based on the \unrefine{\cdot} operator; \iffull\S\ref{subsec:appendix:base-type-checking}\else see~\citet{implementation}\fi).
These expressions reduce to a proof
whose first two arguments 
are functions of type \tfun{x}{\typ_x}{\typ}
and the third, proof argument
takes $\typ_x$ arguments to equality proofs of type \eqrt{\typ\subst{x}{\expr_x}}{\expr_l\ \expr_x}{\expr_r\ \expr_x}.
We write these proofs as $\exeqName_{\_}$, since the type index does not need to be syntactically equal to the index of the type.

\paragraph{Constants}
For simplicity, \corelan constants are only the two boolean values,
unit, and equality operators for basic types.
For each basic type $\tbase$,
we define the type indexed 
``computational'' equality $\bbbeqsym{\tbase}$. 
For two constants $\con_1$ and $\con_2$ of basic type $\tbase$, 
$\con_1\ \bbbeqsym{\tbase}\ \con_2$ evaluates in one step to 
$\bbbeq{(}{)}{(\con_1,\tbase)}\ \con_2$, which then steps to \etrue when $\con_1$ and $\con_2$ are the same and \efalse otherwise.

Each constant \con has the type \tycon{\con}\iffull, as defined below.
$$
\begin{array}{rcl}
  \tycon{\etrue} & \doteq & \tref{\rbind}{\tbool}{\bbbeq{\rbind}{\etrue}{\tbool}}   \\
  \tycon{\efalse} & \doteq & \tref{\rbind}{\tbool}{\bbbeq{\rbind}{\efalse}{\tbool}} \\
  \tycon{\eunit} & \doteq & \tref{\rbind}{\tunit}{\bbbeq{\rbind}{\eunit}{\tunit}}   \\
  \tycon{\bbbeqsym{\tbase}} &\doteq& \tfun{x}{\tbase}{\tfun{y}{\tbase}{\tref{\vv}{\tbool}{\bbbeq{\vv}{(\bbbeq{x}{y}{\tbase})}{\tbool}}}}
\end{array}
$$

\else.
We assign selfified types to \etrue, \efalse, and \eunit (e.g., \tref{\rbind}{\tbool}{\bbbeq{\rbind}{\etrue}{\tbool}}) \cite{DBLP:conf/ifipTCS/OuTMW04}.
Equality is given a similarly reflective type:
\[
  \tycon{\bbbeqsym{\tbase}} \doteq \tfun{x}{\tbase}{\tfun{y}{\tbase}{\tref{\vv}{\tbool}{\bbbeq{\vv}{(\bbbeq{x}{y}{\tbase})}{\tbool}}}}.
\]
\fi
\iffull
Our system could of course be extended with further constants, as long as 
they belong in the interpretation of their type. 
This requirement is formally defined by the Property~\ref{property:constants} 
which, for the four constants of our system is proved in~Theorem~\ref{theorem:constant-property}
\begin{property}[Constants]\label{property:constants}
$\con \in \interp{\tycon{\con}}$
\end{property}
\else
Our system can be extended with any constant $\con \in \interp{\tycon{\con}}$.
%\con, such that 
%$\con \in \interp{\tycon{\con}}$\iffull (Theorem~\ref{theorem:constant-property})\fi.
\fi

\subsection{Static Semantics of \corelan}
\label{subsec:static-semantics}
\corelan's static semantics comes in two parts:
as typing judgments (\S\ref{subsec:typing})
and as a binary logical relation characterizing equivalence (\S\ref{subsec:logical-relation}).

\subsubsection{Typing of \corelan}
\label{subsec:typing}
Type checking in \corelaneq uses three mutually recursive judgments (Figure~\ref{fig:coretyping}):
\emph{type checking}, \hastype{\env}{\expr}{\typ}, for when \expr has type \typ in \env;
\emph{well formedness}, \iswellformed{\env}{\typ}, for when when \typ is well formed in \env; and
\emph{subtyping}, \issubtype{\env}{\typ_l}{\typ_r} , for when when  $\typ_l$ is a subtype of $\typ_r$ in \env.
%  \item[\textit{Typing:}] \hastype{\env}{\expr}{\typ} when the expression \expr has type \typ in the typing environment \env. 
%  \item[\textit{Well formedness:}] \iswellformed{\env}{\typ} when
%        the type \typ is well formed in the typing environment \env\iffull, 
%        \ie all refinements are boolean expressions and at each type 
%        \eqrt{\typ}{\expr_l}{\expr_r} the expressions $\expr_l$ and $\expr_r$
%        are of type $\typ$\fi.
%  \item[\textit{Subtyping:}] \issubtype{\env}{\typ_l}{\typ_r} 
%        when an expression with type $\typ_l$ can be safely used at type $\typ_r$\iffull, 
%        under the environment \env\fi.
%\end{itemize}
%
\begin{figure*}
\small
\textit{Type checking}\hfill\fbox{\hastype{\env}{\expr}{\typ}}
\begin{center}
$$
\inference{
    \hastype{\env}{\expr}{\typ} & 
  \issubtype{\env}{\typ}{\typ'}
}{
    \hastype{\env}{\expr}{\typ'}
}[\tsub]
\quad
\inference{
    \hastype{\env}{\expr}{\tref{\vv}{\tbase}{\refa}}  
    % && \hastype{\env}{(\bbbeqsym{\tbase})}{\tfuns{\tbase}{\tfuns{\tbase}{\tbool}}}
}{
    \hastype{\env}{\expr}{\tref{\vv}{\tbase}{\bbbeq{\vv}{\expr}{\tbase}}}
}[\tself]
\quad
\inference{}{
    \hastype{\env}{\con}{\tycon{\con}}
}[\tcon]
$$

$$
\inference{
    \envBind{x}{\typ} \in \env
}{
    \hastype{\env}{x}{\typ}
}[\tvar]
\quad
\inference{
  \iswellformed{\env}{\typ_x} & 
  \hastype{\env, \envBind{x}{\typ_x}}{\expr}{\typ} 
}{
    \hastype{\env}{\elam{x}{\typ_x}{\expr}}{\tfun{x}{\typ_x}{\typ}}
}[\tlam]
\quad
\inference{    
    \hastype{\env}{\expr_x}{\typ_x} & 
    \hastype{\env}{\expr}{\tfun{x}{\typ_x}{\typ}} 
}{
    \hastype{\env}{\expr\ \expr_x}{\typ\subst{x}{\expr_x}}
}[\tapp]
$$

$$
\inference{
 \hastype{\env}{\expr_l}{\typ_l} && 
 \issubtype{\env}{\typ_l}{\tref{x}{\tbase}{\etrue}} \\
 \hastype{\env}{\expr_r}{\typ_r} &&
 \issubtype{\env}{\typ_r}{\tref{x}{\tbase}{\etrue}} \\ 
 \hastype{\env, \envBind{l}{\typ_l}, \envBind{r}{\typ_r}}{\expr}{\tref{\rbind}{\tunit}{\bbbeq{l}{r}{\tbase}}}
}{
  \hastype{\env}{\ebeq{\tbase}\ \expr_l\ \expr_r\ \expr}{\eqrt{\tbase}{\expr_l}{\expr_r}}
}[\teqbase]
\quad
\eqcolor{
\inference{
 \hastype{\env}{\expr_l}{\typ_l} && 
 \issubtype{\env}{\typ_l}{\tfun{x}{\typ_x}{\typ}} \\
 \hastype{\env}{\expr_r}{\typ_r} &&
 \issubtype{\env}{\typ_r}{\tfun{x}{\typ_x}{\typ}} &&
 \iswellformed{\env}{\tfun{x}{\typ_x}{\typ}} \\  
 \hastype{\env, \envBind{l}{\typ_l}, \envBind{r}{\typ_r}}{\expr}{(\tfun{x}{\typ_x}{\eqrt{\typ}{l\ x}{r\ x}})} && 
}{
  \hastype{\env}{\exeq{x}{\typ_x}{\typ}\ \expr_l\ \expr_r\ \expr}{\eqrt{\tfun{x}{\typ_x}{\typ}}{\expr_l}{\expr_r}}
}[\teqfun]
}
$$
\end{center}
\vspace{0.1mm}

\textit{Well-formedness}\hfill\fbox{\iswellformed{\env}{\typ}}\qquad\fbox{\envwellformed{\env}} 
$$
\inference{
   \hasbtype{\unrefine{\env},\envBind{\rbind}{\tbase}}{\refa}{\tbool}
}{
    \iswellformed{\env}{\tref{\rbind}{\tbase}{\refa}} 
}[\wfBase]
\quad
\inference{    
    \iswellformed{\env}{\typ_x} & 
    \iswellformed{\env, \envBind{x}{\typ_x}}{\typ}
}{
    \iswellformed{\env}{\tfun{x}{\typ_x}{\typ}}
}[\wfFun]
$$
$$\inference{
 \iswellformed{\env}{\typ} & 
 \hastype{\env}{\expr_l}{\typ} & 
 \hastype{\env}{\expr_r}{\typ}
}{
  \iswellformed{\env}{\eqrt{\typ}{\expr_l}{\expr_r}}
}[\wfEq]
\quad
\inference{}{\envwellformed{\emptyEnv}}[\wfEmpty]
\quad
\inference{\envwellformed{\env} && \iswellformed{\env}{\typ}}{\envwellformed{\env, \envBind{x}{\typ}}}[\wfBind]
$$

\vspace{0.1mm}

\textit{Subtyping}\hfill\fbox{$\issubtype{\env}{\typ}{\typ}$}
$$
\inference{
    \forall \model\in\interp{\env},~ \interp{\modelapp{\model}{\tref{\rbind}{\tbase}{\refa}}}
                          \subseteq \interp{\modelapp{\model}{\tref{\rbind'}{\tbase}{\refa'}}}
}{
    \issubtype{\env}{\tref{\rbind}{\tbase}{\refa}}{\tref{\rbind'}{\tbase}{\refa'}} 
}[\subBase]
\quad
\inference{    
    \issubtype{\env}{\typ'_x}{\typ_x} & 
    \issubtype{\env, \envBind{x}{\typ'_x}}{\typ}{\typ'}
}{
    \issubtype{\env}{\tfun{x}{\typ_x}{\typ}}{\tfun{x}{\typ'_x}{\typ'}}
}[\subFun]
$$
$$
\eqcolor{
\inference{
 \issubtype{\env}{\typ}{\typ'} & \issubtype{\env}{\typ'}{\typ}
}{
  \issubtype{\env}{\eqrt{\typ}{\expr_l}{\expr_r}}{\eqrt{\typ'}{\expr_l}{\expr_r}}
}[\subEq]
}
$$
\vspace{0.1mm}

\textit{Semantic typing and closing substitutions}\hfill\fbox{$\model \in \interp{\env}$}\qquad\fbox{\hassemtype{\env}{\expr}{\typ}}

\[
\inference{}{\emptyset \in \interp{\emptyEnv}}[\cbase]
\quad
\inference{
  \val \in \interp{\typ} && 
  \model \in \interp{\env\subst{x}{\val}}
}{
  \modelBind{x}{\val}, \model  \in \interp{\envBind{x}{\typ}, \env}
}[\csub]
\quad
\begin{array}{c}
\hassemtype{\env}{\expr}{\typ} \Leftrightarrow 
    \forall \model\in\interp{\env},~ \hasdtype{\model}{\expr}{\typ} 
\end{array}
\]
\caption{Typing of \corelaneq.}
\label{fig:coretyping}
\end{figure*}

\paragraph{Type Checking}
% \textbf{\textit{Type Checking}} 
Beyond the conventional rules for refinement type systems~\cite{DBLP:conf/ifipTCS/OuTMW04,Hybrid,LT2008}, the interesting rules are concerned with equality (\teqbase, \teqfun).

The rule \teqbase assigns to the expression $\ebeq{\tbase}\ \expr_l\ \expr_r\ \expr$
the type \eqrt{\tbase}{\expr_l}{\expr_r}. 
To do so, we guess types $\typ_l$ and $\typ_r$ that fit
$\expr_l$ and $\expr_r$, respectively. 
Both these types should be subtypes of $\tbase$ that are \emph{strong} enough
to derive that if
\envBind{l}{\typ_l} and \envBind{r}{\typ_r}, then the proof argument $\expr$
has type $\tref{\_}{\tunit}{\bbbeq{l}{r}{\tbase}}$.
%
%One might expect the proof of equality to be in terms of
%$\expr_l$ and $\expr_r$ themselves rather than general values $l$ and $r$ at invariant types.
Our formal model allows checking of strong, selfified
types (rule \tself), but does not define an algorithmic procedure 
to generate them. 
%our formal model leaves it to the programmer to give strong, 
%meaningful types that prove equality. 
In Liquid Haskell,  
type inference~\cite{LT2008} % and reflection~\cite{VazouTCSNWJ18} 
automatically and algorithmically derives such strong types.
We don't encumber \corelan with inference, since, formally speaking, we
can always guess any type that inference can derive. 

The rule \teqfun gives the expression $\exeq{x}{\typ_x}{\typ}\ \expr_l\ \expr_r\ \expr$
type \eqrt{\tfun{x}{\typ_x}{\typ}}{\expr_l}{\expr_r}. 
As in \teqbase, we guess strong types $\typ_l$ and $\typ_r$ to stand for
$\expr_l$ and $\expr_r$ such that with \envBind{l}{\typ_l} and \envBind{r}{\typ_r}, the proof argument $\expr$
should have type $\tfun{x}{\typ_x}{\eqrt{\typ}{l\ x}{r\ x}}$, \ie it should prove that $l$ and $r$ are extensionally equal.
We require that the index $\tfun{x}{\typ_x}{\typ}$ is well formed as technical bookkeeping.

\paragraph{Well Formedness}
Refinements should be booleans
(\wfBase); functions are treated in the usual way (\wfFun); and
the propositional equality 
\eqrt{\typ}{\expr_l}{\expr_r} is well formed when
the  expressions $\expr_l$ and $\expr_r$ are typed at the index \typ, which is also well formed (\wfEq). 

\paragraph{Subtyping}
Basic types are related by
set inclusion on the interpretation of those types (\subBase, and Figure~\ref{fig:semantic-typing}). 
Concretely, for all closing substitutions 
(\cbase, \csub)
the interpretation of the left-hand side type should be a subset 
of the right-hand side type. 
The rule \subFun implements the usual (dependent) function subtyping.
Finally, \subEq reduces subtyping of equality types to 
subtyping of the type indices, while the expressions to be equated 
remain unchanged. 
Even though covariant treatment of the type index would suffice 
for our metatheory, we treat the type index invariantly
to be consistent with the implementation (\S\ref{sec:rules})
where the GADT encoding of |$\PEq$| is invariant.
Our subtyping rule allows equality proofs
between functions with 
convertible types
(\S\ref{sec:eg:refdom}).

%The model of a typing environment is the set of expressions that
%belong to the interpretations of the environment bindings, \ie
%models are closing substitutions (Figure~\ref{fig:coretyping}).

\subsubsection{Equivalence Logical Relation for \corelan}
\label{subsec:logical-relation}

\begin{figure}
\small
\arraycolsep=0pt
\[
\begin{array}{r@{~}c@{~}l@{~}lcl}
\multicolumn{6}{l}{\textit{Value equivalence relation}\hfill\fbox{\relates{\rmodel}{\val}{\val}{\typ}}}\\[2mm]
  \relates{\rmodel}{\con&}{&\con&}{\tref{\rbind}{\tbase}{\refa}} 
  & \quad\doteq\quad & 
  \hasbtype{}{\con}{\tbase} \wedge 
  \goesto{\modelapp{\rmodel_1}{\refa\subst{\rbind}{\con}}}{\etrue} \wedge
  \goesto{\modelapp{\rmodel_2}{\refa\subst{\rbind}{\con}}}{\etrue} \\ 
  \relates{\rmodel}{\val_1&}{&\val_2&}{\tfun{x}{\typ_x}{\typ}}  
  & \quad\doteq\quad & 
  \forall \relates{\rmodel}{\val_3}{\val_4}{\typ_x}. ~
  \relates{\rmodel,\rmodelBind{\val_3}{\val4_4}{x}}{\val_1\ \val_3}{\val_2\ \val_4}{\typ}\\
  \eqcrelates{\eqcolor{\rmodel}}{\eqcolor{\val_1}&}{&\eqcolor{\val_2}&}{\eqcolor{\eqrt{\typ}{\expr_l}{\expr_r}}}
  & \eqcolor{\quad\doteq\quad} & 
  \eqcolor{\relates{\rmodel}{\modelapp{\rmodel_1}{\expr_l}}{\modelapp{\rmodel_2}{\expr_r}}{\typ}} \\[3mm]
  \multicolumn{6}{l}{\textit{Expression equivalence relation}\hfill\fbox{\relates{\rmodel}{\expr}{\expr}{\typ}}}\\[2mm]
  \relates{\rmodel}{\expr_1&}{&\expr_2&}{\typ}  & \quad\doteq\quad & 
  \goesto{\expr_1}{\val_1},\quad  
  \goesto{\expr_2}{\val_2},\quad
  \relates{\rmodel}{\val_1}{\val_2}{\typ}\\[3mm]
  \multicolumn{6}{l}{\textit{Open expression equivalence relation}\hfill\fbox{$\rmodel \in \env$} \qquad \fbox{\relatesEnv{\env}{\expr}{\expr}{\typ}}}\\[2mm]
\multicolumn{6}{c}{
  {\rmodel \in \env}  ~~\doteq~~
  \forall \envBind{x}{\typ} \in \env,\ \relates{\rmodel}{\rmodel_1(x)}{\rmodel_2(x)}{\typ}

  \qquad\quad

    { \relatesEnv{\env}{\expr_1}{\expr_2}{\typ}}   ~~\doteq~~
  \forall \rmodel \in \env,\ \relates{\rmodel}{\modelapp{\rmodel_1}{\expr_1}}{\modelapp{\rmodel_2}{\expr_2}}{\typ}

} \\
\end{array}
\]
\caption{Definition of equivalence logical relation.}
\label{fig:equivalence-def}
\end{figure}

We characterize equivalence with a term-model binary logical relation. We lift a relation on closed values to closed and then open expressions (Figure~\ref{fig:equivalence-def}).
Instead of directly substituting in type indices, all three relations use \emph{pending substitutions} $\rmodel$, which map variables to pairs of equivalent values.

\paragraph{Closed Values and Expressions}
We read \relates{\rmodel}{\val_1}{\val_2}{\typ} as saying that
values $\val_1$ and $\val_2$ are related under the type $\typ$ with
pending substitutions \rmodel. The relation is defined as a fixpoint on types, 
noting that the propositional equality on a type, \eqrt{\typ}{\expr_1}{\expr_2}, is structurally larger than the type \typ.

For refinement types \tref{\rbind}{\tbase}{\refa}, related values must 
be the same constant $\con$. Further, this constant should 
actually be a \tbase-constant and it should actually satisfy the refinement \refa,
\ie substituting \con for $x$ in \refa
should evaluate to \etrue
under either pending substitution ($\rmodel_1$ or $\rmodel_2$). 
Two values of function type are equivalent when applying them to
equivalent arguments yield equivalent results. Since we have dependent types, we record the arguments in the pending substitution for later substitution in the codomain.
Two proofs of equality are equivalent
when the two equated expressions are equivalent in the logical relation at type-index $\typ$---equality proofs `reflect' the logical relation.
Since the equated expressions appear in the type itself, they may be open,
referring to variables in the pending substitution \rmodel. 
Thus we use \rmodel to close these expressions, using the logical relation on
\modelapp{\rmodel_1}{\expr_l} and \modelapp{\rmodel_2}{\expr_r}.
Following the proof irrelevance notion of refinement typing, 
the equivalence of equality proofs does not relate the proof terms---in fact, 
it doesn't even \emph{inspect} the proofs $\val_1$ and $\val_2$.
\mmg{this may pose a problem for any future completeness proof}

Two closed expressions $\expr_1$ and $\expr_2$ are equivalent on type \typ 
with pending substitions \rmodel, written \relates{\rmodel}{\expr_1}{\expr_2}{\typ},
\textit{iff} they respectively evaluate to equivalent values $\val_1$ and $\val_2$. 

\paragraph{Open Expressions}
A pending substitution \rmodel satisfies a typing environment \env
when its bindings are relates pairs of values at the type in \env.
Two open expressions, with variables from  \env 
are equivalent on type \typ, written 
\relatesEnv{\env}{\expr_1}{\expr_2}{\typ}, \textit{iff} 
for each \rmodel that satisfies \env, we have
\relates{\rmodel}{\modelapp{\delta_1}{\expr_1}}{\modelapp{\delta_2}{\expr_2}}{\typ}. 
The expressions $\expr_1$ and $\expr_2$ and the type \typ 
might refer to variables in the environment \env. 
We use \rmodel to close the expressions eagerly, while we close the type lazily:
we apply \rmodel in the refinement and equality cases of the closed value
equivalence relation.

\subsection{Metaproperties: \PEq is an Equivalence Relation}
\label{subsec:metaproperties}

Finally, we show various metaproperties of \corelan. 
Theorem~\ref{thm:soundness} proves soundness of 
syntactic typing with respect to semantic typing. 
Theorem~\ref{thm:eq-relation} proves that propositional equality 
implies equivalence in the term model. 
Theorems~\ref{thm:theory:equality-logical-relation} and~\ref{thm:theory:properties}
prove that both the equivalence relation and propositional equality define equivalences,
\ie satisfy the three equality axioms.%
{\ifspace
Theorem~\ref{thm:theory:contextual-equivalence} describes contextual equivalence. 
\fi}
All the proofs are in \iffull Appendix~\ref{sec:proofs-metatheory}\else \citet{implementation}\fi. 

\corelaneq is semantically sound: syntactically well typed programs are also semantically well typed.
\begin{theorem}[Typing is Sound]
  \label{thm:soundness}
  If \hastype{\env}{\expr}{\typ}, then \hassemtype{\env}{\expr}{\typ}. % $\forall \model\in\env. \modelapp{\model}{\expr}\in\interp{\modelapp{\model}{\typ}}$.
\end{theorem}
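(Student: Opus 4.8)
The plan is to prove the theorem by induction on the derivation of $\hastype{\env}{\expr}{\typ}$, establishing for each typing rule the corresponding compatibility lemma: assuming the premises are semantically sound, the conclusion is too. Unfolding the definition of $\hassemtype{\env}{\expr}{\typ}$, in every case we must show that for an arbitrary closing substitution $\model \in \interp{\env}$ we have $\hasdtype{\model}{\expr}{\typ}$, i.e. $\modelapp{\model}{\expr} \in \interp{\modelapp{\model}{\typ}}$ using the type interpretation of Figure~\ref{fig:semantic-typing}.

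Before the main induction I would prove a small suite of auxiliary lemmas about the interpretation: (i) \emph{anti-reduction closure}: if $\evals{\expr}{\expr'}$ and $\expr' \in \interp{\typ}$, then $\expr \in \interp{\typ}$ (and, dually, every element of $\interp{\typ}$ reduces to a value); (ii) \emph{evaluation-stability of types}: the interpretation of a type depends only on the values to which its free expressions reduce, so $\interp{\typ\subst{x}{\expr_x}} = \interp{\typ\subst{x}{\val_x}}$ whenever $\goesto{\expr_x}{\val_x}$ — this is exactly what makes refinements and equality indices, which only ever evaluate their subterms (recall $\bbbeqsym{\tbase}$ reduces its arguments), well behaved; (iii) a syntactic \emph{substitution/commutation lemma}, $\modelapp{\model}{(\typ\subst{x}{\expr_x})} = (\modelapp{\model}{\typ})\subst{x}{\modelapp{\model}{\expr_x}}$ for $x$ outside the domain of $\model$; and (iv) \emph{semantic soundness of subtyping}: if $\issubtype{\env}{\typ_l}{\typ_r}$ then for all $\model \in \interp{\env}$, $\interp{\modelapp{\model}{\typ_l}} \subseteq \interp{\modelapp{\model}{\typ_r}}$. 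For base types (iv) is immediate from the definition of \subBase; for \subFun and \subEq it follows by a side induction on the subtyping derivation, peeling off the logical structure of the function and equality interpretations.

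With these in hand the cases are routine. \tcon is Property~\ref{property:constants} (and Theorem~\ref{theorem:constant-property}). \tvar is immediate from the definition of $\model \in \interp{\env}$. \tsub is lemma (iv), and \tself follows from determinism of evaluation together with the fact that $\goesto{\bbbeq{\val}{\val}{\tbase}}{\etrue}$ for any $\tbase$-constant $\val$. For \tlam, given $\expr_x \in \interp{\modelapp{\model}{\typ_x}}$ reduce it to its value $\val_x$ (by (i)), extend $\model$ to $\modelBind{x}{\val_x},\model \in \interp{\envBind{x}{\typ_x},\env}$, apply the induction hypothesis for the body, and finish with one $\beta$-step plus lemmas (i)–(iii). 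For \tapp, unfold the function interpretation on the induction hypothesis for the operator, plug in the argument's interpretation, and rewrite with (iii). The interesting cases are \teqbase and \teqfun: from the induction hypotheses for $\expr_l,\expr_r$ and the subtyping premises we obtain values $\val_l,\val_r$ of $\modelapp{\model}{\expr_l},\modelapp{\model}{\expr_r}$ lying in the interpretation of the appropriate type; instantiating the proof argument's induction hypothesis under $\model$ extended with $l\mapsto\val_l$, $r\mapsto\val_r$ then yields, for \teqbase, that the unit refinement $\bbbeq{l}{r}{\tbase}$ evaluates to $\etrue$ — precisely the last conjunct of $\interp{\eqrt{\tbase}{\expr_l}{\expr_r}}$ — and, for \teqfun, a value of type $\tfun{x}{\typ_x}{\eqrt{\typ}{l\ x}{r\ x}}$ whose function interpretation unfolds exactly to the pointwise clause demanded by $\interp{\eqrt{\tfun{x}{\typ_x}{\typ}}{\expr_l}{\expr_r}}$, with $\expr_l,\expr_r \in \interp{\tfun{x}{\typ_x}{\typ}}$ supplied by the subtyping premises.

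The main obstacle I anticipate is getting the dependency-and-evaluation bookkeeping exactly right: lemmas (i) and (ii) must be stated precisely enough that substituting a non-value expression into a dependent codomain (\tapp) or into a refinement or equality index (\tself, \teqbase, \teqfun) agrees with substituting the value it reduces to, and these substitutions must be threaded consistently through the closing substitution $\model$ via (iii). The second delicate point is the subtyping-soundness lemma (iv) at function and equality types, which has to be established compatibly with the interpretation; this is, however, a standard argument for logical-relations models of semantic subtyping.
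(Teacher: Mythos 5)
Your proposal is correct and follows essentially the same route as the paper: induction on the typing derivation, with a separate semantic-soundness theorem for subtyping, a lemma that interpretations are closed under reduction to values, and a lemma that substituting an expression into a dependent type agrees with substituting the value it reduces to. The one point worth noting is that your lemma (ii) (evaluation-stability of type interpretations) is exactly where the paper invests its heaviest machinery — a parallel-reduction bisimulation developed in a separate appendix — so your identification of it as the main obstacle is well placed.
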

\noindent
The proof goes by induction on the derivation tree. 
\iffull
Our system could not be proved sound using purely syntactic
techniques, like progress and preservation~\cite{Wright94syntactic}, for two reasons.
First, and most essentially, \subBase needs to quantify over all
closing substitutions and purely syntactic approaches flirt with
non-monotonicity (though others have attempted syntactic approaches in similar systems~\cite{Zalewski20WGT}).
Second, and merely coincidentally, our system does not enjoy subject
reduction. In particular, \subEq allows us to change the type index of
propositional equality, but not the term index.
Why? Consider the term: \[ (\elam{x}{\tref{x}{\tbool}{\etrue}}{\ebeq{\tbool}\ x\ x\ ()})\ e \] such that $\evals{e}{e'}$ for some $e'$.
The whole application has type \eqrt{\tbool}{e}{e}; after we take a
step, it will have type \eqrt{\tbool}{e'}{e'}. Subject reduction demands
that the latter is a subtype of the former. We have
\[ \parreds{\eqrt{\tbool}{e}{e}}{\eqrt{\tbool}{e'}{e'}} \] so we could
recover subject reduction by allowing a supertype's terms to
parallel reduce (or otherwise convert) to a subtype's terms.
Adding this condition would not be hard:
the logical relations' metatheory already demands a
variety of lemmas about parallel reduction, relegated to
supplementary material\iffull (Appendix~\ref{app:parred})\fi\ to avoid distraction
and preserve space for our main contributions. We haven't made this change because subject reduction isn't necessary for our purposes.
\else
Our system could not be proved sound using purely syntactic
techniques, like progress and preservation~\cite{Wright94syntactic}:  \subBase needs to quantify over all
closing substitutions. Purely syntactic approaches flirt with
non-monotonicity~\cite{greenberg_pierce_weirich_2012,Zalewski20WGT}.
\fi

\begin{theorem}[\PEq is Sound]
  \label{thm:eq-relation}
  If \hastype{\env}{\expr}{\eqrt{\typ}{\expr_1}{\expr_2}}, 
  then \relatesEnv{\env}{\expr_1}{\expr_2}{\typ}.
\end{theorem}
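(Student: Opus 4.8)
The plan is to derive this as a corollary of the \emph{fundamental property} of the equivalence logical relation --- that every well-typed term is related to itself (the reflexivity half of Theorem~\ref{thm:theory:equality-logical-relation}). Applying that property to $\hastype{\env}{\expr}{\eqrt{\typ}{\expr_1}{\expr_2}}$ gives $\relatesEnv{\env}{\expr}{\expr}{\eqrt{\typ}{\expr_1}{\expr_2}}$: for every $\rmodel\in\env$, the closings $\modelapp{\rmodel_1}{\expr}$ and $\modelapp{\rmodel_2}{\expr}$ reduce to values that are related at $\eqrt{\typ}{\expr_1}{\expr_2}$. Unfolding the value case of the relation at an equality type --- which, by proof irrelevance, ignores the proof values and asserts precisely $\relates{\rmodel}{\modelapp{\rmodel_1}{\expr_1}}{\modelapp{\rmodel_2}{\expr_2}}{\typ}$ --- yields the claim, since $\rmodel\in\env$ was arbitrary. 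The only subtlety is the forward reference to a theorem stated later; since that result is proved by induction on typing derivations without appealing to Theorem~\ref{thm:eq-relation}, there is no circularity.

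If one prefers to stay within the machinery already developed, I would instead route through soundness of typing (Theorem~\ref{thm:soundness}) and the unary type interpretation $\interp{\cdot}$. Soundness turns the hypothesis into $\hassemtype{\env}{\expr}{\eqrt{\typ}{\expr_1}{\expr_2}}$, so for every closing substitution $\model\in\interp{\env}$ we get $\modelapp{\model}{\expr}\in\interp{\eqrt{\modelapp{\model}{\typ}}{\modelapp{\model}{\expr_1}}{\modelapp{\model}{\expr_2}}}$. I would then prove a bridge lemma: for closed $\typ,\expr_l,\expr_r$, membership of any term in $\interp{\eqrt{\typ}{\expr_l}{\expr_r}}$ implies $\relates{\emptyset}{\expr_l}{\expr_r}{\typ}$, by induction on $\typ$. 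At a base type, membership forces the proof to reduce to $\ebeq{\tbase}\ \expr_l\ \expr_r\ \expr_{pf}$ with $\goesto{\bbbeq{\expr_l}{\expr_r}{\tbase}}{\etrue}$, and the reduction rules of the computational equality then force $\expr_l$ and $\expr_r$ to evaluate to the same $\tbase$-constant --- exactly the base case of the relation. At a function type $\tfun{x}{\typ_x}{\typ_0}$, membership supplies $\expr_l,\expr_r\in\interp{\tfun{x}{\typ_x}{\typ_0}}$ and a $\expr_{pf}$ with $\expr_{pf}\ \expr_x\in\interp{\eqrt{\typ_0\subst{x}{\expr_x}}{\expr_l\ \expr_x}{\expr_r\ \expr_x}}$ for each $\expr_x\in\interp{\typ_x}$; given related arguments $\relates{\emptyset}{\val_3}{\val_4}{\typ_x}$ I would use a sublemma that equivalence entails unary membership of each component, instantiate $\expr_{pf}$, apply the induction hypothesis at the smaller $\typ_0$, and finish with closure-under-reduction and a lemma that pushes a closed argument from a pending substitution into the type index. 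The open-term step then moves from $\rmodel\in\env$ to its projections $\rmodel_1,\rmodel_2\in\interp{\env}$ --- but this still needs reflexivity of the equivalence relation on $\expr_1$ and $\expr_2$ to line up the per-side substitutions, so this route does not actually escape Theorem~\ref{thm:theory:equality-logical-relation}.

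Either way, the real difficulty lies not in this theorem but in the result it leans on: reconciling the unary interpretation, which picks a \emph{single} witness argument, with the binary relation, which supplies a \emph{pair} of possibly distinct (but extensionally equal) arguments at a dependent function type whose codomain index may mention that argument. In the corollary route this work is hidden inside the fundamental property; in the bridge-lemma route it surfaces as the function case, where it forces either a carefully strengthened induction hypothesis threading pending substitutions through or establishing the required compatibility and reflexivity facts for the equivalence relation up front.
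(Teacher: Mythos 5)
Your first route is exactly the paper's proof: apply the fundamental property (Theorem~\ref{thm:fundamental}) to obtain $\relatesEnv{\env}{\expr}{\expr}{\eqrt{\typ}{\expr_1}{\expr_2}}$, then unfold the equality case of the logical relation, which directly asserts $\relates{\rmodel}{\modelapp{\rmodel_1}{\expr_1}}{\modelapp{\rmodel_2}{\expr_2}}{\typ}$ for each $\rmodel\in\env$. The alternative unary-interpretation route you sketch is unnecessary (and, as you note yourself, does not avoid the fundamental property anyway), so the one-paragraph corollary argument suffices.
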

\noindent
The proof \iffull(see Theorem~\ref{proofs-thm:eq-relation})\fi
is a corollary of the fundamental property of the logical relation\iffull (Theorem~\ref{thm:fundamental})\fi, \ie 
if \hastype{\env}{\expr}{\typ} then \relatesEnv{\env}{\expr}{\expr}{\typ}, 
which is proved in turn by induction on the typing derivation.  

\begin{theorem}[The logical relation is an Equivalence]
  \label{thm:theory:equality-logical-relation}
  $\relatesEnv{\env}{\expr_1}{\expr_2}{\typ}$ is reflexive, symmetric, and transitive\iffull:
  \begin{itemize}[leftmargin=*]
    \item \textit{Reflexivity:} If \hastype{\env}{\expr}{\typ}, then $\relatesEnv{\env}{\expr}{\expr}{\typ}$.
    \item \textit{Symmetry:} If $\relatesEnv{\env}{\expr_1}{\expr_2}{\typ}$, then $\relatesEnv{\env}{\expr_2}{\expr_1}{\typ}$. 
    \item \textit{Transitivity:} If $\hastype{\env}{\expr_2}{\typ}$, $\relatesEnv{\env}{\expr_1}{\expr_2}{\typ}$, and $\relatesEnv{\env}{\expr_2}{\expr_3}{\typ}$, 
                                 then $\relatesEnv{\env}{\expr_1}{\expr_3}{\typ}$. 
\end{itemize}\else.\fi
\end{theorem}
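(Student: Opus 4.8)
The plan is to handle the three parts separately, since they have quite different character. \textbf{Reflexivity} is not new: it is exactly the fundamental property of the equivalence logical relation --- if $\hastype{\env}{\expr}{\typ}$ then $\relatesEnv{\env}{\expr}{\expr}{\typ}$ --- which is already proved (by induction on the typing derivation) in order to derive Theorem~\ref{thm:eq-relation}, so for this theorem I would simply invoke that lemma. The work is in \textbf{symmetry} and \textbf{transitivity}, which I would prove together by a single well-founded induction on the type index $\typ$ --- well-founded because $\eqrt{\typ}{\expr_l}{\expr_r}$ is structurally larger than $\typ$ --- carried out first at the level of the closed-value relation $\relates{\rmodel}{\cdot}{\cdot}{\typ}$, generalized over pending substitutions $\rmodel$ that satisfy a well-formed environment; the closed-expression and open-expression levels then follow by unfolding the definitions.

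Because pending substitutions carry \emph{pairs} of values, the statements must track how $\rmodel$ is transformed. For symmetry I would introduce a reversal $\rmodel^{-1}$ that transposes each pair and prove the bookkeeping lemma $\rmodel \in \env \Leftrightarrow \rmodel^{-1} \in \env$; for transitivity I would introduce a gluing operation $\rmodel \mathbin{;} \rmodel'$ that composes two substitutions agreeing on their seam (the right half of $\rmodel$ equals the left half of $\rmodel'$), prove that gluing preserves environment satisfaction, and prove that the ``diagonal'' substitution built from one half of a satisfying $\rmodel$ again satisfies $\env$. These lemmas go by induction on $\env$, using the value-level symmetry/transitivity induction hypotheses at the binding types, so the two inductions must be nested appropriately. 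With these in place, the base-type case is immediate --- the related values are forced to be the same constant, and the two refinement-satisfaction conjuncts are supplied by the two halves of $\rmodel$ (resp.\ by $\rmodel$ and $\rmodel'$) --- and the function-type case is routine: apply the functions to the reversed/glued extended arguments and appeal to the induction hypotheses on the argument and result types, the gluing formulation being exactly what makes the argument-value bookkeeping line up.

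The delicate case --- and the one I expect to be the main obstacle --- is the \emph{equality index} $\eqrt{\typ}{\expr_l}{\expr_r}$, whose value relation unfolds to $\relates{\rmodel}{\modelapp{\rmodel_1}{\expr_l}}{\modelapp{\rmodel_2}{\expr_r}}{\typ}$: here $\expr_l$ is closed by the \emph{left} half of $\rmodel$ and $\expr_r$ by the \emph{right} half. Reversing $\rmodel$ (or gluing two substitutions) disturbs this diagonal, so one must re-relate $\modelapp{\rmodel_1}{\expr_l}$ to $\modelapp{\rmodel_2}{\expr_l}$, and likewise for $\expr_r$, before chaining with the hypothesis. That re-relation is an instance of the fundamental property applied to $\expr_l$ and $\expr_r$, which are well typed at $\typ$ precisely because $\eqrt{\typ}{\expr_l}{\expr_r}$ is well formed; the chaining then happens at the strictly smaller index $\typ$ via the induction hypothesis. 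This is why symmetry and transitivity must be proved simultaneously (each equality-index case uses both), why the transitivity statement needs the extra hypothesis $\hastype{\env}{\expr_2}{\typ}$ (so the fundamental property can bridge $\modelapp{\rmodel_1}{\expr_2}$ and $\modelapp{\rmodel_2}{\expr_2}$ when lifting to the open level), and why the proof-irrelevant treatment of equality proofs is essential --- since the relation never inspects the proof values, the realignment is free to ignore them. Once the closed-value statements hold, the open-expression conclusions follow by instantiating the hypothesis at the reversed (resp.\ glued) substitution, applying the closed-value lemma, and using the reversal/gluing lemmas to recover a statement under the originally given $\rmodel$.
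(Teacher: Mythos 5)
Your proposal is correct in outline and, for two of the three parts, follows the paper's own route exactly: reflexivity is discharged by citing the fundamental property, and symmetry goes by defining the transposed substitution (the paper calls it $\bar{\rmodel}$), proving $\relates{\rmodel}{\val_1}{\val_2}{\typ} \Rightarrow \relates{\bar{\rmodel}}{\val_2}{\val_1}{\typ}$ by structural induction on $\typ$, and separately showing $\rmodel\in\env \Rightarrow \bar{\rmodel}\in\env$ by induction on $\env$ --- precisely your reversal and bookkeeping lemmas.

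Where you diverge is transitivity. The paper does \emph{not} glue two pending substitutions along a seam: its value-level lemma keeps a single $\rmodel$ throughout ($\relates{\rmodel}{\val_1}{\val_2}{\typ}$ and $\relates{\rmodel}{\val_2}{\val_3}{\typ}$ imply $\relates{\rmodel}{\val_1}{\val_3}{\typ}$), and the only place the two "halves" fail to line up is at the open level, where $\expr_2$ evaluates to different values under $\rmodel_1$ and $\rmodel_2$; there the paper bridges the gap with the fundamental property applied to $\expr_2$ --- which is exactly why the hypothesis $\hastype{\env}{\expr_2}{\typ}$ appears in the statement, the same role you assign it. Your gluing operation buys a more honest account of the function-type case (where the paper's chaining of $\val_1\ \val_l \sim \val_2\ \val_r$ with $\val_2\ \val_l \sim \val_3\ \val_r$ silently identifies two applications of $\val_2$ to different arguments) at the cost of extra lemmas about seam agreement and diagonal substitutions; the paper's version is shorter but leans on the reader to accept that realignment. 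Similarly, your treatment of the $\eqrt{\typ}{\expr_l}{\expr_r}$ case --- re-relating $\modelapp{\rmodel_1}{\expr_l}$ and $\modelapp{\rmodel_2}{\expr_l}$ via the fundamental property before chaining --- is more careful than the paper's, which simply inverts the hypothesis and applies the induction hypothesis at $\typ$. One caveat on your version: the value-level relation as defined carries no typing or well-formedness assumption on $\typ$, so your appeal to the fundamental property for $\expr_l$ and $\expr_r$ inside the equality-index case requires you to thread a well-formedness hypothesis (or a typing derivation for the index) through the value-level induction; the paper avoids this obligation by not performing that realignment. Neither difference is a gap --- both arguments go through --- but yours is the heavier and more self-contained of the two.
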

\noindent
Reflexivity is also called the \emph{fundamental property} of the logical relation.
The other proofs go by structural 
induction on \typ\iffull (Theorem~\ref{proofs:equality-logical-relation})\fi.
Transitivity requires reflexivity on $\expr_2$, 
so we also assume that $\hastype{\env}{\expr_2}{\typ}$. 
% OLD THEOREM
% \begin{theorem}[\texttt{EqRT} is an Equality]
%   \label{thm:theory:properties}
%   $\eqrt{\typ}{\expr_1}{\expr_2}$ is reflexive, symmetric, and transitive: 
%   \begin{itemize}
%     \item \textit{Reflexivity:} $\forall \hastype{\env}{\expr}{\typ}\ \exists v. \hastype{\env}{v}{\eqrt{\typ}{\expr}{\expr}}$. 
%     \item \textit{Symmetry:} $\forall \env, \typ, \expr_1, \expr_2$. 
%                              if $\exists v. \hastype{\env}{v}{\eqrt{\typ}{\expr_1}{\expr_2}}$, 
%                              then $\exists v'. \hastype{\env}{v'}{\eqrt{\typ}{\expr_2}{\expr_1}}$. 
%     \item \textit{Transitivity:} $\forall \env, \typ, \expr_1, \expr_2, \expr_3$.
%                             if $\exists v_1. \hastype{\env}{v_1}{\eqrt{\typ}{\expr_1}{\expr_2}}$ 
%                             and $\exists v_2. \hastype{\env}{v_2}{\eqrt{\typ}{\expr_2}{\expr_3}}$, 
%                             then $\exists v_3. \hastype{\env}{v_3}{\eqrt{\typ}{\expr_1}{\expr_3}}$.  
%   \end{itemize}
% \end{theorem}
%
\begin{theorem}[\PEq is an Equivalence]
  \label{thm:theory:properties}
$\eqrt{\typ}{\expr_1}{\expr_2}$ is reflexive, symmetric, and
  transitive on equable types. That is, for all $\typ$ that do not contain
  equalities themselves:
  \begin{itemize}[leftmargin=*]
    \item \textit{Reflexivity:} If $\hastype{\env}{\expr}{\typ}$, then
      there exists $\val$ such that
      $\hastype{\env}{\val}{\eqrt{\typ}{\expr}{\expr}}$.

    \item \textit{Symmetry:} % $\forall \env, \typ, \expr_1, \expr_2, \val_{12}$.
      If $\hastype{\env}{\val_{12}}{\eqrt{\typ}{\expr_1}{\expr_2}}$, then
      there exists $\val_{21}$ such that
      $\hastype{\env}{\val_{21}}{\eqrt{\typ}{\expr_2}{\expr_1}}$.

      \todo{we can change $\val_{12}$ to an expression when we correct issues with term indices}

    \item \textit{Transitivity:} 
      % $\forall \env, \typ, \expr_1, \expr_2, \expr_3, \val_{12}, \val_{23}$.
      If  \hastype{\env}{\val_{12}}{\eqrt{\typ}{\expr_1}{\expr_2}}
      and \hastype{\env}{\val_{23}}{\eqrt{\typ}{\expr_2}{\expr_3}}, 
      then there exists $\val_{13}$ such that \hastype{\env}{\val_{13}}{\eqrt{\typ}{\expr_1}{\expr_3}}.

  \end{itemize}
\end{theorem}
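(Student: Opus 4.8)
The plan is to reduce all three items to a single \emph{proof-synthesis} lemma and then lean on the metatheory already established. The lemma I would prove is: if $\relatesEnv{\env}{\expr_1}{\expr_2}{\typ}$ for an equable type $\typ$ (no equality types nested inside) and both $\hastype{\env}{\expr_1}{\typ}$ and $\hastype{\env}{\expr_2}{\typ}$ hold, then there is a value $\val$ with $\hastype{\env}{\val}{\eqrt{\typ}{\expr_1}{\expr_2}}$. Granting this, reflexivity is immediate: from $\hastype{\env}{\expr}{\typ}$, reflexivity of the equivalence logical relation (the fundamental property, Theorem~\ref{thm:theory:equality-logical-relation}) gives $\relatesEnv{\env}{\expr}{\expr}{\typ}$, and the lemma synthesizes the witness. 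For symmetry, inverting the hypothesis $\hastype{\env}{\val_{12}}{\eqrt{\typ}{\expr_1}{\expr_2}}$ (through \tsub and \teqbase/\teqfun, or via \wfEq) gives $\hastype{\env}{\expr_i}{\typ}$; Theorem~\ref{thm:eq-relation} gives $\relatesEnv{\env}{\expr_1}{\expr_2}{\typ}$; symmetry of the logical relation flips it to $\relatesEnv{\env}{\expr_2}{\expr_1}{\typ}$; and the lemma produces the required $\val_{21}$. Transitivity is identical using the transitive case of the logical relation, and the extra hypothesis that Theorem~\ref{thm:theory:equality-logical-relation} requires --- $\hastype{\env}{\expr_2}{\typ}$ --- is exactly what inverting $\hastype{\env}{\val_{23}}{\eqrt{\typ}{\expr_2}{\expr_3}}$ supplies.

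The lemma itself goes by induction on the equable type $\typ$, reproducing inside the metatheory the classy-induction structure of the Liquid Haskell proofs in \S\ref{sec:gadt-metatheory}. In the base case $\typ$ is (a refinement of) a base type $\tbase$; I would take $\val = \ebeq{\tbase}\ \expr_1\ \expr_2\ \eunit$ and check it with \teqbase, picking for the two sides the selfified singleton types $\tref{\vv}{\tbase}{\bbbeq{\vv}{\expr_1}{\tbase}}$ and $\tref{\vv}{\tbase}{\bbbeq{\vv}{\expr_2}{\tbase}}$, legal by \tself since $\hastype{\env}{\expr_i}{\tbase}$. With $l$ and $r$ bound at these types, the one remaining obligation on $\eunit$ --- that it has type $\tref{\rbind}{\tunit}{\bbbeq{l}{r}{\tbase}}$ --- discharges by \subBase: in every closing substitution the values bound to $l$ and $r$ are forced to be the values of $\expr_1$ and $\expr_2$, and instantiating $\relatesEnv{\env}{\expr_1}{\expr_2}{\typ}$ at the diagonal pending substitution built from that closing substitution says those two values are the same $\tbase$-constant.

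In the function case $\typ = \tfun{x}{\typ_x}{\typ'}$, I would take $\val = \exeq{x}{\typ_x}{\typ'}\ \expr_1\ \expr_2\ (\elam{x}{\typ_x}{\val'})$ and check it with \teqfun. The proof body $\val'$ comes from the induction hypothesis at the (still equable) codomain $\typ'$ applied to $\expr_1\ x$ and $\expr_2\ x$; the needed premise $\relatesEnv{\env,\envBind{x}{\typ_x}}{\expr_1\ x}{\expr_2\ x}{\typ'}$ is precisely the unfolding of $\relatesEnv{\env}{\expr_1}{\expr_2}{\tfun{x}{\typ_x}{\typ'}}$ against the function clause of the logical relation, after re-currying the pending substitution. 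Well-typedness of $\expr_i\ x$ at $\typ'$ follows from \tapp, and well-formedness of the index from \wfFun and \wfEq.

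The step I expect to dominate the effort is making \teqfun's bookkeeping about the two side types line up with what the induction hypothesis delivers: \teqfun checks the proof argument against $\tfun{x}{\typ_x}{\eqrt{\typ'}{l\ x}{r\ x}}$ with $l,r$ at chosen types $\typ_l,\typ_r$ satisfied by $\expr_1,\expr_2$, whereas the hypothesis hands back a proof indexed by $\expr_1\ x$ and $\expr_2\ x$. To reconcile them I would need $\typ_l,\typ_r$ to be ``selfified along the function spine'' --- refined at every arrow so that $l\ x$ is pinned to $\expr_1\ x$ all the way down to the base codomain, where \tself finally does the pinning --- in effect re-deriving by hand the reflection that Liquid Haskell performs automatically. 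Two smaller points feed the same difficulty: \corelaneq has no elimination form for equality proofs, so symmetry and transitivity at function type genuinely cannot destructure the given proof and must be routed through the semantic relation (which is exactly why it pays to factor everything through the synthesis lemma); and \corelaneq lacks subject reduction on term indices, so one cannot tidy indices up by reduction. I would also need the routine ``diagonal'' lemma that a closing substitution in $\interp{\env}$ induces a pending substitution satisfying $\env$ in the logical-relation sense --- an easy consequence of reflexivity of the value relation --- which feeds the \subBase step in the base case.
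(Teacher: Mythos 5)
Your route is genuinely different from the paper's, and it has a gap at its foundation. The paper proves all three items by a direct induction on the (equable) type $\typ$: reflexivity strengthens the induction hypothesis so that it synthesizes, alongside the witness, the open proof body and the two selfified side types; symmetry and transitivity use canonical forms to destructure the given proof \emph{values} and reassemble them, with the base cases discharged by purely \emph{unary} semantic reasoning about the singleton types recovered from that inversion (e.g., for transitivity, that in every model the types $\typ_1,\typ_2,\typ_2',\typ_3$ all pin down the same constant). The binary logical relation is never invoked in that proof.

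The gap in your proposal is the ``diagonal'' lemma, which is not routine and does not follow from reflexivity of the value relation as you claim. The fundamental property (reflexivity of $\relatesEnv{\env}{\cdot}{\cdot}{\typ}$) applies to \emph{syntactically} well-typed expressions, but a closing substitution $\model\in\interp{\env}$ binds variables to merely \emph{semantically} well-typed values. If $\env$ contains a binding $f$ at a higher-order type $\tfun{x}{\typ_x}{\typ'}$ with $\typ_x$ itself a function type, the unary interpretation only guarantees that $\model(f)$ sends each single argument into the interpretation of the codomain; it says nothing about $\model(f)$ sending two \emph{related but distinct} arguments to \emph{related} results, which is exactly what $\relates{\rmodel}{\model(f)}{\model(f)}{\tfun{x}{\typ_x}{\typ'}}$ with $\rmodel_1=\rmodel_2=\model$ demands. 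Since \subBase quantifies over \emph{all} of $\interp{\env,\envBind{l}{\typ_l},\envBind{r}{\typ_r}}$, the base case of your synthesis lemma cannot discharge its obligation that $\model(l)=\model(r)$ from the binary hypothesis $\relatesEnv{\env}{\expr_1}{\expr_2}{\typ}$ alone. Reflexivity survives without the diagonal (there $\expr_1=\expr_2$, so $\model(l)=\model(r)$ by determinism of evaluation), but symmetry and transitivity at base type do not; the paper obtains the needed per-model fact not from the logical relation but from the subtyping derivations recovered by canonical forms on the given proof values---precisely the information your factoring through the logical relation discards before it is needed. To repair your approach you would either have to restrict environments (e.g., to first-order or syntactically inhabited ones) or prove a genuine unary-to-binary transfer lemma, neither of which is in the paper's metatheory.
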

\noindent
The proofs go by induction on $\typ$\iffull (Theorem~\ref{proofs:eqrt-equivalence})\fi. Reflexivity requires us to
generalize the inductive hypothesis to generate appropriate $\typ_l$ and $\typ_r$ for
the |PEq| proofs.

\ifspace
\begin{theorem}[Contextual Equivalence]
 \label{thm:theory:contextual-equivalence}
 If \hastype{\env}{\ctx}{\tfuns{\typ_x}{\typ}}, then:
 \begin{enumerate}
  \item  If \relatesEnv{\env}{\expr_l}{\expr_r}{\typ_\bullet}, 
         then \relatesEnv{\env}{\ctxapp{\ctx}{\expr_l}}{\ctxapp{\ctx}{\expr_r}}{\typ}.
  \item  If \hastype{\env}{\expr}{\eqrt{\typ_\bullet}{\expr_l}{\expr_r}} 
         then $\exists \expr'$ s.t. 
         \hastype{\env}{\expr'}{\eqrt{\typ}{\ctxapp{\ctx}{\expr_l}}{\ctxapp{\ctx}{\expr_r}}}.
\end{enumerate}
\end{theorem}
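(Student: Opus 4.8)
The plan is to read $\ctx$ as a closed-over function of type $\tfuns{\typ_\bullet}{\typ}$, to obtain part~(1) directly from the fundamental property of the equivalence logical relation, and to obtain part~(2) from part~(1) and soundness of propositional equality (Theorem~\ref{thm:eq-relation}) by an induction on the index type $\typ$ that reuses the proof-term constructions of Theorem~\ref{thm:theory:properties}.

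\emph{Part (1).} From $\hastype{\env}{\ctx}{\tfuns{\typ_\bullet}{\typ}}$, the fundamental property (the reflexivity case of Theorem~\ref{thm:theory:equality-logical-relation}) gives $\relatesEnv{\env}{\ctx}{\ctx}{\tfuns{\typ_\bullet}{\typ}}$. Fix $\rmodel\in\env$. Unfolding the relations, $\modelapp{\rmodel_1}{\ctx}$ and $\modelapp{\rmodel_2}{\ctx}$ reduce to function values related at $\tfuns{\typ_\bullet}{\typ}$, and the hypothesis $\relatesEnv{\env}{\expr_l}{\expr_r}{\typ_\bullet}$ supplies, for this same $\rmodel$, values $\val_l$ and $\val_r$ to which $\modelapp{\rmodel_1}{\expr_l}$ and $\modelapp{\rmodel_2}{\expr_r}$ respectively reduce, with $\relates{\rmodel}{\val_l}{\val_r}{\typ_\bullet}$. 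Applying the left function reduct to $\val_l$ and the right one to $\val_r$, the function clause of the value relation gives related results at $\typ$ (the binding it adds to $\rmodel$ is irrelevant since the arrow is non-dependent). Chaining these reductions, $\modelapp{\rmodel_1}{\ctxapp{\ctx}{\expr_l}}$ and $\modelapp{\rmodel_2}{\ctxapp{\ctx}{\expr_r}}$ reduce to related values, so $\relates{\rmodel}{\modelapp{\rmodel_1}{\ctxapp{\ctx}{\expr_l}}}{\modelapp{\rmodel_2}{\ctxapp{\ctx}{\expr_r}}}{\typ}$; as $\rmodel$ was arbitrary, $\relatesEnv{\env}{\ctxapp{\ctx}{\expr_l}}{\ctxapp{\ctx}{\expr_r}}{\typ}$.

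\emph{Part (2).} Theorem~\ref{thm:eq-relation} turns $\hastype{\env}{\expr}{\eqrt{\typ_\bullet}{\expr_l}{\expr_r}}$ into $\relatesEnv{\env}{\expr_l}{\expr_r}{\typ_\bullet}$; since this equality type is well formed we also have $\hastype{\env}{\expr_l}{\typ_\bullet}$ and $\hastype{\env}{\expr_r}{\typ_\bullet}$ (rule \wfEq), hence $\hastype{\env}{\ctxapp{\ctx}{\expr_l}}{\typ}$ and $\hastype{\env}{\ctxapp{\ctx}{\expr_r}}{\typ}$ by \tapp. We then construct $\expr'$ by induction on $\typ$, mirroring the reflexivity construction of Theorem~\ref{thm:theory:properties}. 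When $\typ$ is a base type we use \teqbase, taking the selfified types of $\ctxapp{\ctx}{\expr_l}$ and $\ctxapp{\ctx}{\expr_r}$ as its two auxiliary types: part~(1), together with the fact that membership in $\interp{\env}$ entails self-relatedness, shows that these two terms reduce to the same constant under every closing substitution, which discharges the residual $\bbbeq{l}{r}{\tbase}$ obligation with proof argument $\eunit$. When $\typ=\tfun{y}{\typ_y}{\typ'}$ we use \teqfun, whose proof argument must have type $\tfun{y}{\typ_y}{\eqrt{\typ'}{(\ctxapp{\ctx}{\expr_l})\ y}{(\ctxapp{\ctx}{\expr_r})\ y}}$; we build it as an abstraction over $y\colon\typ_y$ whose body is the inductive hypothesis at the strictly smaller index $\typ'$, applied to the sub-context $\elam{z}{\typ_\bullet}{(\ctxapp{\ctx}{z})\ y}$ (well typed at $\tfuns{\typ_\bullet}{\typ'}$ in $\env,\envBind{y}{\typ_y}$) and to the weakening of $\expr$. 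No separate induction is needed inside equality indices, since \subEq compares them by subtyping of the type index alone.

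\emph{Main obstacle.} The delicate point is the function case of~(2): the inductive hypothesis returns a proof whose index records the term $(\elam{z}{\typ_\bullet}{(\ctxapp{\ctx}{z})\ y})\ \expr_l$, whereas \teqfun demands the term $(\ctxapp{\ctx}{\expr_l})\ y$, and these are $\beta$-related but not syntactically equal. Because \corelaneq lacks subject reduction and \subEq rewrites only the \emph{type} index of a propositional equality and never its term indices (the limitation discussed after Theorem~\ref{thm:soundness} and flagged within Theorem~\ref{thm:theory:properties}), closing this gap requires either reading $\ctx$ as a syntactic context with a typed hole---so that $\ctxapp{\ctx}{\cdot}$ is hole-filling and the two terms literally coincide---or extending the calculus with a term-conversion rule on equality indices. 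The base case is analogously sensitive, since \teqbase yields an \emph{unrefined} base index, so the statement is cleanest when the base components of $\typ$ carry trivial refinements (or when \teqbase is generalized to refined indices). With these conventions fixed, part~(1) and the remaining bookkeeping around selfified types in \teqbase and \teqfun are routine unfolding of the definitions.
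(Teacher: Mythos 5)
Your part~(1) is correct and is essentially the only available argument: it is the application/compatibility case of the fundamental property (Theorem~\ref{thm:fundamental}) specialized to a non-dependent arrow, plus closure of the relation under evaluation (Lemma~\ref{proofs-lemma:app}). Since the paper defers its own proof of this theorem to supplementary material that is not part of the development shown here, I can only judge you against the calculus as given---and against that, part~(1) is fine.

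Part~(2), however, is a plan rather than a proof, and the obstacle you flag at the end is a genuine gap, not a notational convention you may defer. In the function case, \teqfun{} demands a proof argument of type $\tfun{y}{\typ_y}{\eqrt{\typ'}{l\ y}{r\ y}}$ where $l$ and $r$ are \emph{fresh variables} at the guessed types $\typ_l, \typ_r$, whereas your inductive hypothesis returns a proof indexed by the concrete terms $(\ctxapp{\ctx}{\expr_l})\ y$ (or by a $\beta$-redex wrapping them, if $\ctxapp{\ctx}{\cdot}$ is literal application). The calculus gives you no bridge between the two: \subEq{} rewrites only the \emph{type} index of a propositional equality and requires the term indices to coincide syntactically, and the paper notes after Theorem~\ref{thm:soundness} that subject reduction fails for precisely this reason. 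Your proposed fix of reading $\ctx$ as a syntactic hole-filling context eliminates the $\beta$-redex mismatch but not the $l\ y$ versus $(\ctxapp{\ctx}{\expr_l})\ y$ mismatch inside \teqfun's premise. The device the paper uses for the analogous difficulty in Theorem~\ref{thm:theory:properties} (see the reflexivity case of Theorem~\ref{proofs:eqrt-equivalence}) is to \emph{strengthen the induction hypothesis}: the induction simultaneously produces the packaged proof and a proof \emph{body} typed in the environment extended with $l$ and $r$ at carefully chosen selfified types, so that \teqfun's premise is assembled directly rather than coerced into after the fact. Your sketch needs that generalization carried through; without it the induction does not close. The base case has the parallel defect you already note---\teqbase{} yields an index at the bare $\tbase$, and \subEq{} cannot restore a nontrivial refinement---so either the generalized hypothesis must also track refinements or the theorem must be restricted to indices whose base components are trivially refined. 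In short: right decomposition, correct part~(1), but part~(2) stops exactly where the real work begins.
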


Full definitions and proofs are in the supplementary material (\S{\ref{sec:proofs-contextual}}).
\fi

%auto-ignore
\section{Related Work}\label{sec:related}

\paragraph{Functional Extensionality and Subtyping with an SMT Solver}

\Fstar also uses a type-indexed \extName\ axiom
after having run into similar unsoundness issues~\cite{FstarUnsound}.
% \footnote{\url{https://github.com/FStarLang/FStar/blob/cba5383bd0e84140a00422875de21a8a77bae116/ulib/FStar.FunctionalExtensionality.fsti\#L133-L134}, \url{https://github.com/FStarLang/FStar/issues/1542}, and \url{https://github.com/FStarLang/FStar/wiki/SMT-Equality-and-Extensionality-in-F\%2A}.}
%
Their extensionality axiom makes a more roundabout connection with
SMT: function equality uses |==|\ifspace, which is a
`squashed' (\ie proof irrelevant) form of |equals|,
a propositional Leibniz equality:
  \begin{mcode}
  type equals (#a: Type) (x: a) : a -> Type = | Refl : equals x x
  \end{mcode}
\else, a proof-irrelevant, propositional Leibniz equality. \fi
They assume that their Leibniz equality coincides
with SMT equality. % , as Liquid Haskell's does for |(==)| and |(=)|.
Liquid Haskell can't just copy \Fstar:
there are no dependent, inductive type definitions, nor a dedicated notion of propositions.
\ifspace
GADTs offer a limited form of dependency without the full power of
\Fstar's inductive definitions.
\fi
Our |$\PEq$| GADT approximates \Fstar's approach, with different
compromises.

Dafny's SMT encoding axiomatizes extensionality for data,
but not for functions~\cite{DBLP:conf/sigada/Leino12}. Function equality is
utterable but neither provable nor disprovable\ifspace, due to their SMT encoding and how their
solver (Z3) treats functions\else\ in their encoding into Z3\fi.
{\ifspace
\begin{mcode}
  function id(b: bool) : bool { b }
  function not_not(b: bool) : bool { ! (!b) }

  method id__not_not(b : bool) {
    assert id(b) == not_not(b);            // extensionally equivalent
    assert id != not_not || id == not_not; // utterable proposition, obeys LEM
    assert id == not_not;                  // assertion violation
    assert id != not_not;                  // assertion violation  
  }
\end{mcode}
\fi}

\newcommand\selfty{\ensuremath{\mathsf{self}}}
\citet{DBLP:conf/ifipTCS/OuTMW04} introduce \emph{selfification},
which assigns singleton types using equality (as in our \tself rule).
{\ifspace
They generate selfified
types with a function $\selfty : \typ \times \expr \rightarrow \typ$:
\[ \selfty(\tref{x}{\tbase}{\expr_b}, \expr) = \tref{x}{\tbase}{\expr_b \wedge x = \expr} \qquad
   \selfty(\tfun{x}{\typ_x}{\typ}, \expr) = \tfun{x}{\typ_x}{\selfty(\typ, \expr\ x)}
\]
In their setting, they restrict $\typ_x$ to base types, \ie only
first-order types are selfified, avoiding questions about equality on
functions.
\fi}
%
%Our \tself rule assigns selfified types at base type; our assigned types for constants (\tycon{\con}) %are in selfified form.
%
SAGE assigns selfified types to \emph{all} variables, implying equality
on functions~\cite{Knowles06sage}.
Dminor avoids the question: it lacks first-class functions~\cite{DBLP:journals/jfp/BiermanGHL12}.

\paragraph{Extensionality in Dependent Type Theories}

Functional extensionality (\extName) has a rich history of study.
Martin-L\"of type theory comes in a decidable, intensional flavor (ITT)~\cite{MARTINLOF197573} as well as an undecidable, extensional one (ETT)~\cite{martin1984intuitionistic}.
NuPRL implements ETT~\cite{DBLP:books/daglib/0068834}, while Coq and Agda implement ITT~\citeyearpar{the_coq_development_team_2020_3744225,10.5555/1813347.1813352}.
\ifspace
Agda is an ITT: it uses axiom K, but not \extName.
\fi
Lean's quotient-based reasoning can \emph{prove} \extName~\cite{DBLP:conf/cade/MouraKADR15}.
%
%They do not, however, have a completely computational account.
%
\todo{Isabelle? Not a dependent type theory.}
Extensionality axioms are independent of the rules of ITT; \extName\ is a common axiom, but is not consistent in every model of type theory\ifspace: von Glehn's polynomial model refutes extensionality
[\citeyear{vonGlehn}, Proposition 4.11]\else\ \cite{vonGlehn}\fi.
\citet{10.1007/3-540-61780-9_68} shows that ETT is a conservative but less computational
extension of ITT with \extName\ and UIP\ifspace; introducing
these axioms breaks canonicity, which disrupts computation\fi.
\ifspace
\citet{932499} extends LF's
$\beta$-equality~\cite{10.1145/138027.138060} to combine intensional
and extensional flavors of type theory in a single, modal framework.
Observational type theory (OTT) generalizes ITT and ETT,
retaining canonicity and a computational
interpretation~\cite{Altenkirch06towardsobservational}.
\else
\citet{932499} and \citet{Altenkirch06towardsobservational} try to reconcile ITT and ETT.
\fi

Dependent type theories often care about equalities between
equalities, with axioms like UIP (all identity proofs are the same), K
(all identity proofs are |refl|), and univalence (identity proofs are
isomorphisms, and so not the same).
If we allowed equalities between equalities, we could add UIP.
Our propositional equality isn't exactly Leibniz equality, so axiom
K would be harder to encode\ifspace but we could use
Theorem~\ref{thm:theory:properties}'s proof of reflexivity as a source
for canonical reflexivity proofs\fi.
\ifspace
\Fstar's squashed Leibniz equality is proof-irrelevant and there is at
most one equality proof between any given pair of terms.
\fi

Zombie's type theory uses an adaptation of a congruence closure
algorithm to automatically reason about equality~\cite{10.1145/2676726.2676974}.
\ifspace
Zombie does not use
automatic $\beta$-reduction, thereby avoiding divergence during
type conversion and type checking.
\fi
Zombie can do some reasoning about equalities on functions \ifspace (reflexivity; substitutivity inside of lambdas)\fi but cannot show equalities based on bound variables\ifspace, \eg they cannot prove that $\lambda x. ~x = \lambda x. ~ x + 0$\fi.
Zombie is careful to omit a $\lambda$-congruence rule, which could be
used to prove \extName, ``which is not
compatible with [their] `very heterogeneous' treatment of equality''
[Ibid., \S{9}].
\ifspace
We also omit such a rule, but we have \extName.
Unlike many other dependent type theories, we don't use type
conversion per se: our definition/judgmental (in)equality is
\emph{subtyping}.
\fi

Cubical type theory
offers alternatives to our propositional
equality~\cite{DBLP:journals/corr/abs-1904-08562}. Such approaches may play better
with \Fstar's approach using dependent, inductive types than the
`flatter' approach we used for Liquid Haskell.
Univalent systems like cubical type theory get \extName\ `for free'---that is, for the price of the univalence
axiom or of cubical foundations.

\paragraph{Classy Induction: Inductive Proofs Using Typeclasses}
We used `classy induction'
to prove metaproperties of $\PEq$ inside Liquid Haskell
(\S\ref{sec:gadt-metatheory}), 
using ad-hoc polymorphism and general instances
to generate proofs that `cover' some class of types.
% like in induction over types.
%typeclasses to do induction on type structure. 
%: we treat types with an |Eq| instance as
%base cases, while we use \extName\ in the inductive cases
%(function types).
%
%Classy induction uses ad-hoc polymorphism and general instances
%to generate proofs that `cover' all types.
%
\ifspace
Ad-hoc polymorphism has always allowed for programming over
type structure (\eg the |Arbitrary| and |CoArbitrary| classes
in QuickCheck~\cite{10.1145/351240.351266} cover most types\ifspace, with
generic definitions covering nearly all others\fi); we only call it
`classy induction' when building up proofs.
\fi
We did not \emph{invent} classy induction---it is a folklore technique \ifspace
that we have identified and named\else that we named\fi.
We have seen five independent uses of ``classy induction'' in the literature~\cite{10.1145/1411204.1411218,10.1145/3009837.3009923,boulier:hal-01445835,DBLP:journals/jfp/DagandTT18,2019arXiv190905027T}.
\ifspace
First, \citet{10.1145/1411204.1411218} speculate that they could
eliminate runtime overhead by proving ``lemmas over type
families''. It is not clear whether these lemmas would take the form
of induction over types or not.
Second, \citet{10.1145/3009837.3009923} \ifspace used regular expressions to showcase dependent typing in
Haskell at her POPL 2017 keynote; she \fi
constructed the well formedness constraint for occurence maps by
induction on lists at the type level.
Third, \citet{boulier:hal-01445835} define a family of
syntactic type theory models for the calculus of constructions with
universes (CC${}_\omega$). They define a notion of ad-hoc polymorphism
that allows for type quoting and definitions by induction-recursion on
their theory's (predicative) types. They do not show any examples of
its use, but it could be used to generate proofs by classy induction.
Fourth, \citet{DBLP:journals/jfp/DagandTT18} use classy induction to
generate instances of higher-order Galois connections in their
framework for interactive proof.
Fifth, and finally, \citet{2019arXiv190905027T} use classy induction to define
their univalent parametericity relation for type universes and for
each type constructor in Coq.
These last two uses of classy induction may require the programmer to
`complete the induction': while built-in and common types have
library instances, a user of the library would need to supply instances
for their custom types.
\fi

Any typeclass system that accommodates ad-hoc polymorphism and a
notion of proof can use classy induction.
\citet{sozeau2008environnement} generates proofs of nonzeroness using
something akin to classy induction, though it goes by induction on the
operations used to build up arithmetic expressions in the (dependent!)
host language (\S{6.3.2}); he calls this the `programmation logique'
aspect of typeclasses.
Instance resolution is characterized as proof search over lemmas
(\S{7.1.3}).
\ifspace
He does another induction on host-language operations to
build up an equivalence relation on shallowly embedded predicate logic
terms.
\fi
\citet{10.1007/978-3-540-71067-7_23} introduce typeclasses to
Coq; their system can do induction by typeclasses, but they do not demonstrate
the idea in the paper.
{\ifspace
\citet{10.1145/158511.158698} extend Hindley/Damas/Milner type
checking to typeclasses, superseding earlier work by
\citet{10.5555/645420.652540} which generalizes work by
\citet{10.1145/75277.75283}. Their work comes well before any notion
of proof was introduced into Haskell, and so classy induction does not
appear.
Wenzel discusses overloading in Isabelle/HOL~\cite{10.1007/BFb0028402}; an
example of overloading to define the pointwise partial order on pairs
from partial orders on the parts demonstrates ad-hoc polymorphism but not classy
induction.
\else
Earlier work on typeclasses focused on
overloading~\cite{10.1145/75277.75283,10.5555/645420.652540,10.1145/158511.158698},
with no notion of classy induction even in settings with proofs~\cite{10.1007/BFb0028402}.
\fi}

\begin{comment}
\nv{ported from the doc}
\paragraph{Homotopy and Cubical Type Theory}
We suspect that novel treatments of equality from HOTT or cubical might apply here, 
either for computational interpretations or for proof-theoretic treatments of these ideas.

We suspect Cubical Syntax for Reflection-Free Extensional Equality~\cite{DBLP:journals/corr/abs-1904-08562} 
presents a possible solution in its judgmental equality (p6, ``Dependent equality types''; p7, ``Function extensionality''). 
What do their judgments mean for EqFn? Do we need the full system of dimensions, or can we restrict uses of EqFn to, 
\eg, left-to-right rewrites and avoid the need for this machinery? 
What does it take to prove that Haskell's semantics (or the safe fragment thereof) 
and their embedding into SMT-LIB are an adequate model for this notion of equality?
\end{comment}

%auto-ignore
\section{Conclusion}\label{sec:conclusion}

In a refinement type system with subtyping
a naive encoding of \extName\ is inconsistent. 
We explained the inconsistency by examples (that proved |false|)
and by standard type checking (where the equality domain is inferred as |false|). 
We implemented a type-indexed propositional equality that avoids this inconsistency
and validated it with a model calculus.
Several case studies demonstrate the range, effectiveness, and power
of our work.

\begin{comment}
Refinement type checking uses SMT solvers to support
automated and assisted reasoning about programs.
%
Functional programs make frequent use of higher-order functions and
higher-order representations with data.
%
Our type-indexed propositional equality avoids inconsistency with a naive encoding of \extName; 
we reason about function
equality in our Liquid Haskell implementation and validate it with a model calculus.
%
Several case studies demonstrate the range and power
of our work.

Connecting type systems with SMT brings great benefits but requires a
careful encoding of your program into the logic of the SMT solver.
%
Reconciling host-language equality with SMT equality is a particular
challenge.
%
Our propositional equality is a first step towards disentangling
host-language computational equality, decidable SMT equality, and the
propositional equality used in refinements.
\end{comment}

\begin{acks}
We thank Conal Elliott for his help in exposing the inadequacy of the
naive functional extensionality encoding.
Stephanie Weirich, \'Eric Tanter, and Nicolas Tabareau offered
valuable insights into the folklore of classy induction.
\end{acks}

\bibliographystyle{ACM-Reference-Format}
\bibliography{references}

\appendix
%auto-ignore
\begin{landscape}
\section{Complete Type Checking of Extensionality Example}
\label{sec:ext-complete-typing}
\begin{figure}[h!]
\begin{adjustbox}{width=1.5\textwidth}
$$
\inference{
    \inference{
        \inference{
            \inference{
                \inference{
                   \inference{
                        \inference{
                            \color{red}\envLookUp{\env}{\extName}{\typExt}
                        }{
                        \color{orange}\hastype{\env}{\extName}{\typExt}
                    }}{
                        \color{blue}\hastype{\env}{\extName\ \tyApp{\tya}}{\typExtA{\tya}}
                    }
                }{
                    \color{green}\hastype{\env}{\extName\ \tyApp{\tya}\ \tyApp{\tyb}}{\typExtB{\tya}{\tyb}}
                }
                \\ 
                {\inference{
                    \color{red}\envLookUp{\env}{\dName}{\dTy}
                }{
                    \color{green}\hastype{\env}{\dName}{\dTy}
                }} &&
                {\inference{
                }{
                    \color{green}\issubtype{\env}{\dTy}{\typExtDict{\tya}{\tyb}}
                }[\subd]}
                }{\color{olive}\hastype{\env}{\extName\ \tyApp{\tya}\ \tyApp{\tyb}\ \dName }{\typExtC{\tya}{\tyb}}
            }
            \\ 
            {\inference{
                \color{red}\envLookUp{\env}{\hName}{\hTy}
            }{
                \color{olive}\hastype{\env}{\hName}{\hTy}
            }} &&
            {\inference{
                \color{red}\dots
            }{
                \color{olive}\issubtype{\env}{\hTy}{\typExtF{\tya}{\tyb}}
            }[\subh]}
            }{
        \color{teal}\hastype{\env}{\extName\ \tyApp{\tya}\ \tyApp{\tyb}\ \dName \ \hName }{\typExtD{\tya}{\tyb}{\hName}}
        }\\
        {\inference{
            \color{red}\envLookUp{\env}{\kName}{\kTy}
        }{
            \color{teal}\hastype{\env}{\kName}{\kTy}
        }} &&
        {\inference{
            \color{red}\dots
        }{
            \color{teal}\issubtype{\env}{\kTy}{\typExtG{\tya}{\tyb}}
        }[\subk]}
        }{
    \color{blue}\hastype{\env}{\extName\ \tyApp{\tya}\ \tyApp{\tyb}\ \dName \ \hName \ \kName }{\typExtE{\tya}{tyb}{\hName}{\kName}}
    }
    \\
    {\inference{
    \color{red}\envLookUp{\env}{\lemmaName}{\lemmaTy}
    }{
    \color{blue}\hastype{\env}{\lemmaName}{\lemmaTy}
    }} &&
    {\inference{
    \color{red}\dots
    }{
    \color{blue}\issubtype{\env}{\lemmaTy}{\typeExtProp{\tya}{\tyb}{\hName}{\kName}}
    }[\sublemma]}
}{
    \color{violet}\hastype{\env}{\extName\ \tyApp{\tya}\ \tyApp{\tyb}\ \dName \ \hName \ \kName \ \lemmaName }{\propExpr{\fEq{h}{k}}}
}
$$
\end{adjustbox}
% $$
%   \begin{array}{rl}
%     \env = \{ & \envBind{\extName}{\typExt}\\
%                     , & \envBind{\hName}{\hTy} \\
%                     , & \envBind{\kName}{\kTy} \\
%                     , & \envBind{\lemmaName}{\lemmaTy} \\
%                     , & \envBind{\dName}{\dTy} \\
%                    \} & \\
%     \end{array}
% $$

    $$
    \inference{
        \inference{
            \apred \Rightarrow \hdom
        }{
        \issubtype{\env}{\tya}{\{v:\alpha \mid \hdom\}}
         } && 
         \inference{
            \apred \Rightarrow \hrng \Rightarrow \bpred
         }{
         \issubtype{\env,\envBind{x}{\tya}}{\{v:\beta \mid \hrng\}}{\tyb}
         }
    }{
      \issubtype{\env}{\hTy}{\typExtF{\tya}{\tyb}}
    }[\subh]
    $$

    $$
    \inference{
        \inference{
            \apred \Rightarrow \kdom
        }{
        \issubtype{\env}{\tya}{\{v:\alpha \mid \kdom\}}
         } && 
         \inference{
            \apred \Rightarrow \krng \Rightarrow \bpred
         }{
         \issubtype{\env,\envBind{x}{\tya}}{\{v:\beta \mid \krng\}}{\tyb}
         }
    }{
      \issubtype{\env}{\kTy}{\typExtF{\tya}{\tyb}}
    }[\subk]
    $$

    $$
    \inference{
        \inference{
            \apred \Rightarrow \texttt{true}
        }{
        \issubtype{\env}{\tya}{\alpha}
         } && 
         \inference{
            \apred \Rightarrow p \Rightarrow \bEq{h\ x}{k\ x}
         }{
         \issubtype{\env,\envBind{x}{\tya}}{\propExpr{p}}{\propExpr{\bEq{h\ x}{k\ x}}}
         }
    }{
      \issubtype{\env}{\lemmaTy}{\typeExtProp{\tya}{\tyb}{\hName}{\kName}}
    }[\sublemma]
    $$
    
    \caption{Complete type checking of naive extensionality in \texttt{theoremEq}.}
    \label{fig:extensionality-checking-full}
    \end{figure}    
%\end{adjustbox}
\end{landscape}

\section{Proofs and Definitions for Metatheory}
\label{sec:proofs-metatheory}
In this section we provide proofs and definitions ommitted from \S\ref{sec:eqrt}.

%auto-ignore
\subsection{Base Type Checking}
\label{subsec:appendix:base-type-checking}

For completeness, we defined \corelanueq, the unrefined version of \corelaneq, 
that ignores the refinements 
on basic types and the expression indices from the typed equality. 

The function \unrefine{\cdot} is defined to turn \corelaneq types to their unrefined 
counterparts. 
$$
\begin{array}{rcl}
  \unrefine{\tbool} & \doteq & \tbool\\
  \unrefine{\tunit} & \doteq & \tunit\\
  \unrefine{\eqrt{\typ}{\expr_1}{\expr_2}} & \doteq & \eqt{\unrefine{\typ}}{}{} \\[2mm]
  \unrefine{\tref{v}{\tbase}{\refa}} & \doteq & \tbase \\ 
  \unrefine{\tfun{x}{\typ_x}{\typ}} & \doteq & \tfuns{\unrefine{\typ_x}}{\unrefine{\typ}}\\  
\end{array}
$$

Figure~\ref{fig:basictyping} defines the syntax and typing of \corelanueq
that we use to define type denotations of \corelaneq.

\begin{figure}

$$
\begin{array}{rrcl}
    \textit{Expressions} & \expr & ::= & \text{as in } \corelaneq \\[1mm]
    \textit{Types} & \utyp & ::= & \tbool \mid \tunit \mid \eqt{\utyp}{\expr}{\expr} \mid  \tfuns{\utyp}{\utyp} \\[1mm]
    \textit{Typing Environment} & \uenv & ::= & \emptyset \mid \uenv, \envBind{x}{\utyp} \\[1mm]
\end{array}
$$

\textit{Basic Type checking}\hfill\fbox{\hasbtype{\uenv}{\expr}{\utyp}}
$$
\inference{
}{
\hasbtype{\uenv}{\con}{\unrefine{\tycon{\con}}}
}[\tbcon]
\quad
\inference{
\envBind{x}{\utyp} \in \uenv
}{
\hasbtype{\uenv}{x}{\utyp}
}[\tbvar]
$$

$$
\inference{
  \hasbtype{\uenv}{\expr}{\tfuns{\utyp_x}{\utyp}} && 
  \hasbtype{\uenv}{\expr_x}{\utyp_x}  
}{
\hasbtype{\uenv}{\expr\ \expr_x}{\utyp}
}[\tbapp]
\quad
\inference{
\hasbtype{\uenv,\envBind{x}{\unrefine{\typ_x}}}{\expr}{\utyp}
}{
\hasbtype{\uenv}{\elam{x}{\typ_x}{\expr}}{\tfuns{\unrefine{\typ_x}}{\utyp}}
}[\tblam]
$$

$$
\inference{
  \hasbtype{\uenv}{\expr}{\tunit} \\
  \hasbtype{\uenv}{\expr_1}{\tbase} && 
  \hasbtype{\uenv}{\expr_2}{\tbase} 
}{
  \hasbtype{\uenv}{\ebeq{\tbase}\ \expr_1\ \expr_2\ \expr}{\eqt{\tbase}{\expr_1}{\expr_2}}
}[\tbeqbase]
\quad
\inference{
  \hasbtype{\uenv}{\expr}{\tunit} \\
  \hasbtype{\uenv}{\expr_1}{\unrefine{\tfuns{\typ_x}{\typ}}} && 
  \hasbtype{\uenv}{\expr_2}{\unrefine{\tfuns{\typ_x}{\typ}}}  
}{
  \hasbtype{\uenv}{\exeq{x}{\typ_x}{\typ}\ \expr_1\ \expr_2\ \expr}{\eqt{\unrefine{\tfuns{\typ_x}{\typ}}}{\expr_1}{\expr_2}}
}[\tbeqfun]
$$
\caption{Syntax and Typing of \corelanueq.}
\label{fig:basictyping}
\end{figure}

%auto-ignore
\subsection{Constant Property}

\begin{theorem}\label{theorem:constant-property}
For the constants $ \con = \etrue, \efalse, \eunit,$ and $\bbbeqsym{\tbase}$, \iffull Property~\ref{property:constants} holds\else constants are sound\fi, \ie $\con \in \interp{\tycon{\con}}$. 
\end{theorem}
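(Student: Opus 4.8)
The plan is to prove the membership $\con \in \interp{\tycon{\con}}$ by a case analysis on the constant, unfolding in each case the relevant clause of the type interpretation from Figure~\ref{fig:semantic-typing} and discharging the resulting obligations by direct computation with the reduction rules of Figure~\ref{fig:coresyntax}.

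\emph{Base constants.} For $\con \in \{\etrue, \efalse, \eunit\}$, the type $\tycon{\con}$ is the selfified refinement $\tref{\rbind}{\tbase}{\bbbeq{\rbind}{\con}{\tbase}}$ for the appropriate $\tbase$, so I must check three things. First, $\con$ reduces to a value --- immediate, since $\con$ is already a value. Second, $\hasbtype{}{\con}{\tbase}$ --- immediate from the base-typing rule \tbcon, since $\unrefine{\tycon{\con}} = \tbase$. Third, $\goesto{(\bbbeq{\rbind}{\con}{\tbase})\subst{\rbind}{\con}}{\etrue}$, \ie $\goesto{\bbbeq{\con}{\con}{\tbase}}{\etrue}$; this holds because $\bbbeqsym{\tbase}$ applied to $\con$ steps to the specialized equality constant $\bbbeq{(}{)}{(\con,\tbase)}$, which applied to $\con$ steps to the syntactic-equality test $\con = \con$, namely $\etrue$.

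\emph{Equality constants.} For $\con = \bbbeqsym{\tbase}$ (one case per basic type $\tbase$), the type is $\tfun{x}{\tbase}{\tfun{y}{\tbase}{\tref{\vv}{\tbool}{\bbbeq{\vv}{(\bbbeq{x}{y}{\tbase})}{\tbool}}}}$, so I unfold the function-type clause twice. Fix $\expr_x \in \interp{\tbase}$ and $\expr_y \in \interp{\tbase}$; by basic-type preservation and canonical forms for \corelanueq (Figure~\ref{fig:basictyping}), each reduces to a constant, $\goesto{\expr_x}{\con_x}$ and $\goesto{\expr_y}{\con_y}$, of base type $\tbase$. Since $\bbbeqsym{\tbase}$ and its specialized form are values, the evaluation contexts then drive $\bbbeqsym{\tbase}\ \expr_x\ \expr_y$ to the boolean $\con_x = \con_y$; base typing of this application at $\tbool$ follows from two uses of \tbapp with $\hasbtype{}{\expr_x}{\tbase}$ and $\hasbtype{}{\expr_y}{\tbase}$, which hold since $\expr_x,\expr_y \in \interp{\tbase}$. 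The remaining, selfification obligation after substitution is $\goesto{\bbbeq{(\bbbeqsym{\tbase}\ \expr_x\ \expr_y)}{(\bbbeqsym{\tbase}\ \expr_x\ \expr_y)}{\tbool}}{\etrue}$: both occurrences are the same closed expression, so by determinism of $\evals{\cdot}{\cdot}$ they normalize to the same boolean $b$, whence $\bbbeq{b}{b}{\tbool}$ steps (as above) to $b = b = \etrue$.

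\emph{Main obstacle.} Every obligation above is a short calculation; the only genuine prerequisites are standard auxiliary facts about the unrefined calculus \corelanueq --- basic-type preservation together with canonical forms (so that an element of $\interp{\tbase}$ actually reduces to a constant, ruling out the lambda and equality-proof value forms at base type) and determinism of $\evals{\cdot}{\cdot}$. I expect the mild bookkeeping around evaluation contexts when normalizing $\bbbeqsym{\tbase}\ \expr_x\ \expr_y$ --- making sure the function position is fully reduced before the argument is touched, and symmetrically for the specialized constant --- to be the only place that requires care.
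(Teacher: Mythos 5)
Your proposal is correct and follows essentially the same route as the paper: a case analysis on the four constants, unfolding the base-type interpretation for $\etrue$, $\efalse$, $\eunit$ and the function-type clause twice for $\bbbeqsym{\tbase}$, with each obligation discharged by direct computation under the reduction rules. The only cosmetic difference is that you appeal to determinism of $\evals{\cdot}{\cdot}$ to collapse the two identical occurrences in the selfification check, where the paper just computes both sides explicitly; the content is the same.
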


\begin{proof}
Below are the proofs for each of the four constants.
\begin{itemize}
    \item $\expr \equiv \etrue$ and $\expr \in \interp{\tref{\rbind}{\tbool}{\bbbeq{\rbind}{\etrue}{\tbool}}}$. 
    We need to prove the below three requirements of membership in the interpretation of basic types: 
    \begin{itemize}
        \item $\goesto{\expr}{\val}$, which holds because $\etrue$ is a value, thus $\val = \etrue$;
        \item $\hasbtype{}{\expr}{\tbool}$, which holds by the typing rule \tbcon; and 
        \item $\goesto{(\bbbeq{\rbind}{\etrue}{\tbool})\subst{\rbind}{\expr}}{\etrue}$, which holds because 
            $$
            \begin{array}{rcl}
            (\bbbeq{\rbind}{\etrue}{\tbool})\subst{\rbind}{\expr} & = & \bbbeq{\etrue}{\etrue}{\tbool}\\
            & \evals{}{} & \bbbeq{(}{)}{(\etrue,\tbool)}\ \etrue \\
            & \evals{}{} & \etrue = \etrue  \\
            & =          & \etrue
            \end{array}
            $$
    \end{itemize}
    \item $\expr \equiv \efalse$ and $\expr \in \interp{\tref{\rbind}{\tbool}{\bbbeq{\rbind}{\efalse}{\tbool}}}$. 
    We need to prove the below three requirements of membership in the interpretation of basic types: 
    \begin{itemize}
        \item $\goesto{\expr}{\val}$, which holds because $\efalse$ is a value, thus $\val = \efalse$;
        \item $\hasbtype{}{\expr}{\tbool}$, which holds by the typing rule \tbcon; and 
        \item $\goesto{(\bbbeq{\rbind}{\efalse}{\tbool})\subst{\rbind}{\expr}}{\etrue}$, which holds because 
            $$
            \begin{array}{rcl}
            (\bbbeq{\rbind}{\efalse}{\tbool})\subst{\rbind}{\expr} & = & \bbbeq{\efalse}{\efalse}{\tbool}\\
            & \evals{}{} & \bbbeq{(}{)}{(\efalse,\tbool)}\ \efalse \\
            & \evals{}{} & \efalse = \efalse  \\
            & =          & \etrue
            \end{array}
            $$
    \end{itemize}
    \item $\expr \equiv \eunit$ and $\expr \in \interp{\tref{\rbind}{\tunit}{\bbbeq{\rbind}{\eunit}{\tunit}}}$. 
    We need to prove the below three requirements of membership in the interpretation of basic types: 
    \begin{itemize}
        \item $\goesto{\expr}{\val}$, which holds because $\eunit$ is a value, thus $\val = \eunit$;
        \item $\hasbtype{}{\expr}{\tunit}$, which holds by the typing rule \tbcon; and 
        \item $\goesto{(\bbbeq{\rbind}{\eunit}{\tunit})\subst{\rbind}{\expr}}{\etrue}$, which holds because 
            $$
            \begin{array}{rcl}
            (\bbbeq{\rbind}{\eunit}{\tunit})\subst{\rbind}{\expr} & = & \bbbeq{\eunit}{\eunit}{\tunit}\\
            & \evals{}{} & \bbbeq{(}{)}{(\eunit,\tunit)}\ \eunit \\
            & \evals{}{} & \eunit = \eunit  \\
            & =          & \etrue
            \end{array}
            $$
    \end{itemize}
    \item $\bbbeqsym{\tbase} \in \interp{\tfun{x}{\tbase}{\tfun{y}{\tbase}{\tref{\vv}{\tbool}{\bbbeq{\vv}{(\bbbeq{x}{y}{\tbase})}{\tbool}}}}}$. 
    By the definition of interpretation of function types, we fix 
    $\expr_x, \expr_y \in \interp{\tbase}$ and we need to prove that 
    $\expr \equiv \bbbeq{\expr_x}{\expr_y}{\tbase} \in 
    \interp{(\tref{\vv}{\tbool}{\bbbeq{\vv}{(\bbbeq{x}{y}{\tbase})}{\tbool}})\subst{x}{\expr_x}\subst{y}{\expr_y}}$. 
   We  prove the below three requirements of membership in the interpretation of basic types: 
    \begin{itemize}
        \item $\goesto{\expr}{\val}$, which holds because 
        $$
        \begin{array}{rcll}
            \expr & =           & \bbbeq{\expr_x}{\expr_y}{\tbase} &  \\
                  & \goesto{}{} & \bbbeq{\val_x}{\expr_y}{\tbase}  & \text{because } \expr_x \in \interp{\tbase}\\
                  & \goesto{}{} & \bbbeq{\val_x}{\val_y}{\tbase}   & \text{because } \expr_y \in \interp{\tbase}\\
                  & \evals{}{}  & \bbbeq{(}{)}{(\val_x,\tbase)}\ \val_y & \\
                  & \evals{}{}  & \val_x = \val_y  & \\
                  & =           & \val & \text{with } \val = \etrue \text{ or } \val = \efalse 
        \end{array}
        $$
        \item $\hasbtype{}{\expr}{\tbool}$, which holds by the typing rule \tbcon and because 
        $\expr_x, \expr_y \in \interp{\tbase}$ thus $\hasbtype{}{\expr_x}{\tbase}$ and $\hasbtype{}{\expr_y}{\tbase}$; and 
        \item $\goesto{(\bbbeq{\vv}{(\bbbeq{x}{y}{\tbase})}{\tbool})
                        \subst{\vv}{\expr}
                        \subst{x}{\expr_x}
                        \subst{y}{\expr_y}
                      }{\etrue}$. Since $\expr_x, \expr_y \in \interp{\tbase}$
                      both expressions evaluate to values, 
                      say $\goesto{\expr_x}{\val_x}$ and $\goesto{\expr_y}{\val_y}$
            which holds because 
            $$
            \begin{array}{rcll}
                (\bbbeq{\vv}{(\bbbeq{x}{y}{\tbase})}{\tbool})
                \subst{\vv}{\expr}
                \subst{x}{\expr_x}
                \subst{y}{\expr_y} & = & \bbbeq{\expr}{(\bbbeq{\expr_x}{\expr_y}{\tbase})}{\tbool} & \\
                & =           & \bbbeq{(\bbbeq{\expr_x}{\expr_y}{\tbase})}{(\bbbeq{\expr_x}{\expr_y}{\tbase})}{\tbool} & \\
                & \goesto{}{} & \bbbeq{(\bbbeq{\val_x}{\expr_y}{\tbase})}{(\bbbeq{\expr_x}{\expr_y}{\tbase})}{\tbool} & \text{since } \goesto{\expr_x}{\val_x}\\
                & \goesto{}{} & \bbbeq{(\bbbeq{\val_x}{\val_y}{\tbase})}{(\bbbeq{\expr_x}{\expr_y}{\tbase})}{\tbool} & \text{since } \goesto{\expr_y}{\val_y}\\
                & \evals{}{}  & \bbbeq{(\bbbeq{(}{)}{(\val_x,\tbase)}\ \val_y)}{(\bbbeq{\expr_x}{\expr_y}{\tbase})}{\tbool} & \\
                & \evals{}{}  & \bbbeq{(\val_x =  \val_y)}{(\bbbeq{\expr_x}{\expr_y}{\tbase})}{\tbool} & \\
                & \goesto{}{}  & \bbbeq{(\val_x =  \val_y)}{(\bbbeq{\val_x}{\expr_y}{\tbase})}{\tbool} &  \text{since } \goesto{\expr_x}{\val_x} \\
                & \goesto{}{}  & \bbbeq{(\val_x =  \val_y)}{(\bbbeq{\val_x}{\val_y}{\tbase})}{\tbool} &  \text{since } \goesto{\expr_y}{\val_y} \\
                & \evals{}{}  & \bbbeq{(\val_x =  \val_y)}{(\bbbeq{(}{)}{(\val_x,\tbase)}\ \val_y)}{\tbool} &   \\
                & \evals{}{} & \bbbeq{(\val_x =  \val_y)}{(\val_x =  \val_y)}{\tbool} & \\
                & \evals{}{} & \bbbeq{(\val_x =  \val_y)}{(\val_x =  \val_y)}{\tbool} & \\
                & \evals{}{} & (\bbbeq{(}{)}{((\val_x =  \val_y),\tbool)}\ (\val_x =  \val_y) & \\
                & \evals{}{} & (\val_x =  \val_y) = (\val_x =  \val_y)  & \\
                & = & \etrue  & \\
            \end{array}
            $$
        \end{itemize}    
    \end{itemize}    
\end{proof}

%\todo{Double check for other edits after bivariance of subEq}
%auto-ignore
\subsection{Type Soundness}\label{sec:proofs-soundness}

\begin{theorem}[Semantic soundness]\label{proofs-thm:soundness}
  If \hastype{\env}{\expr}{\typ} then \hassemtype{\env}{\expr}{\typ}.
\begin{proof}
By induction on the typing derivation.
\begin{itemize}
  \item[\tsub]
  By inversion of the rule we have 
  \begin{enumerate}
    \item \hastype{\env}{\expr}{\typ'} 
    \item \issubtype{\env}{\typ'}{\typ}
  \end{enumerate} 
  By IH on (1) we have 
  \begin{enumerate}
    \setcounter{enumi}{2}
    \item \hassemtype{\env}{\expr}{\typ'}
  \end{enumerate}  
  By Theorem~\ref{proofs-thm:subsoundness} and (2) we have 
  \begin{enumerate}
    \setcounter{enumi}{3}
    \item \issemsubtype{\env}{\typ'}{\typ}
  \end{enumerate}  
  By (3), (4), and the definition of subsets we directly get \hassemtype{\env}{\expr}{\typ}.
  \item[\tself]
  Assume \hastype{\env}{\expr}{\tref{\vv}{\tbase}{\bbbeq{\vv}{\expr}{\tbase}}}. 
  By inversion we have 
  \begin{enumerate}
    \item \hastype{\env}{\expr}{\tref{\vv}{\tbase}{\refa}}
  \end{enumerate} 
    By IH we have 
    \begin{enumerate}
      \setcounter{enumi}{1}
      \item \hassemtype{\env}{\expr}{\tref{\vv}{\tbase}{\refa}}
    \end{enumerate} 
    We fix $\model \in \interp{\env}$. By the definition of semantic typing we get 
    \begin{enumerate}
      \setcounter{enumi}{2}
      \item $\modelapp{\model}{\expr} \in \interp{\modelapp{\model}{\tref{\vv}{\tbase}{\refa}}}$
    \end{enumerate} 
    By the definition of denotations on basic types we have 
    \begin{enumerate}
      \setcounter{enumi}{3}
      \item \goesto{\modelapp{\model}{\expr}}{\val}
      \item \hasbtype{}{\modelapp{\model}{\expr}}{\tbase} 
      \item \goesto{\modelapp{\model}{\refa}\subst{\vv}{\modelapp{\model}{\expr}}}{\etrue}
    \end{enumerate} 
    Since \model contains values, by the definition of \bbbeqsym{\tbase} we have 
    \begin{enumerate}
      \setcounter{enumi}{6}
      \item \goesto{\bbbeq{\modelapp{\model}{\expr}}{\modelapp{\model}{\expr}}{\tbase}}{\etrue}
    \end{enumerate} 
    Thus 
    \begin{enumerate}
      \setcounter{enumi}{7}
      \item \goesto{\modelapp{\model}{(\bbbeq{\vv}{\expr}{\tbase})}\subst{\vv}{\modelapp{\model}{\expr}}}{\etrue}
    \end{enumerate} 
    By (4), (5), and (8) we have 
    \begin{enumerate}
      \setcounter{enumi}{8}
      \item $\modelapp{\model}{\expr} \in \interp{\modelapp{\model}{\tref{\vv}{\tbase}{\bbbeq{\vv}{\expr}{\tbase}}}}$
    \end{enumerate} 
    Thus, \hassemtype{\env}{\expr}{\tref{\vv}{\tbase}{\bbbeq{\vv}{\expr}{\tbase}}}.
  \item[\tcon]
    This case holds exactly because of Property~\ref{theorem:constant-property}. 
  \item[\tvar]
    This case holds by the definition of closing substitutions.  
  \item[\tlam]
    Assume \hastype{\env}{\elam{x}{\typ_x}{\expr}}{\tfun{x}{\typ_x}{\typ}}. 
    By inversion of the rule we have 
    \hastype{\env, \envBind{x}{\typ_x}}{\expr}{\typ}.
    By IH we get
    \hassemtype{\env, \envBind{x}{\typ_x}}{\expr}{\typ}.
    
    We need to show that \hassemtype{\env}{\elam{x}{\typ_x}{\expr}}{\tfun{x}{\typ_x}{\typ}}.
    Which, for some $\model\in\interp{\env}$ is equivalent to 
    \elam{x}{\modelapp{\model}{\typ_x}}{\modelapp{\model}{\expr}} $\in$ \interp{\tfun{x}{\modelapp{\model}{\typ_x}}{\modelapp{\model}{\typ}}}.

    \nv{all this mess is needed because theta has values and not expressions but the denotations are defined over expressions}
    We pick a random $\expr_x \in \interp{\modelapp{\model}{\typ_x}}$ thus 
    we need to show that 
    \modelapp{\model}{\expr\subst{x}{\expr_x}} $\in$ \interp{\modelapp{\model}{\typ\subst{x}{\expr_x}}}.
    By Lemma~\ref{proofs-lemma:semantic-preservation}, there exists $\val_x$ so that 
    \goesto{\expr_x}{\val_x} and $\val_x \in \interp{\typ_x}$.
    By the inductive hypothesis, 
    \modelapp{\model}{\expr\subst{x}{\val_x}} $\in$ \interp{\modelapp{\model}{\typ\subst{x}{\val_x}}}.
    By Lemma~\ref{proofs-lemma:semantic-contextual-preservation}, 
    \modelapp{\model}{\expr\subst{x}{\expr_x}} $\in$ \interp{\modelapp{\model}{\typ\subst{x}{\expr_x}}}, 
    which concludes our proof. 
  \item[\tapp]
  Assume  \hastype{\env}{\expr\ \expr_x}{\typ\subst{x}{\expr_x}}. 
  By inversion we have 
  \begin{enumerate}
    \item \hastype{\env}{\expr}{\tfun{x}{\typ_x}{\typ}} 
    \item \hastype{\env}{\expr_x}{\typ_x}
  \end{enumerate}
  By IH we get 
  \begin{enumerate}
    \setcounter{enumi}{2}
    \item \hassemtype{\env}{\expr}{\tfun{x}{\typ_x}{\typ}} 
    \item \hassemtype{\env}{\expr_x}{\typ_x}
  \end{enumerate}
  We fix $\model \in \interp{\env}$. 
  By the definition of semantic types 
  \begin{enumerate}
    \setcounter{enumi}{4}
    \item \modelapp{\model}{\expr} $\in$ \interp{\modelapp{\model}{\tfun{x}{\typ_x}{\typ}}} 
    \item \modelapp{\model}{\expr_x} $\in$ \interp{\modelapp{\model}{\typ_x}}
  \end{enumerate}
  By (5), (6), and the definition of semantic typing on functions: 
  \begin{enumerate}
    \setcounter{enumi}{6}
    \item \modelapp{\model}{\expr\ \expr_x} $\in$ \interp{\modelapp{\model}{\typ\subst{x}{\expr_x}}}
  \end{enumerate}
  Which directly leads to the required
  \hassemtype{\env}{\expr\ \expr_x}{\typ\subst{x}{\expr_x}}
  \item[\teqbase]
  Assume \hastype{\env}{\ebeq{\tbase}\ \expr_l\ \expr_r\ \expr}{\eqrt{\tbase}{\expr_l}{\expr_r}}.
  By inversion we get: 
  \begin{enumerate}
    \item \hastype{\env}{\expr_l}{\typ_l} 
    \item \hastype{\env}{\expr_r}{\typ_r}
    \item \issubtype{\env}{\typ_l}{\tref{x}{\tbase}{\etrue}}   
    \item \issubtype{\env}{\typ_r}{\tref{x}{\tbase}{\etrue}} 
    \item \hastype{\env, \envBind{r}{\typ_r}, \envBind{l}{\typ_l}}{e}{\tref{\rbind}{\tunit}{\bbbeq{l}{r}{\tbase}}}
  \end{enumerate}
  By IH we get 

  \begin{enumerate}
    \setcounter{enumi}{3}
    \item \hassemtype{\env}{\expr_l}{\typ_l} 
    \item \hassemtype{\env}{\expr_r}{\typ_r}
    \item \hassemtype{\env, \envBind{r}{\typ_r}, \envBind{l}{\typ_l}}{e}{\tref{\rbind}{\tunit}{\bbbeq{l}{r}{\tbase}}}
  \end{enumerate}

  We fix $\model \in \interp{\env}$. Then (4) and (5) become 
  \begin{enumerate}
    \setcounter{enumi}{6}
    \item \modelapp{\model}{\expr_l} $\in$ \interp{\modelapp{\model}{\typ_l}} 
    \item \modelapp{\model}{\expr_r} $\in$ \interp{\modelapp{\model}{\typ_r}} 
    \item \hassemtype{\env}{\expr_r}{\typ_r}
    \item \hassemtype{\env, \envBind{r}{\typ_r}, \envBind{l}{\typ_l}}{e}{\tref{\rbind}{\tunit}{\bbbeq{l}{r}{\tbase}}}
  \end{enumerate}
   Assume 
   \begin{enumerate}
    \setcounter{enumi}{10}
    \item \goesto{\modelapp{\model}{\expr_l}}{\val_l}
    \item \goesto{\modelapp{\model}{\expr_r}}{\val_r}
  \end{enumerate}
  By (7), (8), (11), (12), and Lemma~\ref{proofs-lemma:semantic-preservation} we get
  \begin{enumerate}
    \setcounter{enumi}{12}
    \item $\val_l \in \interp{\modelapp{\model}{\typ_l}}$
    \item $\val_r \in \interp{\modelapp{\model}{\typ_r}}$
  \end{enumerate}

  By (10), (11), and (12) we get 
  \begin{enumerate}
    \setcounter{enumi}{14}
    \item \goesto{\bbbeq{\val_l}{\val_r}{\tbase}}{\etrue}
  \end{enumerate}
By (11), (12), (15), ane Lemma~\ref{proofs-lemma:operational-pres} we have 
\begin{enumerate}
  \setcounter{enumi}{15}
  \item \goesto{\bbbeq{\modelapp{\model}{\expr_l}}{\modelapp{\model}{\expr_r}}{\tbase}}{\etrue}
\end{enumerate}
By (1-5) we get:
\begin{enumerate}
  \setcounter{enumi}{16}
  \item \hasbtype{}{\modelapp{\model}{\ebeq{\tbase}\ \expr_l\ \expr_r\ \expr}}{\eqt{\tbase}}  
\end{enumerate}

Trivially, with zero evaluation steps we have: 
  \begin{enumerate}
    \setcounter{enumi}{17}
    \item \goesto{\modelapp{\model}{\ebeq{\tbase}\ \expr_l\ \expr_r\ \expr}}{\ebeq{\tbase}\ ({\modelapp{\model}{\expr_l}})\ ({\modelapp{\model}{\expr_l}})\ ({\modelapp{\model}{\expr}})}
  \end{enumerate}
By (16), (17), (18) and the definition of semantic types on basic equality types we have 
\begin{enumerate}
  \setcounter{enumi}{18}
  \item \modelapp{\model}{\ebeq{\tbase}\ \expr_l\ \expr_r\ \expr} $\in$ \interp{\modelapp{\model}{\eqrt{\tbase}{\expr_l}{\expr_r}}}
\end{enumerate}
Which leads to the required 
\hassemtype{\env}{\ebeq{\tbase}\ \expr_l\ \expr_r\ \expr}{\eqrt{\tbase}{\expr_l}{\expr_r}}.

  \item[\teqfun]        
  Assume \hastype{\env}{\exeq{x}{\typ_x}{\typ}\ \expr_l\ \expr_r\ \expr}{\eqrt{\tfun{x}{\typ_x}{\typ}}{\expr_l}{\expr_r}}. 
  By inversion we have 
  \begin{enumerate}
    \item \hastype{\env}{\expr_l}{\typ_l} 
    \item \hastype{\env}{\expr_r}{\typ_r}
    \item \issubtype{\env}{\typ_l}{\tfun{x}{\typ_x}{\typ}}
    \item \issubtype{\env}{\typ_r}{\tfun{x}{\typ_x}{\typ}} 
    \item \hastype{\env, \envBind{r}{\typ_r}, \envBind{l}{\typ_l}}{e}{(\tfun{x}{\typ_x}{\eqrt{\typ}{l\ x}{r\ x}})}
    \item \iswellformed{\env}{\tfun{x}{\typ_x}{\typ}}
  \end{enumerate}
  By IH and Theorem~\ref{proofs-thm:subsoundness} we get 
  \begin{enumerate}
    \setcounter{enumi}{6}
    \item \hassemtype{\env}{\expr_l}{\typ_l} 
    \item \hassemtype{\env}{\expr_r}{\typ_r}
    \item \issemsubtype{\env}{\typ_l}{\tfun{x}{\typ_x}{\typ}}
    \item \issemsubtype{\env}{\typ_r}{\tfun{x}{\typ_x}{\typ}} 
    \item \hassemtype{\env, \envBind{r}{\typ_r}, \envBind{l}{\typ_l}}{e}{(\tfun{x}{\typ_x}{\eqrt{\typ}{l\ x}{r\ x}})}
  \end{enumerate}

  By (1-5) we get 
  \begin{enumerate}
    \setcounter{enumi}{11}
    \item \hasbtype{}{\modelapp{\model}{\exeq{x}{\typ_x}{\typ}\ \expr_l\ \expr_r\ \expr}}{\eqt{\unrefine{\modelapp{\model}{(\tfun{x}{\typ_x}{\typ})}}}} 
  \end{enumerate}

  Trivially, by zero evaluation steps, we get 
  \begin{enumerate}
    \setcounter{enumi}{12}
    \item \goesto{\modelapp{\model}{\exeq{x}{\typ_x}{\typ}\ \expr_l\ \expr_r\ \expr}}{\exeq{x}{\modelapp{\model}{\typ_x}}{\modelapp{\model}{\typ}}\ ({\modelapp{\model}{\expr_l}})\ (\modelapp{\model}{\expr_r})\ (\modelapp{\model}{\expr})}
  \end{enumerate}
  
  By (7-10) we get 
  \begin{enumerate}
    \setcounter{enumi}{13}
    \item $\modelapp{\model}{\expr_l}, \modelapp{\model}{\expr_r} \in \interp{\modelapp{\model}{\tfun{x}{\typ_x}{\typ}}} $
  \end{enumerate}
  
  By (7), (8), (11), the definition of semantic types on functions, and Lemmata~\ref{proofs-lemma:semantic-preservation} and~\ref{proofs-lemma:semantic-contextual-preservation} 
  (similar to the previous case) we have 
  \begin{itemize}
    \setcounter{enumi}{14}
    \item $\forall \expr_x\in\interp{\typ_x}. \expr\ \expr_x \in \interp{\eqrt{\typ\subst{x}{\expr_x}}{\expr_l\ \expr_x}{\expr_r\ \expr_x}} $
  \end{itemize}

  By (12), (13), (14), and (15) we get 
  \begin{enumerate}
    \setcounter{enumi}{18}
    \item \modelapp{\model}{\exeq{x}{\typ_x}{\typ}\ \expr_l\ \expr_r\ \expr} $\in$ \interp{\modelapp{\model}{\eqrt{\tfun{x}{\typ_x}{\typ}}{\expr_l}{\expr_r}}}
  \end{enumerate}
  Which leads to the required 
  \hassemtype{\env}{\exeq{x}{\typ_x}{\typ}\ \expr_l\ \expr_r\ \expr}{\eqrt{\tfun{x}{\typ_x}{\typ}}{\expr_l}{\expr_r}}. 
\end{itemize}
\end{proof}
\end{theorem}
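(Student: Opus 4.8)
The plan is to prove the statement by structural induction on the typing derivation $\hastype{\env}{\expr}{\typ}$, after unfolding the goal $\hassemtype{\env}{\expr}{\typ}$ to its definition: for every closing substitution $\model \in \interp{\env}$ we must show $\modelapp{\model}{\expr} \in \interp{\modelapp{\model}{\typ}}$. I would then case on the last rule used. Along the way I will need three auxiliary facts: the Constants property (Theorem~\ref{theorem:constant-property}); a \emph{soundness-of-subtyping} lemma stating that $\issubtype{\env}{\typ_1}{\typ_2}$ implies $\interp{\modelapp{\model}{\typ_1}} \subseteq \interp{\modelapp{\model}{\typ_2}}$ for all $\model \in \interp{\env}$ (proved by induction on the subtyping derivation: \subBase is definitional, \subFun uses variance of the function interpretation, \subEq reduces to the index); and preservation-style lemmas saying that the type interpretations are closed under $\goesto{\cdot}{\cdot}$ and, conversely, that membership of a value lifts back to any expression reducing to it.

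The administrative rules are quick. For \tvar, the result is immediate from the definition of $\model \in \interp{\env}$. For \tcon, the goal is literally the Constants property, so it follows from Theorem~\ref{theorem:constant-property}. For \tsub, I combine the IH with the soundness-of-subtyping lemma. For \tself, $\modelapp{\model}{\expr}$ reduces to some value $\val$, and then $\bbbeqsym{\tbase}$ applied to $\val$ and $\val$ steps to $\etrue$ by its operational semantics, so the selfified refinement $\bbbeq{\rbind}{\expr}{\tbase}$ holds on $\val$; together with the IH's witness in $\interp{\modelapp{\model}{\tref{\rbind}{\tbase}{\refa}}}$ this gives membership in the selfified interpretation.

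The $\tlam$ case (and, structurally identically, $\teqfun$) is where the real work lies, because closing substitutions map variables to \emph{values} whereas the type interpretations range over \emph{expressions}. For $\tlam$, to extend $\model$ with a binding for $x$ when the interpretation of $\typ_x$ supplies an arbitrary expression $\expr_x$, I would first reduce it with the preservation lemma to a value $\val_x \in \interp{\typ_x}$, apply the IH with $\model$ extended by $x \mapsto \val_x$ to obtain $\modelapp{\model}{\expr\subst{x}{\val_x}} \in \interp{\modelapp{\model}{\typ\subst{x}{\val_x}}}$, and then transport membership back along $\goesto{\expr_x}{\val_x}$ with the contextual-preservation lemma to conclude $\modelapp{\model}{\expr\subst{x}{\expr_x}} \in \interp{\modelapp{\model}{\typ\subst{x}{\expr_x}}}$. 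For $\tapp$, I invoke the IHs on the function and argument and unfold the logical interpretation of $\tfun{x}{\typ_x}{\typ}$ (again using preservation to pass to a value when needed). For $\teqbase$, the IHs on the three subterms plus subsumption to $\tbase$ give that $\modelapp{\model}{\expr_l}$ and $\modelapp{\model}{\expr_r}$ reduce to $\tbase$-values; the proof argument, typed at $\tref{\rbind}{\tunit}{\bbbeq{l}{r}{\tbase}}$, forces $\goesto{\bbbeq{\modelapp{\model}{\expr_l}}{\modelapp{\model}{\expr_r}}{\tbase}}{\etrue}$; and $\modelapp{\model}{\ebeq{\tbase}\ \expr_l\ \expr_r\ \expr}$ reduces (by evaluating the proof argument) to the required $\ebeq{\tbase}$-headed form, so all three conjuncts of $\interp{\modelapp{\model}{\eqrt{\tbase}{\expr_l}{\expr_r}}}$ hold. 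The $\teqfun$ case has the same shape, with the IH on the proof argument directly supplying the ``$\forall \expr_x \in \interp{\typ_x}$'' clause of the function-equality interpretation.

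I expect the main obstacle to be not any individual typing rule but the supporting reduction lemmas that make $\tlam$ and $\teqfun$ go through — semantic preservation (type interpretations closed under $\goesto{\cdot}{\cdot}$) and its contextual converse — which are exactly what is needed to bridge value-only closing substitutions with expression-level denotations, together with the soundness-of-subtyping lemma whose \subBase case itself quantifies over all closing substitutions and must be phrased so the nested inductions (on typing, on subtyping, on substitution) stay well-founded. Once those auxiliary statements are stated so that substitution commutes appropriately with $\interp{\cdot}$ and with $\model$, the main induction is routine.
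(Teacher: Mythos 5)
Your proposal follows essentially the same route as the paper's proof: induction on the typing derivation, with the same three supporting ingredients (the constants property, semantic soundness of subtyping proved by its own induction, and the pair of preservation lemmas that let you swap an expression in the closing substitution for the value it reduces to and back), and your treatment of the \tlam{}/\teqfun{} value-versus-expression mismatch is exactly the paper's use of Lemmas~\ref{proofs-lemma:semantic-preservation} and~\ref{proofs-lemma:semantic-contextual-preservation}. The case analyses for \tsub, \tself, \tcon, \tvar, \tapp, \teqbase, and \teqfun{} all match the paper's argument, so no further comparison is needed.
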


\begin{lemma}\label{proofs-lemma:semantic-preservation}
If $\expr \in \interp{\typ}$, then \goesto{\expr}{\val} and $\val \in \interp{\typ}$.
\end{lemma}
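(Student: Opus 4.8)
The plan is to proceed by structural induction on $\typ$, counting $\eqrt{\typ'}{\expr_l}{\expr_r}$ as structurally larger than its index $\typ'$ (as the paper already does to make the interpretation of Figure~\ref{fig:semantic-typing} well founded). The argument rests on a few operational facts about $\corelaneq$ that I would pin down first: determinism of $\evals{}{}$, so that normal forms are unique; preservation of base typing, \ie if $\hasbtype{}{\expr}{\tbase}$ and $\evals{\expr}{\expr'}$ then $\hasbtype{}{\expr'}{\tbase}$ (a routine induction on the $\corelanueq$ derivation of Figure~\ref{fig:basictyping}); and a cotermination lemma — if $\goesto{\expr}{\val}$, then for any refinement or type-index expression $\refa$ and binder $\rbind$, $\goesto{\refa\subst{\rbind}{\expr}}{\etrue}$ holds iff $\goesto{\refa\subst{\rbind}{\val}}{\etrue}$ does, and similarly $\expr\ \expr_x$ and $\val\ \expr_x$ coterminate; this is the content of Lemma~\ref{proofs-lemma:operational-pres} and its kin. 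I would also establish, by the same induction on $\typ$, the companion statement that $\interp{\typ}$ is closed under $\evals{}{}$ in the forward direction, since that is exactly what the function and equality cases below consume.

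For a refinement type $\typ = \tref{\rbind}{\tbase}{\refa}$, membership $\expr \in \interp{\typ}$ unfolds directly into $\goesto{\expr}{\val}$, $\hasbtype{}{\expr}{\tbase}$, and $\goesto{\refa\subst{\rbind}{\expr}}{\etrue}$; the first is the required convergence, and $\val \in \interp{\typ}$ then follows because $\goesto{\val}{\val}$ trivially, $\hasbtype{}{\val}{\tbase}$ by preservation, and $\goesto{\refa\subst{\rbind}{\val}}{\etrue}$ by cotermination. The equality cases $\typ = \eqrt{\typ''}{\expr_l}{\expr_r}$ are analogous: membership already forces $\goesto{\expr}{\ebeq{\tbase}\ \expr_l\ \expr_r\ \expr_{pf}}$ (resp.\ $\goesto{\expr}{\exeqName_{\_}\ \expr_l\ \expr_r\ \expr_{pf}}$), and such a head form is itself a value once its proof component $\expr_{pf}$ has converged, which it has since it is reached on the way to the displayed normal form; the remaining side conditions (base typing; equality of the two equated terms; in the function-equality case, the pointwise obligation on $\expr_{pf}$) transfer from $\expr$ to its normal form using preservation, determinism, and the induction hypothesis at the strictly smaller index $\typ''$.

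The crux — the step I expect to fight with — is the function case $\typ = \tfun{x}{\typ_x}{\typ'}$, because here the interpretation does not by itself certify that $\expr$ converges: its clause only constrains applications $\expr\ \expr_x$, and it is vacuous when $\interp{\typ_x}$ is empty. I would recover $\goesto{\expr}{\val}$ from the normalization of $\corelaneq$ (which has no recursion), observing that in every situation where this lemma is invoked the relevant member of $\interp{\typ}$ is in fact well-base-typed, and then argue that a normal form of function type cannot be stuck, so $\val$ is a $\lambda$-abstraction or a function-typed constant. Granting convergence, $\val \in \interp{\typ}$ is shown by checking, for each $\expr_x \in \interp{\typ_x}$, that $\val\ \expr_x \in \interp{\typ'\subst{x}{\expr_x}}$: we have $\expr\ \expr_x \in \interp{\typ'\subst{x}{\expr_x}}$ by assumption and $\goesto{\expr\ \expr_x}{\val\ \expr_x}$ from the reduction of $\expr$, so the forward-closure companion, applied at the strictly smaller type $\typ'\subst{x}{\expr_x}$, delivers $\val\ \expr_x \in \interp{\typ'\subst{x}{\expr_x}}$. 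Nailing down the ``normal form of function type is a value'' step — \ie precisely which expressions the statement is meant to range over — is the only genuinely delicate point here; everything else is bookkeeping on top of the operational facts above.
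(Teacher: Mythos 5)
Your overall strategy---structural induction on $\typ$, with $\eqrt{\typ'}{\expr_l}{\expr_r}$ counted as larger than its index---is exactly what the paper does; its proof of this lemma consists of nothing more than that one sentence, so your base-type and equality cases, which read the required convergence directly off the definition of $\interp{\cdot}$ and push the side conditions through with determinism, preservation of base typing, and cotermination, fill in details the paper omits and are fine.

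The valuable part of your writeup is that you have correctly put your finger on a real gap rather than papered over it. As defined in Figure~\ref{fig:semantic-typing}, $\interp{\tfun{x}{\typ_x}{\typ'}}$ imposes \emph{only} the applicative condition: it demands neither $\hasbtype{}{\expr}{\cdot}$ nor $\goesto{\expr}{\val}$. Taking $\typ_x = \tref{\rbind}{\tbase}{\efalse}$ makes $\interp{\typ_x}$ empty, so \emph{every} expression---including a stuck one such as $\etrue\ \etrue$---is vacuously a member of $\interp{\tfun{x}{\typ_x}{\typ'}}$, and the lemma as literally stated is false for that case. Your repair (invoke normalization of the recursion-free calculus and observe that every call site of the lemma supplies a base-well-typed term) is plausible, but be explicit that it amends something: either the lemma's hypothesis must be strengthened to base-well-typed members of $\interp{\typ}$, or the function clause of the interpretation must be strengthened with a convergence-and-shape conjunct (as is standard in such unary relations), after which your argument for the function case---propagate $\expr\ \expr_x \in \interp{\typ'\subst{x}{\expr_x}}$ to $\val\ \expr_x$ via forward closure at the smaller type---goes through. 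Also note that ``a closed, base-well-typed normal form of arrow type is a $\lambda$ or a function constant'' is itself a canonical-forms argument for the \emph{unrefined} system of Figure~\ref{fig:basictyping}, not a consequence of membership in the relation, so it needs its own small induction; with that in place your plan is sound and is, if anything, more honest than the paper's.
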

\begin{proof}
By structural induction of the type $\typ$. 
\end{proof}

\begin{lemma}\label{proofs-lemma:semantic-contextual-preservation}
If \goesto{\expr_x}{\val_x} and $\expr\subst{x}{\val_x} \in \interp{\typ\subst{x}{\val_x}}$, 
then $\expr\subst{x}{\expr_x} \in \interp{\typ\subst{x}{\expr_x}}$.
\end{lemma}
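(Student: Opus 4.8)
The plan is to argue by induction on the structure of the type $\typ$, after isolating the only nonroutine ingredient: that membership in the type interpretations of Figure~\ref{fig:semantic-typing} is invariant under parallel reduction, applied simultaneously to the classified term and to the type index. The key observation is that $\goesto{\expr_x}{\val_x}$ gives $\parreds{\expr_x}{\val_x}$ (ordinary evaluation is contained in the reflexive--transitive closure of parallel reduction), and parallel reduction is a congruence for substitution, so $\parreds{\expr\subst{x}{\expr_x}}{\expr\subst{x}{\val_x}}$; moreover, since $\typ$ may itself mention $x$ inside refinements and inside the expression indices of equality types, we likewise get $\parreds{\typ\subst{x}{\expr_x}}{\typ\subst{x}{\val_x}}$.

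First I would prove (or quote, from the parallel-reduction lemmas developed in the supplementary material) the invariance statement: if $\parreds{\expr_1}{\expr_2}$ and $\parreds{\typ_1}{\typ_2}$, then $\expr_1 \in \interp{\typ_1}$ iff $\expr_2 \in \interp{\typ_2}$. This proceeds by induction on $\typ_1$, using that parallel reduction commutes with evaluation $\goesto{}{}$ (a confluence property): for $\tref{\rbind}{\tbase}{\refa}$, the three membership conditions --- $\goesto{\expr_i}{\val}$, $\hasbtype{}{\expr_i}{\tbase}$, and $\goesto{\refa_i\subst{\rbind}{\expr_i}}{\etrue}$ --- transfer across the reductions; for $\tfun{x}{\typ_x}{\typ}$ the logical condition is preserved by the induction hypotheses on the structurally smaller domain and codomain; and for $\eqrt{\tbase}{\expr_l}{\expr_r}$ and $\eqrt{\tfun{x}{\typ_x}{\typ}}{\expr_l}{\expr_r}$ the requirements that $\expr$ converge to an $\ebeq{\tbase}$- resp.\ $\exeqName_{\_}$-proof and that the equated components converge appropriately again transfer by confluence together with the induction hypothesis. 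With invariance in hand, the lemma follows immediately by combining $\expr\subst{x}{\val_x} \in \interp{\typ\subst{x}{\val_x}}$ with the two parallel reductions above. Lemma~\ref{proofs-lemma:semantic-preservation} is available as an auxiliary fact wherever we need that a member of an interpretation evaluates to a value that is still a member.

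The main obstacle is exactly the reason plain evaluation $\goesto{}{}$ is not enough: the substituted occurrences of $\expr_x$ may sit under $\lambda$-binders, or inside the already-converged proof component of an $\ebeq{\tbase}$ or $\exeqName_{\_}$ value, where call-by-value evaluation never fires, so $\expr\subst{x}{\expr_x}$ need not evaluate to $\expr\subst{x}{\val_x}$. This is what forces the use of parallel reduction, which does descend under binders, together with its confluence properties relative to $\goesto{}{}$ --- exactly the parallel-reduction toolkit whose development is deferred to the appendix. A secondary subtlety is bookkeeping: the substitution moves both the term and the type index in lockstep, so the invariance lemma must allow the type index to move by parallel reduction too, not just the classified term.
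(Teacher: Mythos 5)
Your proposal is correct and follows essentially the same route as the paper: reduce the statement to invariance of the type interpretations under parallel reduction of both the classified term and the type index (the paper states these as two separate properties, Lemmas~\ref{lem:sem-parred-expr} and~\ref{lem:sem-parred-type}, where you fuse them into one simultaneous invariance claim), obtained via substitutivity of parallel reduction and the fact that $\goesto{}{}$ is subsumed by $\parredsto{}{}$. The only cosmetic quibble is that what you call ``confluence'' is what the paper establishes as a forward/backward \emph{simulation} between parallel reduction and evaluation; your diagnosis of why plain evaluation does not suffice (substituted occurrences under binders and inside converged proof components) is exactly the motivation the paper gives for the parallel-reduction machinery.
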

\begin{proof}
  We can use parallel reductions (of~\S\ref{app:parred}) to prove that if 
  $\parreds{\expr_1}{\expr_2}$, 
  then (1) $\interp{\typ\subst{x}{\expr_1}} = \interp{\typ\subst{x}{\expr_2}}$ and 
       (2) $\expr_1 \in \interp{\typ}$ \textit{iff} $\expr_2 \in \interp{\typ}$. 
  The proof directly follows by these two properties. \todo{actually do the proof}
\end{proof}

\begin{lemma}\label{proofs-lemma:operational-pres}
  If \goesto{\expr_x}{\expr_x'} and \goesto{\expr\subst{x}{\expr_x'}}{\con}, 
  then \goesto{\expr\subst{x}{\expr_x}}{\con}. 
\end{lemma}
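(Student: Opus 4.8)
The plan is to route everything through the parallel--reduction development already invoked elsewhere in this appendix (\S\ref{app:parred}). Recall the standard facts about parallel reduction $\parreds{\cdot}{\cdot}$: (i) it contains $\goesto{\cdot}{\cdot}$, and each single parallel step is realized by a $\goesto{\cdot}{\cdot}$--reduction, so the two relations have the same transitive behaviour; (ii) it is compatible with substitution, i.e.\ if $\parreds{e_1}{e_2}$ then $\parreds{e\subst{x}{e_1}}{e\subst{x}{e_2}}$ for every $e$ (the same substitutivity property used in the proof of Lemma~\ref{proofs-lemma:semantic-contextual-preservation}); and (iii) a cofinality property of call--by--value evaluation: if $\parreds{e_1}{e_2}$ and $\goesto{e_2}{\val}$, then $\goesto{e_1}{\val'}$ for some value $\val'$ with $\parreds{\val'}{\val}$ (this is exactly what is needed to show that $\parreds{\cdot}{\cdot}$ preserves membership in the interpretation of base types, which \S\ref{app:parred} must establish for the soundness proof).

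Given these, the argument is short. From the hypothesis $\goesto{\expr_x}{\expr_x'}$ and (i) we get $\parreds{\expr_x}{\expr_x'}$, hence by (ii) $\parreds{\expr\subst{x}{\expr_x}}{\expr\subst{x}{\expr_x'}}$. From the hypothesis $\goesto{\expr\subst{x}{\expr_x'}}{\con}$ and (iii) applied with $e_1 = \expr\subst{x}{\expr_x}$, $e_2 = \expr\subst{x}{\expr_x'}$, and $\val = \con$, we obtain $\goesto{\expr\subst{x}{\expr_x}}{\val'}$ for some value $\val'$ with $\parreds{\val'}{\con}$. It then remains to observe that $\val' = \con$: the constants of \corelan are syntactically atomic (even $\bbbeq{(}{)}{(\con_1,\tbase)}$ only mentions an atomic constant $\con_1$), so a constant is a parallel--reduction normal form, and the only value that parallel--reduces to a constant $\con$ is $\con$ itself (a $\lambda$--abstraction parallel--reduces only to $\lambda$--abstractions, an equality value only to equality values). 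Therefore $\goesto{\expr\subst{x}{\expr_x}}{\con}$.

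I expect the only real work to be fact (iii). It is a cofinality/standardization statement for the interaction of parallel reduction and call--by--value reduction; proving it in isolation requires the usual diamond/confluence lemma for $\parreds{\cdot}{\cdot}$ together with the observation that a term which parallel--reduces to a value cannot diverge under $\evals{\cdot}{\cdot}$, so its call--by--value evaluation must terminate and, by confluence, land on a value that parallel--reduces to the given one. In our setting the hypothesis $\goesto{\expr\subst{x}{\expr_x'}}{\con}$ supplies the termination side directly, which is also why it cannot be dropped: without it, parallel reduction might reach a value along a path that call--by--value evaluation, being deterministic and not reducing under $\lambda$, cannot follow.

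Alternatively, one can avoid parallel reduction and argue directly, first reducing to the one--step case (writing $\expr_x = a_0 \evals{}{} \cdots \evals{}{} a_n = \expr_x'$ and working the steps back one at a time), and then doing a well--founded induction on the pair consisting of the length of $\goesto{\expr\subst{x}{\expr_x'}}{\con}$ and the number of occurrences of $x$ in $\expr$, splitting on whether the first call--by--value step of $\expr\subst{x}{\expr_x'}$ contracts a redex belonging to the syntax of $\expr$ (which can be mirrored in $\expr\subst{x}{\expr_x}$, decreasing the length) or one that lies inside, or consumes as a value, a copy of $\expr_x'$ (which is matched by first advancing that copy of $\expr_x$ using $\evals{\expr_x}{\expr_x'}$, decreasing either the length or the occurrence count), using throughout the routine facts that values and evaluation contexts are stable under substitution and that substitution commutes with contraction under the usual hygiene convention $x\notin\mathrm{fv}(\expr_x)\cup\mathrm{fv}(\expr_x')$. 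This direct route is elementary, but the bookkeeping for the several occurrences of $x$ and the interleaving of reductions inside those copies with reductions in the surrounding term is exactly the source of friction---and exactly what parallel reduction absorbs---so I would take the parallel--reduction route.
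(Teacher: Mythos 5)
Your proposal is correct and follows essentially the same route as the paper: the paper's proof is literally ``an instance of Corollary~\ref{cor:cotermination-multi}'' (cotermination at constants under multiple parallel steps), obtained by chaining exactly your facts (i)--(iii) --- reduction is subsumed by parallel reduction (Lemma~\ref{lem:parred-eval}), substitutivity of multi-step parallel reduction (Corollary~\ref{cor:parred-subst-eval-multi}), and the backward-simulation/cotermination machinery of \S\ref{app:parred}. The only cosmetic difference is that you phrase (iii) as landing on a value $\val'$ with $\parreds{\val'}{\con}$ and then argue $\val'=\con$, whereas the paper's corollary is stated directly at constants, and you gesture at confluence where the appendix actually proves the backward simulation via the congruence relation; neither affects correctness.
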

\begin{proof}
  As an instance of Corollary~\ref{cor:cotermination-multi}. \todo{actually do the proof}
\end{proof}

We define semantic subtyping as follows: 
\issemsubtype{\env}{\typ}{\typ'} \text{iff} $\forall \model \in \interp{\env}. \interp{\modelapp{\model}{\typ}} \subseteq \interp{\modelapp{\model}{\typ'}}$.
\begin{theorem}[Subtyping semantic soundness]\label{proofs-thm:subsoundness}
  If \issubtype{\env}{\typ}{\typ'} then \issemsubtype{\env}{\typ}{\typ'}.
\begin{proof}
  By induction on the derivation tree: 
  \begin{itemize}
    \item[\subBase]
    Assume \issubtype{\env}{\tref{\rbind}{\tbase}{\refa}}{\tref{\rbind'}{\tbase}{\refa'}}. 
    By inversion 
    $\forall \model\in\interp{\env},~ \interp{\modelapp{\model}{\tref{\rbind}{\tbase}{\refa}}}
      \subseteq \interp{\modelapp{\model}{\tref{\rbind'}{\tbase}{\refa'}}}$, which exactly leads to the required. 
    \item[\subFun]
    Assume \issubtype{\env}{\tfun{x}{\typ_x}{\typ}}{\tfun{x}{\typ'_x}{\typ'}}. 
    By inversion 
    \begin{enumerate}
      \item \issubtype{\env}{\typ'_x}{\typ_x}
      \item \issubtype{\env, \envBind{x}{\typ'_x}}{\typ}{\typ'}
    \end{enumerate}
    By IH 
    \begin{enumerate}
      \setcounter{enumi}{2}
      \item \issemsubtype{\env}{\typ'_x}{\typ_x}
      \item \issemsubtype{\env, \envBind{x}{\typ'_x}}{\typ}{\typ'}
    \end{enumerate}
    We fix $\model\in \env$. 
    We pick $\expr$. We assume $\expr \in \interp{\modelapp{\model}{\tfun{x}{\typ_x}{\typ}}}$ 
    and we will show that $\expr \in \interp{\modelapp{\model}{\tfun{x}{\typ'_x}{\typ'}}}$.
    By assumption 
    \begin{enumerate}
      \setcounter{enumi}{4}
      \item $\forall \expr_x \in \interp{\modelapp{\model}{\typ_x}}\!.\ \expr\ \expr_x \in \interp{\modelapp{\model}{\typ\subst{x}{\expr_x}}}$
    \end{enumerate}
    We need to show 
    $\forall \expr_x \in \interp{\modelapp{\model}{\typ'_x}}\!.\ \expr\ \expr_x \in \interp{\modelapp{\model}{\typ'\subst{x}{\expr_x}}}$.
    We fix $\expr_x$. 
    By (3), if $\expr_x \in \interp{\modelapp{\model}{\typ'_x}}$, then 
    $\expr_x \in \interp{\modelapp{\model}{\typ_x}}$ and (5) applies, so 
    $\expr\ \expr_x \in \interp{\modelapp{\model}{\typ\subst{x}{\expr_x}}}$, which by (4) gives
    $\expr\ \expr_x \in \interp{\modelapp{\model}{\typ'\subst{x}{\expr_x}}}$.
    Thus, $\expr \in \interp{\modelapp{\model}{\tfun{x}{\typ'_x}{\typ'}}}$.
    This leads to $\interp{\modelapp{\model}{\tfun{x}{\typ_x}{\typ}}} \subseteq \interp{\modelapp{\model}{\tfun{x}{\typ'_x}{\typ'}}}$, 
    which by definition gives semantic subtyping: 
    \issemsubtype{\env}{\tfun{x}{\typ_x}{\typ}}{\tfun{x}{\typ'_x}{\typ'}}.
    \item[\subEq]  
    Assume \issubtype{\env}{\eqrt{\typ_i}{\expr_l}{\expr_r}}{\eqrt{\typ_i'}{\expr_l}{\expr_r}}. 
    We split cases on the structure of $\typ_i$. 
    \begin{itemize}
      \item \textit{If } $\typ_i$ { is a basic type}, then $\typ_i$ is trivially refined to true. 
            Thus, $\typ_i = \typ_i' = \tbase$ and  
            for each $\model\in\env$, $\interp{\modelapp{\model}{\eqrt{\typ}{\expr_l}{\expr_r}}} = \interp{\modelapp{\model}{\eqrt{\typ'}{\expr_l}{\expr_r}}}$, 
            thus set inclusion reduces to equal sets. 
      \item \textit{If } $\typ_i$ { is a function type}, thus 
            \issubtype{\env}{\eqrt{\tfun{x}{\typ_x}{\typ}}{\expr_l}{\expr_r}}{\eqrt{\tfun{x}{\typ'_x}{\typ'}}{\expr_l}{\expr_r}}
    \end{itemize}
    By inversion 
    \begin{enumerate}
      \item \issubtype{\env}{\tfun{x}{\typ_x}{\typ}}{\tfun{x}{\typ'_x}{\typ'}}
      \item \issubtype{\env}{\tfun{x}{\typ'_x}{\typ'}}{\tfun{x}{\typ_x}{\typ}}
    \end{enumerate}
    By inversion on (1) and (2) we get 
    \begin{enumerate}
      \setcounter{enumi}{2}
      \item \issubtype{\env}{\typ'_x}{\typ_x}
      \item \issubtype{\env, \envBind{x}{\typ'_x}}{\typ}{\typ'}
      \item \issubtype{\env, \envBind{x}{\typ_x}}{\typ'}{\typ}
    \end{enumerate}
    By IH on (1) and (3) we get 
    \begin{enumerate}
      \setcounter{enumi}{5}
      \item \issemsubtype{\env}{\tfun{x}{\typ_x}{\typ}}{\tfun{x}{\typ'_x}{\typ'}}
      \item \issemsubtype{\env}{\typ'_x}{\typ_x}
    \end{enumerate}
    We fix $\model\in\env$ and some $\expr$.
    If $\expr \in \interp{\modelapp{\model}{\eqrt{\tfun{x}{\typ_x}{\typ}}{\expr_l}{\expr_r}}}$
    we need to show that 
    $\expr \in \interp{\modelapp{\model}{\eqrt{\tfun{x}{\typ'_x}{\typ'}}{\expr_l}{\expr_r}}}$. 
    By the assumption we have 
    \begin{enumerate}
      \setcounter{enumi}{7}
      \item $\hasbtype{}{\expr}{\eqt{\unrefine{\modelapp{\model}{(\tfun{x}{\typ_x}{\typ})}}}}$
      \item $\goesto{\expr}{\exeqName_{\_}\ (\modelapp{\model}{\expr_l})\ (\modelapp{\model}{\expr_r})\ {\expr_{pf}}}$
      \item $(\modelapp{\model}{\expr_l}), (\modelapp{\model}{\expr_r}) \in \interp{\modelapp{\model}(\tfun{x}{\typ_x}{\typ})}$
      \item $\forall \expr_x\in\interp{\modelapp{\model}{\typ_x}}. \expr_{pf}\ \expr_x \in \interp{\eqrt{\modelapp{\model}(\typ\subst{x}{\expr_x})}{(\modelapp{\model}{\expr_l})\ \expr_x}{(\modelapp{\model}{\expr_r})\ \expr_x}}$            
    \end{enumerate}
    Since (8) only depends on the structure of the type index, we get 
    \begin{enumerate}
      \setcounter{enumi}{11}
      \item $\hasbtype{}{\expr}{\eqt{\unrefine{\modelapp{\model}{(\tfun{x}{\typ'_x}{\typ'})}}}}$
    \end{enumerate}
    By (6) and (10) we get 
    \begin{enumerate}
      \setcounter{enumi}{12}
      \item $(\modelapp{\model}{\expr_l}), (\modelapp{\model}{\expr_r}) \in \interp{\modelapp{\model}(\tfun{x}{\typ'_x}{\typ'})}$
    \end{enumerate}
    By (4), (5), Lemma~\ref{lem:sub-strengthening}, the rule \subEq and the IH, we get that 
    $\interp{\eqrt{\modelapp{\model}(\typ\subst{x}{\expr_x})}{(\modelapp{\model}{\expr_l})\ \expr_x}{(\modelapp{\model}{\expr_r})\ \expr_x}}
    \subseteq
    \interp{\eqrt{\modelapp{\model}(\typ'\subst{x}{\expr_x})}{(\modelapp{\model}{\expr_l})\ \expr_x}{(\modelapp{\model}{\expr_r})\ \expr_x}}$. 
    By which, (11), (7), and reasoning similar to the \subFun case, we get 
    \begin{enumerate}
      \setcounter{enumi}{13}
      \item $\forall \expr_x\in\interp{\modelapp{\model}{\typ'_x}}. \expr_{pf}\ \expr_x \in \interp{\eqrt{\modelapp{\model}(\typ'\subst{x}{\expr_x})}{(\modelapp{\model}{\expr_l})\ \expr_x}{(\modelapp{\model}{\expr_r})\ \expr_x}}$            
    \end{enumerate}
    By (12), (9), (13), and (14) we conclude that 
    $\expr \in \interp{\modelapp{\model}{\eqrt{\tfun{x}{\typ'_x}{\typ'}}{\expr_l}{\expr_r}}}$, 
    thus 
    \issemsubtype{\env}{\eqrt{\tfun{x}{\typ_x}{\typ}}{\expr_l}{\expr_r}}{\eqrt{\tfun{x}{\typ'_x}{\typ'}}{\expr_l}{\expr_r}}.
  \end{itemize}
\end{proof}
\end{theorem}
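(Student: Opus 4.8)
The plan is to prove the statement by induction on the subtyping derivation $\issubtype{\env}{\typ}{\typ'}$, which is generated by exactly three rules, \subBase, \subFun, and \subEq. Unfolding the definition of semantic subtyping, it suffices in each case to fix an arbitrary $\model \in \interp{\env}$ and an arbitrary $\expr \in \interp{\modelapp{\model}{\typ}}$ and to show $\expr \in \interp{\modelapp{\model}{\typ'}}$.

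The \subBase case is immediate: its premise is literally $\forall \model \in \interp{\env}.\ \interp{\modelapp{\model}{\tref{\rbind}{\tbase}{\refa}}} \subseteq \interp{\modelapp{\model}{\tref{\rbind'}{\tbase}{\refa'}}}$, so inversion hands us the goal. For \subFun I would invert to get the contravariant domain premise $\issubtype{\env}{\typ'_x}{\typ_x}$ and the covariant codomain premise $\issubtype{\env, \envBind{x}{\typ'_x}}{\typ}{\typ'}$, apply the IH to both, then unfold the logical interpretation of function types: given $\expr \in \interp{\modelapp{\model}{\tfun{x}{\typ_x}{\typ}}}$ and an arbitrary argument $\expr_x \in \interp{\modelapp{\model}{\typ'_x}}$, the domain IH places $\expr_x$ into $\interp{\modelapp{\model}{\typ_x}}$, so $\expr\ \expr_x \in \interp{\modelapp{\model}{\typ\subst{x}{\expr_x}}}$, and the codomain IH lands it in $\interp{\modelapp{\model}{\typ'\subst{x}{\expr_x}}}$. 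The only wrinkle is that closing substitutions bind values, so one first reduces $\expr_x$ to a value $\val_x$ with $\val_x \in \interp{\modelapp{\model}{\typ'_x}}$ (Lemma~\ref{proofs-lemma:semantic-preservation}), applies the codomain IH to $\modelBind{x}{\val_x}, \model$, and transports back from $\val_x$ to $\expr_x$ in the type index (Lemma~\ref{proofs-lemma:semantic-contextual-preservation}).

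The interesting case is \subEq, the rule governing our propositional equality type; here inversion gives subtyping of the type \emph{index} in both directions (it is treated invariantly), the equated expressions $\expr_l, \expr_r$ being untouched. I would split on the shape of the index. If it is a base type, both indices are trivially refined and the interpretation of $\eqrt{\tbase}{\expr_l}{\expr_r}$ depends only on $\tbase$ and on whether $\bbbeq{\expr_l}{\expr_r}{\tbase}$ reduces to \etrue, so the two interpretations simply coincide. If the index is a function type, say $\tfun{x}{\typ_x}{\typ}$ versus $\tfun{x}{\typ'_x}{\typ'}$, I would invert the two index-subtyping derivations, apply the IH to get semantic subtyping of the function types and of the domains in both directions, and then, given $\expr \in \interp{\modelapp{\model}{\eqrt{\tfun{x}{\typ_x}{\typ}}{\expr_l}{\expr_r}}}$ (so $\expr$ reduces to $\exeqName_{\_}\ (\modelapp{\model}{\expr_l})\ (\modelapp{\model}{\expr_r})\ \expr_{pf}$), verify the three conditions for membership in the target equality type: the unrefined type $\hasbtype{}{\expr}{\eqt{\unrefine{\tfun{x}{\typ_x}{\typ}}}}$ is preserved because subtyping never changes the underlying unrefined shape; the two equated functions lie in the target function interpretation by the function IH; and for the proof component, any argument $\expr_x$ in the target domain is also in the source domain (domains equivalent), so $\expr_{pf}\ \expr_x$ is an equality proof at the source codomain index, which by a recursive application of \subEq to the codomain sits inside the interpretation at the target codomain index.

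I expect this last recursive step inside the function subcase of \subEq to be the main obstacle. To relate $\interp{\modelapp{\model}{\eqrt{\typ\subst{x}{\expr_x}}{\expr_l}{\expr_r}}}$ with $\interp{\modelapp{\model}{\eqrt{\typ'\subst{x}{\expr_x}}{\expr_l}{\expr_r}}}$ one needs the codomain-subtyping premises — which are stated under the binding $\envBind{x}{\typ'_x}$ (resp.\ $\envBind{x}{\typ_x}$) — to apply to the specific $\expr_x$ at hand, which is only known to inhabit the (equivalent) domains. This is bridged by a strengthening/weakening lemma for subtyping under provably equivalent environments (Lemma~\ref{lem:sub-strengthening}); rule \subEq then reconstructs the desired equality-type subtyping from proper subderivations, so the IH applies to it once the induction is organised around a measure (the size of the subtyping derivation) for which the reconstructed \subEq instance is strictly smaller. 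The remaining pieces — \subBase, \subFun, and the base-index subcase of \subEq — are routine.
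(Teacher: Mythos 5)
Your proposal is correct and follows essentially the same route as the paper's proof: induction on the subtyping derivation, with \subBase{} immediate by inversion, \subFun{} by the standard contra/covariant argument, and \subEq{} split on the shape of the type index, reconstructing a smaller \subEq{} instance for the codomain via Lemma~\ref{lem:sub-strengthening}. Your two extra remarks---handling the value/expression mismatch in closing substitutions via Lemmas~\ref{proofs-lemma:semantic-preservation} and~\ref{proofs-lemma:semantic-contextual-preservation}, and justifying the recursive appeal in \subEq{} by a derivation-size measure---are points the paper glosses over, and they tighten rather than change the argument.
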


\begin{lemma}[Strengthening]\label{lem:sub-strengthening}
  If \issubtype{\env_1}{\typ_1}{\typ_2}, then:
  \begin{enumerate}
  \item If   \hastype{\env_1, \envBind{x}{\typ_2}, \env_2}{\expr}{\typ}
        then \hastype{\env_1, \envBind{x}{\typ_1}, \env_2}{\expr}{\typ}.
  \item If   \issubtype{\env_1, \envBind{x}{\typ_2}, \env_2}{\typ}{\typ'}
        then \issubtype{\env_1, \envBind{x}{\typ_1}, \env_2}{\typ}{\typ'}.
  \item If   \iswellformed{\env_1, \envBind{x}{\typ_2}, \env_2}{\typ}
        then \iswellformed{\env_1, \envBind{x}{\typ_1}, \env_2}{\typ}.
  \item If   \envwellformed{\env_1, \envBind{x}{\typ_2}, \env_2}
        then \envwellformed{\env_1, \envBind{x}{\typ_1}, \env_2}.
  \end{enumerate}
\begin{proof}
  The proofs go by induction. 
  Only the \tvar case is insteresting; we use \tsub and our assumption.
  \todo{By just the biggest, most mutual induction you ever seen. Only
    the \tvar case matters; we use \tsub and our assumption. }
\end{proof}
\end{lemma}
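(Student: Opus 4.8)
The plan is to prove all four parts simultaneously by mutual induction on the derivations of the four judgments in the environment $\env_1, \envBind{x}{\typ_2}, \env_2$, re-checking every rule with $\typ_2$ replaced by $\typ_1$ in the binding for $x$. Since $\env_1$, the tail $\env_2$, the subject term, and the remaining types are all untouched, the overwhelming majority of cases (\tsub, \tself, \tcon, \tapp, \tlam, \teqbase, \teqfun, \subFun, \subEq, \wfFun, \wfEq, \wfBind, \ldots) go through by applying the induction hypotheses to the subderivations and reassembling with the same rule; for rules that introduce fresh bindings (e.g. \tlam binds $x'$, \teqbase/\teqfun bind $l,r$) we apply the induction hypothesis with $\env_2$ extended by those bindings, which are distinct from $x$.

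Before the induction I would record one easy auxiliary fact: if $\issubtype{\env}{\typ_1}{\typ_2}$ then $\unrefine{\typ_1} = \unrefine{\typ_2}$. This follows by induction on the subtyping derivation, since \subBase relates refinements of the same base type, \subFun relates function types whose components are subtype-related (so $\unrefine{}$ of the parts agree by the inner induction), and \subEq relates equality types whose indices are mutually subtype-related. Consequently the unrefined environments $\unrefine{\env_1, \envBind{x}{\typ_1}, \env_2}$ and $\unrefine{\env_1, \envBind{x}{\typ_2}, \env_2}$ are \emph{identical}, so the basic-typing premise of \wfBase (the only place a \corelanueq judgment occurs as a premise of a \corelaneq judgment) is literally unchanged and that case is trivial.

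The two cases that genuinely use the hypothesis $\issubtype{\env_1}{\typ_1}{\typ_2}$ are \tvar and \subBase. For \tvar, if the looked-up variable is not $x$ the lookup is identical in both environments; if it is $x$, the original derivation gives the type $\typ_2$, and in the strengthened environment we apply \tvar to get $\hastype{\env_1, \envBind{x}{\typ_1}, \env_2}{x}{\typ_1}$ and then \tsub with $\issubtype{\env_1, \envBind{x}{\typ_1}, \env_2}{\typ_1}{\typ_2}$. That last fact is just the weakening of the hypothesis $\issubtype{\env_1}{\typ_1}{\typ_2}$ to a longer environment: every closing substitution $\model \in \interp{\env_1, \envBind{x}{\typ_1}, \env_2}$ restricts to some $\model' \in \interp{\env_1}$, and since $\typ_1,\typ_2$ mention only $\env_1$-variables we have $\modelapp{\model}{\typ_i} = \modelapp{\model'}{\typ_i}$, so $\interp{\modelapp{\model}{\typ_1}} \subseteq \interp{\modelapp{\model}{\typ_2}}$. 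For \subBase, the premise is $\forall \model \in \interp{\env_1, \envBind{x}{\typ_2}, \env_2}.\ \interp{\modelapp{\model}{\tref{\rbind}{\tbase}{\refa}}} \subseteq \interp{\modelapp{\model}{\tref{\rbind}{\tbase}{\refa'}}}$; it suffices to establish $\interp{\env_1, \envBind{x}{\typ_1}, \env_2} \subseteq \interp{\env_1, \envBind{x}{\typ_2}, \env_2}$, after which the desired quantification over the smaller set is immediate. Unfolding the closing-substitution rules \cbase and \csub, this inclusion reduces exactly to $\interp{\modelapp{\model_1}{\typ_1}} \subseteq \interp{\modelapp{\model_1}{\typ_2}}$ for $\model_1 \in \interp{\env_1}$ (the hypothesis), because the value bound to $x$ is the same on both sides and the trailing $\env_2$-conditions therefore carry over verbatim.

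The main obstacle I expect is not any deep argument but the careful bookkeeping: threading the environment-interpretation inclusion $\interp{\env_1, \envBind{x}{\typ_1}, \env_2} \subseteq \interp{\env_1, \envBind{x}{\typ_2}, \env_2}$ uniformly through the semantic cases, and handling part~4 cleanly. For part~4, $\envwellformed{\env_1, \envBind{x}{\typ_1}, \env_2}$ requires $\iswellformed{\env_1}{\typ_1}$, which the bare hypothesis $\issubtype{\env_1}{\typ_1}{\typ_2}$ does not supply (subtyping in \corelaneq has no well-formedness premises); I would either add $\iswellformed{\env_1}{\typ_1}$ as a side hypothesis or observe that the lemma is only ever invoked when it holds, and then use part~3 to re-establish well-formedness of each type in $\env_2$ under the new binding $\envBind{x}{\typ_1}$. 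Everything else is one large but routine mutual induction whose only nontrivial leaves are \tvar and \subBase above.
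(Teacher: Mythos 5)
Your proposal is correct and follows the same route as the paper's (one-line) proof: a large mutual induction in which \tvar is handled by \tsub together with the hypothesis $\issubtype{\env_1}{\typ_1}{\typ_2}$. You go usefully beyond the paper's sketch in two places it silently elides: you note that \subBase also genuinely uses the hypothesis, via the inclusion $\interp{\env_1, \envBind{x}{\typ_1}, \env_2} \subseteq \interp{\env_1, \envBind{x}{\typ_2}, \env_2}$ (and the weakening of the subtyping hypothesis to the longer environment needed in the \tvar case), and you correctly observe that part~4 as stated needs an additional side condition $\iswellformed{\env_1}{\typ_1}$, since \corelaneq subtyping carries no well-formedness premises.
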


\begin{lemma}[Semantic typing is closed under parallel reduction in expressions]\label{lem:sem-parred-expr}
  If \parredsto{\expr_1}{\expr_2}, then $\expr_1 \in \interp{\typ}$ iff $\expr_2 \in \interp{\typ}$.
\begin{proof}
  By induction on $\typ$, using parallel reduction as a bisimulation
  (Lemma~\ref{lem:parred-forward-simulation} and
  Corollary~\ref{cor:parred-backward-simulation}).
\end{proof}
\end{lemma}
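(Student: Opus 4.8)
The plan is to prove both directions of the biconditional by induction on the structure of $\typ$, exploiting the fact that $\interp{\cdot}$ is defined by recursion on types and that every clause observes an expression only through its reduction behaviour --- plus, for the equality types, through the \emph{fixed} term indices carried by the type. The workhorse is that parallel reduction is a bisimulation for the operational semantics: if \parredsto{\expr_1}{\expr_2} and $\goesto{\expr_1}{\val_1}$, then $\goesto{\expr_2}{\val_2}$ for some $\val_2$ with \parredsto{\val_1}{\val_2}, and symmetrically --- these are the forward and backward simulation results of the parallel-reduction development (Appendix~\ref{app:parred}), together with confluence. From these I would first isolate three routine facts: (i) \emph{co-termination at \etrue}: if \parredsto{\expr_1}{\expr_2} then $\goesto{\expr_1}{\etrue}$ iff $\goesto{\expr_2}{\etrue}$ (since \etrue is normal and parallel reduction is confluent); (ii) parallel reduction is a congruence and closed under substitution, so \parredsto{\expr_1\ \expr_x}{\expr_2\ \expr_x} and \parredsto{\refa\subst{\rbind}{\expr_1}}{\refa\subst{\rbind}{\expr_2}}; and (iii) the unrefined base-typing judgement $\hasbtype{}{\cdot}{\tbase}$ is transported along \parredsto{}{} (subject reduction for the simply typed core).

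For the refinement case $\typ = \tref{\rbind}{\tbase}{\refa}$, unfold the interpretation into its three conjuncts --- termination to a value, the base typing $\hasbtype{}{\expr}{\tbase}$, and $\goesto{\refa\subst{\rbind}{\expr}}{\etrue}$ --- and transport them by the simulation, by (iii), and by (ii) followed by (i), respectively. For the function case $\typ = \tfun{x}{\typ_x}{\typ'}$ the interpretation is purely logical: fix $\expr_x \in \interp{\typ_x}$, observe \parredsto{\expr_1\ \expr_x}{\expr_2\ \expr_x}, and invoke the induction hypothesis at the structurally smaller type $\typ'\subst{x}{\expr_x}$; the component types $\typ_x$ and $\typ'$ are untouched throughout.

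The interesting cases are the two equality types, and here the key observation is syntactic: the evaluation contexts for $\ebeq{\tbase}$ (and for the function-equality proof former) place the hole only in the \emph{proof} slot, never in the two equated-expression slots, so neither $\goesto{}{}$ nor parallel reduction --- provided the latter is the call-by-value relation that respects this context structure --- can rewrite those slots. Hence, if \parredsto{\expr_1}{\expr_2} and $\expr_1$ converges to a proof $\ebeq{\tbase}\ \expr_l\ \expr_r\ \val_{pf}$, then $\expr_2$ converges to a proof of the form $\ebeq{\tbase}\ \expr_l\ \expr_r\ \val_{pf}'$ carrying \emph{exactly the same} indices $\expr_l,\expr_r$, with only the proof component possibly changed (up to parallel reduction). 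For $\typ = \eqrt{\tbase}{\expr_l}{\expr_r}$ the remaining conditions --- base typing of the result and $\goesto{\bbbeq{\expr_l}{\expr_r}{\tbase}}{\etrue}$ --- concern only these fixed indices and transport immediately (the latter does not mention $\expr_1$ or $\expr_2$ at all). For $\typ = \eqrt{\tfun{x}{\typ_x}{\typ'}}{\expr_l}{\expr_r}$, the clauses $\expr_l,\expr_r \in \interp{\tfun{x}{\typ_x}{\typ'}}$ are literally independent of $\expr_1,\expr_2$, while the clause $\forall \expr_x \in \interp{\typ_x}.~\expr_{pf}\ \expr_x \in \interp{\eqrt{\typ'\subst{x}{\expr_x}}{\expr_l\ \expr_x}{\expr_r\ \expr_x}}$ follows by the induction hypothesis applied to the parallel-related proofs $\expr_{pf}$ and $\expr_{pf}'$, at a type that is genuinely smaller in the ordering in which the propositional equality on $\typ'$ lies above $\typ'$ but below the propositional equality on $\tfun{x}{\typ_x}{\typ'}$.

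I expect the main obstacle to be the equality cases, specifically pinning down the parallel-reduction relation tightly enough that it cannot alter the equated-expression slots of equality proofs: if Appendix~\ref{app:parred}'s relation were an unrestricted full congruence it could $\beta$-reduce into those slots and let the term indices drift, which would falsify the statement, so part of the work is verifying that the relation there is the conservative, call-by-value one. A secondary nuisance is that the backward direction of the $\hasbtype{}{\cdot}{\tbase}$ transfer is a subject-\emph{expansion} property, which the simply typed core does not enjoy in general; in the uses of this lemma inside Theorem~\ref{proofs-thm:soundness} both $\expr_1$ and $\expr_2$ are instances of terms drawn from a typing derivation, so base typing is always on hand, but stating the lemma cleanly in isolation may call for an extra well-typedness hypothesis. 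Everything else is a direct structural induction.
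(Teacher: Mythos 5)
Your overall strategy---induction on $\typ$, transporting each conjunct of $\interp{\cdot}$ across \parredsto{\expr_1}{\expr_2} using forward and backward simulation---is exactly the paper's (one-line) proof, and your refinement and function cases are fine modulo the subject-expansion caveat you already flag. The genuine gap is in the equality cases, at precisely the point you identified as the main risk and then resolved the wrong way. You argue that the two equated-expression slots of a proof term cannot drift because ``neither $\goesto{}{}$ nor parallel reduction \ldots can rewrite those slots, provided the latter is the call-by-value relation that respects this context structure.'' That proviso fails: the parallel reduction of Figure~\ref{fig:parred} is deliberately \emph{not} restricted to evaluation contexts. Rule \textsf{beq} has premises \parreds{\expr_l}{\expr_l'} and \parreds{\expr_r}{\expr_r'} and rewrites both index slots of $\ebeq{\tbase}\ \expr_l\ \expr_r\ \expr$; rule \textsf{xeq} does the same for function-equality proofs; and rule \textsf{lam} reduces under binders. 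So if $\expr_1$ converges to $\ebeq{\tbase}\ \expr_l\ \expr_r\ \val_{pf}$, the simulation only tells you that $\expr_2$ converges to some $\ebeq{\tbase}\ \expr_l''\ \expr_r''\ \val_{pf}''$ with \parredsto{\expr_l}{\expr_l''}, and membership in $\interp{\eqrt{\tbase}{\expr_l}{\expr_r}}$, read literally, demands the stored arguments be \emph{syntactically} the type indices $\expr_l, \expr_r$. Your induction therefore does not close this case: you need either to read the stored arguments of a proof term up to parallel reduction (i.e.\ fold in the companion Lemma~\ref{lem:sem-parred-type}-style reasoning about drifting indices), or to argue via the congruence relation of Figure~\ref{fig:congruence} that the drifted slots still parallel-reduce to the fixed indices and that the remaining conditions ($\goesto{\bbbeq{\expr_l}{\expr_r}{\tbase}}{\etrue}$ and, in the function case, the quantified clause over $\expr_{pf}$) are insensitive to that drift. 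The paper's one-line proof glosses over exactly the same point, but your write-up makes a positive claim (``carrying exactly the same indices'') that the relation as defined refutes.

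Your secondary worry is also real: the backward direction of the $\hasbtype{}{\expr}{\tbase}$ conjunct in the refinement case is a subject-expansion property that the simply typed core does not enjoy (e.g.\ $(\elam{x}{\tbool}{\etrue})\ \eunit$ parallel-reduces to the well-typed $\etrue$ but is itself ill-typed under \tbapp). Adding the well-typedness hypothesis you suggest, or weakening the lemma to the forward implication plus a separately-hypothesized backward one, is the right repair; as stated, the unrestricted ``iff'' is not provable.
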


\begin{lemma}[Semantic typing is closed under parallel reduction in types]\label{lem:sem-parred-type}
  If \parredsto{\typ_1}{\typ_2} then $\interp{\typ_1} = \interp{\typ_2}$.
\begin{proof}
  By induction on $\typ_1$ (which necessarily has the same shape as
  $\typ_2$). We use parallel reduction as a bisimulation
  (Lemma~\ref{lem:parred-forward-simulation} and
  Corollary~\ref{cor:parred-backward-simulation}).
\end{proof}
\end{lemma}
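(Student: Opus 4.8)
The plan is to prove the lemma by induction on the structure of $\typ_1$. Since parallel reduction on types is a congruence, $\parredsto{\typ_1}{\typ_2}$ forces $\typ_2$ to have the same top-level shape as $\typ_1$, so I split on the three forms of a type. The only facts I will need about parallel reduction itself are: (a) it is \emph{substitutive} --- if $\parredsto{\typ_1}{\typ_2}$ then $\parredsto{\typ_1\subst{x}{\expr}}{\typ_2\subst{x}{\expr}}$, and likewise for expressions; and (b) it is a \emph{bisimulation} with respect to $\evals{}{}$, so that whenever $\parredsto{\expr}{\expr'}$ we have $\goesto{\expr}{\val}$ iff $\goesto{\expr'}{\val'}$ with $\parredsto{\val}{\val'}$ --- this is exactly Lemma~\ref{lem:parred-forward-simulation} (forward) and Corollary~\ref{cor:parred-backward-simulation} (backward). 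In particular, since a constant parallel-reduces only to itself, $\goesto{\expr}{\etrue}$ iff $\goesto{\expr'}{\etrue}$.

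For the \emph{refinement} case $\typ_1 = \tref{\rbind}{\tbase}{\refa_1}$, $\typ_2 = \tref{\rbind}{\tbase}{\refa_2}$ with $\parredsto{\refa_1}{\refa_2}$, membership $\expr \in \interp{\cdot}$ differs between the two sides only in the clause $\goesto{\refa_i\subst{\rbind}{\expr}}{\etrue}$; by substitutivity $\parredsto{\refa_1\subst{\rbind}{\expr}}{\refa_2\subst{\rbind}{\expr}}$, and by the bisimulation one side reaches $\etrue$ iff the other does, while the clauses $\goesto{\expr}{\val}$ and $\hasbtype{}{\expr}{\tbase}$ are identical. For the \emph{function} case $\typ_1 = \tfun{x}{\typ_{x1}}{\typ_1'}$, $\typ_2 = \tfun{x}{\typ_{x2}}{\typ_2'}$, the induction hypothesis on the domain gives $\interp{\typ_{x1}} = \interp{\typ_{x2}}$, and for each argument $\expr_x$ substitutivity gives $\parredsto{\typ_1'\subst{x}{\expr_x}}{\typ_2'\subst{x}{\expr_x}}$; applying the induction hypothesis at the codomain --- which is no larger in the measure on which $\interp{\cdot}$ recurses, since substituting a closed expression does not increase it --- yields $\interp{\typ_1'\subst{x}{\expr_x}} = \interp{\typ_2'\subst{x}{\expr_x}}$, hence equality of the two function interpretations. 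The \emph{equality} case $\typ_1 = \eqrt{\typ_1'}{\expr_{l1}}{\expr_{r1}}$ splits further on whether $\typ_1'$ is a base type or a function type; in both sub-cases the induction hypothesis applies to $\typ_1'$ (structurally smaller than the equality type per Figure~\ref{fig:semantic-typing}), and the bisimulation lemmas transfer the clause that $\expr$ reduces to a $\ebeq{\tbase}$- or $\exeqName$-headed value and the clause that the computational equality on the equated terms evaluates to $\etrue$, while the ``both equated functions lie in $\interp{\tfun{x}{\typ_x}{\typ}}$'' clauses follow from the function case together with substitutivity.

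I expect the equality-type case to be the main obstacle, for two reasons. First, the interpretation of an equality type mentions the equated expressions both \emph{syntactically} (the proof reduces to a constructor form built from exactly those terms) and \emph{semantically} (those terms must lie in the interpretation of the index and the computational equality must reduce to $\etrue$); handling the syntactic occurrences means pushing parallel reduction through evaluation of the proof term while keeping the equated terms, the proof term, and the index all in step, which is precisely what the bisimulation machinery of Appendix~\ref{app:parred} is designed for but requires careful bookkeeping. Second, the nested recursion --- $\interp{\eqrt{\tfun{x}{\typ_x}{\typ}}{\expr_l}{\expr_r}}$ refers to $\interp{\eqrt{\typ\subst{x}{\expr_x}}{\expr_l\ \expr_x}{\expr_r\ \expr_x}}$ --- is the same well-foundedness subtlety already present in Lemma~\ref{proofs-lemma:semantic-contextual-preservation}; I would discharge it by performing the induction on the structural size of the index type (ranking an equality over $\typ$ above $\typ$) rather than on $\typ_1$ literally, and by generalising the statement to quantify over all closing substitutions so that the function case's appeal to the induction hypothesis on $\typ_1'\subst{x}{\expr_x}$ is legitimate. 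Once these points are settled, each case reduces to a short calculation using substitutivity and the two cited simulation results.
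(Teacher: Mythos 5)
Your proposal takes essentially the same route as the paper's (very terse) proof: induction on the structure of the type, using the shape-preservation of parallel reduction on types and the forward/backward simulation results to transfer the evaluation-dependent clauses of the interpretation, with substitutivity (Lemma~\ref{lem:parred-subst} and Corollary~\ref{cor:parred-subst-eval-multi}) and cotermination at constants (Corollary~\ref{cor:cotermination-multi}) discharging the refinement and codomain cases. Your additional care about the well-foundedness of the recursion through $\typ\subst{x}{\expr_x}$ and through the equality type's index is exactly the bookkeeping the paper's one-line proof leaves implicit, and it is handled the same way in the paper's neighbouring lemmas (e.g.\ Lemma~\ref{proofs-lemma:apptyp}).
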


\begin{lemma}[Parallel reducing types are equal]\label{lem:subtype-parred}
  If \iswellformed{\env}{\typ_1} and \iswellformed{\env}{\typ_2} and
  \parredsto{\typ_1}{\typ_2} then \issubtype{\env}{\typ_1}{\typ_2} and
  \issubtype{\env}{\typ_1}{\typ_2}.
\begin{proof}
  By induction on the parallel reduction sequence; for a single step,
  by induction on $\typ_1$ (which must have the same structure as
  $\typ_2$).
   We use parallel reduction as a bisimulation
   (Lemma~\ref{lem:parred-forward-simulation} and
   Corollary~\ref{cor:parred-backward-simulation}).
\end{proof}
\end{lemma}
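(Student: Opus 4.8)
The plan is a two-level induction. First I would reduce the multi-step claim to a single-step one: writing $\parredsto{\cdot}{\cdot}$ as the reflexive--transitive closure of a one-step parallel reduction, the zero-step case is reflexivity of subtyping and the composition step is transitivity of subtyping, so it suffices to establish that whenever $\typ_1$ makes \emph{one} parallel step to $\typ_2$ and both are well formed in $\env$, we have $\issubtype{\env}{\typ_1}{\typ_2}$ and $\issubtype{\env}{\typ_2}{\typ_1}$. Reflexivity and transitivity of subtyping on well-formed types are each a routine structural induction on the type, using \subBase (where $\subseteq$ is itself reflexive and transitive), \subFun, and \subEq; the one wrinkle is that the codomain premise of \subFun lives in the environment carrying the \emph{smaller} domain, which is patched with the Strengthening lemma (Lemma~\ref{lem:sub-strengthening}) applied to the domain-subtyping premise. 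Since a single parallel step preserves the top-level type constructor (a refinement type steps to a refinement type, a function type to a function type, an equality type to an equality type), the single-step claim then goes by structural induction on $\typ_1$, which must have the same shape as $\typ_2$.

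For $\typ_1 = \tref{\rbind}{\tbase}{\refa}$ stepping to $\typ_2 = \tref{\rbind}{\tbase}{\refa'}$ I would apply \subBase, so it suffices to show $\interp{\modelapp{\model}{\typ_1}} = \interp{\modelapp{\model}{\typ_2}}$ for every $\model \in \interp{\env}$, which gives both inclusions at once. Parallel reduction is stable under substitution of closed values, so $\parredsto{\modelapp{\model}{\typ_1}}{\modelapp{\model}{\typ_2}}$ on closed types, and Lemma~\ref{lem:sem-parred-type} then yields the equality of interpretations directly. For $\typ_1 = \tfun{x}{\typ_x}{\typ}$ stepping to $\typ_2 = \tfun{x}{\typ_x'}{\typ'}$ I would invert both well-formedness derivations to get well-formedness of the domains in $\env$ and of the codomains in $\env$ extended with the respective domains; the inner hypothesis applied to the domains (which themselves parallel-step to each other by the congruence structure of the reduction) gives $\issubtype{\env}{\typ_x}{\typ_x'}$ and $\issubtype{\env}{\typ_x'}{\typ_x}$; feeding the latter into Strengthening (Lemma~\ref{lem:sub-strengthening}) re-derives well-formedness of $\typ$ under $\env, \envBind{x}{\typ_x'}$, so the inner hypothesis applied to the codomains works in the common environment $\env, \envBind{x}{\typ_x'}$ and yields $\issubtype{\env, \envBind{x}{\typ_x'}}{\typ}{\typ'}$; \subFun then assembles $\issubtype{\env}{\typ_1}{\typ_2}$, and the symmetric assembly gives the reverse. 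For $\typ_1 = \eqrt{\typ}{\expr_1}{\expr_2}$ stepping to $\typ_2 = \eqrt{\typ'}{\expr_1}{\expr_2}$ the term indices $\expr_1, \expr_2$ are inert under parallel reduction of types (the equated expressions of a $\PEq$ are frozen, just as equality proofs never evaluate their equated subterms), so they are literally identical on the two sides, and \subEq closes the case using the inner hypothesis on the index types $\typ$ and $\typ'$, both directions of which are in hand.

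The step I expect to be the real obstacle is precisely this equality-type case, for the reason the paper already flags around subject reduction: \subEq can retype $\eqrt{\typ}{\expr_1}{\expr_2}$ to $\eqrt{\typ'}{\expr_1}{\expr_2}$ but is powerless on the term index, so the lemma is true only if parallel reduction on types is defined so that a $\PEq$'s equated expressions never reduce; the genuine content of the case is checking this against the appendix's definition of parallel reduction on types. A variant that also reduced the term indices---which is what subject reduction would want---would first require extending \subEq to let the term index parallel-reduce, an extension the paper notes is harmless but unnecessary here. Everything else rests on already-stated machinery: parallel reduction being a bisimulation with respect to $\evals{\cdot}{\cdot}$, its closure under substitution, Lemma~\ref{lem:sem-parred-type}, and Strengthening (Lemma~\ref{lem:sub-strengthening}).
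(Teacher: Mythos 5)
Your decomposition is the same as the paper's: an outer induction on the length of the parallel-reduction sequence, discharged by reflexivity and transitivity of subtyping, followed by a structural induction on $\typ_1$ for a single step. Your base case (\subBase{} via closure of type interpretations under parallel reduction, Lemma~\ref{lem:sem-parred-type}, which is where the bisimulation lemmas enter) and your function case (\subFun{} plus Strengthening, Lemma~\ref{lem:sub-strengthening}, to reconcile the environments) both go through and match what the paper's one-line proof intends.

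The gap is exactly where you located the ``real obstacle,'' and the check you deferred comes out the wrong way. The rule eq for parallel reduction of types in Figure~\ref{fig:parred} carries the premises $\parreds{\expr_l}{\expr_l'}$ and $\parreds{\expr_r}{\expr_r'}$: the equated expressions of an equality type are \emph{not} frozen, so a single parallel step can take $\eqrt{\typ}{\expr_l}{\expr_r}$ to $\eqrt{\typ'}{\expr_l'}{\expr_r'}$ with $\expr_l' \neq \expr_l$. Since \subEq{} demands syntactically identical term indices on both sides and no other subtyping rule applies to equality types, your equality case cannot be closed. This is not a presentational slip you can paper over: the paper's own discussion of subject reduction (after Theorem~\ref{thm:soundness}) exhibits $\parreds{\eqrt{\tbool}{e}{e}}{\eqrt{\tbool}{e'}{e'}}$ as a pair of parallel-reducing, well-formed types that are \emph{not} mutual subtypes under the current rules --- a counterexample to the lemma as stated. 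The statement is salvageable either by restricting $\typ_1$ to the refinement/function fragment (where your argument is complete), or by strengthening \subEq{} to let the term indices parallel-reduce (the ``otherwise convert'' fix the paper mentions and declines to adopt). As written, your equality case --- and the paper's own proof sketch --- both founder on rule eq.
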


\begin{lemma}[Regularity]\label{lem:regularity}
\begin{enumerate}
\item If \hastype{\env}{\expr}{\typ} then \envwellformed{\env} and \iswellformed{\env}{\typ}.
\item If \iswellformed{\env}{\typ} then \envwellformed{\env}.
\item If \issubtype{\env}{\typ_1}{\typ_2} then \envwellformed{\env} and \iswellformed{\env}{\typ_1} and \iswellformed{\env}{\typ_2}.
\end{enumerate}
\begin{proof}
  By a big ol' induction.
\todo{the \subBase case will need extra wf assumptions on the rule}
\end{proof}
\end{lemma}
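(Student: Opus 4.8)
\textbf{Target: Lemma~\ref{lem:regularity} (Regularity).} The plan is to prove the three parts simultaneously by mutual induction on the derivations of $\hastype{\env}{\expr}{\typ}$, $\iswellformed{\env}{\typ}$, and $\issubtype{\env}{\typ_1}{\typ_2}$, i.e.\ the standard ``validity/regularity'' argument for refinement type systems. Before starting, I would repair the axiom-shaped rules so that they actually record their context: $\tcon$ and $\tvar$ need the side condition $\envwellformed{\env}$, $\wfBase$ needs $\envwellformed{\env}$ as a premise, and---as the existing proof sketch already flags---$\subBase$ needs $\envwellformed{\env}$ together with well-formedness of both of its refinement types; for uniformity I would give $\subEq$ (and the semantic $\subBase$) the analogous well-formedness premises. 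None of these additions changes the expressive power of \corelaneq, since wherever those rules are invoked the ambient context and the relevant types are already known to be well formed, but they are exactly what turns Regularity from a non-fact into a clean induction.

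For part 1 the interesting cases are: $\tsub$, immediate from the IH for part 3; $\tself$, which needs a small erasure lemma (if $\hastype{\env}{\expr}{\tref{\vv}{\tbase}{\refa}}$ then $\hasbtype{\unrefine{\env}}{\expr}{\tbase}$) plus weakening of base typing, so that $\wfBase$ can type the selfified refinement $\bbbeq{\vv}{\expr}{\tbase}$; $\tlam$ and $\tapp$, which use inversion of $\wfFun$ and $\wfBind$ together with a substitution lemma for well-formedness (if $\iswellformed{\env,\envBind{x}{\typ_x}}{\typ}$ and $\hastype{\env}{\expr_x}{\typ_x}$ then $\iswellformed{\env}{\typ\subst{x}{\expr_x}}$); and $\teqbase$, $\teqfun$, which combine $\tsub$ (to retype $\expr_l$ and $\expr_r$ at the index type, using the subtyping premises) with $\wfEq$ (in $\teqfun$ the index's well-formedness is even handed over directly as a premise). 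Part 2 is a short induction: $\wfFun$ and $\wfEq$ return $\envwellformed{\env}$ through the IH, and $\wfBase$ yields it directly once added as a premise. Part 3 is likewise routine: $\subBase$ is immediate from its (new) premises; $\subEq$ uses the IH on its two subtyping premises and then $\wfEq$; and $\subFun$ uses the IHs with $\wfFun$, plus one auxiliary observation---subtyping relates only types of the same shape, hence preserves erasure, so $\iswellformed{\env,\envBind{x}{\typ'_x}}{\typ}$ and $\iswellformed{\env,\envBind{x}{\typ_x}}{\typ}$ are interderivable (well-formedness depends only on $\unrefine{\env}$). I would establish that observation, the weakening lemma, and the well-formedness substitution lemma as standalone facts first.

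The main obstacle is not the induction itself but the rule reformulation: the ``semantic''/axiom rules $\subBase$, $\subEq$, $\tcon$, $\tvar$, $\wfBase$ as printed in Figure~\ref{fig:coretyping} record nothing about the context or the types involved, so Regularity is literally false for the system as stated. The real work is to pick the minimal set of well-formedness side conditions, verify they are satisfiable everywhere those rules are used---in particular inside Lemma~\ref{lem:sub-strengthening} and the soundness proof of Theorem~\ref{thm:soundness}---and check that threading these premises through does not introduce a circular dependency among the three judgments in the mutual induction. Once that is settled, the only case demanding genuine (if standard) care is the well-formedness substitution lemma used for $\tapp$.
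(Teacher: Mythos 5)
Your proposal is correct and follows essentially the same route the paper intends: a simultaneous induction over the three judgments, made to go through by adding well-formedness premises to the axiom-shaped rules---the paper's own proof is only the sketch ``by a big ol' induction'' with a note that \subBase needs exactly the extra well-formedness assumptions you identify. Your version is in fact more complete than the paper's, since you also flag \tcon, \tvar, \wfBase, and \subEq as needing the same repair and you isolate the auxiliary substitution and erasure lemmas the induction actually requires.
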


\begin{lemma}[Canonical forms]\label{lem:canonical-forms}
  If \hastype{\env}{\val}{\typ}, then:
  \begin{itemize}
  \item If $\typ = \tref{x}{\tbase}{\expr}$, then $\val = \con$ such
    that $\tycon{\con} = \tbase$ and
    \issubtype{\env}{\tycon{\con}}{\tref{x}{\tbase}{\expr}}.

  \item If $\typ = \tfun{x}{\typ_x}{\typ'}$, then $\val =
    \tlam{x}{\typ_x'}{\expr}$ such that
    \issubtype{\env}{\typ_x}{\typ_x'} and \hastype{\env, \envBind{x}{\typ_x'}}{\expr}{\typ''} such that \issubtype{\typ''}{\typ'}.

  \item If $\typ = \eqrt{\tbase}{\expr_l}{\expr_r}$ then $\val =
    \ebeq{\tbase}\ \expr_l\ \expr_r\ \val_p$ such that
    \hastype{\env}{\expr_l}{\typ_l} and
    \hastype{\env}{\expr_r}{\typ_r} (for some $\typ_l$ and $\typ_r$
    that are refinements of $\tbase$) and \hastype{\env,
      \envBind{r}{\typ_r},
      \envBind{l}{\typ_l}}{\val_p}{\tref{x}{\tunit}{\bbbeq{l}{r}{\tbase}}}.

  \item If $\typ = \eqrt{\tfun{x}{\typ_x}{\typ'}}{\expr_l}{\expr_r}$ then $\val =
    \exeq{x}{\typ_x'}{\typ''}\ \expr_l\ \expr_r\ \val_p$ such that
    \issubtype{\env}{\typ_x}{\typ_x'} and \issubtype{\env, \envBind{x}{\typ_x}}{\typ''}{\typ'} and
    \hastype{\env}{\expr_l}{\typ_l} and
    \hastype{\env}{\expr_r}{\typ_r} (for some $\typ_l$ and $\typ_r$
    that are subtypes of \tfun{x}{\typ_x'}{\typ''}) and \hastype{\env,
      \envBind{r}{\typ_r},
      \envBind{l}{\typ_l}}{\val_p}{\tfun{x}{\typ_x'}{\eqrt{\typ''}{\expr_l\ x}{\expr_r\ x}}}.
  \end{itemize}
\end{lemma}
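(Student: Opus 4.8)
The plan is to prove the lemma by induction on the derivation of \hastype{\env}{\val}{\typ}, splitting on the last rule applied. The first step is to note that, since $\val$ is a \emph{syntactic} value, the subject of the conclusion is neither an application nor a variable, so \tapp and \tvar are impossible; the last rule must be one of the introduction rules \tcon, \tlam, \teqbase, \teqfun or one of the two non-structural rules \tsub and \tself. For the four introduction rules the conclusion is essentially read off the rule: it fixes the shape of $\val$ and the term index of $\typ$, and every side condition the lemma demands is one of the rule's premises, with any required subtyping instance (for example \issubtype{\env}{\typ_x}{\typ_x} in \teqfun) discharged by reflexivity. The only point that needs a glance is that in \teqbase and \teqfun the term indices carried by $\val$ are exactly the $\expr_l,\expr_r$ appearing in $\typ$ — which holds because they are literally identical in those rules.

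The substance is in \tsub and \tself. For \tsub I would first observe that subtyping preserves top-level shape — each of \subBase, \subFun, \subEq relates two types with the same outermost constructor — so from \hastype{\env}{\val}{\typ_0} and \issubtype{\env}{\typ_0}{\typ} the type $\typ_0$ has the same shape as $\typ$, and the induction hypothesis applies to it. For the refinement and plain-function clauses it then suffices to compose the subtyping produced by the IH with \issubtype{\env}{\typ_0}{\typ}, which needs only transitivity of subtyping (immediate from the semantic definition of \subBase for refinements, and from \subFun together with the strengthening Lemma~\ref{lem:sub-strengthening} for functions). For the two \PEq clauses there is the extra task of reconciling the \emph{type} index: \subEq relates $\eqrt{\typ_1}{\expr_l}{\expr_r}$ to $\eqrt{\typ_2}{\expr_l}{\expr_r}$ only when \issubtype{\env}{\typ_1}{\typ_2} \emph{and} \issubtype{\env}{\typ_2}{\typ_1}, so from the mutually-subtyped function indices I would invert \subFun (mutual domain subtyping, codomain subtyping in the appropriate environments), then use transitivity and Lemma~\ref{lem:sub-strengthening} to adjust environments, recovering precisely the \issubtype{\env}{\typ_x}{\typ_x'} and \issubtype{\env,\envBind{x}{\typ_x}}{\typ''}{\typ'} the lemma states; the obligations on $\expr_l,\expr_r$ and on the proof value $\val_p$ carry over unchanged because $\typ_x'$ and $\typ''$ come from the constructor, not from $\typ$. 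For \tself, $\typ$ is of the form $\tref{\vv}{\tbase}{\bbbeq{\vv}{\val}{\tbase}}$; the IH on the shorter derivation of $\val$ at some $\tref{\vv}{\tbase}{\refa}$ gives $\val=\con$ with \issubtype{\env}{\tycon{\con}}{\tref{\vv}{\tbase}{\refa}}, and since the only base-typed constants are \etrue, \efalse, \eunit — for each of which \tycon{\con} is exactly the selfified refinement $\tref{\vv}{\tbase}{\bbbeq{\vv}{\con}{\tbase}}$ — the required \issubtype{\env}{\tycon{\con}}{\tref{\vv}{\tbase}{\bbbeq{\vv}{\val}{\tbase}}} holds by reflexivity.

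The hard part will be the bookkeeping in the \tsub case for the propositional equality at function type: because \subEq treats the type index invariantly, unwinding a chain of \tsub/\subEq wrapped around an $\XEq$ value forces one to propagate mutual subtyping between the constructor's declared domain/codomain $\typ_x',\typ''$ and the ambient type's $\typ_x,\typ'$, to take it apart with \subFun, and then to reassemble it in the precise environment-sensitive form the lemma asks for — this is the one place where shape preservation, transitivity of subtyping, inversion of \subFun, and Lemma~\ref{lem:sub-strengthening} must all be used together. A further wrinkle worth flagging is that the function-type clause should also admit the primitive function-valued constants ($\bbbeqsym{\tbase}$ and the partially applied equalities), which are values of function type but not $\lambda$-abstractions; these are dispatched directly by \tcon with reflexive subtyping, in parallel with the \tlam case.
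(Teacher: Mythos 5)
The paper states Lemma~\ref{lem:canonical-forms} without any proof, so there is nothing to compare your argument against line by line; judged on its own, your proposal is the right argument and is essentially complete. The decomposition is the standard one: rule out \tvar and \tapp by the value restriction (the equality forms are primitive productions, not nested applications, so this is legitimate), read the four introduction rules off directly with reflexive subtyping, and concentrate the work in \tsub and \tself. Your treatment of \tsub is the part that actually matters here, and you identify the right ingredients: shape preservation of \subBase/\subFun/\subEq, transitivity of subtyping (which the paper never states as a lemma and which you would need to prove as a small auxiliary induction, as you sketch), inversion of \subFun out of the invariant \subEq premises, and Lemma~\ref{lem:sub-strengthening} to move the codomain subtyping into the environment $\env,\envBind{x}{\typ_x}$ that the statement demands. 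The \tself case is also handled correctly by observing that $\tycon{\con}$ for the base constants is exactly the selfified refinement.

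Two remarks. First, your observation that the function-type clause is literally false for the constant $\bbbeqsym{\tbase}$ (and its partial applications), which are values of function type but not $\lambda$-abstractions, is a genuine defect in the lemma \emph{as stated} that the paper does not acknowledge; the clause needs either a constant disjunct or a restriction, and your fix (dispatch by \tcon with reflexive subtyping) is the right one, though it changes the statement rather than proving it. Second, the statement's fourth bullet types $\val_p$ against $\eqrt{\typ''}{\expr_l\ x}{\expr_r\ x}$ while \teqfun's premise uses the bound variables $l\ x$ and $r\ x$; this mismatch is a typo in the lemma rather than a gap in your proof, but whichever form you target should be stated consistently, since the downstream symmetry and transitivity proofs invert this lemma in the $l$/$r$ form.
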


%auto-ignore
\subsection{The Binary Logical Relation}\label{proofs-logic-rel}

\begin{theorem}[$\mathsf{EqRT}$ soundness]\label{proofs-thm:eq-relation}
  If \hastype{\env}{\expr}{\eqrt{\typ}{\expr_1}{\expr_2}}, 
  then \relatesEnv{\env}{\expr_1}{\expr_2}{\typ}.
\end{theorem}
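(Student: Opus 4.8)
The plan is to derive this as a corollary of the \emph{fundamental property} of the equivalence logical relation --- reflexivity (Theorem~\ref{thm:theory:equality-logical-relation}): if $\hastype{\env}{\expr}{\typ}$ then $\relatesEnv{\env}{\expr}{\expr}{\typ}$. (This is the appendix restatement of Theorem~\ref{thm:eq-relation}, whose body sketch already points to exactly this route.) First I would instantiate that property with the hypothesis $\hastype{\env}{\expr}{\eqrt{\typ}{\expr_1}{\expr_2}}$, obtaining $\relatesEnv{\env}{\expr}{\expr}{\eqrt{\typ}{\expr_1}{\expr_2}}$.

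The second step is a purely definitional unfolding of the three layers of the relation (Figure~\ref{fig:equivalence-def}). Fix an arbitrary $\rmodel \in \env$. By the open-expression clause we get $\relates{\rmodel}{\modelapp{\rmodel_1}{\expr}}{\modelapp{\rmodel_2}{\expr}}{\eqrt{\typ}{\expr_1}{\expr_2}}$; by the closed-expression clause there are values $\val_1,\val_2$ with $\goesto{\modelapp{\rmodel_1}{\expr}}{\val_1}$, $\goesto{\modelapp{\rmodel_2}{\expr}}{\val_2}$, and $\relates{\rmodel}{\val_1}{\val_2}{\eqrt{\typ}{\expr_1}{\expr_2}}$. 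The key point is that the value clause at an equality type discards the proof values altogether and re-exposes the relation on the (now $\rmodel$-closed) term indices: $\relates{\rmodel}{\val_1}{\val_2}{\eqrt{\typ}{\expr_1}{\expr_2}}$ is by definition $\relates{\rmodel}{\modelapp{\rmodel_1}{\expr_1}}{\modelapp{\rmodel_2}{\expr_2}}{\typ}$. Since $\rmodel$ was arbitrary, this is precisely $\relatesEnv{\env}{\expr_1}{\expr_2}{\typ}$, so no further argument is needed.

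The real content --- and the step I expect to be the main obstacle --- is the fundamental property itself, proved by induction on the derivation $\hastype{\env}{\expr}{\typ}$ via one compatibility lemma per typing rule; most cases are routine, but a few carry weight. For \tsub I would first prove soundness of subtyping for the relation (if $\issubtype{\env}{\typ}{\typ'}$ and two values are related at $\typ$ under $\rmodel$, they are related at $\typ'$), taking special care in the \subEq case, where the \emph{invariant} treatment of the type index must still be shown to preserve relatedness of equality proofs. For \teqbase I would reduce to showing $\relates{\rmodel}{\modelapp{\rmodel_1}{\expr_l}}{\modelapp{\rmodel_2}{\expr_r}}{\tbase}$, i.e.\ that both indices evaluate to a \emph{common} base constant, by combining the premise that types the proof argument at $\tref{\rbind}{\tunit}{\bbbeq{l}{r}{\tbase}}$ with semantic soundness (Theorem~\ref{thm:soundness}) applied to $\expr_l$ and $\expr_r$; for \teqfun I would feed related arguments $\val_3,\val_4$ at $\typ_x$ to the (inductively related) proof function and invoke the IH to obtain relatedness of $\expr_l\,\val_3$ and $\expr_r\,\val_4$ at $\typ\subst{x}{\val_3}$. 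Throughout, the bookkeeping subtleties are the use of pending substitutions $\rmodel$ in place of ordinary closing substitutions and the fact that refinement and equality types are closed lazily while expressions are closed eagerly; discharging these requires the same substitution/weakening and parallel-reduction stability lemmas already used for semantic soundness.
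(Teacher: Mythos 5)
Your proposal is correct and takes essentially the same route as the paper: the paper's proof is likewise a one-line corollary of the fundamental property (Theorem~\ref{thm:fundamental}), followed by unfolding the definition of the logical relation at equality types, whose value clause reflects relatedness of the $\rmodel$-closed term indices at the index type. Your sketch of how the fundamental property itself is established (induction on the typing derivation, with subtyping soundness for \tsub{} and the evaluation/semantic-soundness arguments for \teqbase{} and \teqfun) also matches the paper's proof of that lemma.
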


\begin{proof}
  By \hastype{\env}{\expr}{\eqrt{\typ}{\expr_1}{\expr_2}} and 
  the Fundamental Property~\ref{thm:fundamental} we have 
  \relatesEnv{\env}{\expr}{\expr}{\eqrt{\typ}{\expr_1}{\expr_2}}.
  Thus, for a fixed $\rmodel \in \env$, 
  \relatesModel{\rmodel}{\expr}{\expr}{\eqrt{\typ}{\expr_1}{\expr_2}}.
  By the definition of the logical relation for $\texttt{EqRT}$, we have 
  \relatesModel{\rmodel}{\expr_1}{\expr_2}{\typ}.
  So, \relatesEnv{\env}{\expr_1}{\expr_2}{\typ}.
\end{proof}

\begin{lemma}[LR respects subtyping]\label{proofs-lemma:sub}
  If \relatesEnv{\env}{\expr_1}{\expr_2}{\typ} and \issubtype{\env}{\typ}{\typ'}, 
  then \relatesEnv{\env}{\expr_1}{\expr_2}{\typ'}.
\end{lemma}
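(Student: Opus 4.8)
The plan is to prove a strengthened, \emph{closed} version of the statement by induction on the subtyping derivation $\issubtype{\env}{\typ}{\typ'}$: for every $\rmodel \in \env$, if $\relates{\rmodel}{\expr_1}{\expr_2}{\typ}$ (with $\expr_1,\expr_2$ closed) then $\relates{\rmodel}{\expr_1}{\expr_2}{\typ'}$. The open statement follows immediately, since for a fixed $\rmodel \in \env$ the hypothesis $\relatesEnv{\env}{\expr_1}{\expr_2}{\typ}$ gives $\relates{\rmodel}{\modelapp{\rmodel_1}{\expr_1}}{\modelapp{\rmodel_2}{\expr_2}}{\typ}$ with both sides closed, and the strengthened claim applies. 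Because the closed-expression relation just evaluates both sides to values and defers to the value relation, it suffices for the induction to argue at the level of values.

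For the \subFun case, with $\typ = \tfun{x}{\typ_x}{\typ_1}$ and $\typ' = \tfun{x}{\typ'_x}{\typ'_1}$, I would fix related arguments $\relates{\rmodel}{\val_3}{\val_4}{\typ'_x}$, apply the IH to the (contravariant) premise $\issubtype{\env}{\typ'_x}{\typ_x}$ to obtain $\relates{\rmodel}{\val_3}{\val_4}{\typ_x}$, plug these into the hypothesis to get $\relates{\rmodel,\rmodelBind{\val_3}{\val_4}{x}}{\val_1\ \val_3}{\val_2\ \val_4}{\typ_1}$, and then apply the IH to the covariant premise $\issubtype{\env,\envBind{x}{\typ'_x}}{\typ_1}{\typ'_1}$ — noting $\rmodel,\rmodelBind{\val_3}{\val_4}{x} \in \env,\envBind{x}{\typ'_x}$ because $\rmodel \in \env$ and $\val_3,\val_4$ are related at $\typ'_x$. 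For the \subEq case, with $\typ = \eqrt{\typ_i}{\expr_l}{\expr_r}$ and $\typ' = \eqrt{\typ'_i}{\expr_l}{\expr_r}$, the value relation on equality types reduces relatedness of proof values to relatedness of the equated expressions $\modelapp{\rmodel_1}{\expr_l}$ and $\modelapp{\rmodel_2}{\expr_r}$ at the index; since the relation never inspects the proof terms (proof irrelevance), I would simply apply the IH to the premise $\issubtype{\env}{\typ_i}{\typ'_i}$ (the invariant rule supplies both directions; only this one is needed).

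The interesting case is \subBase, $\typ = \tref{\rbind}{\tbase}{\refa}$ and $\typ' = \tref{\rbind'}{\tbase}{\refa'}$, whose premise is the \emph{unary} inclusion $\forall \model\in\interp{\env},\ \interp{\modelapp{\model}{\typ}} \subseteq \interp{\modelapp{\model}{\typ'}}$. From $\relates{\rmodel}{\val_1}{\val_2}{\typ}$ we know $\val_1 = \val_2 = \con$ for some constant with $\hasbtype{}{\con}{\tbase}$ and $\goesto{\modelapp{\rmodel_i}{\refa\subst{\rbind}{\con}}}{\etrue}$ for $i = 1,2$. The bridge I need is that each projection $\rmodel_i$ of a $\rmodel \in \env$ is a closing substitution in $\interp{\env}$; then instantiating the premise at $\model = \rmodel_i$ gives $\con = \val_i \in \interp{\modelapp{\rmodel_i}{\typ}} \subseteq \interp{\modelapp{\rmodel_i}{\typ'}}$, which unpacks to $\goesto{(\modelapp{\rmodel_i}{\refa'})\subst{\rbind'}{\con}}{\etrue}$ — exactly the obligation for $\relates{\rmodel}{\con}{\con}{\typ'}$. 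This bridge rests on an auxiliary lemma that binary relatedness implies unary membership, $\relates{\rmodel}{\val_1}{\val_2}{\sigma} \Rightarrow \val_1 \in \interp{\modelapp{\rmodel_1}{\sigma}} \wedge \val_2 \in \interp{\modelapp{\rmodel_2}{\sigma}}$, proved by structural induction on $\sigma$; the refinement and function cases are immediate, and the equality-type case must read the proof shape off the operational behaviour recorded by the relation. Reconciling the binary relation's pending substitutions with the closing-substitution presentation of semantic subtyping in this \subBase case — and in particular handling equality-typed bindings of $\env$ there despite the relation's proof irrelevance — is the main obstacle; everything else is a routine application of the inductive hypotheses.
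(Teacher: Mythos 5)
Your proposal is correct and follows essentially the same route as the paper's proof: induction on the subtyping derivation, with the contravariant/covariant IH split for \subFun, reduction of equality-proof relatedness to relatedness at the type index for \subEq (using only one direction of the invariant premise), and instantiation of the semantic inclusion premise for \subBase. You are in fact more careful than the paper on the one delicate point: the paper silently instantiates the unary inclusion $\forall \model\in\interp{\env}$ at the projections $\rmodel_1,\rmodel_2$ of the binary pending substitution, whereas you correctly identify that this needs a bridge lemma from binary relatedness to unary membership and that proof irrelevance at equality-typed bindings makes that bridge the only genuinely nontrivial obligation.
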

\begin{proof}
By induction on the derivation of the subtyping tree.
\begin{itemize}
  \item[\subBase]
  By assumption we have 
  \begin{enumerate}
    \item \relatesEnv{\env}{\expr_1}{\expr_2}{\tref{\rbind}{\tbase}{\refa}}
    \item \issubtype{\env}{\tref{\rbind}{\tbase}{\refa}}{\tref{\rbind'}{\tbase}{\refa'}}
  \end{enumerate}

  By inversion on (2) we get 
  \begin{enumerate}
    \setcounter{enumi}{2}
    \item $  \forall \model\in\interp{\env},~ \interp{\modelapp{\model}{\tref{\rbind}{\tbase}{\refa}}}
    \subseteq \interp{\modelapp{\model}{\tref{\rbind'}{\tbase}{\refa'}}}$
  \end{enumerate}
  
  We fix $\rmodel\in\env$.
  By (1) we get 
  \begin{enumerate}
    \setcounter{enumi}{3}
    \item \relatesModel{\rmodel}{\expr_1}{\expr_2}{\tref{\rbind}{\tbase}{\refa}}
  \end{enumerate}
  By the definition of logical relations: 
  \begin{enumerate}
    \setcounter{enumi}{4}
    \item \goesto{\modelapp{\rmodel_1}{\expr_1}}{\val_1}
    \item \goesto{\modelapp{\rmodel_2}{\expr_2}}{\val_2}
    \item \relates{\rmodel}{\val_1}{\val_2}{\tref{\rbind}{\tbase}{\refa}}
  \end{enumerate}
  By (7) and the definition of the logical relation on basic types we have 
  \begin{enumerate}
    \setcounter{enumi}{7}
  \item $\val_1 = \val_2 = c$
  \item \hasbtype{}{\con}{\tbase}  
  \item \goesto{\modelapp{\rmodel_1}{\refa\subst{\rbind}{\con}}}{\etrue} 
  \item \goesto{\modelapp{\rmodel_2}{\refa\subst{\rbind}{\con}}}{\etrue}  
\end{enumerate}

By (3), (10) and (11) become 
\begin{enumerate}
  \setcounter{enumi}{11}
\item \goesto{\modelapp{\rmodel_1}{\refa'\subst{\rbind'}{\con}}}{\etrue} 
\item \goesto{\modelapp{\rmodel_2}{\refa'\subst{\rbind'}{\con}}}{\etrue}  
\end{enumerate}
By (8), (9), (12), and (13) we get 

\begin{enumerate}
  \setcounter{enumi}{13}
  \item \relates{\rmodel}{\val_1}{\val_2}{\tref{\rbind'}{\tbase}{\refa'}}
\end{enumerate}

By (5), (6), and (14) we have 
  \begin{enumerate}
    \setcounter{enumi}{14}
    \item \relatesModel{\rmodel}{\expr_1}{\expr_2}{\tref{\rbind'}{\tbase}{\refa'}}
  \end{enumerate}

  Thus, \relatesEnv{\env}{\expr_1}{\expr_2}{\tref{\rbind'}{\tbase}{\refa'}}.

  \item[\subFun]
  By assumption: 
  \begin{enumerate}
    \item \relatesEnv{\env}{\expr_1}{\expr_2}{\tfun{x}{\typ_x}{\typ}}
    \item \issubtype{\env}{\tfun{x}{\typ_x}{\typ}}{\tfun{x}{\typ'_x}{\typ'}}
  \end{enumerate}
  By inversion of the rule (2)
  \begin{enumerate}
    \setcounter{enumi}{2}
    \item \issubtype{\env}{\typ'_x}{\typ_x}  
    \item \issubtype{\env, \envBind{x}{\typ'_x}}{\typ}{\typ'}
\end{enumerate}
  We fix $\rmodel\in\env$. 
  By (1) and the definition of logical relation 
 \begin{enumerate}
  \setcounter{enumi}{4}
  \item \goesto{\modelapp{\rmodel_1}{\expr_1}}{\val_1}
  \item \goesto{\modelapp{\rmodel_2}{\expr_2}}{\val_2}
  \item \relates{\rmodel}{\val_1}{\val_2}{\tfun{x}{\typ_x}{\typ}}
\end{enumerate}
We fix $\val'_1$ and $\val'_2$ so that 
\begin{enumerate}
  \setcounter{enumi}{7}
  \item \relates{\rmodel}{\val'_1}{\val'_2}{\typ'_x}
\end{enumerate}
By (8) and the definition of logical relations, since the values are idempotent under substitution, 
we have 
\begin{enumerate}
  \setcounter{enumi}{8}
  \item \relatesEnv{\env}{\val'_1}{\val'_2}{\typ'_x}
\end{enumerate}

By (9) and inductive hypothesis on (3) we have 
\begin{enumerate}
  \setcounter{enumi}{9}
  \item \relatesEnv{\env}{\val'_1}{\val'_2}{\typ_x}
\end{enumerate}

By (10), idempotence of values under substitution, 
and the definition of logical relations, we have 
\begin{enumerate}
  \setcounter{enumi}{10}
  \item \relates{\rmodel}{\val'_1}{\val'_2}{\typ_x}
\end{enumerate}

By (7), (11), and the definition of logical relations on function values:
\begin{enumerate}
  \setcounter{enumi}{11}
  \item \relates{\rmodel,\rmodelBind{\val'_1}{\val'_2}{x}}{\val_1\ \val'_1}{\val_2\ \val'_2}{\typ}
\end{enumerate}

By (9), (12), and the definition of logical relations we have 
\begin{enumerate}
  \setcounter{enumi}{11}
  \item \relatesEnv{\env,\envBind{x}{\typ'_x}}{\val_1\ \val'_1}{\val_2\ \val'_2}{\typ}
\end{enumerate}

By (12) and inductive hypothesis on (4) we have 
\begin{enumerate}
  \setcounter{enumi}{12}
  \item \relatesEnv{\env,\envBind{x}{\typ'_x}}{\val_1\ \val'_1}{\val_2\ \val'_2}{\typ'}
\end{enumerate}

By (8), (13), and the definition of logical relations, we have 

\begin{enumerate}
  \setcounter{enumi}{13}
  \item \relates{\rmodel,\rmodelBind{\val'_1}{\val'_2}{x}}{\val_1\ \val'_1}{\val_2\ \val'_2}{\typ'}
\end{enumerate}

By (8), (14), and the definition of logical relations, we have 

\begin{enumerate}
  \setcounter{enumi}{14}
  \item \relates{\rmodel}{\val_1}{\val_2}{\tfun{x}{\typ'_x}{\typ'}}
\end{enumerate}

By (5), (6), and (15), we get 

\begin{enumerate}
  \setcounter{enumi}{15}
  \item \relatesModel{\rmodel}{\expr_1}{\expr_2}{\tfun{x}{\typ'_x}{\typ'}}
\end{enumerate}

So, \relatesEnv{\env}{\expr_1}{\expr_2}{\tfun{x}{\typ'_x}{\typ'}}.

\item[\subEq]  
By hypothesis:
\begin{enumerate}
  \item  \relatesEnv{\env}{\expr_1}{\expr_2}{\eqrt{\typ}{\expr_l}{\expr_r}} 
  \item  \issubtype{\env}{\eqrt{\typ}{\expr_l}{\expr_r}}{\eqrt{\typ'}{\expr_l}{\expr_r}}
\end{enumerate}
We fix $\rmodel\in\env$.
By (1)
\begin{enumerate}
  \setcounter{enumi}{2}
  \item  \relatesModel{\rmodel}{\expr_1}{\expr_2}{\eqrt{\typ}{\expr_l}{\expr_r}} 
\end{enumerate}

By (3) and the definition of logical relations. 
\begin{enumerate}
  \setcounter{enumi}{3}
  \item \goesto{\modelapp{\rmodel_1}{\expr_1}}{\val_1}
  \item \goesto{\modelapp{\rmodel_2}{\expr_2}}{\val_2}
  \item \relates{\rmodel}{\val_1}{\val_2}{\eqrt{\typ}{\expr_l}{\expr_r}}
\end{enumerate}
By (6) and the definition of logical relations 
\begin{enumerate}
  \setcounter{enumi}{6}
  \item \relates{\rmodel}{\modelapp{\rmodel_1}{\expr_l}}{\modelapp{\rmodel_2}{\expr_r}}{\typ}
\end{enumerate}

By (7) and the definition of logical relations. 
\begin{enumerate}
  \setcounter{enumi}{7}
  \item \relatesEnv{\env}{{\expr_l}}{{\expr_r}}{\typ}
\end{enumerate}

By inversion on (2)
\begin{enumerate}
  \setcounter{enumi}{8}
  \item \issubtype{\env}{\typ}{\typ'}
  \item \issubtype{\env}{\typ'}{\typ}
\end{enumerate}

By (8) and inductive hypothesis on (9) 

\begin{enumerate}
  \setcounter{enumi}{10}
  \item \relatesEnv{\env}{{\expr_l}}{{\expr_r}}{\typ'}
\end{enumerate}

Thus,

\begin{enumerate}
  \setcounter{enumi}{11}
  \item \relatesModel{\delta}{{\expr_l}}{{\expr_r}}{\typ'}
\end{enumerate}

By (12), (4), (5), and determinism of operational semantics: 
\begin{enumerate}
  \setcounter{enumi}{11}
  \item \relates{\rmodel}{\val_1}{\val_2}{\eqrt{\typ'}{\expr_l}{\expr_r}}
\end{enumerate}

By (4), (5), and (13)
\begin{enumerate}
  \setcounter{enumi}{13}
  \item  \relatesModel{\rmodel}{\expr_1}{\expr_2}{\eqrt{\typ'}{\expr_l}{\expr_r}} 
\end{enumerate}

So, by definition of logical relations, 
\relatesEnv{\env}{\expr_1}{\expr_2}{\eqrt{\typ'}{\expr_l}{\expr_r}}. 
\end{itemize}
\end{proof}

\begin{lemma}[Constant soundness]\label{proofs-lemma:const}
  \relatesEnv{\env}{\con}{\con}{\tycon{\con}}
\end{lemma}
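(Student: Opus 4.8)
The plan is to proceed by case analysis on the four constants, lifting the reduction computations of Theorem~\ref{theorem:constant-property} to the binary relation. First I would observe that both $\con$ and $\tycon{\con}$ are closed (in $\tycon{\bbbeqsym{\tbase}}$ the only variables, $x$, $y$, and $\vv$, are all bound), so for any $\rmodel \in \env$ the projections act as the identity: $\modelapp{\rmodel_1}{\con} = \modelapp{\rmodel_2}{\con} = \con$, and likewise $\rmodel_1$, $\rmodel_2$ are identities on $\tycon{\con}$ and on any refinement substituted out of it. Since $\con$ is already a value we have $\goesto{\con}{\con}$, so it suffices to establish the \emph{value} relation $\relates{\rmodel}{\con}{\con}{\tycon{\con}}$ for an arbitrary fixed $\rmodel \in \env$.

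For $\con \in \{\etrue, \efalse, \eunit\}$ the goal is the value relation at $\tref{\rbind}{\tbase}{\bbbeq{\rbind}{\con}{\tbase}}$, which unfolds to three obligations: the two related values are the same constant (immediate); $\hasbtype{}{\con}{\tbase}$ (by $\tbcon$, since the unrefinement of $\tycon{\con}$ is $\tbase$); and $\goesto{\modelapp{\rmodel_i}{(\bbbeq{\rbind}{\con}{\tbase})\subst{\rbind}{\con}}}{\etrue}$ for $i = 1, 2$. As the refinement is closed, both reduce to $\goesto{\bbbeq{\con}{\con}{\tbase}}{\etrue}$, which is exactly the calculation in the corresponding bullet of Theorem~\ref{theorem:constant-property}. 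For $\con = \bbbeqsym{\tbase}$ I would unfold the function value relation twice. Given $\relates{\rmodel}{\val_3}{\val_4}{\tbase}$, the base-type value relation forces $\val_3 = \val_4 = \con_x$ for a $\tbase$-constant $\con_x$; a second such argument collapses to a $\tbase$-constant $\con_y$. Then $\bbbeqsym{\tbase}\ \con_x\ \con_y$ reduces deterministically (through the intermediate constant $\bbbeq{(}{)}{(\con_x,\tbase)}$) to the boolean $b := (\con_x \stackrel{?}{=} \con_y)$ on both sides, so the expression relation reduces to the value relation at $\tref{\vv}{\tbool}{\bbbeq{\vv}{(\bbbeq{x}{y}{\tbase})}{\tbool}}$. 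Here the extended pending substitution binds $x$ to $(\con_x, \con_x)$ and $y$ to $(\con_y, \con_y)$, so closing the refinement at $\val := b$ under either $\rmodel_1$ or $\rmodel_2$ yields $\bbbeq{b}{(\bbbeq{\con_x}{\con_y}{\tbase})}{\tbool}$, which evaluates to $\etrue$ exactly as in the last bullet of Theorem~\ref{theorem:constant-property}.

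The work is almost entirely routine and parallels the unary Property~\ref{property:constants}. The one point worth isolating is the observation that the binary relation at a basic type $\tbase$ forces the two related values to be \emph{syntactically identical}: this is precisely what makes the $\bbbeqsym{\tbase}$ case go through, since it lets us replace each pair of related arguments by a single constant and thereby reduce the codomain obligation to the already-established unary reduction. A small bookkeeping observation — that pending substitutions act as the identity on closed expressions and that each $\tycon{\con}$ is closed — is the other ingredient, and it is immediate from the definitions.
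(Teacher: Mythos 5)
Your proposal is correct and matches the paper's approach: the paper's proof simply says it ``follows the same steps as Theorem~\ref{theorem:constant-property}'', and you have spelled out exactly those steps, adapted to the binary relation. Your isolated observation---that the base-type value relation forces the two related values to be the same constant, which is what lets the $\bbbeqsym{\tbase}$ case collapse to the unary computation---is the right key point and is implicit in the paper's appeal to the unary theorem.
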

\begin{proof}
  The proof follows the same steps as Theorem~\ref{theorem:constant-property}.
\end{proof}

\begin{lemma}[Selfification of constants]\label{proofs-lemma:self}
  If \relatesEnv{\env}{\expr}{\expr}{\tref{\vv}{\tbase}{\refa}}
  then \relatesEnv{\env}{x}{x}{\tref{\vv}{\tbase}{\bbbeq{\vv}{x}{\tbase}}}.
\end{lemma}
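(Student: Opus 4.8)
The plan is to unfold the definition of the open equivalence relation and reduce the goal to one short computation with the built-in equality $\bbbeqsym{\tbase}$, exactly as in the constant cases of Theorem~\ref{theorem:constant-property}. So first I would fix an arbitrary pending substitution $\rmodel \in \env$. The hypothesis $\relatesEnv{\env}{\expr}{\expr}{\tref{\vv}{\tbase}{\refa}}$ gives $\relates{\rmodel}{\modelapp{\rmodel_1}{\expr}}{\modelapp{\rmodel_2}{\expr}}{\tref{\vv}{\tbase}{\refa}}$, hence by the expression relation there are values $\val_1, \val_2$ with $\goesto{\modelapp{\rmodel_1}{\expr}}{\val_1}$, $\goesto{\modelapp{\rmodel_2}{\expr}}{\val_2}$, and $\relates{\rmodel}{\val_1}{\val_2}{\tref{\vv}{\tbase}{\refa}}$. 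Unfolding the value relation at a refined base type yields $\val_1 = \val_2 = \con$ for a single constant $\con$ with $\hasbtype{}{\con}{\tbase}$ (the accompanying fact $\goesto{\modelapp{\rmodel_i}{\refa\subst{\vv}{\con}}}{\etrue}$ will not even be needed).

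Next, it suffices to establish $\relates{\rmodel}{\con}{\con}{\tref{\vv}{\tbase}{\bbbeq{\vv}{\expr}{\tbase}}}$ and then lift it back through the reductions $\goesto{\modelapp{\rmodel_i}{\expr}}{\con}$ to obtain $\relates{\rmodel}{\modelapp{\rmodel_1}{\expr}}{\modelapp{\rmodel_2}{\expr}}{\tref{\vv}{\tbase}{\bbbeq{\vv}{\expr}{\tbase}}}$. Unfolding the value relation at the selfified type, I need $\hasbtype{}{\con}{\tbase}$, which is already in hand, together with $\goesto{\modelapp{\rmodel_i}{(\bbbeq{\vv}{\expr}{\tbase})}\subst{\vv}{\con}}{\etrue}$ for each $i \in \{1,2\}$, i.e. $\goesto{\bbbeq{\con}{\modelapp{\rmodel_i}{\expr}}{\tbase}}{\etrue}$. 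This is the one computation in the proof: since $\modelapp{\rmodel_i}{\expr}$ reduces to the constant $\con$, the evaluation-context rules let us evaluate the second argument, after which $\bbbeqsym{\tbase}\ \con\ \con$ reduces in two steps to $\con = \con$, which is $\etrue$ by the syntactic-equality-on-constants rule. Finally, since $\rmodel$ was arbitrary, we conclude $\relatesEnv{\env}{\expr}{\expr}{\tref{\vv}{\tbase}{\bbbeq{\vv}{\expr}{\tbase}}}$ (reading the index $x$ in the statement as this expression $\expr$, matching the \tself rule).

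There is essentially no obstacle here; the only points demanding a little care are keeping the pending substitution $\rmodel$ lazy -- it is applied only inside the refinement, per the closed-value relation -- and ordering the one-step reductions of $\bbbeqsym{\tbase}$ through the correct evaluation contexts, both of which are already handled in Theorem~\ref{theorem:constant-property}. Accordingly I would state simply that the proof follows the same steps as Theorem~\ref{theorem:constant-property}, spelling out only the reduction of $\bbbeq{\con}{\modelapp{\rmodel_i}{\expr}}{\tbase}$ to $\etrue$.
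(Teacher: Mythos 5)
Your proof is correct, and the core computation is the same as the paper's: unfold the relation, observe that both sides denote the same constant $\con$ of base type, and discharge the selfified refinement by running $\bbbeqsym{\tbase}$ to $\etrue$. The one real difference is which object you take the lemma to be about. The paper's own proof reads the subject as a \emph{variable} $x$ (matching the conclusion as written): the pending substitution immediately yields $\rmodel_i(x) = \con$, so the refinement check $\goesto{\bbbeq{\con}{\con}{\tbase}}{\etrue}$ is a closed two-step computation with no evaluation of an open term required. You instead read the subject as a general expression $\expr$ (matching the hypothesis as written), which forces the extra step of evaluating $\modelapp{\rmodel_i}{\expr}$ to $\con$ inside the refinement via the evaluation contexts before the syntactic-equality rule can fire, and then transporting the value-level result back along $\goesto{\modelapp{\rmodel_i}{\expr}}{\con}$ (Lemma~\ref{proofs-lemma:app}). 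Your version is strictly more general and is, in effect, the \tself case of the fundamental property (Theorem~\ref{thm:fundamental}), which the paper proves separately anyway; the paper's variable-only reading buys a shorter proof at the cost of covering less. One pedantic note: under call-by-value with contexts $\ectx\ \expr$ and $\val\ \ectx$, the step $\evals{\bbbeq{(}{)\ \con}{\tbase}}{\bbbeq{(}{)}{(\con,\tbase)}}$ must occur \emph{before} the second argument may be evaluated, not after, but this does not affect the multi-step reduction to $\etrue$.
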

\begin{proof}
We fix $\rmodel \in \env$. 
By hypothesis $\rmodelBind{\val_1}{\val_2}{x} \in \rmodel$ with 
\relates{\rmodel}{\val_1}{\val_2}{\tref{\vv}{\tbase}{\refa}}.
We need to show that 
\relatesModel{\rmodel}{x}{x}{\tref{\vv}{\tbase}{\bbbeq{\vv}{x}{\tbase}}}.
Which reduces to 
\relates{\rmodel}{\val_1}{\val_2}{\tref{\vv}{\tbase}{\bbbeq{\vv}{x}{\tbase}}}.
By the definition on the logical relation on basic values, we know 
$\val_1 = \val_2 = c$ and \hasbtype{}{\con}{\tbase}. 
Thus, we are left to prove that   
\goesto{\modelapp{\rmodel_1}{((\bbbeq{\vv}{x}{\tbase})\subst{\vv}{\con})}}{\etrue} and  
\goesto{\modelapp{\rmodel_2}{((\bbbeq{\vv}{x}{\tbase})\subst{\vv}{\con})}}{\etrue} 
which, both, trivially hold by the definition of \bbbeqsym{\tbase}.
\end{proof}

\begin{lemma}[Variable soundness]\label{proofs-lemma:var}
  If $\envBind{x}{\typ} \in \env$, 
  then \relatesEnv{\env}{x}{x}{\typ}.
\end{lemma}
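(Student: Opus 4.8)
The plan is to simply unfold the definitions of the open and closed equivalence relations (Figure~\ref{fig:equivalence-def}) and read the conclusion off from the hypothesis $\envBind{x}{\typ} \in \env$. This lemma is the base case that anchors the ``fundamental property''-style argument for the logical relation, so it should be essentially definitional.

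First I would fix an arbitrary pending substitution $\rmodel$ with $\rmodel \in \env$; by the definition of open expression equivalence it suffices to show $\relates{\rmodel}{\modelapp{\rmodel_1}{x}}{\modelapp{\rmodel_2}{x}}{\typ}$. Substituting a pending substitution into a variable just looks the variable up, so $\modelapp{\rmodel_1}{x} = \rmodel_1(x)$ and $\modelapp{\rmodel_2}{x} = \rmodel_2(x)$, and these are values, since pending substitutions (like closing substitutions) map variables to values. A value reduces to itself in zero steps, so by the definition of closed expression equivalence the goal reduces to the value relation $\relates{\rmodel}{\rmodel_1(x)}{\rmodel_2(x)}{\typ}$.

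Finally, this last statement is exactly what the definition of $\rmodel \in \env$ provides, instantiated at the binding $\envBind{x}{\typ} \in \env$. Note that $\typ$ may itself mention variables bound earlier in $\env$; this causes no difficulty because the value relation carries the pending substitution $\rmodel$ and applies it lazily in the refinement and equality cases, precisely as the definition of $\rmodel \in \env$ already does. Hence $\relatesEnv{\env}{x}{x}{\typ}$.

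There is essentially no obstacle here: the lemma follows immediately once the definitions are unrolled. The only points worth stating carefully are (i) the bridge from the expression-level relation, which quantifies over evaluation to values, to the value-level relation, justified by the fact that $\rmodel_i(x)$ is already a value; and (ii) the bookkeeping observation that $\modelapp{\rmodel_i}{x}$ denotes the looked-up value rather than anything requiring further reduction or substitution in the type index.
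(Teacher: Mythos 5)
Your proof is correct and takes exactly the same route as the paper's: unfold the open-expression relation, observe that applying the pending substitution to $x$ yields the values $\rmodel_1(x)$ and $\rmodel_2(x)$ (so the expression-level relation collapses to the value relation in zero steps), and conclude directly from the definition of $\rmodel \in \env$. The paper's own proof is a one-liner that elides the expression-to-value bridge you spell out, but the argument is the same.
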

\begin{proof}
By the definition of the logical relation it suffices to show that 
$\forall \rmodel \in \env. \relates{\rmodel}{\rmodel_1(x)}{\rmodel_2(x)}{\typ}$; 
which is trivially true by the definition of $\rmodel\in\env$.
\end{proof}

\begin{lemma}[Transitivity of Evaluation]\label{proofs-lemma:op-transitivity}
  If \goesto{\expr}{\expr'}, then 
  \goesto{\expr}{\val} \textit{iff} \goesto{\expr'}{\val}.
\end{lemma}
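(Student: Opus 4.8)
The forward implication needs nothing: $\goesto{\cdot}{\cdot}$ is by definition the reflexive transitive closure of $\evals{\cdot}{\cdot}$, so from $\goesto{\expr}{\expr'}$ and $\goesto{\expr'}{\val}$ we conclude $\goesto{\expr}{\val}$ by transitivity. The plan for the backward implication, $\goesto{\expr}{\val} \Rightarrow \goesto{\expr'}{\val}$, rests on one structural fact: the single-step relation $\evals{\cdot}{\cdot}$ is \emph{deterministic}, i.e.\ $\evals{\expr}{\expr_1}$ and $\evals{\expr}{\expr_2}$ imply $\expr_1 = \expr_2$. This follows in the standard way from \emph{unique decomposition} of the call-by-value evaluation contexts $\ectx$ of Figure~\ref{fig:coresyntax}: every non-value expression is writable in exactly one way as $\ctxapp{\ectx}{r}$ where $r$ is a redex ($\beta$-reduction of a $\lambda$-abstraction applied to a value, or one of the two reduction steps for the base-type equality operator). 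Both sub-facts are routine inductions --- on the structure of $\ectx$ for unique decomposition, and hence on the reduction derivation (really on $\expr$) for determinism --- and I would state determinism as its own sub-lemma.

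Given determinism, I would prove the backward implication by induction on the length $n$ of the sequence witnessing $\goesto{\expr}{\expr'}$. If $n = 0$ then $\expr' = \expr$ and there is nothing to show. If $n > 0$, the sequence factors as $\evals{\expr}{\expr_1}$ followed by $\goesto{\expr_1}{\expr'}$ of length $n - 1$. Since $\expr$ takes a step it is not a value, so the assumed reduction $\goesto{\expr}{\val}$ is also of positive length and factors as $\evals{\expr}{\expr_1'}$ followed by $\goesto{\expr_1'}{\val}$. Determinism gives $\expr_1' = \expr_1$, hence $\goesto{\expr_1}{\val}$; applying the induction hypothesis to $\goesto{\expr_1}{\expr'}$ and $\goesto{\expr_1}{\val}$ yields $\goesto{\expr'}{\val}$, as desired.

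There is no real obstacle here --- the content is just isolating determinism of $\evals{\cdot}{\cdot}$, which is immediate for a contextual, left-to-right, call-by-value semantics with disjoint redexes. Equivalently, one can phrase the whole argument as a one-sided confluence statement, proved by the same induction: if $\goesto{\expr}{\expr'}$ and $\goesto{\expr}{\expr''}$ then $\goesto{\expr'}{\expr''}$ or $\goesto{\expr''}{\expr'}$; instantiating $\expr'' = \val$ and observing that a value cannot reduce (so $\goesto{\val}{\expr'}$ forces $\val = \expr'$) then recovers the lemma.
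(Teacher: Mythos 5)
Your proof is correct and follows essentially the same route as the paper's: both directions reduce to transitivity of $\goesto{\cdot}{\cdot}$ plus determinism of the single-step relation $\evals{\cdot}{\cdot}$, which forces $\expr'$ to lie on the unique reduction sequence from $\expr$ to $\val$. You are somewhat more explicit than the paper — which simply asserts determinism "by definition" and reads $\expr'$ off the unique sequence — in that you isolate determinism as a sub-lemma via unique decomposition of evaluation contexts and organize the hard direction as an induction on the length of $\goesto{\expr}{\expr'}$, but the underlying argument is the same.
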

\begin{proof}
  Assume \goesto{\expr}{\val}. 
  Since the \evals{}{} is by definition deterministic, 
  there exists a unique sequence 
  $\expr \evals{}{} \expr_1 \evals{}{} \dots \evals{}{} \expr_i \evals{}{} \dots \evals{}{} \val$.
  By assumption, \goesto{\expr}{\expr'}, so there exists a $j$, so $\expr' \equiv \expr_j$, 
  and \goesto{\expr'}{\val} following the same sequence. 

  Assume \goesto{\expr'}{\val}. 
  Then \goesto{\expr}{\goesto{\expr'}{\val}} uniquely evaluates $\expr$ to $\val$. 
\end{proof}

\begin{lemma}[LR closed under evaluation]\label{proofs-lemma:app}
  If \goesto{\expr_1}{\expr'_1}, 
  \goesto{\expr_2}{\expr'_2}, then 
  \relates{\rmodel}{\expr'_1}{\expr'_2}{\typ}
  \textit{iff} \relates{\rmodel}{\expr_1}{\expr_2}{\typ}. 
\end{lemma}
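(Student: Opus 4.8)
The plan is to reduce this lemma to the determinism of the call-by-value operational semantics, which is already packaged as Lemma~\ref{proofs-lemma:op-transitivity} (transitivity of evaluation), together with the definition of the expression equivalence relation. Recall that $\relates{\rmodel}{\expr_1}{\expr_2}{\typ}$ holds exactly when $\goesto{\expr_1}{\val_1}$, $\goesto{\expr_2}{\val_2}$, and $\relates{\rmodel}{\val_1}{\val_2}{\typ}$ for some values $\val_1, \val_2$; crucially, the value part of the relation depends only on the resulting values, not on the reduction path used to reach them. So the whole proof is a single unfold/refold on each side.

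For the forward direction I would assume $\relates{\rmodel}{\expr'_1}{\expr'_2}{\typ}$. Unfolding the definition yields values $\val_1,\val_2$ with $\goesto{\expr'_1}{\val_1}$, $\goesto{\expr'_2}{\val_2}$, and $\relates{\rmodel}{\val_1}{\val_2}{\typ}$. Applying Lemma~\ref{proofs-lemma:op-transitivity} to the hypothesis $\goesto{\expr_1}{\expr'_1}$ (instantiating its $\expr$, $\expr'$, $\val$ with $\expr_1$, $\expr'_1$, $\val_1$) gives $\goesto{\expr_1}{\val_1}$, and similarly $\goesto{\expr_2}{\val_2}$ from $\goesto{\expr_2}{\expr'_2}$. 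Re-packaging these two facts with the unchanged $\relates{\rmodel}{\val_1}{\val_2}{\typ}$ yields $\relates{\rmodel}{\expr_1}{\expr_2}{\typ}$, as required.

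The backward direction is entirely symmetric: from $\relates{\rmodel}{\expr_1}{\expr_2}{\typ}$ we extract $\goesto{\expr_1}{\val_1}$, $\goesto{\expr_2}{\val_2}$, and $\relates{\rmodel}{\val_1}{\val_2}{\typ}$, then invoke the other direction of Lemma~\ref{proofs-lemma:op-transitivity} (again using $\goesto{\expr_1}{\expr'_1}$ and $\goesto{\expr_2}{\expr'_2}$ as hypotheses) to obtain $\goesto{\expr'_1}{\val_1}$ and $\goesto{\expr'_2}{\val_2}$, concluding $\relates{\rmodel}{\expr'_1}{\expr'_2}{\typ}$.

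There is no substantive obstacle here; the only point worth noting is that Lemma~\ref{proofs-lemma:op-transitivity} already encapsulates the determinism argument, so no fresh reasoning about $\evals{}{}$ is needed. Were one to inline that lemma, the content would be the standard observation that $\evals{}{}$ is a partial function, hence $\expr_i$ has a unique reduction sequence which must pass through $\expr'_i$; consequently $\expr_i$ and $\expr'_i$ co-terminate at the same value (and co-diverge). In particular, the property ``converges to a value at all'' is itself preserved in both directions, which is why the stated \textit{iff} does not degenerate when one side fails to reduce to a value.
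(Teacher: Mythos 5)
Your proposal is correct and follows essentially the same route as the paper's proof: unfold the definition of the expression relation to obtain the values, transfer the reduction-to-value facts across the given evaluations using Lemma~\ref{proofs-lemma:op-transitivity}, and repackage, with the converse direction symmetric. No gaps.
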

\begin{proof}
Assume \relates{\rmodel}{\expr'_1}{\expr'_2}{\typ}, by the definition of 
the logical relation on closed terms we have 
\goesto{\expr'_1}{\val_1}, \goesto{\expr'_2}{\val_2}, 
and \relates{\rmodel}{\val_1}{\val_2}{\typ}.
By Lemma~\ref{proofs-lemma:op-transitivity} and 
by assumption, \goesto{\expr_1}{\expr'_1} and \goesto{\expr_2}{\expr'_2}, 
we have \goesto{\expr_1}{\val_1} and \goesto{\expr_2}{\val_2}.
By which and \relates{\rmodel}{\val_1}{\val_2}{\typ} we get that 
\relates{\rmodel}{\expr_1}{\expr_2}{\typ}.
The other direction is identical.
\end{proof}

\begin{lemma}[LR closed under parallel reduction]\label{proofs-lemma:lr-parred}
  If \parredsto{\expr_1}{\expr'_1}, 
  \parredsto{\expr_2}{\expr'_2}, and 
  \relates{\rmodel}{\expr'_1}{\expr'_2}{\typ},
  then \relates{\rmodel}{\expr_1}{\expr_2}{\typ}. 
\mmg{another iff, if we want it}
\begin{proof}
By induction on \typ, using parallel reduction as a backward
simulation (Corollary~\ref{cor:parred-backward-simulation}).
\todo{fill out cases}
\end{proof}
\end{lemma}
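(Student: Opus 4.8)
The plan is to prove the lemma by induction on the structure of the type $\typ$ (well-founded for the logical relation, since $\eqrt{\typ}{\expr_1}{\expr_2}$ is structurally larger than $\typ$), reducing it to the analogous statement about closed \emph{values} and relying on the fact that parallel reduction coheres with call-by-value evaluation. First I would unfold the hypothesis $\relates{\rmodel}{\expr'_1}{\expr'_2}{\typ}$ using the expression-level clause of Figure~\ref{fig:equivalence-def}: there are values $\val'_1,\val'_2$ with $\goesto{\expr'_1}{\val'_1}$, $\goesto{\expr'_2}{\val'_2}$ and $\relates{\rmodel}{\val'_1}{\val'_2}{\typ}$. Since $\parredsto{\expr_i}{\expr'_i}$ and $\expr'_i$ converges, cotermination of parallel reduction with evaluation (Corollary~\ref{cor:parred-backward-simulation} together with Corollary~\ref{cor:cotermination-multi}) yields values $\val_1,\val_2$ with $\goesto{\expr_i}{\val_i}$ and $\parredsto{\val_i}{\val'_i}$. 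Re-folding the expression clause, it then suffices to prove the value-level claim: if $\parredsto{\val_1}{\val'_1}$, $\parredsto{\val_2}{\val'_2}$ and $\relates{\rmodel}{\val'_1}{\val'_2}{\typ}$, then $\relates{\rmodel}{\val_1}{\val_2}{\typ}$.

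I would establish the value-level claim by the same induction on $\typ$, using the expression-level statement as the inductive hypothesis at strictly smaller types. For $\typ=\tref{\rbind}{\tbase}{\refa}$, relatedness forces $\val'_1=\val'_2=\con$ for some constant $\con$; since the only value that parallel-reduces to a constant is that very constant, $\val_1=\val_2=\con$, and the side conditions $\hasbtype{}{\con}{\tbase}$ and $\goesto{\modelapp{\rmodel_j}{\refa\subst{\rbind}{\con}}}{\etrue}$ for $j\in\{1,2\}$ carry over unchanged. For $\typ=\tfun{x}{\typ_x}{\typ}$, given related arguments $\relates{\rmodel}{\val_3}{\val_4}{\typ_x}$, the hypothesis gives $\relates{\rmodel,\rmodelBind{\val_3}{\val_4}{x}}{\val'_1\ \val_3}{\val'_2\ \val_4}{\typ}$; parallel reduction is a congruence, so $\parredsto{\val_1\ \val_3}{\val'_1\ \val_3}$ and $\parredsto{\val_2\ \val_4}{\val'_2\ \val_4}$, and the expression-level induction hypothesis at the strictly smaller type $\typ$ delivers $\relates{\rmodel,\rmodelBind{\val_3}{\val_4}{x}}{\val_1\ \val_3}{\val_2\ \val_4}{\typ}$, which is exactly the function clause. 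For $\typ=\eqrt{\typ'}{\expr_l}{\expr_r}$, the value clause is $\relates{\rmodel}{\modelapp{\rmodel_1}{\expr_l}}{\modelapp{\rmodel_2}{\expr_r}}{\typ'}$, which mentions neither the $\val_i$ nor the $\val'_i$, so the goal is literally the hypothesis. This closes the induction.

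The main obstacle is the parallel-reduction infrastructure invoked in the second step, which the paper defers to supplementary material (\S\ref{app:parred}): one needs that $\parredsto{\expr}{\expr'}$ together with $\goesto{\expr'}{\val'}$ implies $\goesto{\expr}{\val}$ for some value $\val$ with $\parredsto{\val}{\val'}$, and — crucially for the base case — that the $\val$ obtained this way is genuinely a value, so that the ``only a constant parallel-reduces to a constant'' argument applies. Getting the direction of parallel reduction consistent across the evaluation step and the structural cases, and checking that parallel reduction preserves enough syntactic shape of values (constants to constants, $\lambda$-abstractions to $\lambda$-abstractions, equality proofs to equality proofs), is the delicate part; once those bisimulation/cotermination lemmas are in hand, the three cases above are routine. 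The same reasoning also yields the companion ``iff'' direction flagged in the margin note, using forward simulation (Lemma~\ref{lem:parred-forward-simulation}) in place of backward simulation.
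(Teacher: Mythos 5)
Your proof is correct and follows exactly the route the paper sketches (the paper's own proof is a one-line stub: induction on $\typ$ using parallel reduction as a backward simulation, with the cases left as a todo); your three cases fill that stub in correctly, including the key observations that only a constant parallel-reduces to a constant and that the $\eqrt{\typ'}{\expr_l}{\expr_r}$ clause never inspects the values. The only nitpick is a citation: extracting $\goesto{\expr_i}{\val_i}$ with $\parredsto{\val_i}{\val'_i}$ comes from iterating Corollary~\ref{cor:parred-backward-simulation} and then applying (a multi-step version of) Corollary~\ref{cor:parred-value}, not from Corollary~\ref{cor:cotermination-multi}, which concerns constants only — but you flag exactly this infrastructure as the delicate point, so the argument stands.
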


\begin{lemma}[LR Compositionality]\label{proofs-lemma:apptyp}
  If \goesto{\modelapp{\delta_1}{\expr_x}}{\val_{x_1}}, 
  \goesto{\modelapp{\delta_2}{\expr_x}}{\val_{x_2}},  
  \relates{\rmodel, \rmodelBind{\val_{x_1}}{\val_{x_2}}{x}}{\expr_1}{\expr_2}{\typ},
  then \relates{\rmodel}{\expr_1}{\expr_2}{\typ\subst{x}{\expr_x}}. 
\end{lemma}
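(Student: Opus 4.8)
The plan is to prove the statement by induction on the structure of the type $\typ$, after first strengthening it to a biconditional so that the induction survives contravariant positions. Concretely, under the hypotheses $\goesto{\modelapp{\rmodel_1}{\expr_x}}{\val_{x_1}}$ and $\goesto{\modelapp{\rmodel_2}{\expr_x}}{\val_{x_2}}$, I would show
$$\relates{\rmodel, \rmodelBind{\val_{x_1}}{\val_{x_2}}{x}}{\expr_1}{\expr_2}{\typ} \iff \relates{\rmodel}{\expr_1}{\expr_2}{\typ\subst{x}{\expr_x}}.$$
Since the expression relation factors through evaluation to values, it suffices to establish the value-level version, $\relates{\rmodel, \rmodelBind{\val_{x_1}}{\val_{x_2}}{x}}{\val_1}{\val_2}{\typ} \iff \relates{\rmodel}{\val_1}{\val_2}{\typ\subst{x}{\expr_x}}$, and on this I would run the induction on $\typ$.

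The base case $\typ = \tref{\rbind}{\tbase}{\refa}$ is the crux. Both sides force $\val_1 = \val_2 = \con$ with $\hasbtype{}{\con}{\tbase}$, so the only content is the refinement check: on the left we have $\goesto{\modelapp{\rmodel_1}{\refa\subst{\rbind}{\con}}\subst{x}{\val_{x_1}}}{\etrue}$ (and symmetrically with $\rmodel_2$, $\val_{x_2}$), while on the right we need $\goesto{\modelapp{\rmodel_1}{\refa\subst{\rbind}{\con}}\subst{x}{\modelapp{\rmodel_1}{\expr_x}}}{\etrue}$, obtained after commuting the capture-avoiding substitutions ($\con$ is closed). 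These differ only in plugging $\modelapp{\rmodel_1}{\expr_x}$ versus $\val_{x_1}$ for $x$; since $\goesto{\modelapp{\rmodel_1}{\expr_x}}{\val_{x_1}}$, substitutivity of parallel reduction gives $\parredsto{\modelapp{\rmodel_1}{\refa\subst{\rbind}{\con}}\subst{x}{\modelapp{\rmodel_1}{\expr_x}}}{\modelapp{\rmodel_1}{\refa\subst{\rbind}{\con}}\subst{x}{\val_{x_1}}}$, and parallel reduction as a bisimulation (the forward/backward simulation facts of Appendix~\ref{app:parred}, already used for Lemmas~\ref{lem:sem-parred-expr} and~\ref{lem:sem-parred-type}) preserves evaluation to $\etrue$ in both directions. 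In the equality case $\typ = \eqrt{\typ_0}{\expr_l}{\expr_r}$, unfolding both sides reduces the goal to relating $\modelapp{\rmodel_1}{\expr_l}\subst{x}{\val_{x_1}}$ with $\modelapp{\rmodel_2}{\expr_r}\subst{x}{\val_{x_2}}$ at $\typ_0$ versus relating $\modelapp{\rmodel_1}{(\expr_l\subst{x}{\expr_x})}$ with $\modelapp{\rmodel_2}{(\expr_r\subst{x}{\expr_x})}$ at $\typ_0\subst{x}{\expr_x}$; I would move the equated expressions between the $\val_{x_i}$- and $\expr_x$-forms using Lemma~\ref{proofs-lemma:lr-parred} (LR closed under parallel reduction) and then apply the induction hypothesis on the structurally smaller index $\typ_0$.

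The function case $\typ = \tfun{y}{\typ_y}{\typ'}$ is where the biconditional earns its keep. For the left-to-right direction, I fix $\relates{\rmodel}{\val_3}{\val_4}{\typ_y\subst{x}{\expr_x}}$, apply the right-to-left direction of the induction hypothesis at $\typ_y$ to get $\relates{\rmodel, \rmodelBind{\val_{x_1}}{\val_{x_2}}{x}}{\val_3}{\val_4}{\typ_y}$, feed this into the hypothesis to obtain $\relates{\rmodel, \rmodelBind{\val_{x_1}}{\val_{x_2}}{x}, \rmodelBind{\val_3}{\val_4}{y}}{\val_1\ \val_3}{\val_2\ \val_4}{\typ'}$, and finish with the left-to-right direction of the induction hypothesis at $\typ'$ (whose pending substitution and term index now carry the harmless extra $y$-binding, $y$ being fresh for $\expr_x$); the right-to-left direction is symmetric. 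The main obstacle is exactly this bookkeeping: threading the biconditional through contravariant positions, commuting the ordered, dependency-respecting pending substitution $\rmodel$ with the term substitution $\subst{x}{\expr_x}$ under capture-avoidance, and discharging, uniformly across all three type formers, the residual gap between ``plug $\modelapp{\rmodel_i}{\expr_x}$'' and ``plug $\val_{x_i}$'' via parallel reduction. No genuinely new ingredients are required beyond the parallel-reduction machinery already needed elsewhere in the metatheory.
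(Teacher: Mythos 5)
Your proposal follows the same overall route as the paper's proof: structural induction on $\typ$, with the parallel-reduction machinery (substitutivity, bisimulation, cotermination at constants) bridging the gap between plugging in $\modelapp{\rmodel_i}{\expr_x}$ and plugging in $\val_{x_i}$ in the base and equality cases. The one substantive difference is your strengthening of the statement to a biconditional, and this is a genuine improvement rather than mere bookkeeping. The paper proves only the left-to-right direction; in its function case it fixes arguments $\val_{y_1},\val_{y_2}$ related at $\typ'_y$ under the \emph{extended} substitution $\rmodel,\rmodelBind{\val_{x_1}}{\val_{x_2}}{x}$ and shows, via the forward IH on the domain, that these particular arguments are also related at $\typ'_y\subst{x}{\expr_x}$ under $\rmodel$. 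But the goal $\relates{\rmodel}{\val_1}{\val_2}{(\tfun{y}{\typ'_y}{\typ'})\subst{x}{\expr_x}}$ quantifies over \emph{all} pairs related at $\typ'_y\subst{x}{\expr_x}$ under $\rmodel$, and to feed such a pair into the hypothesis one needs the converse direction of the lemma on the contravariant domain --- exactly the direction your biconditional supplies. Your version therefore makes explicit a step the paper's write-up leaves implicit (or, read strictly, leaves open). You are also more careful in the equality case, where the type index must itself change from $\typ_0$ to $\typ_0\subst{x}{\expr_x}$ via the IH on the structurally smaller index; the paper's displayed goal drops that substitution on the index. The cost of your approach is having to check both directions in every case, but each backward direction is symmetric to the forward one, so nothing genuinely new is required. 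One small point to verify when writing it out: the extra binding $\rmodelBind{\val_3}{\val_4}{y}$ you carry into the IH at $\typ'$ must be harmless, which requires $y$ fresh for $\expr_x$ and for $\rmodel$; you note this, and it holds by the usual convention on bound variables.
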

\begin{proof}
By the assumption we have that 
\begin{enumerate}
  \item \goesto{\modelapp{\delta_1}{\expr_x}}{\val_{x_1}}
  \item \goesto{\modelapp{\delta_2}{\expr_x}}{\val_{x_2}}
  \item \goesto{\expr_1}{\val_1}
  \item \goesto{\expr_2}{\val_2}
  \item \relates{\rmodel, \rmodelBind{\val_{x1}}{\val_{x_2}}{x}}{\val_1}{\val_2}{\typ}
\end{enumerate}
and we need to prove that 
\relates{\rmodel}{\val_1}{\val_2}{\typ\subst{x}{\expr_x}}.
The proof goes by structural induction on the type $\typ$.
\begin{itemize}
  \item $\typ \doteq \tref{\vv}{\tbase}{\refa}$. 
  For $i = 1,2$ we need to show that 
  if \goesto{\modelapp{\rmodel_i,\subst{x}{\val_{x_i}}}{\refa\subst{\vv}{\val_i}}}{\etrue} 
  then \goesto{\modelapp{\rmodel_i}{\refa\subst{\vv}{\val_i}\subst{x}{\expr_i}}}{\etrue}.
  We have \parredsto%
  {\modelapp{\rmodel_i,\subst{x}{\val_{x_i}}}{\refa\subst{\vv}{\val_i}}}%
  {\modelapp{\rmodel_i}{\refa\subst{\vv}{\val_i}\subst{x}{\expr_i}}}
  because substituting parallel reducing terms parallel reduces
  (Corollary~\ref{cor:parred-subst-eval-multi}) and parallel reduction
  subsumes reduction (Lemma~\ref{lem:parred-eval}).
  By cotermination at constants (Corollary~\ref{cor:cotermination-multi}), we have
  \goesto{\modelapp{\rmodel_i}{\refa\subst{\vv}{\val_i}\subst{x}{\expr_i}}}{\etrue}.

  \item $\typ \doteq \tfun{y}{\typ'_y}{\typ'}$.
  We need to show that
  if \relates{\rmodel, \rmodelBind{\val_{x_1}}{\val_{x_2}}{x}}{\val_1}{\val_2}{\tfun{y}{\typ'_y}{\typ'}},
  then \relates{\rmodel}{\val_1}{\val_2}{\tfun{y}{\typ'_y}{\typ'}\subst{x}{\expr_x}}.

  We fix $\val_{y_1}$ and $\val_{y_2}$ so that 
  $\relates{\rmodel, \rmodelBind{\val_{x_1}}{\val_{x_2}}{x}}{\val_{y_1}}{\val_{y_2}}{\typ'_y}$.
  
  Then, we have that 
  $\relates{\rmodel, \rmodelBind{\val_{x_1}}{\val_{x_2}}{x}, \rmodelBind{\val_{y_1}}{\val_{y_2}}{y}}{\val_1\ \val_{y_1}}{\val_2\ \val_{y_2}}{\typ'}$.

  By inductive hypothesis, we have that 
  $\relates{\rmodel, \rmodelBind{\val_{y_1}}{\val_{y_2}}{y}}{\val_1\ \val_{y_1}}{\val_2\ \val_{y_2}}{\typ'\subst{x}{\expr_x}}$.

  By inductive hypothesis on the fixed arguments, we also get 
  $\relates{\rmodel}{\val_{y_1}}{\val_{y_2}}{\typ'_y\subst{x}{\expr_x}}$.

  Combined, we get \relates{\rmodel}{\val_1}{\val_2}{\tfun{y}{\typ'_y}{\typ'}\subst{x}{\expr_x}}.

  \item $\typ \doteq \eqrt{\typ'}{\expr_l}{\expr_r}$.
  We need to show that
  if \relates{\rmodel, \rmodelBind{\val_{x_1}}{\val_{x_2}}{x}}{\val_1}{\val_2}{\eqrt{\typ'}{\expr_l}{\expr_r}},
  then \relates{\rmodel}{\val_1}{\val_2}{\eqrt{\typ'}{\expr_l}{\expr_r}\subst{x}{\expr_x}}.
  
  This reduces to showing that 
  if   \relates{\rmodel}{\modelapp{\delta_1, \subst{x}{\val_{x_1}}}{\expr_l}}{\modelapp{\delta_2, \subst{x}{\val_{x_2}}}{\expr_r}}{\typ'}, 
  then \relates{\rmodel}{\modelapp{\delta_1}{\expr_l\subst{x}{\expr_x}}}{\modelapp{\delta_2}{\expr_r\subst{x}{\expr_x}}}{\typ'}; 
  we find
  \parredsto{\modelapp{\delta_1}{\expr_l\subst{x}{\expr_x}}}{\modelapp{\delta_1, \subst{x}{\val_{x_1}}}{\expr_l}}
  and
  \parredsto{\modelapp{\delta_2}{\expr_r\subst{x}{\expr_x}}}{\modelapp{\delta_2, \subst{x}{\val_{x_2}}}{\expr_r}}
  because substituting multiple parallel reduction is parallel reduction (Corollary~\ref{cor:parred-subst-eval-multi}).
  The logical relation is closed under parallel reduction (Lemma~\ref{proofs-lemma:lr-parred}), and so
  \relates{\rmodel}{\modelapp{\delta_1}{\expr_l\subst{x}{\expr_x}}}{\modelapp{\delta_2}{\expr_r\subst{x}{\expr_x}}}{\typ'}.
\end{itemize}
\end{proof}

%auto-ignore
\begin{theorem}[LR Fundamental Property]\label{thm:fundamental}
    If \hastype{\env}{\expr}{\typ}, then 
    \relatesEnv{\env}{\expr}{\expr}{\typ}.  
  \end{theorem}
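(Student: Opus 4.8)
The plan is to induct on the derivation of $\hastype{\env}{\expr}{\typ}$, proving one \emph{compatibility} obligation per typing rule; this is the standard ``fundamental property'' argument for a logical relation, and essentially all of the work has already been pre-packaged into the lemmas preceding the theorem. In every case I would fix an arbitrary $\rmodel \in \env$ and establish $\relates{\rmodel}{\modelapp{\rmodel_1}{\expr}}{\modelapp{\rmodel_2}{\expr}}{\typ}$, with the inductive hypotheses supplying the analogous statement for each subderivation, so that each case reduces to a short combination of an inductive hypothesis with one or two of those lemmas.

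The non-equality cases should be routine. \tsub follows from the lemma that the logical relation respects subtyping; \tcon from constant soundness; \tvar from variable soundness; \tself from the selfification lemma, using that whenever $\modelapp{\rmodel_i}{\expr}$ converges to a $\tbase$-constant $\con$ the refinement $\bbbeq{\con}{\expr}{\tbase}$ converges to $\etrue$ under either closing substitution. For \tlam, after closing with $\rmodel$ both copies of $\elam{x}{\typ_x}{\expr}$ are already values; given related arguments at $\typ_x$ each application $\beta$-steps to the body under the corresponding substitution, so I would invoke the inductive hypothesis on the body in $\env,\envBind{x}{\typ_x}$ with the pending substitution extended by the argument pair, and then transport the conclusion back across the $\beta$-step using closure of the relation under evaluation. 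For \tapp, the inductive hypotheses relate the two copies of the operand and of the argument; the function clause of the value relation then yields relatedness of the applications at $\typ$ with the argument pair appended to the pending substitution, and the compositionality lemma rewrites this as relatedness at $\typ\subst{x}{\expr_x}$.

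The two genuinely new cases are \teqbase and \teqfun, where the relation at an equality type ``reflects'' the relation at its index. In \teqbase, unfolding the goal leaves $\relates{\rmodel}{\modelapp{\rmodel_1}{\expr_l}}{\modelapp{\rmodel_2}{\expr_r}}{\tbase}$: I must relate $\expr_l$ under $\rmodel_1$ to $\expr_r$ under $\rmodel_2$. I would get this from the inductive hypothesis on the proof argument, whose type $\tref{\rbind}{\tunit}{\bbbeq{l}{r}{\tbase}}$ records that $l$ and $r$ evaluate to equal $\tbase$-constants, together with the inductive hypotheses on $\expr_l$ and $\expr_r$ (which force them to converge at base type); since both $\rmodel_1$ and $\rmodel_2$ are closing substitutions that satisfy $\env$, the two closed expressions collapse to the same constant. \teqfun is the same story one level up: the relation at $\eqrt{\tfun{x}{\typ_x}{\typ}}{\expr_l}{\expr_r}$ reduces to relating $\expr_l$ and $\expr_r$ at $\tfun{x}{\typ_x}{\typ}$, which I would derive from the inductive hypothesis on the proof argument (of type $\tfun{x}{\typ_x}{\eqrt{\typ}{l\ x}{r\ x}}$, an extensional witness that $l$ and $r$ agree) plus the inductive hypotheses placing $\expr_l,\expr_r$ at the function type, with compositionality absorbing the codomain dependency and closure under parallel reduction discharging the substitution bookkeeping.

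The hard part will be precisely these equality cases: they require threading the proof-irrelevant content carried by an equality type — an SMT/refinement fact such as $\bbbeq{l}{r}{\tbase}$, or an extensional agreement property — through a logical relation that never inspects proof terms, and they hinge on the observation that, although the two projections $\rmodel_1$ and $\rmodel_2$ of the pending substitution may differ, each separately satisfies $\env$, so the equated expressions genuinely converge to the same value. Everything else is bookkeeping with the already-proved compatibility lemmas.
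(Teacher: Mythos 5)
Your proposal is correct and follows essentially the same route as the paper's proof: induction on the typing derivation with one compatibility case per rule, discharging \tsub, \tcon, \tvar, \tapp via the subtyping, constant, variable, closure-under-evaluation, and compositionality lemmas, and handling \teqbase/\teqfun by unfolding the equality clause of the relation and extracting the needed cross-relation (between $\expr_l$ under $\rmodel_1$ and $\expr_r$ under $\rmodel_2$) from the inductive hypothesis on the proof argument combined with the IHs on the equated expressions. The chain of constant equalities you gesture at in \teqbase ($\val_{l_1}=\val_{r_1}$ from the proof refinement, $\val_{l_1}=\val_{l_2}$ and $\val_{r_1}=\val_{r_2}$ from the base-type IHs) is exactly how the paper closes that case.
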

  
  \begin{proof}
  The proof goes by induction on the derivation tree: 
  \begin{itemize}
    \item[\tsub]
    By inversion of the rule we have 
    \begin{enumerate}
      \item \hastype{\env}{\expr}{\typ'} 
      \item \issubtype{\env}{\typ'}{\typ}
    \end{enumerate} 
    By IH on (1) we have 
    \begin{enumerate}
      \setcounter{enumi}{2}
      \item \relatesEnv{\env}{\expr}{\expr}{\typ'}
    \end{enumerate} 
  
    By (3), (4), and Lemma~\ref{proofs-lemma:sub} we have 
    \relatesEnv{\env}{\expr}{\expr}{\typ}.
  
    \item[\tcon] By Lemma~\ref{proofs-lemma:const}.
      
    \item[\tself] By inversion of the rule, we have:
      \begin{enumerate}
      \item \label{fund:self-inv} \subitem \hastype{\env}{\expr}{\tref{\vv}{\tbase}{\refa}}.
      \item By the IH on (\ref{fund:self-inv}), we have:
        \subitem \relatesEnv{\env}{\expr}{\expr}{\tref{\vv}{\tbase}{\refa}}.
      \item We fix a $\rmodel$ such that:
        \subitem $\rmodel \in \env$ and
        \subitem \relates{\rmodel}{\modelapp{\rmodel_1}{\expr}}{\modelapp{\rmodel_2}{\expr}}{\tref{\vv}{\tbase}{\refa}}
    \item \label{fund:self-red-val} There must exist $\val_1$ and $\val_2$ such that:
      \subitem \goesto{\modelapp{\rmodel_1}{\expr}}{\val_1}
      \subitem \goesto{\modelapp{\rmodel_2}{\expr}}{\val_2}
      \subitem \relates{\rmodel}{\val_1}{\val_2}{\tref{\vv}{\tbase}{\refa}}
    \item \label{fund:self-red-con} By definition, $\val_1 = \val_2 = c$ such that:
      \subitem \hasbtype{}{\con}{\tbase}
      \subitem \goesto{\modelapp{\rmodel_1}{\refa\subst{\rbind}{\con}}}{\etrue} 
      \subitem \goesto{\modelapp{\rmodel_2}{\refa\subst{\rbind}{\con}}}{\etrue}
    \item We find \relates{\rmodel}{\val_1}{\val_2}{\tref{\vv}{\tbase}{\bbbeq{\vv}{\expr}{\tbase}}}, because:
      \subitem \hasbtype{}{\con}{\tbase} by (\ref{fund:self-red-con})
      \subitem \goesto{\modelapp{\rmodel_1}{(\bbbeq{\vv}{\expr}{\tbase})\subst{\vv}{\con}}}{\etrue} because \goesto{\modelapp{\rmodel_1}{\expr}}{\val_1 = \con} by (\ref{fund:self-red-val})
      \subitem \goesto{\modelapp{\rmodel_2}{(\bbbeq{\vv}{\expr}{\tbase})\subst{\vv}{\con}}}{\etrue} because \goesto{\modelapp{\rmodel_2}{\expr}}{\val_2 = \con} by (\ref{fund:self-red-val})
    \end{enumerate}

    \item[\tvar] By inversion of the rule and Lemma~\ref{proofs-lemma:var}.
    \item[\tlam] By hypothesis: 
    \begin{enumerate}
      \item   \hastype{\env}{\elam{x}{\typ_x}{\expr}}{\tfun{x}{\typ_x}{\typ}} 
    \end{enumerate}
    By inversion of the rule we have 
    \begin{enumerate}
      \setcounter{enumi}{1}
      \item \hastype{\env, \envBind{x}{\typ_x}}{\expr}{\typ} 
      \item \iswellformed{\env}{\typ_x}
    \end{enumerate} 
    By inductive hypothesis on (2) we have 
    \begin{enumerate}
      \setcounter{enumi}{3}
      \item \relatesEnv{\env, \envBind{x}{\typ_x}}{\expr}{\expr}{\typ} 
    \end{enumerate} 
  
    We fix a $\rmodel$, $\val_{x_1}$, and $\val_{x_2}$ so that  
    \begin{enumerate}
      \setcounter{enumi}{4}
      \item $\rmodel \in \env$
      \item \relates{\rmodel}{\val_{x_1}}{\val_{x_2}}{\typ_x} 
    \end{enumerate} 
    Let $\rmodel' \doteq \rmodel,\rmodelBind{\val_{x_1}}{\val_{x_2}}{x}$.
  
    By the definition of the logical relation on open terms, 
    (4), (5), and (6) we have 
      
    \begin{enumerate}
      \setcounter{enumi}{6}
      \item \relatesModel{\rmodel'}{\expr}{\expr}{\typ} 
    \end{enumerate} 
    By the definition of substitution
    \begin{enumerate}
      \setcounter{enumi}{7}
      \item \relates{\rmodel'}{\modelapp{\rmodel_1}{\expr\subst{x}{\val_{x_1}}}}{\modelapp{\rmodel_2}{\expr\subst{x}{\val_{x_2}}}}{\typ} 
    \end{enumerate} 
    By the definition of the logical relation on closed expressions
    \begin{enumerate}
      \setcounter{enumi}{8}
      \item 
      \goesto{\modelapp{\rmodel_1}{\expr\subst{x}{\val_{x_1}}}}{\val_1}, 
      \goesto{\modelapp{\rmodel_2}{\expr\subst{x}{\val_{x_2}}}}{\val_2},
      and \relates{\rmodel'}{\val_1}{\val_2}{\typ} 
    \end{enumerate}
    By the definition and determinism of operational semantics  
    \begin{enumerate}
      \setcounter{enumi}{9}
      \item 
      \goesto{\modelapp{\rmodel_1}{(\elam{x}{\typ_x}{\expr})\ \val_{x_1}}}{\val_1}, 
      \goesto{\modelapp{\rmodel_2}{(\elam{x}{\typ_x}{\expr})\ \val_{x_2}}}{\val_2}, 
      and \relates{\rmodel'}{\val_1}{\val_2}{\typ} 
    \end{enumerate}
  
    By (6) and the definition of logical relation on function values,
    \begin{enumerate}
      \setcounter{enumi}{10}
      \item \relatesModel{\rmodel}{\elam{x}{\typ_x}{\expr}}{\elam{x}{\typ_x}{\expr}}{\tfun{x}{\typ_x}{\typ}} 
    \end{enumerate} 
    Thus, by the definition of the logical relation,
    \relatesEnv{\env}{\elam{x}{\typ_x}{\expr}}{\elam{x}{\typ_x}{\expr}}{\tfun{x}{\typ_x}{\typ}} 
  
    \item[\tapp]
    By hypothesis: 
    
    \begin{enumerate}
      \item \hastype{\env}{\expr\ \expr_x}{\typ\subst{x}{\expr_x}}
    \end{enumerate}
    
    By inversion we get 
    \begin{enumerate}
      \setcounter{enumi}{1}
      \item \hastype{\env}{\expr}{\tfun{x}{\typ_x}{\typ}}
      \item \hastype{\env}{\expr_x}{\typ_x}
    \end{enumerate} 
          
    By inductive hypothesis
    \begin{enumerate}
      \setcounter{enumi}{2}
      \item \relatesEnv{\env}{\expr}{\expr}{\tfun{x}{\typ_x}{\typ}}
      \item \relatesEnv{\env}{\expr_x}{\expr_x}{\typ_x}
    \end{enumerate} 
    We fix a $\rmodel \in \env$. Then, by the definition of the logical relation on open terms
    \begin{enumerate}
      \setcounter{enumi}{4}
      \item \relatesModel{\rmodel}{\expr}{\expr}{(\tfun{x}{\typ_x}{\typ})}
      \item \relatesModel{\rmodel}{\expr_x}{\expr_x}{\typ_x}
    \end{enumerate} 
    By the definition of the logical relation on open terms:
    \begin{enumerate}
      \setcounter{enumi}{6}
      \item \goesto{\modelapp{\rmodel_1}{\expr}}{\val_1}
      \item \goesto{\modelapp{\rmodel_2}{\expr}}{\val_2}
      \item \relates{\rmodel}{\val_1}{\val_2}{\tfun{x}{\typ_x}{\typ}}
      \item \goesto{\modelapp{\rmodel_1}{\expr_x}}{\val_{x_1}}
      \item \goesto{\modelapp{\rmodel_2}{\expr_x}}{\val_{x_2}}
      \item \relates{\rmodel}{\val_{x_1}}{\val_{x_2}}{\typ_x}
    \end{enumerate}
  
  By (7) and (10)  
  \begin{enumerate}
    \setcounter{enumi}{12}
    \item \goesto{\modelapp{\rmodel_1}{\expr\ \expr_x}}{\val_1\ \val_{x_1}}
  \end{enumerate}

  By (8) and (11)  
  \begin{enumerate}
    \setcounter{enumi}{13}
    \item \goesto{\modelapp{\rmodel_2}{\expr\ \expr_x}}{\val_2\ \val_{x_2}}
  \end{enumerate}

  By (9), (12), and the definition of logical relation on functions: 
  \begin{enumerate}
    \setcounter{enumi}{14}
    \item \relates{\rmodel,\rmodelBind{\val_{x_1}}{\val_{x_2}}{x}}{\val_1\ \val_{x_1}}{\val_2\ \val_{x_2}}{\typ}
  \end{enumerate}
  
  By (13), (14), (15), and Lemma~\ref{proofs-lemma:app}
  
  \begin{enumerate}
    \setcounter{enumi}{15}
    \item \relates{\rmodel,\rmodelBind{\val_{x_1}}{\val_{x_2}}{x}}{\modelapp{\rmodel_1}{\expr\ \expr_x}}{\modelapp{\rmodel_2}{\expr\ \expr_x}}{\typ}
  \end{enumerate}
  
  By (10), (11), (16), and Lemma~\ref{proofs-lemma:apptyp}
  \begin{enumerate}
    \setcounter{enumi}{16}
    \item \relates{\rmodel}{\modelapp{\rmodel_1}{\expr\ \expr_x}}{\modelapp{\rmodel_2}{\expr\ \expr_x}}{\typ\subst{x}{\expr_x}}
  \end{enumerate}
  
    So from the definition of logical relations, 
    \relatesEnv{\env}{\expr\ \expr_x}{\expr\ \expr_x}{\typ\subst{x}{\expr_x}}.
    \item[\teqbase]
    By hypothesis: 
    \begin{enumerate}
      \item \hastype{\env}{\ebeq{\tbase}\ \expr_l\ \expr_r\ \expr}{\eqrt{\tbase}{\expr_l}{\expr_r}}
    \end{enumerate}
    By inversion of the rule: 
    \begin{enumerate}
      \setcounter{enumi}{1}
      \item \hastype{\env}{\expr_l}{\typ_r}
      \item \hastype{\env}{\expr_r}{\typ_l} 
      \item \issubtype{\env}{\typ_r}{\tbase}  
      \item \issubtype{\env}{\typ_l}{\tbase} 
      \item \hastype{\env, \envBind{r}{\typ_r}, \envBind{l}{\typ_l}}{e}{\tref{\rbind}{\tunit}{\bbbeq{l}{r}{\tbase}}}
    \end{enumerate}
    By inductive hypothesis on (2), (3), and (6) we have 
  
    \begin{enumerate}
      \setcounter{enumi}{6}
      \item \relatesEnv{\env}{\expr_l}{\expr_l}{\typ_r}
      \item \relatesEnv{\env}{\expr_r}{\expr_r}{\typ_l} 
      \item \relatesEnv{\env, \envBind{r}{\typ_r}, \envBind{l}{\typ_l}}{e}{e}{\tref{\rbind}{\tunit}{\bbbeq{l}{r}{\tbase}}}
    \end{enumerate}
  
    We fix $\rmodel \in \env$. Then (7) and (8) become 
  
    \begin{enumerate}
      \setcounter{enumi}{9}
      \item \relatesModel{\rmodel}{\expr_l}{\expr_l}{\typ_r}
      \item \relatesModel{\rmodel}{\expr_r}{\expr_r}{\typ_l} 
    \end{enumerate}
    By the definition of the logical relation on closed terms:
    \begin{enumerate}
      \setcounter{enumi}{11}
      \item \goesto{\modelapp{\rmodel_1}{\expr_l}}{\val_{l_1}}
      \item \goesto{\modelapp{\rmodel_2}{\expr_l}}{\val_{l_2}}
      \item \relates{\rmodel}{\val_{l_1}}{\val_{l_2}}{\typ_l}
      \item \goesto{\modelapp{\rmodel_1}{\expr_r}}{\val_{r_1}}
      \item \goesto{\modelapp{\rmodel_2}{\expr_r}}{\val_{r_2}}
      \item \relates{\rmodel}{\val_{r_1}}{\val_{r_2}}{\typ_r}
    \end{enumerate}
    We define $\rmodel' \doteq \rmodel,\rmodelBind{\val_{r_1}}{\val_{r_2}}{r},\rmodelBind{\val_{l_1}}{\val_{l_2}}{l}$.
  
    By (9), (14), and (17) we have 
  
    \begin{enumerate}
      \setcounter{enumi}{17}
      \item \relatesModel{\rmodel'}{\expr}{\expr}{\tref{\rbind}{\tunit}{\bbbeq{l}{r}{\tbase}}}
    \end{enumerate}
  
    By the definition of the logical relation on closed terms: 
  
    \begin{enumerate}
      \setcounter{enumi}{18}
      \item \goesto{\modelapp{\rmodel'}{\expr}}{\val_1}
      \item \goesto{\modelapp{\rmodel'}{\expr}}{\val_2}
      \item \relates{\rmodel'}{\val_1}{\val_2}{\tref{\rbind}{\tunit}{\bbbeq{l}{r}{\tbase}}}
    \end{enumerate}
  
    By (21) and the definition of logical relation on basic values: 
  
  \begin{enumerate}
      \setcounter{enumi}{18}
      \item \goesto{\modelapp{\rmodel'_1}{(\bbbeq{l}{r}{\tbase}})}{\etrue}
      \item \goesto{\modelapp{\rmodel'_2}{(\bbbeq{l}{r}{\tbase}})}{\etrue}
  \end{enumerate}
  
  By the definition of $\bbbeqsym{\tbase}$
  
    \begin{enumerate}
      \setcounter{enumi}{20}
      \item $\val_{l_1} = \val_{r_1}$
      \item $\val_{l_2} = \val_{r_2}$
    \end{enumerate}
   By (14) and (17) and since $\typ_l$ and $\typ_r$ are basic types 
   \begin{enumerate}
    \setcounter{enumi}{22}
    \item $\val_{l_1} = \val_{l_2}$
    \item $\val_{r_1} = \val_{r_2}$
  \end{enumerate}
  By (21) and (24)
  \begin{enumerate}
    \setcounter{enumi}{24}
    \item $\val_{l_1} = \val_{r_2}$
  \end{enumerate}
  By the definition of the logical relation on basic types
  \begin{enumerate}
    \setcounter{enumi}{25}
    \item \relates{\delta}{\val_{l_1}}{\val_{r_2}}{\tbase}
  \end{enumerate}
  By which, (12), (16), and Lemma~\ref{proofs-lemma:app}
  \begin{enumerate}
    \setcounter{enumi}{26}
    \item \relates{\rmodel}{\modelapp{\rmodel_1}{\expr_l}}{\modelapp{\rmodel_2}{\expr_r}}{\tbase}
  \end{enumerate}
  
  By (12), (15), and (19)
  \begin{enumerate}
    \setcounter{enumi}{27}
    \item \goesto{\modelapp{\rmodel_1}{\ebeq{\tbase}\ \expr_l\ \expr_r\ \expr}}{\ebeq{\tbase}\ \val_{l_1}\ \val_{r_1}\ \val_1}
  \end{enumerate}
  
  By (13), (16), and (20)
  \begin{enumerate}
    \setcounter{enumi}{28}
    \item \goesto{\modelapp{\rmodel_2}{\ebeq{\tbase}\ \expr_l\ \expr_r\ \expr}}{\ebeq{\tbase}\ \val_{l_2}\ \val_{r_2}\ \val_2}
  \end{enumerate}
  
  By (27) and the definition of the logical relation on $\texttt{EqRT}$
    \begin{enumerate}
      \setcounter{enumi}{29}
      \item \relates{\rmodel}{\ebeq{\tbase}\ \val_{l_1}\ \val_{r_1}\ \val_1}{\ebeq{\tbase}\ \val_{l_2}\ \val_{r_2}\ \val_2}{\eqrt{\tbase}{\expr_l}{\expr_r}}.
    \end{enumerate}
  By (28), (29), and (30)
    \begin{enumerate}
      \setcounter{enumi}{30}
      \item \relatesModel{\rmodel}{\ebeq{\tbase}\ \expr_l\ \expr_r\ \expr}{\ebeq{\tbase}\ \expr_l\ \expr_r\ \expr}{\eqrt{\tbase}{\expr_l}{\expr_r}}.
    \end{enumerate}
    So, by the definition on the logical relation, 
    \relatesEnv{\env}{\ebeq{\tbase}\ \expr_l\ \expr_r\ \expr}{\ebeq{\tbase}\ \expr_l\ \expr_r\ \expr}{\eqrt{\tbase}{\expr_l}{\expr_r}}.
    \item[\teqfun]
    By hypothesis
    \begin{enumerate}
      \item \hastype{\env}{\exeq{\typ_x}{\typ}\ \expr_l\ \expr_r\ \expr}{\eqrt{\tfun{x}{\typ_x}{\typ}}{\expr_l}{\expr_r}}
    \end{enumerate}    
    By inversion of the rule 
    \begin{enumerate}
      \setcounter{enumi}{1}
      \item  \hastype{\env}{\expr_l}{\typ_r} 
      \item \hastype{\env}{\expr_r}{\typ_l}
      \item \issubtype{\env}{\typ_r}{\tfun{x}{\typ_x}{\typ}} 
      \item \issubtype{\env}{\typ_l}{\tfun{x}{\typ_x}{\typ}}  
      \item \hastype{\env, \envBind{r}{\typ_r}, \envBind{l}{\typ_l}}{e}{(\tfun{x}{\typ_x}{\eqrt{\typ}{l\ x}{r\ x}})} 
      \item  \iswellformed{\env}{\tfun{x}{\typ_x}{\typ}}
    \end{enumerate}    
  
  By inductive hypothesis on (2), (3), and (6) we have 
  
  \begin{enumerate}
    \setcounter{enumi}{7}
    \item \relatesEnv{\env}{\expr_l}{\expr_l}{\typ_r}
    \item \relatesEnv{\env}{\expr_r}{\expr_r}{\typ_l} 
    \item \relatesEnv{\env, \envBind{r}{\typ_r}, \envBind{l}{\typ_l}}{e}{e}{(\tfun{x}{\typ_x}{\eqrt{\typ}{l\ x}{r\ x}})}
  \end{enumerate}
  
  By (8), (9), and Lemma~\ref{proofs-lemma:sub}
  \begin{enumerate}
    \setcounter{enumi}{10}
    \item \relatesEnv{\env}{\expr_l}{\expr_l}{\tfun{x}{\typ_x}{\typ}}
    \item \relatesEnv{\env}{\expr_r}{\expr_r}{\tfun{x}{\typ_x}{\typ}} 
  \end{enumerate}

  We fix $\rmodel \in \env$.
  Then (11), and (12) become 
  
  \begin{enumerate}
    \setcounter{enumi}{12}
    \item \relatesModel{\rmodel}{\expr_l}{\expr_l}{\tfun{x}{\typ_x}{\typ}}
    \item \relatesModel{\rmodel}{\expr_r}{\expr_r}{\tfun{x}{\typ_x}{\typ}} 
  \end{enumerate}
  
  By the definition of the logical relation on closed terms:
  \begin{enumerate}
    \setcounter{enumi}{14}
    \item \goesto{\modelapp{\rmodel_1}{\expr_l}}{\val_{l_1}}
    \item \goesto{\modelapp{\rmodel_2}{\expr_l}}{\val_{l_2}}
    \item \relates{\rmodel}{\val_{l_1}}{\val_{l_2}}{\tfun{x}{\typ_x}{\typ}}
    \item \relates{\rmodel}{\val_{l_1}}{\val_{l_2}}{\typ_l}
    \item \goesto{\modelapp{\rmodel_1}{\expr_r}}{\val_{r_1}}
    \item \goesto{\modelapp{\rmodel_2}{\expr_r}}{\val_{r_2}}
    \item \relates{\rmodel}{\val_{r_1}}{\val_{r_2}}{\tfun{x}{\typ_x}{\typ}}
    \item \relates{\rmodel}{\val_{r_1}}{\val_{r_2}}{\typ_r}
  \end{enumerate}
  We fix $\val_{x_1}$ and $\val_{x_2}$ so that \relates{\rmodel}{\val_{x_1}}{\val_{x_2}}{\typ_x}.
  Let $\rmodel_x \doteq \rmodel,\rmodelBind{\val_{x_1}}{\val_{x_2}}{x}$.

  By the definition on the logical relation on function values, 
  (17) and (21) become 
  \begin{enumerate}
    \setcounter{enumi}{22}
    \item \relates{\rmodel_x}{\val_{l_1}\ \val_{x_1}}{\val_{l_2}\ \val_{x_2}}{\typ}
    \item \relates{\rmodel_x}{\val_{r_1}\ \val_{x_1}}{\val_{r_2}\ \val_{x_2}}{\typ}
  \end{enumerate}
  
  Let $\rmodel_{lr} \doteq \rmodel,\rmodelBind{\val_{r_1}}{\val_{r_2}}{r},\rmodelBind{\val_{l_1}}{\val_{l_2}}{l}$.

  By the definition of the logical relation on closed terms, (10) becomes: 
  
  \begin{enumerate}
    \setcounter{enumi}{24}
    \item \goesto{\modelapp{\rmodel_{lr}}{\expr}}{\val_1}
    \item \goesto{\modelapp{\rmodel_{lr}}{\expr}}{\val_2}
    \item \relates{\rmodel_{lr}}{\val_1}{\val_2}{\tfun{x}{\typ_x}{\eqrt{\typ}{l\ x}{r\ x}}}
  \end{enumerate}

  By (27) and the definition of logical relation on function values: 
  
  \begin{enumerate}
    \setcounter{enumi}{27}
    \item \relates{\rmodel_{lr},\rmodelBind{\val_{x_1}}{\val_{x_2}}{x}}{\val_1\ \val_{x_1}}{\val_2\ \val_{x_2}}{\eqrt{\typ}{l\ x}{r\ x}}
  \end{enumerate}
  
  By the definition of the logical relation on $\texttt{EqRT}$

  \begin{enumerate}
    \setcounter{enumi}{28}
    \item \relates{\rmodel_{lr},\rmodelBind{\val_{x_1}}{\val_{x_2}}{x}}{\val_{l_1}\ \val_{x_1}}{\val_{r_2}\ \val_{x_2}}{\typ}
  \end{enumerate}
  
  By the definition of logical relations on function values
  
  \begin{enumerate}
    \setcounter{enumi}{29}
    \item \relates{\rmodel_{lr}}{\val_{l_1}}{\val_{r_2}}{\tfun{x}{\typ_x}{\typ}}
  \end{enumerate}
  
  By (7), $l$ and $r$ do not appear free in the relation, so 
  
  \begin{enumerate}
    \setcounter{enumi}{30}
    \item \relates{\rmodel}{\val_{l_1}}{\val_{r_2}}{\tfun{x}{\typ_x}{\typ}}
  \end{enumerate}
  
  By which, (15), (20), and Lemma~\ref{proofs-lemma:app}
  
  \begin{enumerate}
    \setcounter{enumi}{31}
    \item \relates{\rmodel}{\modelapp{\rmodel_1}{\expr_l}}{\modelapp{\rmodel_2}{\expr_r}}{\tfun{x}{\typ_x}{\typ}}
  \end{enumerate}
  
  By (15), (19), and (25)
  \begin{enumerate}
  \setcounter{enumi}{32}
  \item \goesto{\modelapp{\rmodel_1}{\exeq{\typ_x}{\typ}\ \expr_l\ \expr_r\ \expr}}{\exeq{\typ_x}{\typ}\ \val_{l_1}\ \val_{r_1}\ \val_1}
  \end{enumerate}
  
  By (16), (20), and (26)
  \begin{enumerate}
  \setcounter{enumi}{33}
  \item \goesto{\modelapp{\rmodel_2}{\exeq{\typ_x}{\typ}\ \expr_l\ \expr_r\ \expr}}{\exeq{\typ_x}{\typ}\ \val_{l_2}\ \val_{r_2}\ \val_2}
  \end{enumerate}
  
  By (32) and the definition of the logical relation on $\texttt{EqRT}$
  \begin{enumerate}
    \setcounter{enumi}{34}
    \item \relates{\rmodel}{\exeq{\typ_x}{\typ}\ \val_{l_1}\ \val_{r_1}\ \val_1}{\exeq{\typ_x}{\typ}\ \val_{l_2}\ \val_{r_2}\ \val_2}{\eqrt{\tfun{x}{\typ_x}{\typ}}{\expr_l}{\expr_r}}.
  \end{enumerate}
  By (33), (34), and (35)
  \begin{enumerate}
    \setcounter{enumi}{35}
    \item \relatesModel{\rmodel}{\exeq{\typ_x}{\typ}\ \expr_l\ \expr_r\ \expr}{\exeq{\typ_x}{\typ}\ \expr_l\ \expr_r\ \expr}{\eqrt{\tfun{x}{\typ_x}{\typ}}{\expr_l}{\expr_r}}.
  \end{enumerate}
  So, by the definition on the logical relation, 
  \relatesEnv{\env}{\exeq{\typ_x}{\typ}\ \expr_l\ \expr_r\ \expr}{\exeq{\typ_x}{\typ}\ \expr_l\ \expr_r\ \expr}{\eqrt{\tfun{x}{\typ_x}{\typ}}{\expr_l}{\expr_r}}.
  
  \end{itemize}
  \end{proof}

%auto-ignore
\subsection{The Logical Relation and the $\mathsf{EqRT}$ Type are Equivalence Relations}

\begin{theorem}[The logical relation is an equivalence relation]
    \label{proofs:equality-logical-relation}
    $\relatesEnv{\env}{\expr_1}{\expr_2}{\typ}$ is reflexive, symmetric, and transivite.
    \begin{itemize}
      \item \textit{Reflexivity:} If \hastype{\env}{\expr}{\typ}, then $\relatesEnv{\env}{\expr}{\expr}{\typ}$.
      \item \textit{Symmetry:} If $\relatesEnv{\env}{\expr_1}{\expr_2}{\typ}$, then $\relatesEnv{\env}{\expr_2}{\expr_1}{\typ}$. 
      \item \textit{Transitivity:} If $\hastype{\env}{\expr_2}{\typ}$ and $\relatesEnv{\env}{\expr_1}{\expr_2}{\typ}$ and $\relatesEnv{\env}{\expr_2}{\expr_3}{\typ}$, 
                                   then $\relatesEnv{\env}{\expr_1}{\expr_3}{\typ}$. 
\mmg{we might be able to relax this assumption, proving a property like: if \relatesEnv{\env}{\expr_1}{\expr_2}{\typ} then \relates{\env}{\expr_2}{\expr_2}{\typ}.}
    \end{itemize}
  \end{theorem}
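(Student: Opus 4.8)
The plan is to dispose of reflexivity immediately and then prove symmetry and transitivity \emph{together} by structural induction on the type index $\typ$, using the crucial fact that $\eqrt{\typ}{\expr_l}{\expr_r}$ is structurally larger than $\typ$ so the equality case may appeal to the inductive hypothesis at $\typ$. Reflexivity is exactly the fundamental property of the logical relation (Theorem~\ref{thm:fundamental}), already established, so nothing new is required there. For symmetry and transitivity I would first prove the corresponding statements at the level of the \emph{closed value} relation $\relates{\rmodel}{\val_1}{\val_2}{\typ}$, then lift them to closed expressions (the expression relation is just evaluation to related values, and it is closed under evaluation, Lemma~\ref{proofs-lemma:app}), and finally to open expressions by quantifying over pending substitutions $\rmodel \in \env$.

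The non-routine ingredient is how the pending substitution must be reshaped in each case. For symmetry I would introduce the \emph{swapped} pending substitution $\rmodel^{\mathrm{op}}$, obtained by replacing every bound pair of values $(\val_1,\val_2)$ by $(\val_2,\val_1)$, and prove along the same induction that $\rmodel \in \env$ iff $\rmodel^{\mathrm{op}} \in \env$ and that $\relates{\rmodel}{\val_1}{\val_2}{\typ}$ implies $\relates{\rmodel^{\mathrm{op}}}{\val_2}{\val_1}{\typ}$: the refinement case is immediate (related values are a single constant and the two refinement-satisfaction checks simply trade sides), the function case recurses on $\typ_x$ and $\typ$ while threading the swapped argument pair through the pending substitution, and the equality case unfolds to an instance of symmetry at the strictly smaller index $\typ$. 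For transitivity, given a single $\rmodel \in \env$ carrying the $\expr_1$/$\expr_3$ sides, I would build a \emph{diagonal} pending substitution out of $\rmodel$ so that $\expr_2$ can be closed consistently on both halves; membership of this diagonal substitution in $\env$ is justified via the fundamental property (lifted to the closed values in $\rmodel$), which is precisely why the extra hypothesis $\hastype{\env}{\expr_2}{\typ}$ is needed. Instantiating the two hypotheses at the diagonal substitution and at $\rmodel$ respectively yields $\relates{\cdot}{\modelapp{\rmodel_1}{\expr_1}}{\modelapp{\rmodel_1}{\expr_2}}{\typ}$ and $\relates{\rmodel}{\modelapp{\rmodel_1}{\expr_2}}{\modelapp{\rmodel_2}{\expr_3}}{\typ}$, which the inductive transitivity at $\typ$ chains into the desired $\relates{\rmodel}{\modelapp{\rmodel_1}{\expr_1}}{\modelapp{\rmodel_2}{\expr_3}}{\typ}$.

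The main obstacle I anticipate is the equality-type case of both inductions. Because the equated expressions $\expr_l,\expr_r$ occur inside the type and are closed lazily by $\rmodel$, swapping or diagonalizing $\rmodel$ permutes which component of the substitution closes $\expr_l$ versus $\expr_r$; getting this to line up forces the closed-relation versions of symmetry and transitivity to be stated generally enough to tolerate three related (but not identical) pending substitutions rather than one shared substitution. To push that case through I expect to need the compositionality lemma (Lemma~\ref{proofs-lemma:apptyp}), closure under evaluation (Lemma~\ref{proofs-lemma:app}), and well-formedness of the equality index (so that $\expr_l$ and $\expr_r$ are themselves typed, hence reflexively related by the fundamental property). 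Everything outside this case is straightforward case analysis following the three shapes of $\typ$.
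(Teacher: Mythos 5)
Your reflexivity and symmetry arguments coincide with the paper's: reflexivity is discharged by the fundamental property, and symmetry is proved exactly as in the paper by introducing the swapped substitution (the paper's $\bar{\rmodel}$, your $\rmodel^{\mathrm{op}}$), showing $\rmodel \in \env$ implies $\bar{\rmodel} \in \env$, and inducting on $\typ$ at the value level, with the equality case recursing at the structurally smaller index. That part needs no changes.

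The transitivity argument, however, has a genuine gap. You justify membership of the diagonal substitution $\rmodel^{\Delta}$ (both components set to $\rmodel_1$) in $\env$ ``via the fundamental property lifted to the closed values in $\rmodel$.'' The fundamental property applies to syntactically well-typed open expressions; the values stored in a pending substitution carry no typing derivations --- they are merely pairs satisfying $\relates{\rmodel}{\rmodel_1(x)}{\rmodel_2(x)}{\typ}$. To conclude $\relates{\rmodel^{\Delta}}{\rmodel_1(x)}{\rmodel_1(x)}{\typ}$ you need the PER property of the value relation (that $\val_1 \sim \val_2$ implies $\val_1 \sim \val_1$), which is precisely the property the authors' marginal note identifies as the missing lemma that would let them drop the hypothesis $\hastype{\env}{\expr_2}{\typ}$ --- and it is itself normally derived from symmetry and transitivity, so you cannot invoke it inside the transitivity proof without careful staging. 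Relatedly, your final chaining step combines a relation under $\rmodel^{\Delta}$ with one under $\rmodel$, so it requires a value-level transitivity lemma general enough to mix distinct pending substitutions; the natural statement (and the one the paper proves) fixes a single $\rmodel$, and generalizing it is delicate exactly in the equality-type case, where $\rmodel$ closes the expression indices of the type. The paper sidesteps both problems: it keeps the single $\rmodel$, observes that $\modelapp{\rmodel_1}{\expr_2}$ and $\modelapp{\rmodel_2}{\expr_2}$ evaluate to values $\val_2'$ and $\val_2$ that are related under $\rmodel$ by the fundamental property applied to $\expr_2$ itself (this is where $\hastype{\env}{\expr_2}{\typ}$ is actually used), and then chains $\val_1 \sim \val_2 \sim \val_2' \sim \val_3$ using the single-substitution value-level transitivity. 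You should replace the diagonal device with that argument, or else first state and prove the mixed-substitution generalizations your route depends on.
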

\begin{proof}
\textbf{Reflexivity:} This is exactly the Fundamental Property~\ref{thm:fundamental}. 

\textbf{Symmetry:}
Let $\bar{\rmodel}$ be defined such that $\bar{\rmodel}_1(x) = \rmodel_2(x)$ and $\bar{\rmodel}_2(x) = \rmodel_1(x)$.
First, we prove that
$\relates{\rmodel}{v_1}{v_2}{\typ}$ implies
$\relates{\bar{\rmodel}}{v_2}{v_1}{\typ}$, by structural induction on $\typ$.
\begin{itemize}    
    \item $\typ \doteq \tref{\vv}{\tbase}{\refa}$. 
    This case is immediate: we have to show that
    $\relates{\bar{\rmodel}}{c}{c}{\tref{\vv}{\tbase}{\refa}}$ given
    $\relates{\rmodel}{c}{c}{\tref{\vv}{\tbase}{\refa}}$.
    But the definition in this case is itself symmetric: the predicate
    goes to $\etrue$ under both substitutions.

    \item $\typ \doteq \tfun{x}{\typ'_x}{\typ'}$.
    We fix $\val_{x_1}$ and $\val_{x_2}$ so that
    \begin{enumerate}
        \item \relates{\rmodel}{\val_{x_1}}{\val_{x_2}}{\typ'_x}
    \end{enumerate}
    By the definition of logical relations on open terms and 
    inductive hypothesis
    \begin{enumerate}
        \setcounter{enumi}{1}
        \item \relates{\bar{\rmodel}}{\val_{x_2}}{\val_{x_1}}{\typ'_x}
    \end{enumerate}
    By the definition on logical relations on functions 
    \begin{enumerate}
        \setcounter{enumi}{2}
        \item \relates{\rmodel,\rmodelBind{\val_{x_1}}{\val_{x_2}}{x}}{\val_1\ \val_{x_1}}{\val_2\ \val_{x_2}}{\typ'}
    \end{enumerate}
    By the definition of logical relations on open terms 
    and since the expressions $\val_1\ \val_{x_1}$ and $\val_2\ \val_{x_2}$ are closed,
    By the inductive hypothesis on $\typ'$:
    \begin{enumerate}
        \setcounter{enumi}{3}
        \item \relates{\bar{\rmodel},\envBind{x}{\typ'_x}}{\val_2\ \val_{x_2}}{\val_1\ \val_{x_1}}{\typ'}
    \end{enumerate}
    By (2) and the definition of logical relations on open terms
    \begin{enumerate}
        \setcounter{enumi}{4}
        \item \relates{\bar{\rmodel},\rmodelBind{\val_{x_2}}{\val_{x_1}}{x}}{\val_2\ \val_{x_2}}{\val_1\ \val_{x_1}}{\typ'}
    \end{enumerate}
    By the definition of the logical relation on functions, we conclude 
    that  
    \relates{\bar{\rmodel}}{\val_2}{\val_1}{\tfun{x}{\typ'_x}{\typ'}}

    \item $\typ \doteq \eqrt{\typ'}{\expr_l}{\expr_r}$.
    By assumption,
    \begin{enumerate}
        \item \relates{\rmodel}{\val_1}{\val_2}{\eqrt{\typ'}{\expr_l}{\expr_r}}
    \end{enumerate}

    By the definition of the logical relation on $\texttt{EqRT}$ types
    \begin{enumerate}
        \setcounter{enumi}{1}
        \item \relates{\rmodel}{\modelapp{\rmodel_1}{\expr_l}}{\modelapp{\rmodel_2}{\expr_r}}{\typ'}
    \end{enumerate}
    i.e., $\goesto{\modelapp{\rmodel_1}(\expr_l)}{\val_l}$ and
    similarly for $\val_r$ such that
    $\relates{\rmodel}{\val_l}{\val_r}{\typ'}$.

    By the IH on $\typ'$, we have:    
    \begin{enumerate}
        \setcounter{enumi}{2}
      \item \relates{\bar{\rmodel}}{\val_r}{\val_l}{\typ'}
    \end{enumerate}

    And so, by the definition of the LR on equality proofs:
    
    \begin{enumerate}
        \setcounter{enumi}{3}
        \item \relates{\bar{\rmodel}}{\val_2}{\val_1}{\eqrt{\typ'}{\expr_l}{\expr_r}}
    \end{enumerate}
\end{itemize}

Next, we show that $\rmodel \in \Gamma$ implies $\bar{\rmodel} \in \Gamma$.
We go by structural induction on $\Gamma$.
\begin{itemize}
  \item $\Gamma = \cdot$. This case is trivial.
  \item $\Gamma = \Gamma',\envBind{x}{\typ}$. 
    For $\envBind{x}{\typ}$, we know that
    $\relates{\rmodel}{\rmodel_1(x)}{\rmodel_2(x)}{\typ}$. By
    the IH on $\typ$, we find
    $\relates{\bar{\rmodel}}{\rmodel_2(x)}{\rmodel_1(x)}{\typ}$,
    which is just the same as
    $\relates{\bar{\rmodel}}{\bar{\rmodel}_1(x)}{\bar{\rmodel}_2(x)}{\typ}$.
    By the IH on $\Gamma'$,
    we can use similar reasoning to find
    $\relates{\bar{\rmodel}}{\bar{\rmodel}_1(y)}{\bar{\rmodel}_2(y)}{\typ'}$
    for all $\envBind{y}{\typ'} \in \Gamma'$.
\end{itemize}

Now, suppose $\relatesEnv{\env}{\expr_1}{\expr_2}{\typ}$; we must show $\relatesEnv{\env}{\expr_2}{\expr_1}{\typ}$.
We fix $\rmodel\in\env$; we must show \relatesModel{\rmodel}{\expr_2}{\expr_1}{\typ}, i.e., there must exist $\val_1$ and $\val_2$ such that
\goesto{\modelapp{\rmodel_1}{\expr_2}}{\val_2} and
\goesto{\modelapp{\rmodel_2}{\expr_1}}{\val_1} and 
\relates{\rmodel}{\val_2}{\val_1}{\typ}. 
We have $\rmodel \in \Gamma$, and so $\bar{\rmodel} \in \Gamma$ by our second lemma. But
then, by assumption, we have $\val_1$ and $\val_2$ such that
$\goesto{\modelapp{\bar{\rmodel_1}}{\expr_1}}{\val_1}$ and
$\goesto{\modelapp{\bar{\rmodel_2}}{\expr_2}}{\val_2}$ and
$\relates{\bar{\rmodel}}{\val_1}{\val_2}{\typ}$.  Our first lemma then
yields $\relates{\rmodel}{\val_2}{\val_1} {\typ}$ as desired.

\textbf{Transitivity:} 
First, we prove an inner property: if $\rmodel\in\env$ and $\relates{\rmodel}{\val_1}{\val_2}{\typ}$ and 
$\relates{\rmodel}{\val_2}{\val_3}{\typ}$, then $\relates{\rmodel}{\val_1}{\val_3}{\typ}$.
We go by structural induction on the type index $\typ$.
\begin{itemize}
    \item $\typ \doteq \tref{\vv}{\tbase}{\refa}$.
    Here all of the values must be the fixed constant
    $c$. Furthermore, we must have
    $\goesto{\modelapp{\rmodel_1}{\refa\subst{\rbind}{\con}}}{\etrue}$
    and
    $\goesto{\modelapp{\rmodel_2}{\refa\subst{\rbind}{\con}}}{\etrue}$,
    so we can immediately find
    $\relates{\rmodel}{\val_1}{\val_3}{\typ}$.

    \item $\typ \doteq \tfun{x}{\typ'_x}{\typ'}$.
      
      Let $\relates{\rmodel}{\val_l}{\val_r}{\typ'_x}$ be given.
      We must show that $\relates{\rmodel,\rmodelBind{\val_l}{\val_r}{x}}{\val_1}{\val_3}{\tau}$.
      We know by assumption that:
$\relates{\rmodel,\rmodelBind{\val_l}{\val_r}{x}}{\val_1\ \val_l}{\val_2\ \val_r}{\tau'}$ and
$\relates{\rmodel,\rmodelBind{\val_l}{\val_r}{x}}{\val_2\ \val_l}{\val_3\ \val_r}{\tau'}$.
      By the IH on $\tau'$, we find
      $\relates{\rmodel,\rmodelBind{\val_l}{\val_r}{x}}{\val_1\ \val_l}{\val_3\ \val_r}{\tau'}$; 
      which gives 
      $\relates{\rmodel,\rmodelBind{\val_l}{\val_r}{x}}{\val_1}{\val_3}{\tau}$.

    \item $\typ \doteq \eqrt{\typ'}{\expr_l}{\expr_r}$.

      To find
      $\relates{\rmodel}{\val_1}{\val_3}{\eqrt{\typ}{\expr_l}{\expr_r}}$,
      we merely need to find that
      $\relates{\rmodel}{\modelapp{\rmodel_1}{\expr_l}}{\modelapp{\rmodel_2}{\expr_r}}{\typ}$,
      which we have by inversion on
      $\relates{\rmodel}{\val_1}{\val_2}{\eqrt{\typ}{\expr_l}{\expr_r}}$.
\end{itemize}

With our proof that the value relation is transitive in hand, we turn our attention to the open relation.
Suppose \relatesEnv{\env}{\expr_1}{\expr_2}{\typ} and
\relatesEnv{\env}{\expr_2}{\expr_3}{\typ}; we want to see
\relatesEnv{\env}{\expr_1}{\expr_3}{\typ}.
Let $\rmodel \in \env$ be given.
We have
\relatesModel{\rmodel}{\expr_1}{\expr_2}{\typ} and 
\relatesModel{\rmodel}{\expr_2}{\expr_3}{\typ}.
By the definition of the logical relations, we have 
\goesto{\modelapp{\rmodel_1}{\expr_1}}{\val_1}, 
\goesto{\modelapp{\rmodel_2}{\expr_2}}{\val_2},  
\goesto{\modelapp{\rmodel_1}{\expr_2}}{\val_2'},  
\goesto{\modelapp{\rmodel_2}{\expr_3}}{\val_3},  
\relates{\rmodel}{\val_1}{\val_2}{\typ}, and 
\relates{\rmodel}{\val_2'}{\val_3}{\typ}. 

Moreover, we know that $\expr_2$ is well typed, so by the fundamental
theorem (Theorem~\ref{thm:fundamental}), we know that
\relatesEnv{\env}{\expr_2}{\expr_2}{\typ}, and so
$\relates{\rmodel}{\val_2}{\val_2'}{\typ}$.

By our transitivity lemma on the value relation, we can find that
$\val_1$ is equivalent to $\val_2$ is equivalent to $\val_2'$ is
equivalent to $\val_3$, and so
$\relates{\rmodel}{\val_1}{\val_3}{\typ}$.

\end{proof}
  
\[ \begin{array}{rcl}
\pfty &:& \expr \rightarrow \expr \rightarrow \typ \\
\pfty(l, r, \tbase)                 &=& \tref{x}{\tunit}{\bbbeq{l}{r}{\tbase}} \\
\pfty(l, r, \tfun{x}{\typ_x}{\typ}) &=& \tfun{x}{\typ_x}{\eqrt{\typ}{l\ x}{r\ x}} \\
\end{array} \]

Our propositional equality \eqrt{\typ}{\expr_l}{\expr_r} is a
reflection of the logical relation, so it is unsurprising that it is
also an equivalence relation.
We can prove that our propositional equality is treated as an
equivalence relation by the syntactic type system.
There are some tiny wrinkles in the syntactic system: symmetry and
transitivity produce normalized proofs, but reflexivity produces
unnormalized ones in order to generate the correct invariant types
$\typ_l$ and $\typ_r$ in the base case.

\begin{theorem}[\texttt{EqRT} is an equivalence relation]
  \label{proofs:eqrt-equivalence}
  $\eqrt{\typ}{\expr_1}{\expr_2}$ is reflexive, symmetric, and
  transitive on equable types. That is, for all $\typ$ that contain
  only refinements and functions:
  \begin{itemize}
    \item \textit{Reflexivity:} If $\hastype{\env}{\expr}{\typ}$, then
      there exists $\expr_p$ such that
      $\hastype{\env}{\expr_p}{\eqrt{\typ}{\expr}{\expr}}$.

    \item \textit{Symmetry:} $\forall \env, \typ, \expr_1, \expr_2, \val_{12}$.
      if $\hastype{\env}{\val_{12}}{\eqrt{\typ}{\expr_1}{\expr_2}}$, then
      there exists $\val_{21}$ such that
      $\hastype{\env}{\val_{21}}{\eqrt{\typ}{\expr_2}{\expr_1}}$.

      \todo{we can change $\val_{12}$ to an expression when we correct issues with term indices}

    \item \textit{Transitivity:} 
      $\forall \env, \typ, \expr_1, \expr_2, \expr_3, \val_{12}, \val_{23}$.
      if  \hastype{\env}{\val_{12}}{\eqrt{\typ}{\expr_1}{\expr_2}}
      and \hastype{\env}{\val_{23}}{\eqrt{\typ}{\expr_2}{\expr_3}}, 
      then there exists $\val_{13}$ such that \hastype{\env}{\val_{13}}{\eqrt{\typ}{\expr_1}{\expr_3}}.

  \end{itemize}
\end{theorem}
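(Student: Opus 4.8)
The plan is to prove all three properties by induction on the equable type index $\typ$, which by assumption is built only from refinement types $\tref{\rbind}{\tbase}{\refa}$ and dependent function types $\tfun{x}{\typ_x}{\typ'}$. The three inductions are independent — each uses only its own induction hypothesis, invoked at the strictly smaller codomain $\typ'$ in the function case — and transitivity needs no separate reflexivity hypothesis on $\expr_2$, since $\hastype{\env}{\expr_2}{\typ}$ falls out of inverting \wfEq on the well-typedness of $\val_{12}$. For symmetry and transitivity, the first move is to apply canonical forms (Lemma~\ref{lem:canonical-forms}) to decompose $\val_{12}$ (and $\val_{23}$) into $\ebeq{\tbase}\ \expr_1\ \expr_2\ \val_p$ or $\exeqName\ \expr_1\ \expr_2\ \val_p$, exposing the invariant binder types $\typ_l,\typ_r$ and the proof argument.

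For \textbf{reflexivity} I would strengthen the statement so the function case goes through, using the $\pfty$ operator together with selfification pushed down to the base-type leaves; write $\selfty(\typ,\expr)$ for that leafwise-selfified type. Concretely I would prove: if $\hastype{\env}{\expr}{\typ}$ then $\hastype{\env}{\expr}{\selfty(\typ,\expr)}$, $\issubtype{\env}{\selfty(\typ,\expr)}{\typ}$, and $\hastype{\env,\envBind{l}{\selfty(\typ,\expr)},\envBind{r}{\selfty(\typ,\expr)}}{\val_p}{\pfty(l,r,\typ)}$ for some value $\val_p$; reflexivity then follows by \teqbase (resp.\ \teqfun) with $\typ_l=\typ_r=\selfty(\typ,\expr)$. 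In the base case $\val_p=\eunit$: in the extended environment the selfified binder types give $l=\expr=r$, so \tcon, \tself, and \tsub (via \subBase, using reflexivity of $\bbbeqsym{\tbase}$) discharge the $\tref{\rbind}{\tunit}{\bbbeq{l}{r}{\tbase}}$ obligation. In the function case $\val_p=\elam{x}{\typ_x}{\val_p'}$, where $\val_p'$ comes from the induction hypothesis applied to $\hastype{\env,\envBind{x}{\typ_x}}{\expr\ x}{\typ'}$: the hypothesis yields a proof of $\eqrt{\typ'}{\expr\ x}{\expr\ x}$, and because $l\ x$ and $r\ x$ are equated to $\expr\ x$ at every base-type leaf by the selfified types of $l$ and $r$, \subEq together with leafwise subtyping converts it to the required $\eqrt{\typ'}{l\ x}{r\ x}$.

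For \textbf{symmetry}, in the base case I would re-apply \teqbase with the two equands swapped and $\typ_l,\typ_r$ swapped, so the new obligation $\bbbeq{l}{r}{\tbase}$ follows from the old one by symmetry of $\bbbeqsym{\tbase}$; in the function case I would build $\elam{x}{\typ_x}{(\text{IH on }\val_p\ x)}$, the induction hypothesis turning $\eqrt{\typ'}{\expr_1\ x}{\expr_2\ x}$ into $\eqrt{\typ'}{\expr_2\ x}{\expr_1\ x}$, and re-wrap with $\exeqName$. For \textbf{transitivity} the function case is analogous: combine $\val_p^{12}\ x$ and $\val_p^{23}\ x$ with the induction hypothesis at $\typ'$ to obtain $\eqrt{\typ'}{\expr_1\ x}{\expr_3\ x}$, then re-wrap; the base case reconstructs $\ebeq{\tbase}\ \expr_1\ \expr_3\ \eunit$, discharging $\bbbeq{l}{r}{\tbase}$ by chaining the two given equalities via transitivity of $\bbbeqsym{\tbase}$.

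I expect the \textbf{main obstacle} to be the base-case bookkeeping of the invariant binder types $\typ_l,\typ_r$. For reflexivity this is the need to manufacture types strong enough to discharge the $\tunit$-refinement obligation about the \emph{bound} proof variables while remaining subtypes of the index — which is exactly why the induction hypothesis must be generalized. For symmetry and transitivity it is the need to extract the genuine equational content of the given proofs (which lives under their own $\typ_l,\typ_r$ bindings) and re-route it into the new bindings; this relies on the fact that a value typed at a $\tunit$-refinement type witnesses validity of that refinement — obtainable from soundness of $\eqrt{}$ (Theorem~\ref{thm:eq-relation}) together with the fact that $\bbbeqsym{\tbase}$ decides syntactic equality of constants. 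A secondary subtlety is that \subEq rewrites only the \emph{type} index of $\eqrt{}$ and never the term indices, so one genuinely has to construct proof terms with the equands in the right syntactic order rather than appeal to equivalence at the type level.
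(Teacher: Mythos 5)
Your proposal is correct and follows essentially the same route as the paper's proof: induction on the equable type index, a strengthened reflexivity hypothesis built from $\pfty$ together with selfified binder types $\typ_l,\typ_r$, canonical forms plus a swap of the binder types for symmetry, and a semantic argument that the two intermediate types assigned to $\expr_2$ pick out the same constant for base-case transitivity. One small internal wobble: in the reflexivity function case you invoke \subEq to turn $\eqrt{\typ'}{\expr\ x}{\expr\ x}$ into $\eqrt{\typ'}{l\ x}{r\ x}$, which changes the \emph{term} indices that \subEq leaves fixed (as you yourself observe at the end of your proposal) --- but the paper's own proof elides exactly the same step, so this is a shared wrinkle rather than a gap you introduced.
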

\begin{proof}
\newcommand\epf{\ensuremath{\expr_{\mathsf{pf}}}}

\textbf{Reflexivity}: We strengthen the IH, simultaneously proving
that there exist $\expr_p, \epf$ and $\issubtype{\env}{\typ_l}{\typ}$ and $\issubtype{\env}{\typ_r}{\typ}$ such that
$\hastype{\env, \envBind{l}{\typ_l},
  \envBind{r}{\typ_r}}{\epf}{\pfty(\expr, \expr, \typ)}$ and
$\hastype{\env}{\expr_p}{\eqrt{\typ}{\expr}{\expr}}$ by induction on
$\typ$, leaving $\expr$ general.
\begin{itemize}
\item $\typ \doteq \tref{x}{\tbase}{\expr'}$. 
\begin{enumerate}
\item Let $\epf = ()$.
\item Let $\expr_p = \ebeq{\tbase}\ \expr\ \expr\ \epf$.
\item Let $\typ_l = \typ_r =
  \tref{x}{\tbase}{\bbbeq{x}{\expr}{\tbase}}$.
\item We have $\issubtype{\env}{\bbbeq{x}{\expr}{\tbase}}{\typ}$ by
  \subBase and semantic typing.
\item We find \hastype{\env}{\expr_p}{\eqrt{\tbase}{\expr}{\expr}} by
  \teqbase, with $\expr_l = \expr_r = \expr$. We must show:
  \begin{enumerate}
  \item\label{refl:expr} \hastype{\env}{\expr_l}{\typ_l} and
    \hastype{\env}{\expr_r}{\typ_r}, i.e.,
    \hastype{\env}{\expr}{\tref{x}{\tbase}{\bbbeq{x}{\expr}{\tbase}}};
  \item\label{refl:sub} \issubtype{\env}{\typ_r}{\tref{x}{\tbase}{\etrue}} and
    \issubtype{\env}{\typ_l}{\tref{x}{\tbase}{\etrue}}; and
  \item\label{refl:pf} \hastype{\env, \envBind{r}{\typ_r},
    \envBind{l}{\typ_l}}{\epf}{\tref{\rbind}{\tunit}{\bbbeq{l}{r}{\tbase}}}.
  \end{enumerate}
\item We find (\ref{refl:expr}) by \tself.
\item We find (\ref{refl:sub}) immediately by \subBase.
\item We find (\ref{refl:pf}) by \tvar, using \tsub to see that if
  $\envBind{l, r}{\tref{x}{\tbase}{\bbbeq{x}{\expr}{\tbase}}}$ then
  $\eunit$ will be typeable at the refinement where both $l$ and $r$ are equal to $\expr$.
\end{enumerate}

\item $\typ \doteq \tfun{x}{\typ_x}{\typ'}$.
\begin{enumerate}
\item \hastype{\env, \envBind{x}{\typ_x}}{\expr\ x}{\typ\subst{x}{x}} by \tapp and \tvar, noting that $\typ\subst{x}{x} = \typ$.
\item By the IH on $\hastype{\env, \envBind{x}{\typ_x}}{\expr\ x}{\typ'\subst{x}{x}} = \typ'$, there
  exist $\expr_p', \epf', \typ_l',$ and $\typ_r'$ such that:
  \begin{enumerate}
  \item \label{refl:ih-sub}
    \issubtype{\envBind{x}{\typ_x}}{\typ_l'}{\typ} and \issubtype{\envBind{x}{\typ_x}}{\typ_r'}{\typ};
  \item \label{refl:ih-pf}
   \hastype{\env, \envBind{x}{\typ_x}, \envBind{l}{\typ_l'}, \envBind{r}{\typ_r'}}
            {\epf'}{\pfty(\expr\ x, \expr\ x, \typ')}; and
  \item \label{refl:ih-refl}
    \hastype{\env, \envBind{x}{\typ_x}}{\expr_p'}{\eqrt{\typ'}{\expr\ x}{\expr\ x}}.
  \end{enumerate}
\item If $\typ' = \tref{x}{\tunit}{\typ'}{\expr\ x}{\expr\ x}$, then
  $\pfty(\expr\ x, \expr x, \tbase) = \tref{x}{\tunit}{\bbbeq{\expr
    x}{\expr x}{\tbase}}$; otherwise, $\pfty(l, r,
  \tfun{x}{\typ_x}{\typ}) =
  \tfun{x}{\typ_x}{\eqrt{\typ}{\expr\ x}{\expr\ x}}$.

  In the former case, let $\epf'' = \ebeq{\tbase}\ (\expr\ x) (\expr\ x) \epf'$.
  In the latter case, let $\epf'' = \epf'$.

  Either way, we have \hastype{\env, \envBind{x}{\typ_x},
    \envBind{l}{\typ_l'}, \envBind{r}{\typ_r'}}
  {\epf''}{\eqrt{\typ'}{\expr\ x}{\expr\ x}} by \teqbase or \teqfun,
  respectively.

\item Let $\epf = \tfun{x}{\typ_x}{\epf''}$.
\item Let $\expr_p = \exeq{x}{\typ_x}{\typ}\ \expr\ \expr\ \epf$.
\item Let $\expr_l = \expr_r = \expr$ and $\typ_l =
  \tfun{x}{\typ_x}{\typ_l'}$ and $\typ_r = \tfun{x}{\typ_x}{\typ_r'}$.
\item \label{refl:fun-sub-funty} We find subtyping by \subFun and (\ref{refl:ih-sub}).
\item By \teqfun. We must show:
  \begin{enumerate}
  \item \label{refl:fun-expr} \hastype{\env}{\expr_l}{\typ_l} and \hastype{\env}{\expr_r}{\typ_r};
  \item \label{refl:fun-sub} \issubtype{\env}{\typ_l}{\tfun{x}{\typ_x}{\typ}} and \issubtype{\env}{\typ_r}{\tfun{x}{\typ_x}{\typ}};
  \item \label{refl:fun-pf} \hastype{\env, \envBind{r}{\typ_r}, \envBind{l}{\typ_l}}{\epf}{(\tfun{x}{\typ_x}{\eqrt{\typ}{l\ x}{r\ x}})}
  \item \label{refl:fun-wf} \iswellformed{\env}{\tfun{x}{\typ_x}{\typ}}
  \end{enumerate}
\item We find (\ref{refl:fun-expr}) by assumption, \tsub, and (\ref{refl:fun-sub-funty}).
\item We find (\ref{refl:fun-sub}) by (\ref{refl:fun-sub-funty}).
\item We find (\ref{refl:fun-pf}) by \tlam and (\ref{refl:ih-pf}).
\end{enumerate}

\item $\typ \doteq \eqrt{\typ'}{\expr_1}{\expr_2}$. These types are not equable, so we ignore them.
\end{itemize}

\textbf{Symmetry}: 
By induction on $\typ$.
\begin{itemize}
\item $\typ \doteq \tref{x}{\tbase}{\expr}$. 
\begin{enumerate}
\item We have $\hastype{\env}{\val_{12}}{\eqrt{\tbase}{\expr_1}{\expr_2}}$.
\item \label{sym:base-inversion} By canonical forms, $\val_{12} =
  \ebeq{\tbase}\ \expr_l\ \expr_r\ \val_p$ such that
  \hastype{\env}{\expr_l}{\typ_l} and \hastype{\env}{\expr_r}{\typ_r}
  (for some $\typ_l$ and $\typ_r$ that are refinements of $\tbase$)
  and \hastype{\env, \envBind{r}{\typ_r},
    \envBind{l}{\typ_l}}{\val_p}{\tref{x}{\tunit}{\bbbeq{l}{r}{\tbase}}}
  (Lemma~\ref{lem:canonical-forms}).
\item Let $\val_{21} = \ebeq{\tbase}\ \expr_r\ \expr_l\ \val_p$.
\item By \teqbase, swapping $\typ_l$ and $\typ_r$ from
  (\ref{sym:base-inversion}). We already have appropriate typing and subtyping derivations; we only need to see
\hastype{\env, \envBind{l}{\typ_l},
    \envBind{r}{\typ_r}}{\val_p}{\tref{x}{\tunit}{\bbbeq{r}{l}{\tbase}}}.
\item We have \issubtype{\env, \envBind{l}{\typ_l},
  \envBind{r}{\typ_r}}{\tref{x}{\tunit}{\bbbeq{r}{l}{\tbase}}}{\tref{x}{\tunit}{\bbbeq{l}{r}{\tbase}}}
  by \subBase and symmetry of \bbbeq{(}{)}{\tbase}.
\end{enumerate}

\item $\typ \doteq \tfun{x}{\typ_x}{\typ'}$.
\begin{enumerate}
\item We have $\hastype{\env}{\val_{12}}{\eqrt{\tfun{x}{\typ_x}{\typ'}}{\expr_1}{\expr_2}}$.
\item \label{sym:fun-inversion} By canonical forms, $\val_{12} =
  \exeq{x}{\typ_x'}{\typ''}\ \expr_l\ \expr_r\ \val_p$ such that
  \issubtype{\typ_x}{\typ_x'} and \issubtype{\typ''}{\typ'} and
  \hastype{\env}{\expr_l}{\typ_l} and \hastype{\env}{\expr_r}{\typ_r}
  (for some $\typ_l$ and $\typ_r$ that are subtypes of
  \tfun{x}{\typ_x'}{\typ''}) and \hastype{\env, \envBind{r}{\typ_r},
    \envBind{l}{\typ_l}}{\val_p}{\tfun{x}{\typ_x'}{\eqrt{\typ''}{l\ x}{r\ x}}}.
\item \label{sym:fun-pf-inversion} By canonical forms, this time on
  $\val_p$ from (\ref{sym:fun-inversion}), $\val_p =
  \tlam{x}{\typ_x'}{\expr_p}$ such that
  \issubtype{\env}{\typ_x}{\typ_x'} and \hastype{\env,
    \envBind{r}{\typ_r}, \envBind{l}{\typ_l},
    \envBind{x}{\typ_x'}}{\expr}{\typ'''} such that
  \issubtype{\env,
    \envBind{r}{\typ_r}, \envBind{l}{\typ_l},
    \envBind{x}{\typ_x'}}{\typ'''}{\eqrt{\typ''}{l\ x}{r\ x}}.
\item \label{sym:fun-pf-ih} By \tsub, (\ref{sym:fun-pf-inversion}),
  and the IH on $\eqrt{\typ''}{l\ x}{r\ x}$, we know there
  exists some $\expr_p'$ such that \hastype{\env, \envBind{l}{\typ_l},
    \envBind{r}{\typ_r},
    \envBind{x}{\typ_x'}}{\expr_p'}{\eqrt{\typ''}{r\ x}{l\ x}}.
\item Let $\val_p' = \tfun{x}{\typ_x'}{\expr_p'}$.
\item \label{sym:fun-pf} By (\ref{sym:fun-pf-ih}) and \tlam, and \tsub (using subtyping
  from (\ref{sym:fun-pf-inversion}) and (\ref{sym:fun-inversion})),
\hastype{\env, \envBind{l}{\typ_l},
    \envBind{r}{\typ_r}}{\val_p'}{\eqrt{\tfun{x}{\typ_x}{\typ'}}{\expr_r\ x}{\expr_l\ x}}.
\item Let $\val_{21} = \exeq{x}{\typ_x}{\typ'}\ \expr_r\ \expr_l\ \val_p'$.
\item By \teqbase, with (\ref{sym:fun-pf}) for the proof and
  (\ref{sym:fun-pf-inversion}) and (\ref{sym:fun-inversion}) for the
  rest.
\end{enumerate}

\item $\typ \doteq \eqrt{\typ'}{\expr_1}{\expr_2}$. These types are not equable, so we ignore them.
\end{itemize}

\textbf{Transitivity}: By induction on $\typ$.
\begin{itemize}
\item $\typ \doteq \tref{x}{\tbase}{\expr}$. 
\begin{enumerate}
\item We have \hastype{\env}{\val_{12}}{\eqrt{\typ}{\expr_1}{\expr_2}} and \hastype{\env}{\val_{23}}{\eqrt{\typ}{\expr_2}{\expr_3}}.
\item \label{trans:base-inversion} By canonical forms, 
  $\val_{12} = \ebeq{\tbase}\ \expr_1\ \expr_2\ \val_{12}'$ such that
        \hastype{\env}{\expr_1}{\typ_1} and
        \hastype{\env}{\expr_2}{\typ_2} (for some $\typ_1$ and $\typ_2$ that are refinements of $\tbase$) and 
        \hastype{\env, \envBind{r}{\typ_2}, \envBind{l}{\typ_1}}{\val_{12}'}{\tref{x}{\tunit}{\bbbeq{l}{r}{\tbase}}}.
  and, similarly,
  $\val_{23} = \ebeq{\tbase}\ \expr_1\ \expr_2\ \val_{23}'$ such that
        \hastype{\env}{\expr_2}{\typ_2'} and
        \hastype{\env}{\expr_3}{\typ_3} (for some $\typ_2'$ and $\typ_3$ that are refinements of $\tbase$) and 
        \hastype{\env, \envBind{r}{\typ_3}, \envBind{l}{\typ_2'}}{\val_{23}'}{\tref{x}{\tunit}{\bbbeq{l}{r}{\tbase}}}.
\item \label{trans:base-pf-inversion} By canonical forms again, we
  know that $\val_{12}' = \val_{23}' = \eunit$ and we have:
  \[ \begin{array}{l}
     \issubtype{\env, \envBind{r}{\typ_2}, \envBind{l}{\typ_1}}{\tref{\rbind}{\tunit}{\bbbeq{\rbind}{\eunit}{\tunit}}}{\tref{x}{\tbase}{\tref{x}{\tunit}{\bbbeq{l}{r}{\tbase}}}} \text{, and} \\
     \issubtype{\env, \envBind{r}{\typ_3}, \envBind{l}{\typ_2'}}{\tref{\rbind}{\tunit}{\bbbeq{\rbind}{\eunit}{\tunit}}}{\tref{x}{\tbase}{\tref{x}{\tunit}{\bbbeq{l}{r}{\tbase}}}}. \\
  \end{array} \]
\item \label{trans:base-sub-inversion}
  Elaborating on (\ref{trans:base-pf-inversion}), we know that $\forall \model\in\interp{\env, \envBind{r}{\typ_2}, \envBind{l}{\typ_1}}$, we have:
  \[ \interp{\modelapp{\model}{\tref{x}{\tunit}{\bbbeq{\rbind}{\eunit}{\tunit}}}} \subseteq \interp{\modelapp{\model}{\tref{x}{\tunit}{\bbbeq{l}{r}{\tbase}}}} \]
  and $\forall \model\in\interp{\env, \envBind{r}{\typ_3}, \envBind{l}{\typ_2'}}$, we have:
  \[ \interp{\modelapp{\model}{\tref{x}{\tunit}{\bbbeq{\rbind}{\eunit}{\tunit}}}} \subseteq \interp{\modelapp{\model}{\tref{x}{\tunit}{\bbbeq{l}{r}{\tbase}}}}. \]
\item \label{trans:base-eq-interp} Since
  ${\tref{\rbind}{\tunit}{\bbbeq{\rbind}{\eunit}{\tunit}}}$ contains
  all computations that terminate with \eunit in all models
  (Theorem~\ref{theorem:constant-property}), the right-hand sides of
  the equations must also hold all unit computations. That is, all choices for
  $l$ and $r_2$ (resp. $l$ and $r$) that are semantically well
  typed are necessarily equal.
\item \label{trans:base-interp-trans} By (\ref{trans:base-eq-interp}),
  we can infer that in any given model, $\typ_1$, $\typ_2$, $\typ_2'$,
  and $\typ_3$ identify just one $\tbase$-constant.
  Why must $\typ_2$ and $\typ_2'$ agree? In particular, $\expr_2$ has
  \emph{both} of those types, but by semantic soundness
  (Theorem~\ref{proofs-thm:soundness}), we know that it will go to a
  value in the appropriate type interpretation.
  By determinism of evaluation, we know it must be the \emph{same} value.
  We can therefore conclude that $\forall \model\in\interp{\env, \envBind{r}{\typ_3}, \envBind{l}{\typ_1}}$, 
  $\interp{\modelapp{\model}{\tref{x}{\tunit}{\bbbeq{\rbind}{\eunit}{\tunit}}}} \subseteq
   \interp{\modelapp{\model}{\tref{x}{\tunit}{\bbbeq{l}{r}{\tbase}}}}$.

\item By \teqbase, using $\typ_1$ and $\typ_3$ and $\eunit$ as the
  proof. We need to show \hastype{\env, \envBind{r}{\typ_3},
    \envBind{l}{\typ_1}}{\eunit}{\tref{x}{\tunit}{\bbbeq{l}{r}{\tbase}}};
  all other premises follow from (\ref{trans:base-inversion}).
\item By \tsub and \subBase, using (\ref{trans:base-interp-trans}) for the subtyping.
\end{enumerate}

\item $\typ \doteq \tfun{x}{\typ_x}{\typ'}$.
\begin{enumerate}
\item We have \hastype{\env}{\val_{12}}{\eqrt{\typ}{\expr_1}{\expr_2}} and \hastype{\env}{\val_{23}}{\eqrt{\typ}{\expr_2}{\expr_3}}.

\item \label{trans:fun-inversion} By canonical forms, we have%
\[ \begin{array}{rcl}
  \val_{12} &=& \exeq{x}{\typ_x}{\typ'}\ \expr_1\ \expr_2\ \val_{12}' \\
  \val_{23} &=& \exeq{x}{\typ_x}{\typ'}\ \expr_2\ \expr_3\ \val_{23}' \\
\end{array} \]
where there exist types $\typ_1$, $\typ_2$, $\typ_2'$, and $\typ_3$ subtypes of $\tfun{x}{\typ_x}{\typ'}$ such that
\[ \begin{array}{lr}
\hastype{\env}{\expr_1}{\typ_1} &
\hastype{\env}{\expr_2}{\typ_2} \\
\hastype{\env}{\expr_2}{\typ_2'} &
\hastype{\env}{\expr_3}{\typ_3} \\
\end{array} \]
and there exist types $\typ_{x_{12}}$, $\typ_{x_{23}}$, $\typ'_{12}$, and $\typ'_{23}$ such that 
\[ \begin{array}{l}
\hastype{\env, \envBind{r}{\typ_2}, \envBind{l}{\typ_1}}{\val_{p_{12}}}{\tfun{x}{\typ_{x_{12}}}{\eqrt{\typ'_{12}}{l\ x}{r\ x}}}, \\
\issubtype{\env, \envBind{r}{\typ_2}, \envBind{l}{\typ_1}}{\typ_x}{\typ_{x_{12}}}, \\
\issubtype{\env, \envBind{r}{\typ_2}, \envBind{l}{\typ_1}, \envBind{x}{\typ_x}}{\typ'_{12}}{\typ'}, \\
\hastype{\env, \envBind{r}{\typ_3}, \envBind{l}{\typ_2'}}{\val_{p_{23}}}{\tfun{x}{\typ_x'}{\eqrt{\typ'_{23}}{l\ x}{r\ x}}}, \\
\issubtype{\env, \envBind{r}{\typ_3}, \envBind{l}{\typ_2'}}{\typ_x}{\typ_{x_{23}}}, \text{ and} \\
\issubtype{\env, \envBind{r}{\typ_3}, \envBind{l}{\typ_2'}, \envBind{x}{\typ_x}}{\typ'_{23}}{\typ'}.
\end{array} \]

\item \label{trans:fun-pf-inversion} By canonical forms on
  $\val_{p_{12}}$ and $\val_{p_{23}}$ from
  (\ref{trans:fun-inversion}), we know that:
\[ \val_{p_{12}} = \elam{x}{\typ_{x_{12}}}{\expr_{12}'} \qquad
   \val_{p_{23}} = \elam{x}{\typ_{x_{23}}}{\expr_{23}'}
\]
such that:
\[ \begin{array}{l}
\hastype{\env, \envBind{r}{\typ_2}, \envBind{l}{\typ_1}, \envBind{x}{\typ_{x_{12}}}}{\expr_{12}'}{\typ_{12}''}, \\
\issubtype{\env, \envBind{r}{\typ_2}, \envBind{l}{\typ_1}, \envBind{x}{\typ_{x_{12}}}}{\typ_{12}''}{\typ'_{12}}, \\
\\
\hastype{\env, \envBind{r}{\typ_3}, \envBind{l}{\typ_2'}, \envBind{x}{\typ_{x_{23}}}}{\expr_{23}'}{\typ_{23}''}, \\
\issubtype{\env, \envBind{r}{\typ_3}, \envBind{l}{\typ_2'}, \envBind{x}{\typ_{x_{23}}}}{\typ_{23}''}{\typ'_{23}}, \text{ and} \\
\end{array} \]

\item \label{trans:fun-pf-stronger} By strengthening
  (Lemma~\ref{lem:sub-strengthening}) using
  (\ref{trans:fun-inversion}), we can replace $x$'s type with $\typ_x$
  in both proofs, to find:
\[ \begin{array}{l}
\hastype{\env, \envBind{r}{\typ_2}, \envBind{l}{\typ_1}, \envBind{x}{\typ_x}}{\expr_{12}'}{\typ_{12}'}, \text{and} \\
\hastype{\env, \envBind{r}{\typ_3}, \envBind{l}{\typ_2'}, \envBind{x}{\typ_x}}{\expr_{23}'}{\typ_{23}'}. \\
\end{array} \]
Then, by \tsub, we can relax the type of the proof bodies:
\[ \begin{array}{l}
\hastype{\env, \envBind{r}{\typ_2}, \envBind{l}{\typ_1}, \envBind{x}{\typ_x}}{\expr_{12}'}{\typ'}, \text{and} \\
\hastype{\env, \envBind{r}{\typ_3}, \envBind{l}{\typ_2'}, \envBind{x}{\typ_x}}{\expr_{23}'}{\typ'}. \\
\end{array} \]

\item \label{trans:fun-pf-ih} By (\ref{trans:fun-pf-stronger},
  (\ref{trans:fun-pf-inversion}), and the IH on
  \eqrt{\typ'}{l\ x}{r\ x}, we know there exists some proof body
  $\expr_{13}'$ such that \hastype{\env, \envBind{r}{\typ_3},
    \envBind{l}{\typ_1}}{\expr_{13}'}{\eqrt{\typ'}{l\ x}{r\ x}}.
\item Let $\val_p = \tfun{x}{\typ_x}{\expr_{13}'}$.
\item \label{trans:fun-pf} By (\ref{trans:fun-pf-ih}), and \tlam.
\item Let $\val_{13} = \exeq{x}{\typ_x}{\typ'}\ \expr_1\ \expr_3\ \val_p$.
\item By \teqbase, with (\ref{trans:fun-pf}) for the proof and
  (\ref{trans:fun-inversion}) for the rest.
\end{enumerate}

\item $\typ \doteq \eqrt{\typ'}{\expr_1}{\expr_2}$. These types are not equable, so we ignore them. \qedhere
\end{itemize}

\end{proof}

%\input{proofs-contextual}
%auto-ignore
\section{Parallel reduction and cotermination}\label{app:parred}

The conventional application rule for dependent types substitutes a
term into a type, finding $\expr_1 ~ \expr_2 : \tau\subst{x}{\expr_2}$
when $\expr_1 : \tfun{x}{\tau_x}{\tau}$.
We define two logical relations: a unary interpretation of types
(Figure~\ref{fig:semantic-typing})
and a binary logical relation characterizing equivalence
(Figure~\ref{fig:equivalence-def}).
Both of these logical relations are defined as fixpoints on types.
The type index poses a problem: the function case of these logical
relations quantify over values in the relation, but we sometimes need
to reason about expressions, not values.
If \goesto{\expr}{\val}, are $\typ\subst{x}{\expr}$ 
and $\typ\subst{x}{\val}$ treated the same by our logical relations?
We encounter this problem in particular in proof of logical relation
compositionality, which is precisely about exchanging expressions in
types with the values the expressions reduce to in closing
substitutions: \todo{ref somewhere for the unary, once we've done
  those proofs} for the unary logical relation and binary logical
relation (Lemma~\ref{proofs-lemma:apptyp}).

The key technical device to prove these compositionality lemmas is
\emph{parallel reduction} (Figure~\ref{fig:parred}). Parallel
reduction generalizes our call-by-value relation to allow multiple
steps at once, throughout a term---even under a lambda.
Parallel reduction is a bisimulation
(Lemma~\ref{lem:parred-forward-simulation} for forward simulation;
Corollary~\ref{cor:parred-backward-simulation} for backward simulation).
That is, expressions that parallel reduce to each other go to
identical constants or expressions that themselves parallel reduce,
and the logical relations put terms that parallel reduce in the same
equivalence class.

To prove the compositionality lemmas, we first show that (a) the
logical relations are closed under parallel reduction (\todo{cite} for
the unary relation and Lemma~\ref{proofs-lemma:lr-parred} for the
binary relation) and (b) use the backward simulation to change values
in the closing substitution to a substituted expression in the type.

Our proof comes in three steps.
First, we establish some basic properties of parallel reduction (\S\ref{sec:parred-properties}).
Next, proving the forward simulation is straightforward (\S\ref{sec:parred-forward}): if
\parreds{\expr_1}{\expr_2} and \evals{\expr_1}{\expr_1'}, then either
parallel reduction contracted the redex for us and
\parreds{\expr_1'}{\expr_2} immediately, or the redex is preserved and
\evals{\expr_2}{\expr_2'} such that \parreds{\expr_1'}{\expr_2'}.
Proving the backward simulation is more challenging
(\S\ref{sec:parred-backward}). If \parreds{\expr_1}{\expr_2} and
\evals{\expr_2}{\expr_2'}, the redex contracted in $\expr_2$ may not yet be exposed.
The trick is to show a tighter bisimulation, where the outermost
constructors are always the same, with the subparts parallel reducing.
We call this relation \emph{congruence} (Figure~\ref{fig:congruence});
it's a straightforward restriction of parallel reduction, eliminating
$\beta$, eq1, and eq2 as outermost constructors (but allowing them
deeper inside).
The key lemma shows that if \parreds{\expr_1}{\expr_2}, then there
exists $\expr_1'$ \goesto{\expr_1}{\expr_1'} such that
\congruent{\expr_1'}{\expr_2} (Lemma~\ref{lem:parred-congruent}).
Once we know that parallel reduction implies reduction to congruent
terms, proving that congruence is a backward simulation allows us to
reason ``up to congruence''. In particular, congruence is a
sub-relation of parallel reduction, so we find that parallel reduction
is a backward simulation.
Finally, we can show that \parreds{\expr_1}{\expr_2} implies
observational equivalence (\S\ref{sec:parred-cotermination});
for our purposes, it suffices to find cotermination at constants
(Corollary~\ref{cor:cotermination-multi}).

One might think, in light of Takahashi's explanation of parallel
reduction~\cite{DBLP:journals/jsc/Takahashi89}, that the simulation
techniques we use are too powerful for our needs: why not simply rely
on the Church-Rosser property and confluence, which she proves quite
simply?
Her approach works well when relating parallel reduction to full
$\beta$-reduction (and/or $\eta$-reduction):
the transitive closure of her parallel reduction relation is equal to
the transitive closure of plain $\beta$-reduction (resp. $\eta$- and
$\beta\eta$-reduction).
But we're interested in programming languages, so our underlying
reduction relation isn't full $\beta$: we use call-by-value, and we
will never reduce under lambdas. But even if we were call-by-name, we
would have the same issue.
Parallel reduction implies reduction, but not to the \emph{same}
value, as in her setting. Parallel reduction yields values that are
equivalent, \emph{up to parallel reduction and congruence} (see, e.g.,
Corollary~\ref{cor:parred-value}).

\begin{figure}
\hfill\fbox{$\parreds{\expr}{\expr}$}

\[
\inference
  {}
  {\parreds{x}{x}}[var]
\quad
\inference
  {}
  {\parreds{\con}{\con}}[const]
\quad
\inference
  {\parreds{\typ}{\typ'} \quad
   \parreds{\expr}{\expr'}}
  {\parreds{\elam{x}{\typ}{\expr}}{\elam{x}{\typ'}{\expr'}}}[lam]
\quad
\inference
  {\parreds{\expr_1}{\expr_1'} \quad 
   \parreds{\expr_2}{\expr_2'}}
  {\parreds{\expr_1\ \expr_2}{\expr_1'\ \expr_2'}}[app]
\]

\[
\inference
  {\parreds{\expr}{\expr'} \quad
   \parreds{\val}{\val'}}
  {\parreds{(\elam{x}{\typ}{\expr})\ \val}{\expr'\subst{x}{\val'}}}[$\beta$]
\quad
\inference
  {}
  {\parreds{\bbbeq{(}{)\ \con_1}{\tbase}}{\bbbeq{(}{)}{(\con_1,\tbase)}}}[eq1]
\quad
\inference
  {}
  {\parreds{\bbbeq{(}{)\ \con_2}{(\con_1,\tbase)}}{\con_1 = \con_2}}[eq2]
\]

\[
\inference
  {\parreds{\expr_l}{\expr_l'} \quad
   \parreds{\expr_r}{\expr_r'} \quad
   \parreds{\expr}{\expr'}
  }
  {\parreds{\ebeq{\tbase}\ \expr_l\ \expr_r\ \expr}{\ebeq{\tbase}\ \expr_l'\ \expr_r'\ \expr'}}[beq]
\quad
\inference
  {\parreds{\typ_x}{\typ_x'} \quad
   \parreds{\typ}{\typ'} \quad
   \parreds{\expr_l}{\expr_l'} \quad
   \parreds{\expr_r}{\expr_r'} \quad
   \parreds{\expr}{\expr'}
  }
  {\parreds
    {\exeq{x}{\typ_x}{\typ}\ \expr_l\ \expr_r\ \expr}
    {\exeq{x}{\typ_x'}{\typ'}\ \expr_l'\ \expr_r'\ \expr'}}[xeq]
\]

\hfill\fbox{$\parreds{\typ}{\typ}$}

\[
\inference
  {\parreds{\refa}{\refa'}}
  {\parreds{\tref{\rbind}{\tbase}{\refa}}{\tref{\rbind}{\tbase}{\refa'}}}[ref]
\quad
\inference
  {\parreds{\typ_x}{\typ_x'} \quad
   \parreds{\typ}{\typ'}
  }
  {\parreds{\tfun{x}{\typ_x}{\typ}}{\tfun{x}{\typ_x'}{\typ'}}}[fun]
\]

\[
\inference
  {\parreds{\typ}{\typ'} \quad
   \parreds{\expr_l}{\expr_l'} \quad
   \parreds{\expr_r}{\expr_r'}
  }
  {\parreds{\eqrt{\typ}{\expr_l}{\expr_r}}{\eqrt{\typ'}{\expr_l'}{\expr_r'}}}[eq]
\]

\caption{Parallel reduction in terms and types.}
\label{fig:parred}
\end{figure}

\subsection{Basic Properties}
\label{sec:parred-properties}

\begin{lemma}[Parallel reduction is reflexive]\label{lem:parred-refl}
  For all \expr and \typ, \parreds{\expr}{\expr} and \parreds{\typ}{\typ}.
\begin{proof}
  By mutual induction on \expr and \typ.
\paragraph*{Expressions}
\begin{itemize}
\item $\expr \doteq x$. By var.
\item $\expr \doteq c$. By const.
\item $\expr \doteq \elam{x}{\typ}{\expr'}$. By the IHs on $\typ$ and $\expr'$ and lam.
\item $\expr \doteq \expr_1 ~ \expr_2$. By the IH on $\expr_1$ and $\expr_2$ and app.
\item $\expr \doteq \ebeq{\tbase}\ \expr_l\ \expr_r\ \expr'$. By the IHs on
  $\expr_l$, $\expr_r$, and $\expr'$ and beq.
\item $\expr \doteq \exeq{x}{\typ_x}{\typ}\ \expr_l\ \expr_r\ \expr'$. By the IHs on
  $\typ_x$, $\typ$, $\expr_l$, $\expr_r$, and $\expr'$ and xeq. 
\end{itemize}

\paragraph*{Types}
\begin{itemize}
\item $\typ \doteq \tref{\rbind}{\tbase}{\refa}$. By the IH on $\refa$ (an expression) and ref.
\item $\typ \doteq \tfun{x}{\typ_x}{\typ'}$. By the IHs on $\typ_x$ and $\typ'$ and fun. 
\item $\typ \doteq \eqrt{\typ'}{\expr_l}{\expr_r}$. By the IHs on $\typ'$, $\expr_l$, and $\expr_r$ and eq.
\qedhere
\end{itemize}
\end{proof}
\end{lemma}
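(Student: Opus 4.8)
The plan is to proceed by simultaneous structural induction on the expression $\expr$ and the type $\typ$, since the two grammars are mutually recursive: lambda abstractions carry a type annotation, refinement types embed a boolean expression, and the propositional-equality type $\eqrt{\typ}{\expr_l}{\expr_r}$ embeds both a type index and two expressions. At each syntactic node I would discharge the goal with the matching \emph{congruence} rule of parallel reduction --- the rules var, const, lam, app, beq, xeq on expressions and ref, fun, eq on types --- feeding each premise with the induction hypothesis for the corresponding subterm or subtype. Crucially, none of the contraction rules ($\beta$, eq1, eq2) are ever needed: reflexivity is witnessed entirely by the structural/congruence fragment.

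Concretely, for expressions: a variable $x$ uses rule var; a constant $\con$ uses const; $\elam{x}{\typ}{\expr}$ uses lam with the IHs on its type annotation and body; an application $\expr_1\ \expr_2$ uses app with the IHs on $\expr_1$ and $\expr_2$; $\ebeq{\tbase}\ \expr_l\ \expr_r\ \expr$ uses beq with the IHs on the three expression arguments; and $\exeq{x}{\typ}{\typ}\ \expr_l\ \expr_r\ \expr$ uses xeq with the IHs on the two type components and the three expressions. For types: $\tref{\rbind}{\tbase}{\refa}$ uses ref with the IH on the refinement expression $\refa$; $\tfun{x}{\typ}{\typ}$ uses fun with the IHs on domain and codomain; and $\eqrt{\typ}{\expr_l}{\expr_r}$ uses eq with the IHs on the index type and the two indexed expressions.

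The hard part is essentially nonexistent --- this is a routine bookkeeping lemma --- so the only thing to get right is the shape of the induction. It must be a genuine mutual induction over the combined grammar of expressions and types (equivalently, a single induction on the size of the syntax tree), so that the induction hypothesis is available for the type annotation inside a lambda and for the expressions buried inside a refinement or equality type. Once that setup is fixed, every case is a one-line appeal to a congruence rule. The lemma will then be used freely downstream to seed parallel-reduction derivations --- holding untouched subterms fixed while reducing others --- in the forward/backward simulation and cotermination arguments.
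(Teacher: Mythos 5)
Your proposal is correct and follows exactly the paper's argument: a mutual structural induction over expressions and types, discharging each case with the corresponding congruence rule (var, const, lam, app, beq, xeq; ref, fun, eq) applied to the induction hypotheses on subterms. Nothing further is needed.
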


% A3 (subsumes A1, since parallel reduction is reflexive)
\begin{lemma}[Parallel reduction is substitutive]\label{lem:parred-subst}
  If  \parreds{\expr}{\expr'}, then:
  \begin{enumerate}
  \item If \parreds{\expr_1}{\expr_2}, then \parreds{\expr_1\subst{x}{\expr}}{\expr_2\subst{x}{\expr'}}.
  \item If \parreds{\typ_1}{\typ_2}, then \parreds{\typ_1\subst{x}{\expr}}{\typ_2\subst{x}{\expr'}}.
  \end{enumerate}
\begin{proof}
  By mutual induction on $\expr_1$ and $\typ_1$.
\paragraph*{Expressions}
\begin{itemize}
\item[var] \parreds{y}{y}. 
If $y \ne x$, then the substitution has no effect and the case is trivial.
If $y = x$, then $x\subst{x}{\expr} = \expr$ and we have \parreds{\expr}{\expr'} by assumption.
We have \parreds{\expr}{\expr} by reflexivity (Lemma~\ref{lem:parred-refl}).
\item[const] \parreds{\con}{\con}. This case is trivial: the substitution has no effect.
\item[lam] \parreds{\elam{y}{\typ}{\expr'}}{\elam{y}{\typ}{\expr''}}.
If $y \ne x$, then by the IH on $\expr'$ and lam.
If $y = x$, then the substitution has no effect and the case is trivial.
\item[app] \parreds{\expr_{11}\ \expr_{12}}{\expr_{21}\ \expr_{22}}, where
\parreds{\expr{1i}}{\expr{2i}} for $i = 1, 2$.
By the IHs on $\expr_{1i}$ and app.
\item[beta] \parreds{(\elam{y}{\typ}{\expr'})\ \val}{\expr'\subst{y}{\val'}}, where
\parreds{\expr'}{\expr''} and \parreds{v}{v'}.
If $y \ne x$, then 
\parreds{(\elam{y}{\typ}{\expr'\subst{x}{\expr}})\ \val\subst{x}{\expr}}
        {\expr''\subst{x}{\expr}\subst{y}{\val'\subst{x}{\expr}}}
by $\beta$.
Since $y \ne x$, $\expr''\subst{x}{\expr}\subst{y}{\val'\subst{x}{\expr}} = \expr''\subst{y}{\val'}\subst{x}{\expr}$ as desired.

If $y = x$, then the substitution in the lambda has no effect, and we find
\parreds{(\elam{x}{\typ}{\expr'})\ \val\subst{x}{\expr}}
        {\expr''\subst{x}{\val'\subst{x}{\expr}}}
by $\beta$.
We have $\expr''\subst{x}{\val'\subst{x}{\expr}} =
\expr''\subst{x}{\val'}\subst{x}{\expr}$ as desired.
\item[eq1] \parreds{\bbbeq{(}{)\ \con_1}{\tbase}}{\bbbeq{(}{)}{(\con_1,\tbase)}}. This case is trivial by eq1, as the substitution has no effect.
\item[eq2] \parreds{\bbbeq{(}{)\ \con_2}{(\con_1,\tbase)}}{\con_1 = \con_2}. This case is trivial by eq2, as the substitution has no effect.
\item[beq] \parreds{\ebeq{\tbase}\ \expr_l\ \expr_r\ \expr_p}{\ebeq{\tbase}\ \expr_l'\ \expr_r'\ \expr_p'}, where
\parreds{\expr_l}{\expr_l'} and \parreds{\expr_r}{\expr_r'} and \parreds{\expr_p}{\expr_p'}.
By the IHs on $\expr_l$, $\expr_r$, and $\expr_p$ and beq.
\item[xeq] \parreds{\exeq{x}{\typ_x}{\typ}\ \expr_l\ \expr_r\ \expr_p}{\exeq{x}{\typ_x}{\typ}\ \expr_l'\ \expr_r'\ \expr_p'}, where
\parreds{\expr_l}{\expr_l'} and \parreds{\expr_r}{\expr_r'} and \parreds{\expr_p}{\expr_p'}.
By the IHs on $\expr_l$, $\expr_r$, and $\expr_p$ and xeq. 
\end{itemize}

\paragraph*{Types}
\begin{itemize}

\item[ref] \parreds{\tref{y}{\tbase}{\refa}}{\tref{y}{\tbase}{\refa'}} where \parreds{\refa}{\refa'}.
If $y \ne x$, then \parreds{\refa\subst{x}{\expr}}{\refa'\subst{x}{\expr'}} by the IH on $\refa$; we are done by ref.

If $y = x$, then the substitution has no effect, and the case is immediate by reflexivity (Lemma~\ref{lem:parred-refl}).

\item[fun] \parreds{\tfun{y}{\typ_y}{\typ}}{\tfun{y}{\typ_y'}{\typ'}}
  where \parreds{\typ_y}{\typ_y'} and \parreds{\typ}{\typ'}.
If $y \ne x$, then by the IH on $\typ_y$ and $\typ$ and fun.

If $y = x$, then the substitution only has effect in the domain. The
IH on $\typ_y$ finds \parreds{\typ_y\subst{x}{\expr}}{\typ_y'\subst{x}{\expr'}} in
the domain; reflexivity covers the codomain
(Lemma~\ref{lem:parred-refl}), and we are done by fun.

\item[eq] \parreds{\eqrt{\typ}{\expr_l}{\expr_r}}{\eqrt{\typ'}{\expr_l'}{\expr_r'}}.
By the IHs and eq.
\qedhere
\end{itemize}
\end{proof}
\end{lemma}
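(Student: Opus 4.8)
The plan is to prove both parts simultaneously by mutual induction on the derivations of $\parreds{\expr_1}{\expr_2}$ and $\parreds{\typ_1}{\typ_2}$ — equivalently, by induction on the structure of $\expr_1$ and $\typ_1$, which fixes the shape of the derivation up to the three reduction rules $\beta$, eq1, eq2. The hypothesis $\parreds{\expr}{\expr'}$ is simply carried along, unchanged, through every case. The leaf rules are immediate: for const the substitution is vacuous; for var with $\expr_1 = y$, if $y \neq x$ both substitutions disappear and we reapply var, while if $y = x$ then $y\subst{x}{\expr} = \expr$ and $y\subst{x}{\expr'} = \expr'$, so we need exactly the assumed $\parreds{\expr}{\expr'}$. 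The rules eq1 and eq2 act on applications of closed constants, so substitution has no effect and the same rule re-applies. For the congruence rules — app, beq, xeq on expressions and ref, fun, eq on types — I would push the substitution into each immediate subterm, apply the appropriate induction hypothesis to each, and reassemble with the same rule. The only subtlety there is in lam and fun, where the bound variable may clash with $x$: I would dispatch this either by the standard alpha-renaming convention (choose the binder fresh for $x$ and for $\expr'$) or by an explicit case split, appealing to reflexivity of parallel reduction (Lemma~\ref{lem:parred-refl}) for the shadowed body or codomain where no substitution happens.

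The genuinely interesting case is $\beta$: $\parreds{(\elam{y}{\typ}{\expr_b})\ \val}{\expr_b'\subst{y}{\val'}}$ derived from $\parreds{\expr_b}{\expr_b'}$ and $\parreds{\val}{\val'}$. Assuming, by alpha-renaming, that $y \neq x$ and $y \notin \mathrm{fv}(\expr)\cup\mathrm{fv}(\expr')$, the induction hypotheses give $\parreds{\expr_b\subst{x}{\expr}}{\expr_b'\subst{x}{\expr'}}$ and $\parreds{\val\subst{x}{\expr}}{\val'\subst{x}{\expr'}}$, where the latter is still a value since substitution preserves value-hood. Applying the $\beta$ rule then yields $\parreds{(\elam{y}{\typ\subst{x}{\expr}}{\expr_b\subst{x}{\expr}})\ \val\subst{x}{\expr}}{\expr_b'\subst{x}{\expr'}\subst{y}{\val'\subst{x}{\expr'}}}$; the left-hand side is literally $((\elam{y}{\typ}{\expr_b})\ \val)\subst{x}{\expr}$. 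It remains to invoke the usual substitution-commutation identity $\expr_b'\subst{x}{\expr'}\subst{y}{\val'\subst{x}{\expr'}} = \expr_b'\subst{y}{\val'}\subst{x}{\expr'}$ — valid precisely because $y \neq x$ and $y$ is not free in $\expr'$ — to recognize the right-hand side as $(\expr_b'\subst{y}{\val'})\subst{x}{\expr'}$, which is what the conclusion demands.

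The main obstacle is thus bookkeeping rather than insight: getting the binder hygiene and the substitution-commutation lemma exactly right in the $\beta$ (and, implicitly, lam and fun) cases, and threading the mutual induction faithfully through every syntactic constructor of both expressions and types. One auxiliary fact used silently above — that $\parreds{\val}{\val'}$ implies $\val'$ is again a value — is worth stating as a tiny separate observation (a one-line induction on the $\parreds{}{}$ derivation, since only the var, const, lam, beq, and xeq rules can conclude with a value on the left), so that the $\beta$ premise really does feed back into the $\beta$ rule. With reflexivity (Lemma~\ref{lem:parred-refl}) and this observation available, every remaining case is mechanical.
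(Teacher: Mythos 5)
Your proposal is correct and follows essentially the same route as the paper: mutual induction on the structure of the derivations, reassembling each congruence rule from the induction hypotheses, handling binder/substitution-variable clashes by a case split (or alpha-renaming), and closing the $\beta$ case with the standard substitution-commutation identity. The only addition is your explicit note that substitution preserves value-hood (needed to re-apply the $\beta$ rule), which the paper leaves implicit; that is a reasonable bit of extra care, not a divergence.
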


\begin{corollary}[Substituting multiple parallel reduction is parallel reduction]\label{cor:parred-subst-eval-multi}
  If \parredsto{\expr_1}{\expr_2}, then \parredsto{\expr\subst{x}{\expr_1}}{\expr\subst{x}{\expr{2}}}.
\begin{proof}
  First, notice that \parreds{\expr}{\expr} by reflexivity (Lemma~\ref{lem:parred-refl}).
  By induction on \parredsto{\expr_1}{\expr_2}, using reflexivity in
  the base case (Lemma~\ref{lem:parred-refl}); the inductive step uses
  substituting parallel reduction (Lemma~\ref{lem:parred-subst})
  and the IH.
\end{proof}
\end{corollary}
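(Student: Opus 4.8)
The plan is to prove Corollary~\ref{cor:parred-subst-eval-multi} by a short induction on the length of the multi-step parallel reduction sequence $\parredsto{\expr_1}{\expr_2}$, leaning entirely on the two basic facts already established: that parallel reduction is reflexive (Lemma~\ref{lem:parred-refl}) and that it is a congruence with respect to substitution on both the substituted argument and the body (Lemma~\ref{lem:parred-subst}).

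First I would record the one instance of substitutivity actually needed here. Since $\parreds{\expr}{\expr}$ by reflexivity, Lemma~\ref{lem:parred-subst} specialized with the identity reduction on the body gives, for any single step $\parreds{\expr'}{\expr''}$, that $\parreds{\expr\subst{x}{\expr'}}{\expr\subst{x}{\expr''}}$. That is, a single parallel-reduction step in the argument lifts to a single parallel-reduction step inside $\expr\subst{x}{\cdot}$, with the body $\expr$ held fixed.

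Then the induction is routine. In the base case $\expr_1 = \expr_2$, so $\expr\subst{x}{\expr_1}$ and $\expr\subst{x}{\expr_2}$ are literally the same term and the empty reduction sequence suffices. In the inductive case we have $\parreds{\expr_1}{\expr_1'}$ followed by $\parredsto{\expr_1'}{\expr_2}$; the observation above turns the first step into $\parreds{\expr\subst{x}{\expr_1}}{\expr\subst{x}{\expr_1'}}$, the induction hypothesis gives $\parredsto{\expr\subst{x}{\expr_1'}}{\expr\subst{x}{\expr_2}}$, and prepending the single step to this sequence yields $\parredsto{\expr\subst{x}{\expr_1}}{\expr\subst{x}{\expr_2}}$, as required.

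There is no genuine obstacle here: all the real content — the case analysis over term and type constructors, the treatment of $\beta$, eq1/eq2, and variable-capture subtleties — already lives in Lemma~\ref{lem:parred-subst}, which this corollary merely iterates. The only point to handle with care is keeping the two sides of the substitution straight (the body stays fixed while the argument reduces), which is precisely why reflexivity is invoked to instantiate one half of the more general substitutivity statement before running the induction.
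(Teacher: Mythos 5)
Your proposal is correct and matches the paper's proof: both invoke reflexivity of parallel reduction to hold the body $\expr$ fixed, specialize Lemma~\ref{lem:parred-subst} to lift a single argument step to a step on the substituted terms, and then induct on the length of the multi-step sequence. The only (immaterial) difference is that you justify the base case by the terms being syntactically identical, whereas the paper cites reflexivity; both are fine.
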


% A4
\begin{lemma}[Parallel reduction subsumes reduction]\label{lem:parred-eval}
  If \evals{\expr_1}{\expr_2} then \parreds{\expr_1}{\expr_2}.
\begin{proof}
  By induction on the evaluation derivation, using reflexivity of
  parallel reduction to cover expressions and types that didn't step (Lemma~\ref{lem:parred-refl}).
\begin{itemize}
\item[ctx] \evals{\ctxapp{\ectx}{\expr}}{\ctxapp{\ectx}{\expr'}}, where \evals{\expr}{\expr'}.
By the IH, \parreds{\expr}{\expr'}.
By structural induction on $\ectx$.
\begin{itemize}
\item $\ectx \doteq \bullet$. By the outer IH.
\item $\ectx \doteq \ectx_1\ \expr_2$. By the inner IH on $\ectx_1$, reflexivity on $\expr_2$, and app.
\item $\ectx \doteq \val_1\ \ectx_2$. By reflexivity on $\val_1$, the inner IH on $\ectx_2$, and app.
\item $\ectx \doteq \ebeq{\tbase}\ \expr_l\ \expr_r\ \ectx'$.
  By reflexivity on $\expr_l$ and $\expr_r$, the inner IH on and $\ectx'$, and beq.
\item $\ectx \doteq \exeq{x}{\typ_x}{\typ}\ \expr_l\ \expr_r\ \ectx'$.
  By reflexivity on $\typ_x$, $\typ$, $\expr_l$ and $\expr_r$, the inner IH on and $\ectx'$, and xeq.
\end{itemize}
\item[$\beta$] \evals{(\elam{x}{\typ}{\expr})\ \val}{\expr\subst{x}{\val}}.
By reflexivity (Lemma~\ref{lem:parred-refl}, \parreds{\expr}{\expr} and \parreds{\val}{\val}.
By beta, \parreds{(\elam{x}{\typ}{\expr})\ \val}{\expr\subst{x}{\val}}.
\item[eq1] By eq1.
\item[eq2] By eq2. \qedhere
\end{itemize}
\end{proof}
\end{lemma}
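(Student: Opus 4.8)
The plan is to prove Lemma~\ref{lem:parred-eval} by induction on the derivation of $\evals{\expr_1}{\expr_2}$. Parallel reduction (Figure~\ref{fig:parred}) is designed as a superset of the call-by-value reduction relation (Figure~\ref{fig:coresyntax}), so each reduction rule maps onto a matching parallel-reduction rule, with reflexivity of parallel reduction (Lemma~\ref{lem:parred-refl}) filling in the parts of a term that the operational semantics leaves untouched.

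First I would dispatch the three non-congruence reduction rules. For the $\beta$ step $\evals{(\elam{x}{\typ}{\expr})\ \val}{\expr\subst{x}{\val}}$, reflexivity gives $\parreds{\expr}{\expr}$ and $\parreds{\val}{\val}$, and the $\beta$ rule of parallel reduction then yields $\parreds{(\elam{x}{\typ}{\expr})\ \val}{\expr\subst{x}{\val}}$. The two equality steps $\evals{\bbbeq{(}{)\ \con_1}{\tbase}}{\bbbeq{(}{)}{(\con_1,\tbase)}}$ and $\evals{\bbbeq{(}{)\ \con_2}{(\con_1,\tbase)}}{\con_1 = \con_2}$ are immediate from the $\mathrm{eq1}$ and $\mathrm{eq2}$ rules of parallel reduction, which perform exactly the same rewrites.

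The congruence rule $\evals{\ctxapp{\ectx}{\expr}}{\ctxapp{\ectx}{\expr'}}$ with $\evals{\expr}{\expr'}$ is the one case needing extra work: the induction hypothesis supplies $\parreds{\expr}{\expr'}$, and I would then proceed by a secondary induction on the shape of the evaluation context $\ectx$. When $\ectx = \bullet$ the result is just the induction hypothesis. For $\ectx = \ectx_1\ \expr_2$, $\ectx = \val_1\ \ectx_2$, $\ectx = \ebeq{\tbase}\ \expr_l\ \expr_r\ \ectx'$, and $\ectx = \exeq{x}{\typ_x}{\typ}\ \expr_l\ \expr_r\ \ectx'$, I would combine the secondary induction hypothesis on the recursive sub-context with reflexive parallel-reduction witnesses (Lemma~\ref{lem:parred-refl}) for the syntactic pieces CBV does not touch --- the unreduced argument or function head, the two equated expressions, and the type annotations in the equality-proof constructors --- and conclude with the corresponding congruence rule of parallel reduction ($\mathrm{app}$, $\mathrm{beq}$, or $\mathrm{xeq}$).

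I do not expect a genuine obstacle here: the lemma holds essentially by the design of parallel reduction. The only subtlety is that the CBV congruence rule rewrites exactly one subterm, so everything else in the surrounding context must be covered by the ``zero-step'' instances of parallel reduction supplied by reflexivity; this is why Lemma~\ref{lem:parred-refl} is a prerequisite, and it is also why the proof fixes the shape of $\ectx$ rather than reasoning abstractly about contexts.
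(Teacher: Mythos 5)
Your proposal matches the paper's proof exactly: induction on the evaluation derivation, with the $\beta$, eq1, and eq2 cases handled directly by their parallel-reduction counterparts, and the ctx case handled by a secondary structural induction on $\ectx$ using reflexivity (Lemma~\ref{lem:parred-refl}) for the untouched subterms before applying the matching congruence rule. No differences to note.
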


\subsection{Forward Simulation}
\label{sec:parred-forward}

% A8
\begin{lemma}[Parallel reduction is a forward simulation]\label{lem:parred-forward-simulation}
  If \parreds{\expr_1}{\expr_2} and \evals{\expr_1}{\expr_1'}, then
  there exists $\expr_2'$ such that \goesto{\expr_2}{\expr_2'} and
  \parreds{\expr_1'}{\expr_2'}.
\begin{proof}
By induction on the derivation of \evals{\expr_1}{\expr_1'}, leaving $\expr_2$ general.
\begin{itemize}
\item[ctx]
By structural induction on $\ectx$, using reflexivity (Lemma~\ref{lem:parred-refl}) on parts where the IH doesn't apply.
\begin{itemize}
\item $\ectx \doteq \bullet$. By the outer IH on the actual step.
\item $\ectx \doteq \ectx_1\ \expr_2$. By the IH on $\ectx_1$, reflexivity on $\expr_2$, and app.
\item $\ectx \doteq \val_1\ \ectx_2$. By reflexivity on $\val_1$, the IH on $\ectx_2$, and app.
\item $\ectx \doteq \ebeq{\tbase}\ \expr_l\ \expr_r\ \ectx'$.
By reflexivity on $\expr_l$ and $\expr_r$, the IH on $\ectx'$, and beq.
\item $\ectx \doteq \exeq{x}{\typ_x}{\typ}\ \expr_l\ \expr_r\ \ectx'$.
By reflexivity on $\typ_x$, $\typ$, $\expr_l$ and $\expr_r$, the IH on $\ectx'$, and xeq.
\end{itemize}

\item[$\beta$] \evals{(\elam{x}{\typ}{\expr})\ \val}{\expr\subst{x}{\val}}.
One of two rules could have applied to find \parreds{\expr_1}{\expr_2}: app or $\beta$.

In the app case, we have $\expr_2 = (\elam{x}{\typ'}{\expr'})\ \val'$
where \parreds{\typ}{\typ'} and \parreds{\expr}{\expr'} and
\parreds{\val}{\val'}.
Let $\expr_2' = \expr'\subst{x}{\val'}$.
We find \goesto{\expr_2}{\expr_2'} in one step by $\beta$.
We find \parreds{\expr\subst{x}{\val}}{\expr'\subst{x}{\val'}} by
substitutivity of parallel reduction (Lemma~\ref{lem:parred-subst}).

In the $\beta$ case, we have $\expr_2 = \expr'\subst{x}{\val'}$ such
that \parreds{\expr}{\expr'} and \parreds{\val}{\val'}.
Let $\expr_2' = \expr_2$.
We find \goesto{\expr_2}{\expr_2'} in no steps at all;
we find \parreds{\expr_1'}{\expr_2'} by substitutivity of parallel
reduction (Lemma~\ref{lem:parred-subst}).

\item[eq1] \evals{\bbbeq{(}{)\ \con_1}{\tbase}}{\bbbeq{(}{)}{(\con_1,\tbase)}}.
One of two rules could have applied to find
\parreds{\bbbeq{(}{)\ \con_1}{\tbase}}{\expr_2}: app or eq1.

In the app case, we must have $\expr_2 = \expr_1 =
\bbbeq{(}{)\ \con_1}{\tbase}$, because there are no reductions
available in these constants.
Let $\expr_2' = \bbbeq{(}{)}{(\con_1,\tbase)}$.
We find \goesto{\expr_2}{\expr_2'} in a single step by our assumption (or eq1).
We find parallel reduction by reflexivity (Lemma~\ref{lem:parred-refl}).

In the eq2 case, we have $\expr_2 = \expr_1' = \bbbeq{(}{)}{(\con_1,\tbase)}$. Let $\expr_2' = \expr_2$.
We find \goesto{\expr_2}{\expr_2'} in no steps at all.
We find parallel reduction by reflexivity (Lemma~\ref{lem:parred-refl}).

\item[eq2] \evals{\bbbeq{(}{)\ \con_2}{(\con_1,\tbase)}}{\con_1 = \con_2}.
One of two rules could have applied to find
\parreds{\bbbeq{(}{)\ \con_2}{(\con_1,\tbase)}}{\expr_2}: app or eq2.

In the app case, we have $\expr_2 = \expr_1 =
{\bbbeq{(}{)\ \con_2}{(\con_1,\tbase)}}$, because there are no
reductions available in these constants. 
Let $\expr_2' \doteq \con_1 = \con_2$, i.e. \etrue when $\con_1 =
\con_2$ and \efalse otherwise.
We find \goesto{\expr_2}{\expr_2'} in a single step by our assumption (or eq2).
We find parallel reduction by reflexivity (Lemma~\ref{lem:parred-refl}).

In the eq2 case, we have $\expr_2 = \expr_1' \doteq \con_1 = \con_2$, i.e. \etrue
when $\con_1 = \con_2$ and \efalse otherwise.
Let $\expr_2' = \expr_2$.
We find \goesto{\expr_2}{\expr_2'} in no steps at all.
We find parallel reduction by reflexivity
(Lemma~\ref{lem:parred-refl}).
\qedhere

\end{itemize}
\end{proof}
\end{lemma}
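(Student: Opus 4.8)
The plan is to prove the statement by induction on the derivation of the single call-by-value step $\evals{\expr_1}{\expr_1'}$, keeping the parallel-reduction target $\expr_2$ general. Throughout I would lean on two facts already available: reflexivity of parallel reduction (Lemma~\ref{lem:parred-refl}), to cover the parts of $\expr_1$ that the step does not touch, and substitutivity of parallel reduction (Lemma~\ref{lem:parred-subst}), to push a reduction through a $\beta$-contraction. I would also use that $\goesto{\cdot}{\cdot}$ is closed under evaluation contexts, which follows by iterating the contextual reduction rule.

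There are four cases for the step. For the contextual rule, $\evals{\ctxapp{\ectx}{\expr}}{\ctxapp{\ectx}{\expr'}}$ from $\evals{\expr}{\expr'}$, I would run a secondary structural induction on $\ectx$. When $\ectx = \bullet$, the outer induction hypothesis applies directly to the redex step. For every compound shape of $\ectx$ — an application $\ectx\ \expr$ or $\val\ \ectx$, or an equality-proof context — the active redex sits strictly inside $\expr_1$, so $\expr_1$ cannot match the shapes required by the reduction rules ($\beta$, eq1, eq2 each demand a value where the redex lives); hence $\parreds{\expr_1}{\expr_2}$ must come from the congruence rule matching $\expr_1$'s outermost shape. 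I invert that rule, apply the inner induction hypothesis to the sub-context containing the redex and reflexivity (Lemma~\ref{lem:parred-refl}) to the sibling subterms and type annotations, and then reassemble $\goesto{\expr_2}{\expr_2'}$ (by closure of $\goesto{\cdot}{\cdot}$ under contexts) and $\parreds{\expr_1'}{\expr_2'}$ (by the same congruence rule).

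The remaining three cases are the top-level redexes: the $\beta$-rule and the two constant-equality rules (eq1 and eq2). In each I would invert $\parreds{\expr_1}{\expr_2}$: either it was derived by the matching parallel reduction rule, in which case the redex is already contracted and $\expr_2$ is the contractum, or it was derived by a congruence rule, in which case $\expr_2$ still exposes the same redex. In the ``already contracted'' sub-case I take $\expr_2' = \expr_2$ with zero steps, and $\parreds{\expr_1'}{\expr_2'}$ follows from substitutivity (Lemma~\ref{lem:parred-subst}) for $\beta$ and from reflexivity (Lemma~\ref{lem:parred-refl}) for eq1 and eq2. In the congruence sub-case I take a single step to contract the redex in $\expr_2$ and close the diagram the same way.

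I expect the $\beta$-rule, and specifically its congruence sub-case, to be the crux. There $\expr_1 = (\elam{x}{\typ}{\expr})\ \val$ and $\expr_2 = (\elam{x}{\typ'}{\expr'})\ \val'$ with $\parreds{\typ}{\typ'}$, $\parreds{\expr}{\expr'}$, and $\parreds{\val}{\val'}$; I must observe that $\val'$ is again a value (so the call-by-value $\beta$-rule fires, giving $\evals{\expr_2}{\expr'\subst{x}{\val'}}$) and then derive $\parreds{\expr\subst{x}{\val}}{\expr'\subst{x}{\val'}}$, which is exactly what Lemma~\ref{lem:parred-subst} supplies. Everything else is routine bookkeeping over term shapes and evaluation contexts.
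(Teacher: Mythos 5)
Your proposal is correct and follows essentially the same route as the paper's proof: induction on the step derivation with $\expr_2$ general, an inner structural induction on the evaluation context (reflexivity for untouched subterms), and inversion of the parallel reduction in the redex cases into ``already contracted'' (zero steps) versus ``congruence'' (one step), closed by substitutivity of parallel reduction for $\beta$ and reflexivity for the equality constants. The only difference is that you explicitly justify why the compound-context cases force the congruence rule of parallel reduction, a step the paper leaves implicit.
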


\subsection{Backward Simulation}
\label{sec:parred-backward}

% A9
\begin{lemma}[Reduction is substitutive]\label{lem:red-subst}
If \evals{\expr_1}{\expr_2}, then \evals{\expr_1\subst{x}{\expr}}{\expr_2\subst{x}{\expr}}.
\begin{proof}
  By induction on the derivation of \evals{\expr_1}{\expr_2}.
\begin{itemize}
\item[ctx] By structural induction on $\ectx$.
\begin{itemize}
\item $\ectx \doteq \bullet$. By the outer IH.
\item $\ectx \doteq \ectx_1\ \expr_2$. By the IH on $\ectx_1$ and ctx.
\item $\ectx \doteq \val_1\ \ectx_2$. By the IH on $\ectx_2$ and ctx.
\item $\ectx \doteq \ebeq{\tbase}\ \expr_l\ \expr_r\ \ectx'$. By the IH on $\ectx'$ and ctx.
\item $\ectx \doteq \exeq{x}{\typ_x}{\typ}\ \expr_l\ \expr_r\ \ectx'$. By the IH on $\ectx'$ and ctx.
\end{itemize}
\item[$\beta$] \evals{(\elam{y}{\typ}{\expr'})\ \val}{\expr'\subst{y}{\val}}.
We must show
\evals{(\elam{y}{\typ}{\expr'})\subst{x}{\expr}\ \val\subst{x}{\expr}}{\expr'\subst{y}{\val}\subst{x}{\expr}}.

The exact result depends on whether $y = x$.
If $y \ne x$, the substitution goes through, and we have $(\elam{y}{\typ}{\expr'})\subst{x}{\expr} = \elam{y}{\typ\subst{x}{\expr}}{\expr'\subst{x}{\expr}}$.
By $\beta$, 
\evals{(\elam{y}{\typ\subst{x}{\expr}}{\expr'\subst{x}{\expr}})\ \val\subst{x}{\expr}}%
      {\expr'\subst{x}{\expr}\subst{y}{\val\subst{x}{\expr}}}.
But $\expr'\subst{x}{\expr}\subst{y}{\val\subst{x}{\expr}} =
\expr'\subst{y}{\val}\subst{x}{\expr}$, and we are done.

If, on the other hand, $y = x$, then the substitution has no effect in
the body of the lambda, and $(\elam{y}{\typ}{\expr'})\subst{x}{\expr}
= \elam{y}{\typ\subst{x}{\expr}}{\expr'}$.
By $\beta$ again, we find
\evals{(\elam{y}{\typ\subst{x}{\expr}}{\expr'})\ \val\subst{x}{\expr}}%
      {\expr'\subst{y}{\val\subst{x}{\expr}}}.
Since $y = x$, we really have
$\expr'\subst{x}{\val\subst{x}{\expr}}$ which is the same as
$\expr'\subst{x}{\val}\subst{x}{\expr} = \expr'\subst{y}{\val}\subst{x}{\expr} $, as desired.

\item[eq1] The substitution has no effect; immediate, by eq1.
\item[eq2] The substitution has no effect; immediate, by eq2. \qedhere
\end{itemize}
\end{proof}
\end{lemma}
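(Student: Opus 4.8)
The plan is to proceed by induction on the derivation of $\evals{\expr_1}{\expr_2}$, i.e.\ a case split on which of the four reduction rules (ctx, $\beta$, eq1, eq2) is applied last, leaving the substituend $\expr$ general throughout. The eq1 and eq2 cases are immediate: in each, the redex and its contractum are built solely from constants and contain no free occurrence of $x$, so $\cdot\subst{x}{\expr}$ acts as the identity on them and the very same rule re-derives the step on the substituted terms.

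For the ctx case we have $\expr_1 = \ctxapp{\ectx}{\expr_0}$ and $\expr_2 = \ctxapp{\ectx}{\expr_0'}$ with $\evals{\expr_0}{\expr_0'}$. First I would record two routine auxiliary facts: (i) substitution commutes with context plugging, $\ctxapp{\ectx}{\expr_0}\subst{x}{\expr} = \ctxapp{\ectx\subst{x}{\expr}}{\expr_0\subst{x}{\expr}}$, by structural induction on $\ectx$; and (ii) $\ectx\subst{x}{\expr}$ is again an evaluation context, which reduces to the sub-fact that substitution preserves values (a constant stays a constant, a $\lambda$ stays a $\lambda$, and the two equality-proof value forms stay values because their proof subterm stays a value by induction). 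With these in hand, the induction hypothesis on $\evals{\expr_0}{\expr_0'}$ gives $\evals{\expr_0\subst{x}{\expr}}{\expr_0'\subst{x}{\expr}}$, and one application of ctx with context $\ectx\subst{x}{\expr}$ closes the case; in practice I would organize this exactly as the ctx case of Lemma~\ref{lem:parred-eval}, with an inner structural induction on $\ectx$.

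The only genuinely delicate case is $\beta$, where $\expr_1 = (\elam{y}{\typ}{\expr'})\ \val$ and $\expr_2 = \expr'\subst{y}{\val}$, and I would split on whether $y = x$. If $y \neq x$ (and, under the standard convention that bound names are chosen fresh, $y \notin \mathrm{FV}(\expr)$), then $\expr_1\subst{x}{\expr} = (\elam{y}{\typ\subst{x}{\expr}}{\expr'\subst{x}{\expr}})\ \val\subst{x}{\expr}$, so $\beta$ steps it to $\expr'\subst{x}{\expr}\subst{y}{\val\subst{x}{\expr}}$, which equals $\expr'\subst{y}{\val}\subst{x}{\expr} = \expr_2\subst{x}{\expr}$ by the same substitution-commutation identity used in the $\beta$ case of Lemma~\ref{lem:parred-subst}. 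If $y = x$, then $\cdot\subst{x}{\expr}$ leaves the body $\expr'$ untouched, $\beta$ steps $\expr_1\subst{x}{\expr}$ to $\expr'\subst{x}{\val\subst{x}{\expr}} = \expr'\subst{x}{\val}\subst{x}{\expr}$, and this is precisely $\expr_2\subst{x}{\expr}$ since $y = x$.

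I expect the substitution-commutation identity and its freshness side-condition to be the main obstacle — not for depth but for care: pinning down exactly which substitution gets pushed into the other, and under which hypothesis, is the one place where sloppy handling of binders would derail the argument. Everything else (distributing substitution over plugging, preservation of values, the eq cases) is mechanical and could either be cited as standard or folded into the same induction.
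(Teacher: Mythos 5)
Your proposal is correct and follows essentially the same route as the paper's proof: induction on the derivation, immediate eq1/eq2 cases, an inner structural induction on $\ectx$ for the ctx case, and a $y = x$ split with the substitution-commutation identity for $\beta$. The only difference is presentational — you factor out the ``substitution commutes with plugging'' and ``substitution preserves values/contexts'' facts explicitly and flag the freshness side-condition on $y$, where the paper inlines or elides these; that is extra care, not a different argument.
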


\begin{corollary}[Multi-step reduction is substitutive]\label{cor:red-subst-multi}
If \goesto{\expr_1}{\expr_2}, then \goesto{\expr_1\subst{x}{\expr}}{\expr_2\subst{x}{\expr}}.
\begin{proof}
  By induction on the derivation of \goesto{\expr_1}{\expr_2}. The
  base case is immediate ($\expr_1 = \expr_2$, and we take no
  steps). The inductive case follows by the IH and single-step
  substitutivity (Lemma~\ref{lem:red-subst}).
\end{proof}
\end{corollary}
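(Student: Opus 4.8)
The plan is to prove Corollary~\ref{cor:red-subst-multi} by induction on the derivation of \goesto{\expr_1}{\expr_2}, i.e.\ on the length of the reflexive--transitive sequence, leaving the expression \expr (to be substituted for $x$) fixed throughout. In the base case \goesto{\expr_1}{\expr_2} holds in zero steps, so $\expr_1$ and $\expr_2$ are literally the same term; applying \subst{x}{\expr} to both gives the same term again, and \goesto{\expr_1\subst{x}{\expr}}{\expr_2\subst{x}{\expr}} then holds by zero steps, using reflexivity of \goesto{\cdot}{\cdot}.

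For the inductive step I would assume \evals{\expr_1}{\expr_1'} together with a strictly shorter derivation \goesto{\expr_1'}{\expr_2}. Single-step substitutivity (Lemma~\ref{lem:red-subst}) yields \evals{\expr_1\subst{x}{\expr}}{\expr_1'\subst{x}{\expr}}, and the induction hypothesis applied to \goesto{\expr_1'}{\expr_2} yields \goesto{\expr_1'\subst{x}{\expr}}{\expr_2\subst{x}{\expr}}. Prepending the single step to this multi-step sequence produces \goesto{\expr_1\subst{x}{\expr}}{\expr_2\subst{x}{\expr}}, closing the case. That is the whole argument: the corollary is just the pointwise lifting of the single-step lemma to its reflexive--transitive closure.

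The only place where anything delicate happens is inside the lemma I am entitled to assume, Lemma~\ref{lem:red-subst}, whose \beta-case requires checking that substitution commutes with forming the \beta-reduct and therefore splitting on whether the bound variable coincides with $x$ (with the degenerate case where the inner substitution is vacuous). Granting that lemma, the corollary itself presents no real obstacle; the one thing to be careful about is to reassemble the sequence by respecting the deterministic, contextual reduction order --- prepending a genuine \evals{\cdot}{\cdot} step rather than trying to substitute into some arbitrarily chosen reduct --- which the step-by-step induction does for free. So I expect the ``hard part'' here to be nothing more than bookkeeping.
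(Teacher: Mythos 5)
Your proof is correct and follows exactly the paper's argument: induction on the multi-step derivation, with the reflexive base case immediate and the inductive case combining single-step substitutivity (Lemma~\ref{lem:red-subst}) with the induction hypothesis. Nothing further is needed.
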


\begin{figure}
  \[
\inference
  {}
  {\congruent{x}{x}}[var]
\quad
\inference
  {}
  {\congruent{\con}{\con}}[const]
\quad
\inference
  {\parreds{\typ}{\typ'} \quad
   \parreds{\expr}{\expr'}}
  {\congruent{\elam{x}{\typ}{\expr}}{\elam{x}{\typ'}{\expr'}}}[lam]
\quad
\inference
  {\parreds{\expr_1}{\expr_1'} \quad 
   \parreds{\expr_2}{\expr_2'}}
  {\congruent{\expr_1\ \expr_2}{\expr_1'\ \expr_2'}}[app]
\]

\[
\inference
  {\parreds{\expr_l}{\expr_l'} \quad
   \parreds{\expr_r}{\expr_r'} \quad
   \parreds{\expr}{\expr'}
  }
  {\congruent{\ebeq{\tbase}\ \expr_l\ \expr_r\ \expr}{\ebeq{\tbase}\ \expr_l'\ \expr_r'\ \expr'}}[beq]
\quad
\inference
  {\parreds{\typ_x}{\typ_x'} \quad
   \parreds{\typ}{\typ'} \quad
   \parreds{\expr_l}{\expr_l'} \quad
   \parreds{\expr_r}{\expr_r'} \quad
   \parreds{\expr}{\expr'}
  }
  {\congruent
    {\exeq{x}{\typ_x}{\typ}\ \expr_l\ \expr_r\ \expr}
    {\exeq{x}{\typ_x'}{\typ'}\ \expr_l'\ \expr_r'\ \expr'}}[xeq]
\]

  \caption{Term congruence.}
  \label{fig:congruence}
\end{figure}

We say terms are \emph{congruent} when they (a) have the same
outermost constructor and (b) their subparts parallel reduce to each
other.\footnote{Congruent terms are related to Takahashi's $\tilde{M}$
  operator: in that they characterize parallel reductions that preserve
  structure. They are not the same, though: Takahashi's $\tilde{M}$
  will do $\beta\eta$-reductions on outermost redexes.}
That is, $\congruent{}{} \subseteq \parreds{}{}$, where the outermost
rule must be one of var, const, lam, app, beq, or xeq and cannot be a
\emph{real} reduction like $\beta$, eq1, or eq2.

Congruence is a key tool in proving that parallel reduction is a
backward simulation.
Parallel reductions under a lambda prevent us from having an
``on-the-nose'' relation, but reduction can keep up enough with
parallel reduction to maintain congruence.

\begin{lemma}[Congruence implies parallel reduction]\label{lem:congruence-parred}
  If \congruent{\expr_1}{\expr_2} then \parreds{\expr_1}{\expr_2}.
\begin{proof}
  By induction on the derivation of \congruent{\expr_1}{\expr_2}.
  \begin{itemize}
  \item[var] \congruent{x}{x}. By var.
  \item[const] \congruent{c}{c}. By const.
  \item[lam]
    \congruent{\elam{x}{\typ}{\expr}}{\elam{x}{\typ'}{\expr'}}, with
    \parreds{\typ}{\typ'} and \parreds{\expr}{\expr'}. By lam.
  \item[app] \congruent{\expr_1\ \expr_2}{\expr_1'\ \expr_2'}, with
    \parreds{\expr_1}{\expr_1'} and \parreds{\expr_2}{\expr_2'}. By
    app.
  \item[beq]
    \congruent{\ebeq{\tbase}\ \expr_l\ \expr_r\ \expr}{\ebeq{\tbase}\ \expr_l'\ \expr_r'\ \expr},
    with \parreds{\expr_l}{\expr_l'} and \parreds{\expr_r}{\expr_r'}
    and \parreds{\expr}{\expr'}.
    By beq.
  \item[xeq] By xeq.
    \congruent{\exeq{x}{\typ_x}{\typ}\ \expr_l\ \expr_r\ \expr}{\exeq{x}{\typ_x}{\typ}\ \expr_l'\ \expr_r'\ \expr},
    with \parreds{\typ_x}{\typ_x'} and \parreds{\typ}{\typ'} and
    \parreds{\expr_l}{\expr_l'} and \parreds{\expr_r}{\expr_r'} and
    \parreds{\expr}{\expr'}.
    By xeq.
    \qedhere
  \end{itemize}
\end{proof}
\end{lemma}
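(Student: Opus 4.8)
The plan is to proceed by a straightforward case analysis on the derivation of \congruent{\expr_1}{\expr_2}. The key observation is that the congruence relation (Figure~\ref{fig:congruence}) is, rule for rule, a sub-relation of parallel reduction (Figure~\ref{fig:parred}): the six congruence rules var, const, lam, app, beq, and xeq each have premises phrased entirely in terms of \parreds{}{} (or have no premises at all, in the var and const cases), and each is paired with an identically-named parallel reduction rule carrying exactly the same premises and the same conclusion. Congruence simply omits the three rules---$\beta$, eq1, and eq2---that contract an outermost redex, so every congruence derivation already is (the top of) a parallel reduction derivation.

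Concretely, I would split on which congruence rule was applied last. For var, \congruent{x}{x} immediately yields \parreds{x}{x} by the var rule of parallel reduction; the const case is handled the same way. For lam, from \congruent{\elam{x}{\typ}{\expr}}{\elam{x}{\typ'}{\expr'}} we already have the premises \parreds{\typ}{\typ'} and \parreds{\expr}{\expr'}, which are precisely the hypotheses of the lam rule of parallel reduction, so \parreds{\elam{x}{\typ}{\expr}}{\elam{x}{\typ'}{\expr'}} follows. The app, beq, and xeq cases are discharged identically: the congruence derivation hands us exactly the \parreds premises that the correspondingly named parallel reduction rule requires, and we re-apply that rule to conclude.

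Because the premises of the non-base congruence rules are themselves \parreds facts rather than \congruent facts, this ``induction'' never actually invokes an inductive hypothesis---it is really a one-step case analysis, with var and const as base cases and lam, app, beq, xeq as immediate re-derivations. There is therefore no genuine obstacle: the statement falls out of placing the two rule sets side by side. The only thing to be careful about is bookkeeping, namely confirming that each congruence rule has a matching parallel reduction rule and that none of $\beta$, eq1, or eq2 can be the outermost rule of a congruence derivation; both are immediate from the shape of Figure~\ref{fig:congruence}.
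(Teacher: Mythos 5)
Your proposal is correct and matches the paper's proof: both proceed by case analysis on the last congruence rule and re-apply the identically-named parallel reduction rule, whose premises are supplied verbatim by the congruence derivation. Your observation that no inductive hypothesis is actually needed is accurate and slightly sharper than the paper's framing of the argument as an induction.
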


We need to strengthen substitutivity (Lemma~\ref{lem:parred-subst}) to
show that it preserves congruence.
\begin{corollary}[Congruence is substitutive]\label{cor:congruence-subst}
  If \congruent{\expr_1}{\expr_1'} and \congruent{\expr_2}{\expr_2'},
  then
  \congruent{\expr_1\subst{x}{\expr_2}}{\expr_2\subst{x}{\expr_2'}}.
\begin{proof}
  By cases on $\expr_1$.
\begin{itemize}
\item $\expr_1 = y$.
  It must be that $\expr_2 = y$ as well, since only var could have applied.
  If $y \ne x$, then the substitution has no effect and we have \congruent{y}{y} by assumption (or var).
  If $x = y$, then $\expr_1\subst{x}{\expr_2} = \expr_2$ and we have \congruent{\expr_2}{\expr_2'} by assumption.

\item $\expr_1 = c$.
  It must be that $\expr_2 = c$ as well.
  The substitution has no effect; immediate by var.

\item $\expr_1 = \elam{y}{\typ}{\expr}$.
  It must be that $\expr_2 = \elam{y}{\typ'}{\expr'}$ such that
  \parreds{\typ}{\typ'} and \parreds{\expr}{\expr'}.
  If $y \ne x$, then we must show
  \congruent{\elam{y}{\typ\subst{x}{\expr_2}}{\expr\subst{x}{\expr_2}}}
    {\elam{y}{\typ'\subst{x}{\expr_2'}}{\expr'\subst{x}{\expr_2'}}},
    which we have immediately by lam and Lemma~\ref{lem:parred-subst}
    on our two subparts.
  If $y = x$, then we must show
  \congruent{\elam{y}{\typ\subst{x}{\expr_2}}{\expr}}
            {\elam{y}{\typ'\subst{x}{\expr_2'}}{\expr'}},
  which we have immediately by lam, Lemma~\ref{lem:parred-subst} on
  our \parreds{\typ}{\typ'}, and the fact that
  \parreds{\expr}{\expr'}.
      
\item $\expr_1 = \expr_{11}\ \expr_{12}$. 
  It must be that $\expr_2 = \expr_{21}\ \expr_{22}$, such that
  \parreds{\expr_{11}}{\expr_{21}} and
  \parreds{\expr_{12}}{\expr_{22}}.
  By app and Lemma~\ref{lem:parred-subst} on the subparts.
\item $\expr_1 = \ebeq{\tbase}\ \expr_l\ \expr_r\ \expr$.
  It must be the case that $\expr_2 =
  \ebeq{\tbase}\ \expr_l'\ \expr_r'\ \expr'$ where
  \parreds{\expr_l}{\expr_l'} and \parreds{\expr_r}{\expr_r'}.
  By beq and Lemma~\ref{lem:parred-subst} on the subparts.
\item $\expr_1 = \exeq{x}{\typ_x}{\typ}\ \expr_l\ \expr_r\ \expr$.
  It must be the case that $\expr_2 =
  \exeq{x}{\typ_x'}{\typ'}\ \expr_l'\ \expr_r'\ \expr'$ where
  \parreds{\expr_l}{\expr_l'} (and similarly for $\typ_x$, $\typ$,
  $\expr_r$, and $\expr$).
  By xeq and Lemma~\ref{lem:parred-subst} on the subparts.
  \qedhere
\end{itemize}
\end{proof}
\end{corollary}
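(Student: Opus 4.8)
The plan is to prove that congruence is stable under substitution: if $\congruent{\expr_1}{\expr_1'}$ and $\congruent{\expr_2}{\expr_2'}$, then $\congruent{\expr_1\subst{x}{\expr_2}}{\expr_1'\subst{x}{\expr_2'}}$. I would argue by a case analysis on the last rule of $\congruent{\expr_1}{\expr_1'}$ --- equivalently, on the outermost shape of $\expr_1$, since congruence is syntax-directed on the head constructor (var, const, lam, app, beq, xeq), which is exactly what makes the case split exhaustive. Throughout I would silently $\alpha$-rename so that the variable bound by lam or xeq differs from $x$ and is not free in $\expr_2$ or $\expr_2'$ (alternatively, handle the sub-case where the bound variable coincides with $x$ by letting the substitution stop at that binder).

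Two cases are immediate: if $\expr_1$ is a constant $c$, the substitution is a no-op on both sides and $\congruent{c}{c}$ holds by const; if $\expr_1$ is a variable $y \neq x$, the substitution is again a no-op. The one interesting variable case is $y = x$: then $\expr_1' = x$ as well (var is the only congruence rule applicable to a variable), so $\expr_1\subst{x}{\expr_2} = \expr_2$ and $\expr_1'\subst{x}{\expr_2'} = \expr_2'$, and the goal collapses to the hypothesis $\congruent{\expr_2}{\expr_2'}$. This is the only place where the \emph{congruence} hypothesis on $\expr_2$ --- rather than mere parallel reduction --- is actually used.

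For the remaining four cases (lam, app, beq, xeq) the congruence rule for $\expr_1$ hands back parallel reductions between the respective subparts of $\expr_1$ and $\expr_1'$ (for lam, $\parreds{\typ}{\typ'}$ and $\parreds{\expr}{\expr'}$; analogously for the others). Substitution commutes with the head constructor, so it suffices to establish, for each subpart, a fact of the form $\parreds{\typ\subst{x}{\expr_2}}{\typ'\subst{x}{\expr_2'}}$ and then reassemble with the matching congruence rule. Each such fact follows from substitutivity of parallel reduction (Lemma~\ref{lem:parred-subst}), which needs $\parreds{\expr_2}{\expr_2'}$ as the substituted reduction; that in turn comes from $\congruent{\expr_2}{\expr_2'}$ via the inclusion of congruence in parallel reduction (Lemma~\ref{lem:congruence-parred}).

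I do not expect a real obstacle here --- the argument is essentially structural bookkeeping --- so the ``hard part'' is only keeping the dependencies straight: seeing that the head-constructor case split is complete, isolating the single variable case where the congruence (and not just the parallel-reduction) hypothesis on $\expr_2$ is needed, and remembering to downgrade $\congruent{\expr_2}{\expr_2'}$ to $\parreds{\expr_2}{\expr_2'}$ before invoking Lemma~\ref{lem:parred-subst} in all the structural cases. The routine care with capture-avoiding substitution under the lam and xeq binders is the only thing that could go wrong if handled sloppily.
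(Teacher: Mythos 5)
Your proposal is correct and follows essentially the same route as the paper's proof: a case split on the head constructor of $\expr_1$, with the $y = x$ variable case consuming the congruence hypothesis on $\expr_2$ and $\expr_2'$, and the structural cases (lam, app, beq, xeq) reassembled via the matching congruence rule after applying substitutivity of parallel reduction (Lemma~\ref{lem:parred-subst}) to the subparts. Your explicit note that $\congruent{\expr_2}{\expr_2'}$ must first be weakened to $\parreds{\expr_2}{\expr_2'}$ via Lemma~\ref{lem:congruence-parred} before Lemma~\ref{lem:parred-subst} applies is a detail the paper leaves implicit, and is worth keeping.
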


\begin{lemma}[Parallel reduction of values implies congruence]\label{lem:parred-value-congruent}
  If \parreds{\val_1}{\val_2} then \congruent{\val_1}{\val_2}.
\begin{proof}
  By induction on the derivation of \parreds{\val_1}{\val_2}.
  \begin{itemize}
  \item[var] Contradictory: variables aren't values.
  \item[const] Immediate, by const.
  \item[lam] Immediate, by lam.
  \item[app] Contradictory: applications aren't values.
  \item[beq] Immediate, by beq.
  \item[xeq] Immediate, by xeq.
  \item[$\beta$] Contradictory: applications aren't values.
  \item[eq1] Contradictory: applications aren't values.
  \item[eq2] Contradictory: applications aren't values. \qedhere
  \end{itemize}
\end{proof}
\end{lemma}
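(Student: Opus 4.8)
The plan is to proceed by induction on the derivation of \parreds{\val_1}{\val_2}; since $\val_1$ is by hypothesis a value, this is really just a case analysis on the last rule, which the syntactic shape of $\val_1$ forces. The values of \corelaneq are exactly constants $\con$, abstractions $\elam{x}{\typ}{\expr}$, converged base-equality proofs $\ebeq{\tbase}\ \expr\ \expr\ \val$, and converged extensionality proofs $\exeq{x}{\typ}{\typ}\ \expr\ \expr\ \val$. For each of these four head shapes I would note that the only parallel-reduction rule whose left-hand side can match is the corresponding structural rule — \textsc{const}, \textsc{lam}, \textsc{beq}, or \textsc{xeq} respectively — and that that rule's premises already give me precisely the parallel reductions on the subterms that the definition of \congruent{}{} requires. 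So \congruent{\val_1}{\val_2} follows immediately, with the \textsc{lam}, \textsc{beq}, and \textsc{xeq} premises carried over verbatim and the \textsc{const} case trivial.

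The only actual content is to rule out the remaining parallel-reduction rules — \textsc{var}, \textsc{app}, $\beta$, \textsc{eq1}, and \textsc{eq2} — as possible final steps. Each of these has a left-hand side that is a variable or an application: \textsc{var} needs a variable (not a value), \textsc{app} and $\beta$ need an application at the head, and \textsc{eq1} / \textsc{eq2} fire on the applications $\bbbeqsym{\tbase}\ \con_1$ and $\bbbeq{(}{)}{(\con_1,\tbase)}\ \con_2$. The one point deserving a sentence of care is that, although $\bbbeq{(}{)}{(\con_1,\tbase)}$ is itself a constant and hence a value, the redexes that trigger \textsc{eq1} and \textsc{eq2} are the further \emph{applications} of these partially applied equality operators, which are not in the value grammar; so none of those rules can have $\val_1$ on the left.

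I expect no real obstacle: the statement is essentially the bookkeeping observation that parallel reduction of a value can only use the structure-preserving rules, and the proof is a short tabulation over value shapes. If anything, the only thing to get exactly right is invoking the value grammar precisely enough that the exclusion of \textsc{eq1} and \textsc{eq2} is unambiguous.
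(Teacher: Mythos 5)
Your proposal is correct and takes essentially the same approach as the paper: a case analysis on the last rule of the parallel-reduction derivation, observing that for a value only the structural rules \textsc{const}, \textsc{lam}, \textsc{beq}, and \textsc{xeq} can apply (whose premises are exactly what congruence requires), while \textsc{var}, \textsc{app}, $\beta$, \textsc{eq1}, and \textsc{eq2} are excluded because their left-hand sides are not values. Your extra sentence carefully noting that the \textsc{eq1}/\textsc{eq2} redexes are applications of the (value) partially-applied equality constants is a welcome clarification of a point the paper dispatches with only ``applications aren't values.''
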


% A14
\begin{lemma}[Parallel reduction implies reduction to congruent forms]\label{lem:parred-congruent}
  If \parreds{\expr_1}{\expr_2}, then there exists $\expr_1'$
  \goesto{\expr_1}{\expr_1'} such that \congruent{\expr_1'}{\expr_2}.
\begin{proof}
By induction on \parreds{\expr_1}{\expr_2}.
\paragraph*{Structural rules}
\begin{itemize}
\item[var] \parreds{x}{x}. We have $\expr_1 = \expr_2 = x$ by var.
\item[const] \parreds{c}{c}. We have $\expr_1 = \expr_2 = c$ by const.
\item[lam] \parreds{\elam{x}{\typ}{\expr}}{\elam{x}{\typ'}{\expr'}},
  where \parreds{\typ}{\typ'} and \parreds{\expr}{\expr'}. Immediate, by lam.
\item[app] \parreds{\expr_{11}\ \expr_{12}}{\expr_{21}\ \expr_{22}},
  where \parreds{\expr_{11}}{\expr_{21}} and \parreds{\expr_{12}}{\expr_{22}}.
  Immediate, by app.
\item[beq] \parreds{\ebeq{\tbase}\ \expr_l\ \expr_r\ \expr}{\ebeq{\tbase}\ \expr_l'\ \expr_r'\ \expr'}
  where \parreds{\expr_l}{\expr_l'} and \parreds{\expr_r}{\expr_r'}
  and \parreds{\expr}{\expr'}. Immediate, by beq.
\item[xeq]
  \parreds{\exeq{x}{\typ_x}{\typ}\ \expr_l\ \expr_r\ \expr}{\exeq{x}{\typ_x'}{\typ'}\ \expr_l'\ \expr_r'\ \expr'}
  where \parreds{\typ_x}{\typ_x'} and \parreds{\typ}{\typ'} and
  \parreds{\expr_l}{\expr_l'} and \parreds{\expr_r}{\expr_r'} and
  \parreds{\expr}{\expr'}. Immediate, by xeq.
\end{itemize}

\paragraph*{Reduction rules}
These are the more interesting cases, where the parallel reduction
does a reduction step---ordinary reduction has to do more work to
catch up.
\begin{itemize}
\item[$\beta$]
  \parreds{(\elam{x}{\typ}{\expr})\ \val}{\expr'\subst{x}{\val'}},
  where \parreds{\expr}{\expr''} and \parreds{\val}{\val''}.

  We have \evals{(\elam{x}{\typ}{\expr})\ \val}{\expr\subst{x}{\val}} by $\beta$.
  By the IH on \parreds{\expr}{\expr''}, there exists $\expr'$ such
  that \goesto{\expr}{\expr'} such that \congruent{\expr'}{\expr''}.
  We \emph{ignore} the IH on \parreds{\val}{\val''}, noticing instead
  that parallel reducing values are congruent
  (Lemma~\ref{lem:parred-value-congruent}) and so
  \congruent{\val}{\val''}.
  Since reduction is substitutive
  (Corollary~\ref{cor:red-subst-multi}), we can find that
  \goesto{\expr\subst{x}{\val}}{\expr'\subst{x}{\val}}.
  Since congruence is substitutive (Lemma~\ref{cor:congruence-subst}),
  we have \congruent{\expr'\subst{x}{\val}}{\expr''\subst{x}{\val''}},
  as desired.

\item[eq1] \parreds{\bbbeq{(}{)\ \con_1}{\tbase}}{\bbbeq{(}{)}{(\con_1,\tbase)}}.
  We have
  \evals{\bbbeq{(}{)\ \con_1}{\tbase}}{\bbbeq{(}{)}{(\con_1,\tbase)}}
  in a single step; we find congruence by const.

\item[eq2] \parreds{\bbbeq{(}{)\ \con_2}{(\con_1,\tbase)}}{\con_1 = \con_2}.
  We have \evals{\bbbeq{(}{)\ \con_2}{(\con_1,\tbase)}}{\con_1 =
    \con_2} in a single step; we find congruence by const.
\qedhere
\end{itemize}
\end{proof}
\end{lemma}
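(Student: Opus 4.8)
The plan is to prove this by induction on the derivation of \parreds{\expr_1}{\expr_2}, splitting the cases into the structural parallel-reduction rules (var, const, lam, app, beq, xeq) and the genuine contraction rules ($\beta$, eq1, eq2). For every structural rule the witness is simply $\expr_1' = \expr_1$: we take zero reduction steps, and \congruent{\expr_1}{\expr_2} follows immediately from the matching congruence rule, reusing the parallel-reduction premises the induction already hands us (for instance, in the app case, \parreds{\expr_{11}}{\expr_{21}} and \parreds{\expr_{12}}{\expr_{22}} are exactly the premises of the congruence rule app for \congruent{\expr_{11}\ \expr_{12}}{\expr_{21}\ \expr_{22}}). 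The two constant-contraction rules eq1 and eq2 are just as cheap: one ordinary computation step lands on the same constant that parallel reduction produced, and \congruent{\con}{\con} holds by the const rule.

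The real content is the $\beta$ case, where \parreds{(\elam{x}{\typ}{\expr})\ \val}{\expr''\subst{x}{\val''}} with premises \parreds{\expr}{\expr''} and \parreds{\val}{\val''}. Here I would first take a single ordinary step \evals{(\elam{x}{\typ}{\expr})\ \val}{\expr\subst{x}{\val}}. I would then \emph{not} recurse on the value premise: since parallel-reducing values are already congruent (Lemma~\ref{lem:parred-value-congruent}), we have \congruent{\val}{\val''} directly. On the body premise I would apply the induction hypothesis to \parreds{\expr}{\expr''}, obtaining some $\expr'$ with \goesto{\expr}{\expr'} and \congruent{\expr'}{\expr''}. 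Multi-step substitutivity of reduction (Corollary~\ref{cor:red-subst-multi}) lifts \goesto{\expr}{\expr'} to \goesto{\expr\subst{x}{\val}}{\expr'\subst{x}{\val}}, so chaining with the $\beta$-step gives \goesto{(\elam{x}{\typ}{\expr})\ \val}{\expr'\subst{x}{\val}}; and substitutivity of congruence (Corollary~\ref{cor:congruence-subst}) combines \congruent{\expr'}{\expr''} with \congruent{\val}{\val''} to yield \congruent{\expr'\subst{x}{\val}}{\expr''\subst{x}{\val''}}. Thus $\expr_1' = \expr'\subst{x}{\val}$ is the witness.

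I expect the $\beta$ case to be the main obstacle, and it is precisely the reason why congruence, rather than parallel reduction itself, is the right intermediate relation for the backward simulation: under a binder, call-by-value reduction cannot fire the sub-redexes that parallel reduction contracted inside $\expr''$ or $\val''$, so those sub-steps have to stay inside the relation as parallel reductions, which is exactly what the lam, app, beq, and xeq congruence rules allow. The supporting facts I would want established beforehand are single- and multi-step substitutivity of reduction (Lemma~\ref{lem:red-subst} and Corollary~\ref{cor:red-subst-multi}), substitutivity of congruence (Corollary~\ref{cor:congruence-subst}), and the value lemma (Lemma~\ref{lem:parred-value-congruent}); with those in hand every case of the induction is short, and this lemma then feeds into the proof that congruence, hence parallel reduction, is a backward simulation.
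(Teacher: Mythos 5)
Your proposal is correct and follows essentially the same route as the paper's proof: identity witnesses with zero steps for the structural rules, a single step plus \textsf{const} for eq1/eq2, and in the $\beta$ case the same combination of one $\beta$-step, the IH on the body only, Lemma~\ref{lem:parred-value-congruent} for the value argument, and substitutivity of reduction and of congruence (Corollary~\ref{cor:red-subst-multi} and Corollary~\ref{cor:congruence-subst}) to assemble the witness $\expr'\subst{x}{\val}$. Nothing is missing; your remark about why congruence rather than parallel reduction is the right intermediate relation matches the paper's motivation exactly.
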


\begin{lemma}[Congruence to a value implies reduction to a value]\label{lem:congruence-value}
  If \congruent{\expr}{\val'} then \goesto{\expr}{\val} such that
  \congruent{\val}{\val'}.
\begin{proof}
  By induction on $\val'$.
  \begin{itemize}
  \item $\val' \doteq c$. It must be the case that $\expr = c$. Let $\val = c$. By const.
  \item $\val' \doteq \elam{x}{\typ'}{\expr''}$. It must be the case
    that $\expr = \elam{x}{\typ}{\expr'}$ such that
    \parreds{\typ}{\typ'} and \parreds{\expr}{\expr''}. By lam.
  \item $\val \doteq \ebeq{\tbase}\ \expr_l'\ \expr_r'\ \val_p'$. It must
    be the case that $\expr = \ebeq{\tbase}\ \expr_l\ \expr_r\ \expr_p$ where 
    \parreds{\expr_l}{\expr_l'} and
    \parreds{\expr_r}{\expr_r'} and
    \parreds{\expr_p}{\val_p'}.
    Since parallel reduction implies reduction to congruent forms
    (Lemma~\ref{lem:parred-congruent}), we have
    \goesto{\expr_p}{\expr_p'} and \congruent{\expr_p'}{\val_p'}.
    By the IH on $\val_p'$, we know that \goesto{\expr_p'}{\val_p}
    such that \congruent{\val_p}{\val_p'}.
    By repeated use of ctx, we find
    \goesto{\ebeq{\tbase}\ \expr_l\ \expr_r\ \expr_p}{\ebeq{\tbase}\ \expr_l\ \expr_r\ \val_p}.
    Since its proof part is a value, this term is a value.
    We find
    \congruent{\ebeq{\tbase}\ \expr_l\ \expr_r\ \val_p}{\ebeq{\tbase}\ \expr_l'\ \expr_r'\ \val_p'}
    by ebeq.

  \item $\val \doteq \exeq{x}{\typ_x'}{\typ}\ \expr_l'\ \expr_r'\ \val_p'$. It must
    be the case that $\expr = \exeq{x}{\typ_x}{\typ}\ \expr_l\ \expr_r\ \expr_p$ where 
    \parreds{\typ_x}{\typ_x'} and
    \parreds{\typ}{\typ'} and
    \parreds{\expr_l}{\expr_l'} and
    \parreds{\expr_r}{\expr_r'} and
    \parreds{\expr_p}{\val_p'}.
    Since parallel reduction implies reduction to congruent forms
    (Lemma~\ref{lem:parred-congruent}), we have
    \goesto{\expr_p}{\expr_p'} and \congruent{\expr_p'}{\val_p'}.
    By the IH on $\val_p'$, we know that \goesto{\expr_p'}{\val_p}
    such that \congruent{\val_p}{\val_p'}.
    By repeated application of ctx, we find
    \goesto{\exeq{x}{\typ_x}{\typ}\ \expr_l\ \expr_r\ \expr_p}{\exeq{x}{\typ_x}{\typ}\ \expr_l\ \expr_r\ \val_p}.
    Since its proof part is a value, this term is a value.
    We find
    \congruent{\exeq{\typ_x}{\typ}\ \expr_l\ \expr_r\ \val_p}{\exeq{x}{\typ_x'}{\typ'}\ \expr_l'\ \expr_r'\ \val_p'}
    by exeq.
    \qedhere
  \end{itemize}
\end{proof} 
\end{lemma}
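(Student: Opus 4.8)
The plan is to proceed by induction on the structure of the value $\val'$. The key observation is that $\congruent{\expr}{\val'}$ forces $\expr$ to have the same outermost constructor as $\val'$, since term congruence only relates terms with matching heads; so a case analysis on $\val'$ is simultaneously a case analysis on $\expr$. First I would dispatch the two cases where $\expr$ is already a value, requiring zero reduction steps: if $\val' = \con$ then $\expr = \con$ and we take $\val = \con$ with $\congruent{\con}{\con}$ by \textsc{const}; if $\val' = \elam{x}{\typ'}{\expr''}$ then inversion gives $\expr = \elam{x}{\typ}{\expr'}$ with $\parreds{\typ}{\typ'}$ and $\parreds{\expr'}{\expr''}$, and since lambdas are values we take $\val = \expr$ and conclude by \textsc{lam}.

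The interesting cases are the two equality-proof constructors, which I would treat uniformly. Suppose $\val' = \ebeq{\tbase}\ \expr_l'\ \expr_r'\ \val_p'$; because $\val'$ is a value, its proof component $\val_p'$ must itself be a value (per the value grammar of Figure~\ref{fig:coresyntax}). Inversion on the congruence derivation gives $\expr = \ebeq{\tbase}\ \expr_l\ \expr_r\ \expr_p$ with $\parreds{\expr_l}{\expr_l'}$, $\parreds{\expr_r}{\expr_r'}$, and $\parreds{\expr_p}{\val_p'}$. By Lemma~\ref{lem:parred-congruent} (parallel reduction implies reduction to congruent forms), there is an $\expr_p'$ with $\goesto{\expr_p}{\expr_p'}$ and $\congruent{\expr_p'}{\val_p'}$. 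Since $\val_p'$ is structurally smaller than $\val'$, the induction hypothesis applies: $\goesto{\expr_p'}{\val_p}$ for some value $\val_p$ with $\congruent{\val_p}{\val_p'}$. Lifting these reductions under the evaluation context that freezes the first two arguments yields $\goesto{\ebeq{\tbase}\ \expr_l\ \expr_r\ \expr_p}{\ebeq{\tbase}\ \expr_l\ \expr_r\ \val_p}$, which is a value because its proof component is a value, and finally $\congruent{\ebeq{\tbase}\ \expr_l\ \expr_r\ \val_p}{\ebeq{\tbase}\ \expr_l'\ \expr_r'\ \val_p'}$ holds by the \textsc{beq} congruence rule using the parallel reductions already in hand. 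The case $\val' = \exeq{x}{\typ_x'}{\typ'}\ \expr_l'\ \expr_r'\ \val_p'$ is identical, additionally carrying along $\parreds{\typ_x}{\typ_x'}$ and $\parreds{\typ}{\typ'}$ and invoking the \textsc{xeq} congruence rule at the end.

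The main obstacle is precisely in these last two cases: we must know that $\parreds{\expr_p}{\val_p'}$ does not merely give equivalence up to parallel reduction but actually lets us \emph{run} $\expr_p$ to a value, which is exactly what Lemma~\ref{lem:parred-congruent} supplies, and we must observe that the recursive call is legitimate because the proof component of an equality value is strictly structurally smaller than the value itself, so the induction on $\val'$ is well founded. Everything else is routine bookkeeping: closing the single-step reductions under evaluation contexts in \corelaneq and re-applying the congruence rules with the hypotheses obtained from inversion.
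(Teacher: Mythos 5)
Your proposal is correct and follows essentially the same route as the paper's proof: induction on $\val'$, immediate dispatch of the constant and lambda cases, and for the two equality constructors, using Lemma~\ref{lem:parred-congruent} on the proof component followed by the induction hypothesis, closing the resulting reduction under evaluation contexts, and reassembling congruence with \textsc{beq}/\textsc{xeq}. Your explicit remarks about why the induction is well founded (the proof component is structurally smaller) and why Lemma~\ref{lem:parred-congruent} is needed to actually run $\expr_p$ match the paper's implicit reasoning.
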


\begin{corollary}[Parallel reduction to a value implies reduction to a related value]\label{cor:parred-value}
  If \parreds{\expr}{\val'} then there exists $\val$ such that
  \goesto{\expr}{\val} and \congruent{\val}{\val'}.
\begin{proof}
  Since parallel reduction implies reduction to congruent forms
  (Lemma~\ref{lem:parred-congruent}), we have \goesto{\expr}{\expr'}
  such that \congruent{\expr'}{\val'}.
  But congruence to a value implies reduction to a value
  (Lemma~\ref{lem:congruence-value}), so \goesto{\expr'}{\val} such
  that \congruent{\val}{\val'}.
  By transitivity of reduction, \goesto{\expr}{\val}.
\end{proof}
\end{corollary}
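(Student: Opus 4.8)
The plan is to prove this by composing two results already established about parallel reduction and congruence, rather than by a direct induction on the derivation of $\parreds{\expr}{\val'}$. A direct induction gets stuck because parallel reduction can contract redexes under binders that call-by-value evaluation never reaches, so $\expr$ need not reduce to $\val'$ on the nose --- only to something congruent to it.

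First I would invoke Lemma~\ref{lem:parred-congruent}, which says that parallel reduction implies reduction to a congruent form: from $\parreds{\expr}{\val'}$ it supplies an expression $\expr'$ with $\goesto{\expr}{\expr'}$ and $\congruent{\expr'}{\val'}$. Next, since the right-hand side $\val'$ is a value, I would apply Lemma~\ref{lem:congruence-value} (congruence to a value implies reduction to a value) to $\congruent{\expr'}{\val'}$, obtaining $\val$ with $\goesto{\expr'}{\val}$ and $\congruent{\val}{\val'}$. Finally, transitivity of the reflexive-transitive closure $\goesto{\cdot}{\cdot}$ combines $\goesto{\expr}{\expr'}$ and $\goesto{\expr'}{\val}$ into $\goesto{\expr}{\val}$, and the required $\congruent{\val}{\val'}$ is already in hand, which closes the corollary.

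The corollary itself is thus a two-line composition; the genuinely delicate work lives upstream in Lemma~\ref{lem:parred-congruent}, whose proof I would carry out by induction on the parallel-reduction derivation. The interesting case is $\beta$, where ordinary reduction must ``catch up'' with a parallel step: there I would reduce the contracted redex with the real semantics and then appeal to substitutivity of multi-step reduction (Corollary~\ref{cor:red-subst-multi}) and of congruence (Corollary~\ref{cor:congruence-subst}), using the fact that parallel-reducing values are already congruent (Lemma~\ref{lem:parred-value-congruent}). If those supporting lemmas are available I expect no further obstacle; the main conceptual subtlety --- that only congruence, not syntactic identity, survives a parallel step --- is exactly what motivates introducing the congruence relation in the first place, and it is what the two cited lemmas are engineered to absorb.
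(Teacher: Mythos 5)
Your proof is correct and matches the paper's own argument exactly: apply Lemma~\ref{lem:parred-congruent} to get $\goesto{\expr}{\expr'}$ with $\congruent{\expr'}{\val'}$, then Lemma~\ref{lem:congruence-value} to get $\goesto{\expr'}{\val}$ with $\congruent{\val}{\val'}$, and conclude by transitivity of multi-step reduction. Your additional remarks about how the upstream lemma is proved also track the paper's development.
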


% A17
\begin{lemma}[Congruence is a backward simulation]\label{lem:congruence-backward-simulation}
  If \congruent{\expr_1}{\expr_2} and \evals{\expr_2}{\expr_2'} then
  there exists $\expr_1'$ where \goesto{\expr_1}{\expr_1'} such that
  \congruent{\expr_1'}{\expr_2'}.
\begin{proof}
  By induction on the derivation of \evals{\expr_2}{\expr_2'}.
\begin{itemize}
\item[ctx] \evals{\ctxapp{\ectx}{\expr}}{\ctxapp{\ectx}{\expr'}}, where \evals{\expr}{\expr'}.
\begin{itemize}
\item $\ectx \doteq \bullet$. By the outer IH.
\item $\ectx \doteq \ectx_1\ \expr_2$. 
  It must be that $\expr_1 = \expr_{11}\ \expr_{12}$, where
  \parreds{\expr_{11}}{\ctxapp{\ectx_1}{\expr}} and
  \parreds{\expr_{12}}{\expr_2}.
  By the IH on $\ectx_1$, finding evaluation with ctx and congruence
  with app.
\item $\ectx \doteq \val_1'\ \ectx_2$.
  It must be that $\expr_1 = \expr_{11}\ \expr_{12}$, where
  \parreds{\expr_{11}}{\val_1'} and
  \parreds{\expr_{12}}{\ctxapp{\ectx_2}{\expr_2}}.
  We find that \goesto{\expr_{11}}{\val_1} such that
  \congruent{\val_1}{\val_1'} by Corollary~\ref{cor:parred-value}.
  By the IH on $\ectx_2$ and evaluation with ctx and congruence with
  app.
\item $\ectx \doteq \ebeq{\tbase}\ \expr_l'\ \expr_r'\ \ectx'$.
  It must be the case that $\expr_1 =
  \ebeq{\tbase}\ \expr_l\ \expr_r\ \expr_p$ where
  \parreds{\expr_l}{\expr_l'} and \parreds{\expr_r}{\expr_r'}. 
  By the IH on $\ectx'$; we find the evaluation with ctx and congruence
  with beq.
\item $\ectx \doteq \exeq{x}{\typ_x'}{\typ'}\ \expr_l'\ \expr_r'\ \ectx'$.
  It must be the case that $\expr_1 =
  \exeq{x}{\typ_x}{\typ}\ \expr_l\ \expr_r\ \expr_p$ such that
  \parreds{\typ_x}{\typ_x'} and \parreds{\typ}{\typ'} and
  \parreds{\expr_l}{\expr_l'} and \parreds{\expr_r}{\expr_r'}.
  By the IH on $\ectx'$; we find the evaluation with ctx and congruence
  with xeq.
\end{itemize}
\item[$\beta$] \evals{(\elam{x}{\typ'}{\expr'})\ \val'}{\expr'\subst{x}{\val'}}.
  Congruence implies that $\expr_1 = \expr_{11}\ \expr_{12}$ such that
  \parreds{\expr_{11}}{\elam{x}{\typ'}{\expr'}} and
  \parreds{\expr_{12}}{\val'}.
  Parallel reduction to a value implies reduction to a congruent value
  (Corollary~\ref{cor:parred-value}), \goesto{\expr_{11}}{\val_{11}}
  such that \congruent{\val_{11}'}{\elam{x}{\typ'}{\expr'}}, i.e.,
  $\val_{11} = \elam{x}{\typ}{\expr}$ such that \parreds{\typ}{\typ'}
  and \parreds{\expr}{\expr'}.
  Similarly, \goesto{\expr_{12}}{\val} such that \congruent{\val}{\val'}.

  By $\beta$, we find \goesto{(\elam{x}{\typ}{\expr})\ \val}{\expr'\subst{x}{\val}};
  by transitivity of reduction, we have \goesto{\expr_1 =
    \expr_{11}\ \expr_{12}}{\expr'\subst{x}{\val}}.
  Since congruence is substitutive (Corollary~\ref{cor:congruence-subst}), we have
  \congruent{\expr\subst{x}{\val}}{\expr'\subst{x}{\val'}}.

\item[eq1] \evals{\bbbeq{(}{)\ \con_1}{\tbase}}{\bbbeq{(}{)}{(\con_1,\tbase)}}.
  Congruence implies that $\expr_1 = \expr_{11}\ \expr_{12}$ such that
  \parreds{\expr_{11}}{\bbbeq{(}{)}{\tbase}} and
  \parreds{\expr_{12}}{\con_1}.
  Parallel reduction to a value implies reduction to a related value
  (Corollary~\ref{cor:parred-value}), \goesto{\expr_{11}}{\val_{11}}
  such that \congruent{\val_{11}}{\bbbeq{(}{)}{\tbase}} (and similarly
  for $\expr_{12}$ and $\con_1$).
  But the each constant is congruent only to itself, so
  $\val_{11} = \bbbeq{(}{)}{\tbase}$ and $\val_{12} = \con_1$.
  We have \evals{\bbbeq{(}{)\ \con_1}{\tbase}}{\bbbeq{(}{)}{(\con_1,\tbase)}} by assumption.
  So \goesto{\expr_1 =
    \expr_{11}\ \expr_{12}}{\bbbeq{(}{)}{(\con_1,\tbase)}} by
  transitivity, and we have congruence by const.

\item[eq2] \evals{\bbbeq{(}{)}{(\con_1,\tbase)}\ \con_2}{\con_1 = \con_2}.
  Congruence implies that $\expr_1 = \expr_{11}\ \expr_{12}$ such that
  \parreds{\expr_{11}}{\bbbeq{(}{)}{(\con_1,\tbase)}\ \con_2} and
  \parreds{\expr_{12}}{\con_2}.
  Parallel reduction to a value implies reduction to a related value
  (Corollary~\ref{cor:parred-value}), \goesto{\expr_{11}}{\val_{11}}
  such that \parreds{\val_{11}}{\bbbeq{(}{)}{(\con_1,\tbase)}\ \con_2} (and similarly
  for $\expr_{12}$ and $\con_2$).
  But the each constant is congruent only to itself, so
  $\val_{11} = \bbbeq{(}{)}{(\con_1,\tbase)}\ \con_2$ and $\val_{12} = \con_2$.
  We have \evals{\bbbeq{(}{)}{(\con_1,\tbase)}\ \con_2}{\con_1 = \con_2} already, by assumption.
  So \goesto{\expr_1 =
    \expr_{11}\ \expr_{12}}{\con_1 = \con_2} by
  transitivity, and we have congruence by const.
\qedhere

\end{itemize}
\end{proof}
  
\end{lemma}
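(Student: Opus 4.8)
The plan is to proceed by induction on the derivation of $\evals{\expr_2}{\expr_2'}$, leaving $\expr_1$ general. The organizing observation is that $\congruent{\expr_1}{\expr_2}$ forces $\expr_1$ and $\expr_2$ to share an outermost constructor (one of var, const, lam, app, beq, xeq), with all immediate subterms related by parallel reduction. So in every case I can invert the congruence hypothesis to learn the shape of $\expr_1$ and extract parallel reductions on its subterms, which is exactly what is needed either to apply an inductive hypothesis or to expose a redex that $\expr_1$ must still contract to catch up to $\expr_2$.

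For the ctx rule, where $\evals{\ctxapp{\ectx}{\expr}}{\ctxapp{\ectx}{\expr'}}$ arises from $\evals{\expr}{\expr'}$, I would run a secondary induction on the evaluation context $\ectx$. The case $\ectx = \bullet$ reduces to the outer induction hypothesis on the actual step. For each compound context I invert congruence: e.g. for $\ectx = \ectx_1\ \expr$ I learn $\expr_1 = \expr_{11}\ \expr_{12}$ with the two subparts parallel reducing to the halves of the application, apply the inner IH on $\ectx_1$ to reduce the head, and reassemble congruence via the app rule. The delicate contexts are those with a value on the left, such as $\ectx = \val\ \ectx_2$: there the corresponding subterm of $\expr_1$ only \emph{parallel} reduces to $\val$, so I first use Corollary~\ref{cor:parred-value} to reduce it to a value congruent to $\val$, and only then recurse into $\ectx_2$. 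The beq and xeq contexts are handled identically, threading the reduction through the proof-term position.

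For the genuine reduction rules ($\beta$, eq1, eq2) the work is to catch up to a redex that parallel reduction may have already fired. In the $\beta$ case $\expr_2 = (\elam{x}{\typ'}{\expr'})\ \val'$; inverting congruence gives $\expr_1 = \expr_{11}\ \expr_{12}$ with $\parreds{\expr_{11}}{(\elam{x}{\typ'}{\expr'})}$ and $\parreds{\expr_{12}}{\val'}$. Using Corollary~\ref{cor:parred-value} on each, I reduce $\expr_{11}$ to a congruent value, which must be a lambda $\elam{x}{\typ}{\expr}$ with parallel-reducing body, and $\expr_{12}$ to a value $\val$ congruent to $\val'$. One $\beta$ step then fires, and I close the case by substitutivity of multi-step reduction (Corollary~\ref{cor:red-subst-multi}) to justify reaching $\expr\subst{x}{\val}$ together with substitutivity of congruence (Corollary~\ref{cor:congruence-subst}) to conclude $\congruent{\expr\subst{x}{\val}}{\expr'\subst{x}{\val'}}$. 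The eq1 and eq2 cases are analogous but easier: the relevant subterms parallel reduce to constants, and constants are congruent only to themselves, so the matching subterms of $\expr_1$ must reduce to those very constants, after which the primitive equality step fires and congruence follows by the const rule. Where I need to move between the two relations I lean on Lemma~\ref{lem:congruence-parred}.

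The hard part will be the bookkeeping around value positions: a backward simulation lets $\expr_1$ take several steps while $\expr_2$ takes one, and the reason several are needed is precisely that the subterms of $\expr_1$ sitting at value positions of $\expr_2$ have not yet been evaluated. Getting this right hinges on Corollary~\ref{cor:parred-value} and on verifying that evaluating those positions first does not destroy the congruence I must ultimately reconstruct; everything else is mechanical inversion-and-reassembly driven by the structural rules of congruence.
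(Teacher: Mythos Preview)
Your proposal is correct and follows essentially the same approach as the paper's proof: induction on the evaluation derivation, a secondary case analysis on the evaluation context in the ctx case (using Corollary~\ref{cor:parred-value} at value positions), and catching up to the redex via Corollary~\ref{cor:parred-value} plus substitutivity of congruence (Corollary~\ref{cor:congruence-subst}) for the genuine reduction rules. The only superfluous ingredient is your appeal to Corollary~\ref{cor:red-subst-multi} in the $\beta$ case---once both halves have been reduced to values, the $\beta$ step fires directly on $(\elam{x}{\typ}{\expr})\ \val$ and transitivity of $\goesto{}{}$ alone gets you to $\expr\subst{x}{\val}$.
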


\begin{corollary}[Parallel reduction is a backward simulation]\label{cor:parred-backward-simulation}
  If \parreds{\expr_1}{\expr_2} and \evals{\expr_2}{\expr_2'}, then
  there exists $\expr_1'$ such that \goesto{\expr_1}{\expr_1'} and
  \parreds{\expr_1'}{\expr_2'}.
\begin{proof}
Parallel reduction implies reduction to congruent forms, so
\goesto{\expr_1}{\expr_1'} such that \congruent{\expr_1'}{\expr_2}.
But congruence is a backward simulation (Lemma~\ref{lem:congruence-backward-simulation}), so \goesto{\expr_1'}{\expr_1''} such that \congruent{\expr_1''}{\expr_2'}.
By transitivity of evaluation, \goesto{\expr_1}{\expr_1''}.
Finally, congruence implies parallel reduction
(Lemma~\ref{lem:congruence-parred}), so \parreds{\expr_1''}{\expr_2'},
as desired.
\end{proof}
\end{corollary}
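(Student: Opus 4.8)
The plan is to prove the backward simulation by \emph{factoring parallel reduction through congruence}, leaning on the three congruence lemmas established just above. The reason a direct proof is awkward is that \parreds{\expr_1}{\expr_2} may have performed reductions \emph{under a lambda} or otherwise deep inside the term, so the redex that \evals{\expr_2}{\expr_2'} contracts need not correspond to any exposed redex of $\expr_1$ under our call-by-value, no-reduction-under-lambda strategy. We therefore cannot hope for an on-the-nose simulation in which $\expr_1$ takes a single matching step; instead $\expr_1$ must first evaluate enough to ``catch up'' with $\expr_2$'s outermost structure before it can mirror the step, which is exactly what the intermediate notion of congruence buys us.

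Concretely, I would proceed in three moves. First, apply Lemma~\ref{lem:parred-congruent} (parallel reduction implies reduction to congruent forms) to \parreds{\expr_1}{\expr_2}, obtaining some $\expr_1'$ with \goesto{\expr_1}{\expr_1'} and \congruent{\expr_1'}{\expr_2}; this evaluates $\expr_1$ until its top-level constructor agrees with that of $\expr_2$, while all subparts remain related by parallel reduction. Second, since congruence \emph{is} a backward simulation (Lemma~\ref{lem:congruence-backward-simulation}), the step \evals{\expr_2}{\expr_2'} lifts to \goesto{\expr_1'}{\expr_1''} with \congruent{\expr_1''}{\expr_2'}. Third, compose the two evaluation sequences by transitivity of \goesto{\cdot}{\cdot} to obtain \goesto{\expr_1}{\expr_1''}, and convert the resulting congruence back into parallel reduction using Lemma~\ref{lem:congruence-parred}, yielding \parreds{\expr_1''}{\expr_2'}. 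Instantiating the existential in the statement with $\expr_1''$ then closes the argument.

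The genuine difficulty lives not in this composition but in the congruence lemmas it invokes. The \emph{main obstacle}---already discharged by Lemma~\ref{lem:congruence-backward-simulation}---is establishing that congruence backward-simulates a single evaluation step, which goes by induction on the derivation of \evals{\expr_2}{\expr_2'} and, in the \textsc{ctx} and reduction ($\beta$, eq1, eq2) cases, repeatedly appeals to the fact that parallel reduction to a value forces evaluation to a congruent value (Corollary~\ref{cor:parred-value}), so that the evaluator can expose the very redex that parallel reduction had already contracted. That corollary is where the no-reduction-under-lambda subtlety is ultimately absorbed: it guarantees the stuck-looking subterms of $\expr_1'$ reduce to values congruent with those in $\expr_2$. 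Given those pieces, the corollary itself is a short three-line chain, so at this level I anticipate no new difficulty beyond correctly threading the intermediate term $\expr_1''$ through the two simulation steps.
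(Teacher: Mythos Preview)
Your proposal is correct and follows the paper's proof essentially line for line: reduce $\expr_1$ to a term congruent with $\expr_2$ via Lemma~\ref{lem:parred-congruent}, lift the step \evals{\expr_2}{\expr_2'} along congruence using Lemma~\ref{lem:congruence-backward-simulation}, compose the evaluations by transitivity, and finish by coercing congruence back to parallel reduction with Lemma~\ref{lem:congruence-parred}. Your additional commentary on where the real work lives (in the congruence lemmas, particularly the value-reduction corollary) is accurate and matches the paper's surrounding discussion.
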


\subsection{Cotermination}
\label{sec:parred-cotermination}

% A18_k
\begin{theorem}[Cotermination at constants]\label{thm:cotermination}
If \parreds{\expr_1}{\expr_2} then \goesto{\expr_1}{c} iff \goesto{\expr_2}{c}.
\begin{proof}
By induction on the evaluation steps taken, using direct reduction in
the base case (Corollary~\ref{cor:parred-value}) and using parallel
reduction as a forward and backward simulation
(Lemmas~\ref{lem:parred-forward-simulation} and Corollary~\ref{cor:parred-backward-simulation}) in the inductive case.
\end{proof}
\end{theorem}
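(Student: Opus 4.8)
The plan is to prove the two directions of the ``iff'' separately, each by induction on the length of the relevant $\goesto{}{}$ sequence, leaning on the fact already established that parallel reduction is a bisimulation for call-by-value reduction. Throughout, $\parreds{\expr_1}{\expr_2}$ is the single step we are handed, and $\parreds{}{}$ is propagated across each ordinary reduction step.

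For the forward direction, assume $\parreds{\expr_1}{\expr_2}$ and $\goesto{\expr_1}{c}$, and induct on the number of $\evals{}{}$ steps in $\goesto{\expr_1}{c}$. In the base case $\expr_1 = c$; by inspection of the rules defining $\parreds{}{}$ a constant parallel-reduces only to itself (only \textsc{const} can fire), so $\expr_2 = c$ and $\goesto{\expr_2}{c}$ in zero steps. In the inductive case we have $\evals{\expr_1}{\expr_1'}$ followed by a strictly shorter sequence $\goesto{\expr_1'}{c}$; forward simulation (Lemma~\ref{lem:parred-forward-simulation}) supplies $\expr_2'$ with $\goesto{\expr_2}{\expr_2'}$ and $\parreds{\expr_1'}{\expr_2'}$, so the induction hypothesis applied to $\parreds{\expr_1'}{\expr_2'}$ and the shorter sequence gives $\goesto{\expr_2'}{c}$; concatenating $\goesto{\expr_2}{\expr_2'}$ with $\goesto{\expr_2'}{c}$ yields $\goesto{\expr_2}{c}$.

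For the backward direction, assume $\goesto{\expr_2}{c}$ and induct on its length. In the base case $\expr_2 = c$, so $\parreds{\expr_1}{c}$; since $c$ is a value, Corollary~\ref{cor:parred-value} gives some $\val$ with $\goesto{\expr_1}{\val}$ and $\congruent{\val}{c}$, and because congruence relates a constant only to itself we get $\val = c$, hence $\goesto{\expr_1}{c}$. In the inductive case $\evals{\expr_2}{\expr_2'}$ is followed by a shorter sequence; backward simulation (Corollary~\ref{cor:parred-backward-simulation}) supplies $\expr_1'$ with $\goesto{\expr_1}{\expr_1'}$ and $\parreds{\expr_1'}{\expr_2'}$, and the induction hypothesis on $\parreds{\expr_1'}{\expr_2'}$ together with $\goesto{\expr_2'}{c}$ finishes the case exactly as in the forward direction.

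The real difficulty has already been discharged in the supporting lemmas rather than in this theorem: the backward simulation (Corollary~\ref{cor:parred-backward-simulation}) is not ``on the nose'' because parallel reduction can contract redexes under binders that call-by-value never reaches, which forces the detour through the auxiliary congruence relation and the lemma that parallel reduction always reduces to a congruent form (Lemma~\ref{lem:parred-congruent}). Given those facts, the only delicate point internal to this proof is getting the two base cases right, i.e.\ observing that parallel reduction neither manufactures nor erases a constant: forward this is immediate from the shape of the rules, while backward it needs Corollary~\ref{cor:parred-value} plus the fact that congruence is the identity on constants. Everything else is bookkeeping of the two nested inductions.
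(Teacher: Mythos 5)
Your proof is correct and follows essentially the same route as the paper: induction on the number of evaluation steps, using the forward simulation (Lemma~\ref{lem:parred-forward-simulation}) and backward simulation (Corollary~\ref{cor:parred-backward-simulation}) in the inductive cases, and Corollary~\ref{cor:parred-value} together with the observation that a constant is (parallel-)related only to itself in the base cases. You have simply written out in full the argument the paper compresses into one sentence.
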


\begin{corollary}[Cotermination at constants (multiple parallel steps)]\label{cor:cotermination-multi}
If \parredsto{\expr_1}{\expr_2} then \goesto{\expr_1}{c} iff \goesto{\expr_2}{c}.
\begin{proof}
  By induction on the parallel reduction derivation. The base case is
  immediate ($\expr_1 = \expr_2$); the inductive case follows from
  cotermination at constants (Theorem~\ref{thm:cotermination}) and the
  IH.
\end{proof}
\end{corollary}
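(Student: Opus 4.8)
The plan is to prove the multi-step statement by a routine induction that lifts the single-step result. Since \parredsto is the reflexive, transitive closure of \parreds, every derivation of \parredsto{\expr_1}{\expr_2} is built either from zero steps or from one \parreds step followed by a strictly shorter \parredsto sequence. I would therefore induct on this derivation, holding the constant $c$ fixed throughout, so that the biconditional \goesto{\expr_1}{c} iff \goesto{\expr_2}{c} composes across steps rather than being re-quantified at each stage.

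In the base case, \parredsto{\expr_1}{\expr_2} takes zero steps, so $\expr_1 = \expr_2$ syntactically and the biconditional reduces to the trivial reflexive instance \goesto{\expr_1}{c} iff \goesto{\expr_1}{c}. In the inductive case, I would decompose the derivation as \parreds{\expr_1}{\expr'} followed by \parredsto{\expr'}{\expr_2}. Applying the single-step cotermination theorem (Theorem~\ref{thm:cotermination}) to the first step gives \goesto{\expr_1}{c} iff \goesto{\expr'}{c}; applying the induction hypothesis to the shorter tail \parredsto{\expr'}{\expr_2} gives \goesto{\expr'}{c} iff \goesto{\expr_2}{c}. Chaining these two equivalences yields \goesto{\expr_1}{c} iff \goesto{\expr_2}{c}, as required.

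The honest assessment is that this corollary has no real obstacle of its own: all of the content lives in the single-step theorem, whose proof in turn rests on parallel reduction being both a forward simulation (Lemma~\ref{lem:parred-forward-simulation}) and a backward simulation (Corollary~\ref{cor:parred-backward-simulation}). The only points that need attention are bookkeeping: the single-step result must already be a biconditional (it is), the constant $c$ must be held fixed so the equivalences compose cleanly in both directions, and the decomposition of the closure must expose a strictly shorter sub-derivation so that the induction is well-founded. With those in place, chaining the equivalences is mechanical.
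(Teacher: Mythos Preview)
Your proposal is correct and matches the paper's proof exactly: induction on the length of the \parredsto{} derivation, with the trivial base case $\expr_1 = \expr_2$ and the inductive step chaining the single-step cotermination theorem with the IH. You have simply spelled out in more detail what the paper states in one sentence.
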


\end{document}
\endinput